\documentclass[10pt]{article}
\usepackage[applemac]{inputenc} 
\usepackage{latexsym,amsmath,amsthm,amscd,amssymb,framed,graphics,xcolor,mathrsfs}
\usepackage{mathtools}
\usepackage{enumerate}
\usepackage{geometry}
\usepackage[colorlinks]{hyperref}
\usepackage[T1]{fontenc}
\usepackage{graphicx}
\usepackage{empheq}
\usepackage{pb-diagram}
\usepackage{amscd}
\definecolor{sand}{rgb}{0.76, 0.7, 0.5}
\definecolor{taupegray}{rgb}{0.55, 0.52, 0.54}
\usepackage[colorlinks]{hyperref}
\usepackage{subcaption}
\hypersetup{
pdfstartview={FitH},
urlcolor=sand,
citecolor=blue,
linkcolor=taupegray,
}

\usepackage{soul}

\usepackage{array}
\usepackage{float}
\usepackage{multicol}
\usepackage{fancyhdr}
\usepackage[arrow, matrix, curve]{xy}
\usepackage{url}
\usepackage{colordvi}
\usepackage{framed}
\usepackage{color}
\usepackage{algorithmic}
\usepackage{algorithm}
\usepackage{tikz}
\usepackage{tikz-cd}
\usepackage{stackengine}
\usepackage{ulem}
\usepackage{cancel}
\usepackage{adjustbox}
\usepackage{enumitem}

\newcommand{\mathsym}[1]{{}}

\newtheorem{theorem}{Theorem}[section]
\newtheorem{definition}[theorem]{Definition}

\newtheorem{remark}[theorem]{Remark}
\newtheorem{proposition}[theorem]{Proposition}
\newtheorem{corollary}[theorem]{Corollary}

\begin{document}
\title{Dirac structures on tangent bundles: a geometric framework
for variational principles, constrained dynamics, and\\ symmetry reduction}


\author{\hspace{-1cm}
\begin{tabular}{c}
Hiroaki Yoshimura
\\[2mm] \normalsize   School of Fundamental Science and Engineering
\\ \normalsize  Waseda University 
\\  \normalsize  3-4-1 Okubo, Shinjuku, Tokyo, Japan
\\ \normalsize  yoshimura@waseda.jp\\
\end{tabular}\\\\
}

\maketitle

\date{}

\begin{abstract}
We introduce a Dirac structure on the tangent bundle of a configuration manifold, called a \textit{Lagrange--Dirac structure}, which is naturally induced by the Lagrangian two-form associated with a (possibly degenerate) Lagrangian and a constraint distribution on the configuration manifold. This structure provides a unified geometric framework for Lagrange--Dirac dynamical systems, including nonholonomic systems, degenerate Lagrangian systems, and systems with symmetry.
In the hyperregular case, the associated Lagrange--Dirac dynamical system recovers a first-order formulation of the Lagrange--d'Alembert equations. Although nonholonomic dynamics does not preserve the Lagrangian two-form along the flow, we show that the underlying Lagrange--Dirac structure is preserved up to gauge transformations, yielding a natural gauge covariance property.
We also formulate an intrinsic variational principle directly on the tangent bundle, which includes Hamilton's principle in the unconstrained case and the Lagrange--d'Alembert principle in the hyperregular constrained case. This variational principle naturally yields the Lagrange--Dirac dynamical system and is referred to as the \textit{Lagrange--d'Alembert--Dirac principle}.
Furthermore, we develop a reduction theory for systems with Lie group symmetry. In particular, we derive a reduced Lagrange--Dirac structure over the Lie algebra and show that the reduced dynamics yields the Euler--Poincar\'e--Dirac equations together with a corresponding reduced variational principle.
Finally, we illustrate the theory through several examples, including charged particle dynamics, nonholonomic systems, electric circuits, and systems with Lagrangians linear in velocity. We also discuss an infinite-dimensional extension to ideal fluid dynamics, where the framework naturally incorporates the incompressibility constraint and recovers the Euler equations, providing an intrinsic geometric interpretation of the relationship between constraints, pressure, and Dirac structures.
\end{abstract}

\tableofcontents

\section{Introduction}
\paragraph{Background.}
Hamilton's variational principle is a fundamental principle of classical mechanics, yielding the Euler--Lagrange equations for a wide class of systems, including particles, rigid bodies, fluids and plasma; see \cite{AbMa1978, Bl2003, CHHM1998, Lit1983, Lo1958, MaRa1999}. For hyperregular Lagrangians, the dynamics of such systems can be described geometrically in terms of the Lagrangian two-form, i.e., the symplectic structure on the tangent bundle induced by the Legendre transformation. In particular, for unconstrained or holonomic systems, the Lagrangian two-form is preserved along the flow of the associated Lagrangian vector field; see \cite{MaRa1999}. 

By contrast, nonholonomic mechanical systems cannot, in general, be formulated via Hamilton's variational principle. Instead, their equations of motion are derived from the Lagrange--d'Alembert principle. 
While numerous studies have addressed nonholonomic dynamics from the Lagrangian viewpoint; see, for example, \cite{NeFu1972, VF1981, BKMM1996, KM1997, Bl2003}, several fundamental questions remain. Specifically, since the Lagrangian two-form is no longer preserved along the flow in the presence of nonholonomic constraints, the precise intrinsic geometric structure underlying such dynamics---and the nature of any associated invariant properties---has not yet been fully elucidated.

Furthermore, in the case of degenerate Lagrangians, the second-order condition may fail, leading to additional constraints arising from the degeneracy, known as \textit{Dirac's primary constraints} \cite{Dirac1950}. In this setting, the Legendre transformation $\mathbb{F}L: TQ \to T^{\ast}Q$ is no longer a diffeomorphism, and hence a Hamiltonian cannot be defined on the cotangent bundle in the usual way. Instead, one needs to consider the primary constraint set $P=\mathbb{F}L(TQ) \subset T^{\ast}Q$ and define a constrained Hamiltonian and a total Hamiltonian using Lagrange multipliers \cite{YoGB2025}. While various constraint algorithms have been developed to determine the final constraint manifold \cite{GoNe1979, SkRu1983, Ha2011}, a unified geometric and variational understanding of systems simultaneously exhibiting both degeneracy and nonholonomic constraints remains an open challenge.

The theory of Dirac structures, which generalizes both presymplectic and (almost) Poisson structures, provides a natural framework for the geometric formulation of such constrained systems, initially in the context of Hamiltonian systems; see \cite{CoWe1988,Cour1990a,Dorfman1993, BS2001}. On the Lagrangian side, a fundamental advancement was made by Yoshimura and Marsden \cite{YoMa2006a}, who first introduced Dirac structures on the cotangent bundle $T^*Q$ induced by a distribution $\Delta_Q \subset TQ$. They demonstrated that \textit{implicit Lagrangian systems} can be formulated to accommodate both degenerate Lagrangians and nonholonomic constraints on the cotangent bundle by introducing the Dirac differential of the Lagrangian. This framework was subsequently extended to Dirac structures on the Pontryagin bundle $TQ \oplus T^*Q$ in conjunction with the Hamilton--Pontryagin principle \cite{YoMa2006b}, providing a unified variational perspective for these systems.

In the presence of symmetry, the reduction of Dirac structures on the cotangent bundle of a Lie group, referred to as \textit{Lie--Dirac reduction} \cite{YoMa2007a}, has systematically incorporated implicit Euler--Poincar\'e--Suslov and Lie--Poisson--Suslov equations. This theory was further extended to systems with advected parameters \cite{GBYo2015}, facilitating a formulation of Lie--Poisson reduction on semidirect products. Moreover, the Dirac framework has been shown to incorporate Dirac's original theory of constraints and to establish a natural connection with the Hamiltonian formulation \cite{YoMa2007b}. More recently, it has been extended to nonequilibrium thermodynamics through Dirac structures on the Pontryagin bundle $TQ \oplus T^\ast Q$, see \cite{GBYo2018,YoGB2025}.
%
%

\paragraph{Contribution of this paper.}
The geometric structure underlying Lagrangian systems on the tangent bundle, particularly in the presence of nonholonomic constraints and degeneracy, has not yet been fully elucidated within the Dirac geometric framework. While Dirac structures have been extensively studied on cotangent and Pontryagin bundles, a direct and intrinsic formulation on the tangent bundle remains relatively unexplored. In particular, for degenerate systems, how the Dirac theory on $TQ$ can systematically incorporate primary constraints remains an open question.
\medskip

In this paper, we  introduce a Dirac structure on $TQ$ referred to as a \textit{Lagrange--Dirac structure} to address these issues. This structure is intrinsically induced by the Lagrangian two-form $\Omega_L$ together with a constraint distribution $\Delta_Q \subset TQ$. We then develop the theory of \textit{Lagrange--Dirac dynamical systems}, which provides a unified first-order framework for mechanical systems with nonholonomic constraints, degenerate Lagrangians, and symmetry. 
\medskip

The key contributions are summarized as follows:
\begin{itemize}
    \item Our framework provides a unified geometric foundation for systems with nonholonomic constraints, degenerate Lagrangians and Lie group symmetry and it recovers the first-order formulation of the Lagrange--d'Alembert equations in the hyperregular case, and reduces to the classical Euler--Lagrange equations for unconstrained systems ($\Delta_Q = TQ$).
    
    \item By analyzing the dynamics on the admissible subbundle under a suitable normality condition, we show that the constrained Lagrange--Dirac structure is preserved up to a gauge transformation. This result provides a definitive geometric characterization of the covariance properties of nonholonomic dynamics.
    
    \item Associated with the Lagrange--Dirac dynamical systems, we establish a new variational principle on $TQ$. This principle elucidates that the phase-space Lagrangians commonly employed in plasma physics are not auxiliary constructions but arise intrinsically from the geometric data of the Lagrange--Dirac structure.
    
    \item We develop a reduction theory for Lagrange--Dirac structures under Lie group symmetry, recovering reduced dynamics such as the Euler--Poincar\'e--Suslov equations in a first-order form. 
\end{itemize}

Finally, we illustrate the versatility of the framework through several examples, including charged particle systems, nonholonomic systems and degenerate electric circuits. Furthermore, we extend the theory to infinite-dimensional systems, demonstrating that ideal fluid dynamics can be naturally formulated as a constrained Lagrange--Dirac system on the diffeomorphism group, where the incompressibility constraint and the resulting pressure gradient are treated as intrinsic components of the Lagrange--Dirac structure.

\paragraph{Organization of this paper.}
The remainder of this paper is organized as follows. 
In \S2, we review the variational formulations for  Lagrangian systems in both unconstrained and nonholonomic settings. We specifically analyze the failure of symplecticity in the presence of nonholonomic constraints and characterize the resulting geometric evolution in terms of an exact two-form.
In \S3, we introduce the \textit{Lagrange--Dirac structure} on the tangent bundle $TQ$ and develop the theory of its associated dynamical systems. We investigate the structural properties of these systems and demonstrate that, in the nonholonomic case, the Lagrange--Dirac structure is preserved along the dynamics up to a gauge transformation, providing a novel geometric characterization of such flows.
In \S4, we develop the reduction theory for Lagrange--Dirac structures under Lie group symmetry. We derive the reduced Lagrange--Dirac dynamical systems and show how they recover the first-order formulation of the Euler--Poincar\'e--Suslov equations.
In \S5, we propose a variational principle on $TQ$, termed the \textit{Lagrange--d'Alembert--Dirac principle}. We show that this principle naturally yields the Lagrange--Dirac dynamics on $TQ$ and derive its corresponding reduced variational formulation, clarifying the link between Dirac geometry and variational structures.
Finally, in \S6, we illustrate the framework through several physical examples, including charged particle dynamics, in which we clarify the geometric meaning of phase-space Lagrangians arising in plasma physics, degenerate systems such as electric circuits and Lagrangians linear in velocity, and nonholonomic Suslov systems. We conclude with an application to ideal fluids, showing that fluid motion naturally arises as an infinite-dimensional constrained Lagrange--Dirac system.

\section{Variational structures in Lagrangian mechanics}
In this section, we review Hamilton's principle and the Lagrange--d'Alembert principle, and derive the associated equations of motion as second-order differential equations. We include this material to formulate both unconstrained and nonholonomic systems within a unified variational framework.

In particular, this framework enables us to study the evolution of the Lagrangian two-form along constrained flows. Although this form is not preserved in general for nonholonomic systems (see, e.g., \cite{CoMa2001, PeYo2025}), a finite-time characterization has not been explicitly addressed in the literature. We derive such a formula and show that the deviation from symplecticity is governed by an exact two-form determined by constraint forces. This observation provides the key structural link with gauge transformations of Dirac structures developed later.

\subsection{Hamilton's principle in Lagrangian mechanics}
\paragraph{Hamilton's variational principle.}
Consider a mechanical system with a configuration manifold $Q$  and a Lagrangian $L: TQ \to \mathbb{R}$ as a smooth function defined on the tangent bundle. Let  $\tau_Q: TQ \to Q$ be the tangent bundle projection.  We introduce a space of curves on $Q$ with fixed endpoints by
\[
\mathcal{C}(Q) = \{ q:I=[0,T] \to Q \mid q \text{ is a } C^{2} \text{ curve with fixed endpoints} \},
\]
where $I=[0,T] \subset \mathbb{R}$ is the time interval. Then, we consider the following action functional, i.e., 
\begin{equation*}
\mathfrak{S}(q)= \int_{0}^{T} L(q(t),  \dot{q}(t))dt,
\end{equation*}
where $\dot{q}(t)=\frac{d}{dt}q(t)$ denotes the time derivative of $q(t) \in \mathcal{C}(Q)$.  Hamilton's variational principle states that 
the motion $q(t)$ is characterized as a critical curve of the action functional
\begin{equation}\label{HP_epsilon}
\begin{split}
\mathbf{d}&\mathfrak{S}(q)\cdot \delta{q}=\left.\frac{d}{d\epsilon}\right|_{\epsilon =0}\mathfrak{S}(q_{\epsilon}(t))\\[2mm]
&=\left.\frac{d}{d\epsilon}\right|_{\epsilon =0}  \int_{0}^{T} L\big(q_{\epsilon}(t), \dot{q}_{\epsilon}(t)\big){\rm d}t\\[2mm]
&=\int^T_0  \left(\mathbf{D}_1 L(q(t),\dot{q}(t))-\frac{d}{dt}\mathbf{D}_2 L(q(t),\dot{q}(t))\right) \cdot\delta{q}(t) \,dt+ \big(\mathbf{D}_2 L(q(t),\dot{q}(t) \big)\cdot\delta q(t) \biggr|_{0}^{T}=0,
\end{split}
\end{equation}
for all variations $\delta q$ with fixed endpoints. Here $q_{\epsilon}(t):=q(t, \epsilon)$ is a deformation of $q(t)$ in $\mathcal{C}(Q)$ with parameter $\epsilon \in (-a,a) \subset\mathbb{R}$, where $a$ is some positive small number and the infinitesimal variations associated with the variations $q_{\epsilon}(t)$ are given by
$
\delta q(t):= \left.\frac{d}{d\epsilon} q_{\epsilon}(t)\right|_{\epsilon =0},
$
which is referred to as the \textit{virtual displacement} in mechanics. By the fixed endpoint condition, i.e., $\delta{q}(0)=\delta{q}(T)=0$, it follows from equation \eqref{HP_epsilon} that the boundary term vanishes and the critical curve $q(t)$ satisfies the \textit{classical second-order Euler-Lagrange equation} on the configuration manifold $Q$:
\begin{equation}\label{EulerLagEqn}
\frac{d}{dt} \mathbf{D}_2 L(q,\dot{q}) = \mathbf{D}_1 L(q,\dot{q}).
%
\end{equation}
Here $\mathbf{D}_{i}  L,\; i=1,2$ denote the derivatives of $L(q,v)$ with respect to the $i$-th argument.
\medskip

In finite dimensions with $n=\dim Q$, using local coordinates $q^{1},\dots,q^{n}$ for $q\in Q$, we have
\begin{equation*}
\begin{split}
\frac{d}{dt}\frac{\partial L}{\partial \dot{q}^{i}}-\frac{\partial L}{\partial {q}^{i}}=0,\quad i=1,\dots,n.
\end{split}
\end{equation*}

\paragraph{The Legendre transform.} The Legendre transform associated with $L$ is given by the fiber derivative $\mathbb{F}L:TQ \to T^{\ast}Q$:
\[
\mathbb{F} L(q,v) \cdot w = \left. \frac{d}{d \epsilon}\right|_{\epsilon=0} L(q,v + \epsilon w), \quad \text{or} \quad \mathbb{F}L(q,v)\cdot w = \mathbf{D}_{2}L(q,v)\cdot w,
\]
where $v, w \in T_q Q$. 
The map $\mathbb{F}L$ is fiber-preserving over $Q$, mapping each fiber $T_qQ$ to $T_q^{\ast}Q$. Locally, the Legendre transform is given by
\[
\mathbb{F} L(q,v) = \bigl(q, \mathbf{D}_2 L(q,v) \bigr), 
\]
where $p := \mathbf{D}_2 L(q,v) \in T_{q}^{\ast}Q$ is a momentum. A Lagrangian $L$ is called \textit{regular} if the fiber Hessian $\mathbf{D}_{2}\mathbf{D}_{2}L(q,v)$ is locally nondegenerate. If $\mathbb{F}L$ is a global diffeomorphism, $L$ is called \textit{hyperregular} \cite{MaRa1999}.

\paragraph{Energy conservation.}
Now let us define the energy 
$E_{L}: TQ \to \mathbb{R}$ by, for $(q,v) \in TQ$,
\[
E_{L}(q,v):=\mathbb{F}L(q,v)\cdot v -L(q,v)= \mathbf{D}_{2}L(q,v)\cdot v-L(q,v).
\]
Along a solution curve $q(t) \in Q$ of the Euler--Lagrange equations in \eqref{EulerLagEqn}, $E_{L}(q(t),\dot{q}(t))$
is conserved, since
\[
\begin{split}
\frac{dE_L}{dt}(q(t),\dot{q}(t))=\left(\frac{d}{dt} \mathbf{D}_2 L(q,\dot{q}) - \mathbf{D}_1 L(q,\dot{q})\right) \cdot \dot{q} 
=0.
\end{split}
\]
\paragraph{Lagrangian two-forms on tangent bundles.}
Let $L:TQ \to \mathbb{R}$ be a \textit{hyperregular} Lagrangian. Let $\pi_{Q}:T^{\ast}Q \to Q; (q,p)\mapsto q$ be the  cotangent bundle projection. The cotangent bundle $T^{\ast}Q$ naturally has the {\it canonical one-form} $\Theta_{T^{\ast}Q}$ defined by, for each $p_{q}=(q,p) \in T^{\ast}_{q}Q$,
$$
\Theta_{T^{\ast}Q}(p_{q}) \cdot u_{p_{q}}= \left< p_{q}, T_{p_{q}}\pi_{Q}(u_{p_{q}}) \right>,
$$
for all  $u_{p_{q}} \in T_{p_{q}}T^{\ast}Q$, and the {\it canonical symplectic structure} is given by
\[
 \Omega_{T^{\ast}Q}=-\mathbf{d}\Theta_{T^{\ast}Q}.
\]

In finite-dimensional cases, using local coordinates $q^1, \dots, q^n,  p_1, \dots, p_n$ for $(q,p)\in T^{\ast}Q$, one has
$
\Theta_{T^{\ast}Q}=p_{i}\,dq^{i}
$
and the {\it canonical symplectic structure} is represented by $ \Omega_{T^{\ast}Q}=dq^{i} \wedge dp_{i}$, where we use Einstein's summation convention.
\medskip

By using the Legendre transform $\mathbb{F}L:TQ \to T^{\ast}Q$, we can define an induced one-form $\Theta_L$ on $TQ$, called the {\it Lagrangian  one-form}, by
$
\Theta_L = (\mathbb{F} L)^{\ast} \Theta_{T^{\ast}Q},
$
which is locally expressed by, for each $(q,v) \in TQ$,
\[
\Theta_L(q,v)\cdot (\dot{q},\dot{v})=\mathbf{D}_{2} L (q,v)\cdot \dot{q}.
\]
We also define the induced two-form $\Omega_L$ on $TQ$, called the {\it Lagrangian two-form},  by
$\Omega_L = (\mathbb{F} L)^{\ast} \Omega_{T^{\ast}Q}$. Since $\Omega_{T^{\ast}Q}=-\mathbf{d}\Theta_{T^{\ast}Q}$ holds and the exterior derivative $\mathbf{d}$ commutes with the pull-back, it reads
$
\Omega_L=-\mathbf{d}\Theta_L,
$
which is locally denoted by, for each $(q,v) \in TQ$, 
\begin{equation}\label{LocalLagTwoForm}
\begin{aligned}
\Omega_L(q,v)((\dot q,\dot v),(\delta q,\delta v))
&=
 \mathbf{D}_2(\mathbf{D}_2 L(q,v)\cdot \dot q) \cdot \delta v
-
\mathbf{D}_2(\mathbf{D}_2 L(q,v)\cdot \delta q) \cdot \dot v  \\
&\quad
+
 \mathbf{D}_1(\mathbf{D}_2 L(q,v)\cdot \dot q) \cdot  \delta q 
-
\mathbf{D}_1(\mathbf{D}_2 L(q,v)\cdot \delta q) \cdot  \dot q,
\end{aligned}
\end{equation}
for all $(\dot{q},\dot{v}), (\delta{q},\delta{v}) \in T_{(q,v)}TQ$.
\medskip

In finite-dimensional cases, using local coordinates $q^1,\dots, q^n, v^{1}, \dots, v^n$ for $v_q=(q,v)\in TQ$, the local coordinate expression of $\Theta_L$ is given by
\[
\Theta_L = \frac{\partial L}{\partial v^{i}}dq^i.
\]
Similarly, $\Omega_L=-\mathbf{d}\Theta_L$ is expressed in finite dimensions by
\begin{equation*}
\Omega_L = \frac{\partial^2 L}{\partial q^{j}\partial v^{i}}dq^i \wedge dq^j
+ \frac{\partial^2 L}{\partial v^{j}\partial v^{i}}dq^i \wedge dv^j,\quad i,j=1,\dots, n.
\end{equation*}

\paragraph{Lagrangian systems.}
Assume $L: TQ \to \mathbb{R}$ is a hyperregular Lagrangian. 
A vector field $X_L$ on $TQ$ is said to be a \textit{Lagrangian vector field} if it satisfies the \textit{Lagrangian condition}, for each point $v_{q} \in TQ$,
\begin{equation}\label{LagCond}
\Omega_{L}(v_{q})(X_{L}(v_{q}), w_{v_{q}})=\mathbf{d}E_{L}(v_{q}) \cdot w_{v_{q}},
\end{equation}
for all $w_{v_{q}} \in T_{v_{q}}TQ$. The triple $(X_{L}, E_{L}, \Omega_{L})$ is called a \textit{Lagrangian system}. From the condition \eqref{LagCond}, we immediately obtain the intrinsic expression of the Lagrangian system:
\begin{equation}\label{InstrinsicLagrangeSystem}
\mathbf{i}_{X_{L}}\Omega_{L}=\mathbf{d}E_{L}.
\end{equation}

We have the following theorem concerning Lagrangian systems; see \cite{MaRa1999},

\begin{theorem}\label{Thm:LagSys} 
Let $v_q(t)=(q(t),v(t)) \in TQ$ be a solution curve of $(X_{L}, E_{L}, \Omega_{L})$ that satisfies the Lagrangian condition \eqref{LagCond}.
Then it follows that $(q(t),v(t))$ satisfies the first-order system
of the local Euler--Lagrange equations on $TQ$:
\begin{equation}\label{EulLagEqn}
\displaystyle \frac{d}{dt}\mathbf{D}_{2} L  = \mathbf{D}_{1}L,\qquad \frac{dq}{dt}=v.
\end{equation}
\end{theorem}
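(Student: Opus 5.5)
Work in a local chart of $TQ$ and write the solution curve's velocity as $X_L(v_q)=(\dot q,\dot v)$, an arbitrary test vector as $w_{v_q}=(\delta q,\delta v)$. The idea is to substitute both into the local formula \eqref{LocalLagTwoForm} for $\Omega_L$ and into $\mathbf{d}E_L$, and then compare the coefficients of $\delta v$ and of $\delta q$ in the Lagrangian condition \eqref{LagCond}. Since $w$ is arbitrary, each coefficient must match separately. This splits \eqref{LagCond} into a "fiber" equation and a "base" equation.

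\textbf{Step 1: the differential of the energy.} From $E_L(q,v)=\mathbf{D}_2L(q,v)\cdot v-L(q,v)$ we get
\[
\mathbf{d}E_L\cdot(\delta q,\delta v)=\mathbf{D}_1(\mathbf{D}_2L\cdot v)\cdot\delta q-\mathbf{D}_1L\cdot\delta q+\mathbf{D}_2(\mathbf{D}_2L\cdot v)\cdot\delta v-\mathbf{D}_2L\cdot\delta v .
\]
The last term combines with the piece $\mathbf{D}_2L\cdot\delta v$ that comes from differentiating the explicit factor $v$. Consequently the $\delta v$-coefficient is $\mathbf{D}_2\mathbf{D}_2L(q,v)(v,\cdot)$.

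\textbf{Step 2: the fiber equation.} On the left side of \eqref{LagCond}, the only $\delta v$-term in \eqref{LocalLagTwoForm} is $\mathbf{D}_2\mathbf{D}_2L(q,v)(\dot q,\delta v)$. Equating it with Step 1 gives
\[
\mathbf{D}_2\mathbf{D}_2L(q,v)(\dot q-v,\cdot)=0 .
\]
Hyperregularity makes $\mathbb{F}L$ a diffeomorphism, so its fiber derivative $\mathbf{D}_2\mathbf{D}_2L$ is nondegenerate. Hence $\dot q=v$, which is the second-order condition.

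\textbf{Step 3: the base equation.} Now substitute $\dot q=v$ into the $\delta q$-terms of \eqref{LocalLagTwoForm}. They are
\[
-\mathbf{D}_2\mathbf{D}_2L(\delta q,\dot v)+\mathbf{D}_1(\mathbf{D}_2L\cdot v)\cdot\delta q-\mathbf{D}_1\mathbf{D}_2L(v,\delta q).
\]
Equating these with the $\delta q$-terms of Step 1, the term $\mathbf{D}_1(\mathbf{D}_2L\cdot v)\cdot\delta q$ cancels on both sides. What remains is
\[
\big(\mathbf{D}_2\mathbf{D}_2L\cdot\dot v+\mathbf{D}_1\mathbf{D}_2L\cdot v\big)\cdot\delta q=\mathbf{D}_1L\cdot\delta q .
\]
By the chain rule along the curve with $\dot q=v$, the left side is exactly $\frac{d}{dt}\mathbf{D}_2L(q(t),v(t))\cdot\delta q$. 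Since $\delta q$ is arbitrary, this yields the equations \eqref{EulLagEqn}.

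\textbf{Main obstacle.} The delicate point is the bookkeeping of the mixed term $\mathbf{D}_1(\mathbf{D}_2L\cdot\dot q)\cdot\delta q-\mathbf{D}_1(\mathbf{D}_2L\cdot\delta q)\cdot\dot q$ and of the symmetry of $\mathbf{D}_2\mathbf{D}_2L$. This requires reading the notation correctly: derivatives act with the second slot held as a constant vector in the chart. As a cross-check, I will use the coordinate form $\Omega_L=\frac{\partial^2L}{\partial q^j\partial v^i}dq^i\wedge dq^j+\frac{\partial^2L}{\partial v^j\partial v^i}dq^i\wedge dv^j$ and verify the identity componentwise.
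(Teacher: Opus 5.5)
Your proposal is correct and follows the paper's proof essentially verbatim. You split the Lagrangian condition into its $\delta v$- and $\delta q$-coefficients and use the invertibility of $\mathbf{D}_2\mathbf{D}_2L$ (from hyperregularity) to obtain $\dot q=v$. You then recognize the remaining $\delta q$-terms as $\frac{d}{dt}\mathbf{D}_2L\cdot\delta q$ via the chain rule. The only cosmetic difference is ordering: you substitute $\dot q=v$ before cancelling the $\mathbf{D}_1(\mathbf{D}_2L\cdot v)\cdot\delta q$ term, whereas the paper keeps the combination $\mathbf{D}_1(\mathbf{D}_2L\cdot\dot q)-\mathbf{D}_1(\mathbf{D}_2L\cdot v)$ until the second-order condition is established.
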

\begin{proof}
Let $v_q(t)=(q(t),v(t))$ be a solution curve in $TQ$ that satisfies the Lagrangian condition \eqref{LagCond}.
Using local expressions $v_{q}=(q,v) \in TQ$ and $w_{v_{q}}=(\delta{q},\delta{v})\in T_{v_{q}}TQ$,  the right-hand side of \eqref{LagCond} is computed as
\[
\begin{split}
\mathbf{d}E_{L}(v_{q}) \cdot  w_{v_{q}}
&=\mathbf{D}_{1}E_{L}(q,v)\cdot \delta q+\mathbf{D}_{2}E_{L}(q,v)\cdot \delta v,\\
&=\left(\mathbf{D}_{1}\mathbf{D}_{2}L(q,v)\cdot v - \mathbf{D}_{1}L(q,v)\right)\cdot \delta q+
\left(\mathbf{D}_{2}\mathbf{D}_{2}L(q,v)\cdot v\right)\cdot \delta v.
\end{split}
\]

On the other hand, in view of $X_{L}(q,v)=(\dot{q}, \dot{v})$, it follows from \eqref{LocalLagTwoForm} that the left-hand side of \eqref{LagCond} is given by
\begin{equation*}
\begin{split}
\Omega_{L}(v_{q})(X_{L}(v_{q}), w_{v_{q}})
&=\mathbf{D}_1 (\mathbf{D}_2 L(q,v) \cdot \dot q) \cdot  \delta q  - \mathbf{D}_1 (\mathbf{D}_2 L(q,v) \cdot \delta q)\cdot \dot q \\
& \qquad+ \mathbf{D}_{2}(\mathbf{D}_{2} L(q,v) \cdot \dot{q}) \cdot \delta v  -  \mathbf{D}_{2}(\mathbf{D}_{2} L(q,v) \cdot \delta q) \cdot  \dot{v},
\end{split}
\end{equation*}
where $w_{v_{q}}=(\delta{q},\delta{v}) \in T_{(q(t),v(t))}TQ$ is arbitrary for each fixed time $t$.
Substituting these expressions into \eqref{LagCond}, we obtain
\begin{equation*}
\begin{split}
&\Big(
\mathbf{D}_1 (\mathbf{D}_2 L(q,v) \cdot \dot q)
-
\mathbf{D}_1 (\mathbf{D}_2 L(q,v) \cdot v)
+
\mathbf{D}_1 L(q,v)
\Big) \cdot \delta q \\
&\quad
-
\mathbf{D}_1 (\mathbf{D}_2 L(q,v) \cdot \delta q)\cdot \dot q 
-
\mathbf{D}_2 (\mathbf{D}_2 L(q,v) \cdot \delta q)\cdot \dot v
=
(\mathbf{D}_2\mathbf{D}_2L(q,v)\cdot (v-\dot q))\cdot \delta v
\end{split}
\end{equation*}
for all $\delta{q}$ and for all $\delta{v}$.

Since $L$ is hyperregular, the fiber Hessian $\mathbf{D}_{2}\mathbf{D}_{2} L$ is invertible. Then, the $\delta v$-term implies $\dot q=v$. Then, it follows from the $\delta q$-term that
\begin{equation}\label{LagVarEq-deltaq}
\mathbf{D}_1( \mathbf{D}_2 L(q,v) \cdot \delta{q}) \cdot  \dot q+\mathbf{D}_{2}(\mathbf{D}_{2} L(q,v) \cdot \delta q) \cdot  \dot{v} 
=
\mathbf{D}_{1}L(q,v) \cdot \delta q.
\end{equation}
Here, the left-hand side of \eqref{LagVarEq-deltaq} is 
\begin{equation*}
\begin{split}
\mathbf{D}_1( \mathbf{D}_2 L(q,v) \cdot \delta{q}) \cdot  \dot q
+
\mathbf{D}_{2}(\mathbf{D}_{2} L(q,v) \cdot \delta q) \cdot  \dot{v}
&=\frac{d}{dt} \left(\mathbf{D}_2 L(q,v)  \cdot  \delta q \right)\\
&=\left(\frac{d}{dt} \mathbf{D}_2 L(q,v)\right)  \cdot  \delta q ,
\end{split}
\end{equation*}
where $(\delta q, \delta v)$ is regarded as an arbitrary tangent vector at each fixed time $t$. 
Hence, in this pointwise computation, $\delta q$ is treated as time-independent, and thus $\frac{d}{dt}(\delta q)=0$; see \cite{MaRa1999}.

Thus, we obtain
\[
\left(\frac{d}{dt} \mathbf{D}_2 L(q,v)\right) \cdot  \delta q 
=
\mathbf{D}_{1}L(q,v) \cdot \delta q.
\]
Because $\delta q$ is arbitrary, we thus obtain the first-order Euler--Lagrange system on $TQ$:
\[
\frac{d}{dt} \mathbf{D}_2 L(q,v) = \mathbf{D}_1 L(q,v), \quad \frac{dq}{dt} = v,
\]
which are equivalent to the classical second-order Euler--Lagrange equation on $Q$, given in \eqref{EulerLagEqn}.
\end{proof}

\paragraph{Second-order submanifolds.}
Recall that $\tau_Q:TQ \to Q;\; v_q=(q,v) \mapsto q$ is the tangent bundle projection and $\tau_{TQ}:TTQ \to TQ;\; (q,v, \dot{q},\dot{v}) \mapsto (q,v)$ is also the canonical projection.
Let us define a submanifold of $TTQ$ by
\begin{equation}\label{SecOrdSubMan}
\ddot{Q} := \left\{w \in TTQ \,\big|\,T\tau_{Q}(w)=\tau_{TQ}(w) \right\}.
\end{equation}
Here $T\tau_{Q}: TTQ \to TQ; (q,v,\dot{q}, \dot{v}) \mapsto (q,\dot{q})$ is the tangent map of $\tau_{Q}$, and if $w=(q,v,\dot{q},\dot{v}) \in \ddot{Q}$, then it satisfies the {\it  second-order condition} 
\[
\dot{q}=v.
\]
Thus, an element $w \in \ddot{Q}$ can be identified with $(q,\dot{q},\ddot{q})$ using the second-order condition $\dot{q}=v$.

\paragraph{Lagrangian vector fields.}
Define a bundle map $\Omega_{L}^{\flat}: TTQ \to T^{\ast}TQ$ such that $\Omega_{L}(v_{q})(u, w)=\langle\Omega_{L}^{\flat}(v_{q})(u), w\rangle=\langle\mathbf{i}_{u}\Omega_{L}(v_{q}), w\rangle$ for each $v_{q} \in TQ$ and $u, w \in T_{v_{q}}TQ$.
From the intrinsic Lagrangian system in \eqref{InstrinsicLagrangeSystem}, for $v_{q} \in TQ$, we have
\[
\Omega^{\flat}_{L}(v_{q})\left(X_{L}(v_{q})\right)=\mathbf{d}E_{L}(v_{q}).
\]
The Lagrangian dynamics is uniquely determined by the Lagrangian vector field $X_L$: for each $v_{q}=(q,v) \in TQ$,
\begin{equation*}
X_{L}(q,v)=\left(\Omega^{\flat}_{L}(q,v)\right)^{-1}\mathbf{d}E_{L}(q,v).
\end{equation*}
Locally, 
$X_L=(X_{L_1},X_{L_2})$ is given by
\[
(q,v) \mapsto (\dot{q},\dot{v}) = \bigl(X_{L_1}(q,v), X_{L_2}(q,v)\bigr),
\]
where
\[
\begin{split}
X_{L_{1}}(v_{q})&=v,\\
X_{L_{2}}(q,v)&=\big( \mathbf{D}_{2}\mathbf{D}_{2}L(q,v)\big)^{-1}
\big(\mathbf{D}_{1}L(q,v)-\mathbf{D}_{1}\mathbf{D}_{2}L(q,v)\cdot v\big).
\end{split}
\] 
The Lagrangian vector field $X_{L}$ is a second-order vector field, meaning that $X_{L}: TQ \to \ddot{Q}$ satisfies $T\tau_Q \circ X_L = \text{id}_{TQ}$. Locally, $X_L(q,v) = (q, v, \dot{q}, \dot{v})$ where $\dot{q}=v$ and 
\[
\dot{v} = \big( \mathbf{D}_{2}\mathbf{D}_{2}L(q,v)\big)^{-1} \big(\mathbf{D}_{1}L(q,v) - \mathbf{D}_{1}\mathbf{D}_{2}L(q,v)\cdot v \big).
\]

\paragraph{Euler--Lagrange operator.} 
Following \cite{CeMaRa2001a}, we define the \textit{Euler--Lagrange operator}
$\mathcal{D}_{EL}(L):\ddot{Q} \to T^{\ast}Q$ by
\begin{equation*}
\mathcal{D}_{EL}(L)(q, \dot{q}, \ddot{q}) \cdot \delta {q}= \bigg(\mathbf{D}_{1} L(q,\dot{q})- \frac{d}{dt}\mathbf{D}_{2}L(q,\dot{q})\bigg) \cdot \delta{q},
\end{equation*}
Hamilton's principle \eqref{HP_epsilon} can be restated using the Euler--Lagrange operator as follows: if a curve $q=q(t) \in \mathcal{C}(Q)$ is a critical point of $\mathfrak{S}: \mathcal{C} (Q) \to \mathbb{R}$ that satisfies
\begin{equation*}
\mathbf{d} \mathfrak{S}(q)\cdot \delta{q}=\int^T_0 \mathcal{D}_{EL}(L) \left(q(t), \dot{q}(t), \ddot{q}(t)\right) \cdot \delta q (t)\,dt+  \Theta_L(q(t),\dot{q}(t)) \cdot \hat{\delta q}(t) \biggr|_{0}^{T}=0,
\end{equation*}
for all variations $\delta{q}$ with the fixed endpoint conditions $\delta{q}(0)=\delta{q}(T)=0$, then the critical curve $q=q(t)$ satisfies
\[
\mathcal{D}_{EL}(L)\left(q, \dot{q}, \ddot{q}\right)=0,
\]
which is nothing but the second-order Euler--Lagrange equation \eqref{EulerLagEqn}. In the above, 
\begin{equation*}
\begin{split}
\Theta_L(q,\dot{q}) \cdot (\delta q, \delta \dot{q})
= \mathbf{D}_2 L(q,\dot{q}) \cdot \delta q, \qquad
\hat{\delta q}=\frac{d}{d\epsilon}\biggr{\arrowvert}_{\epsilon =0}\frac{d}{dt}q_{\epsilon}
=\left( \left(q, \dot{q} \right),  \left(\delta q, \delta \dot{q} \right)\right),
\end{split}
\end{equation*}
where $\hat{\delta q}$ denotes the lift of the variation $\delta q$ to $TTQ$, given by the tangent lift of the curve $q_\epsilon(t)$.
\medskip

In finite dimensions, the Euler--Lagrange operator $\mathcal{D}_{EL}(L):\ddot{Q} \to T^{\ast}Q$ is denoted in local coordinates by
\begin{equation*}
\mathcal{D}_{EL}(L)_{i}\left(q, \dot{q}, \ddot{q}\right)dq^{i}= \left(\frac{\partial L}{\partial q^i}(q,\dot{q})- \frac{d}{dt}\frac{\partial L}{\partial \dot{q}^i}(q,\dot{q})\right)dq^{i}.
\end{equation*}
In local coordinates, the corresponding Lagrangian vector field is given by
\begin{equation*}
\ddot{q}^{j}=W^{ji}\left( \frac{\partial L}{\partial q^i} - \frac{\partial^{2} L}{\partial q^k \partial \dot{q}^i}\dot{q}^k\right),\quad i, j, k =1,...,n,
\end{equation*}
where $(W^{ji})$ denotes the inverse  of the Hessian matrix $\left(\frac{\partial^2 L}{\partial \dot{q}^j \partial \dot{q}^i}\right)$.

Thus, there exists a unique second-order Lagrangian vector field $X_{L} : TQ \to \ddot{Q}$ such that
\begin{equation*}
\mathcal{D}_{EL}(L) (X_{L}(q,\dot{q}))=0.
\end{equation*}

\paragraph{Symplecticity of the Lagrangian flow.}
Let $C_L(Q) \subset \mathcal{C}(Q)$ be the {\it space of solution curves of Euler--Lagrange equations}. 
Since the flow of $X_L$ on $TQ$ is uniquely determined by initial conditions, each solution curve is characterized by an initial value $v_q=(q(0),\dot{q}(0)) \in TQ$. We denote the flow of $X_L$ by $\varphi_t: TQ \to TQ$, and write
\[
\varphi_t(v_q) = (q(t), \dot{q}(t)),
\]
so that the base curve is given by $q(t) = \tau_Q(\varphi_t(v_q)) \in C(Q)$.
Now we define the {\it restricted action functional} by 
\[
\hat{\mathfrak{S}}(v_{q}):=\int_0^TL(\varphi_t(v_q))dt,
\]
and a direct computation shows that
\begin{equation*}
\begin{split}
\mathbf{d}\hat{\mathfrak{S}}(v_q)\cdot w_{v_q}&=\Theta_L(\varphi_T(v_q)) \cdot (T\varphi_T(w_{v_q}))-\Theta_L(v_q)\cdot w_{v_q}\\[3mm]
&=\varphi_T^{\ast}\Theta_L(v_q)\cdot w_{v_{q}}-\Theta_L(v_q)\cdot w_{v_q},
\end{split}
\end{equation*}
where $w_{v_q} \in T_{v_q}TQ$.
Taking the exterior derivative and using $\Omega_L = -\mathbf{d}\Theta_L$, we conclude that the Lagrangian flow $\varphi_t$ preserves the symplectic form $\Omega_L$:
\[
\varphi^{\ast}_T\Omega_L = \Omega_L.
\]

\subsection{The Lagrange--d'Alembert principle}\label{sect:LDAP}
\paragraph{Nonholonomic constraints.} Next we consider a mechanical system on an $n$-dimensional configuration manifold $Q$ with hyperregular Lagrangian $L: TQ \to \mathbb{R}$, and we also consider a constraint distribution $\Delta_Q \subset TQ$ on $Q$.  
In this paper, we assume that every distribution is {\it regular}, namely, it is smooth and has constant rank at each point $q \in Q$, unless otherwise stated.

Now we suppose that the distribution $\Delta_Q \subset TQ$ is given, for each $q \in Q$, by
\begin{equation}\label{ConstraintDistribution}
\Delta_Q(q)=\left\{ \dot{q} \in T_{q}Q\,\mid \left< \omega^{r}(q),\dot{q}\right>=0,\,r=1,\ldots,m <n\right\},
\end{equation}
where $\omega^{r}=\omega_{i}^{r}(q)dq^{i},\; r=1,\dots,m<n$ are some given $m$ independent constraint one-forms on $Q$. 

Let $X,Y \in \mathfrak{X}(Q)$ be vector fields taking values in $\Delta_Q$. 
If $\Delta_Q$ is closed under the Lie bracket, i.e., $[X,Y] \in \Delta_Q$ for all such $X,Y$, 
then the distribution $\Delta_Q \subset TQ$ is \textit{completely integrable} in the sense of Frobenius. 
In this case, the constraint is holonomic, since there exists a submanifold $N \subset Q$ such that
\[
T_q N = \Delta_Q(q), \quad \forall q \in N.
\]
Otherwise, the distribution $\Delta_Q$ is nonintegrable, namely, nonholonomic.

\paragraph{The Lagrange--d'Alembert principle.}
In the presence of nonholonomic constraints, Hamilton's principle in \eqref{HP_epsilon} is modified into the variational formulation called the \textit{Lagrange--d'Alembert principle}. 

A curve $q(t)$, $t \in [0,T]$, is a solution of this principle if it satisfies the kinematic constraint $\dot{q}(t) \in \Delta_Q(q(t))$ and if
\begin{equation}\label{LADP_op}
\begin{split}
\mathbf{d} \mathfrak{S}(q)\cdot \delta{q}
=\int^T_0  \left( \mathbf{D}_1 L(q,\dot{q})-\frac{d}{dt} \mathbf{D}_2 L(q,\dot{q})\right) \cdot \delta{q} \, dt+ \mathbf{D}_2 L(q,\dot{q}) \cdot \delta{q} \bigg|^T_0=0,
\end{split}
\end{equation}
for all variations $\delta{q}(t)$ chosen such that $\delta q(t) \in \Delta_Q(q(t))$ with $\delta q(0)=\delta q(T)=0$.

Then, the curve $q(t)$ obeys the \textit{local Lagrange--d'Alembert equations}:
\begin{equation}\label{LDAEq}
\frac{d}{dt} \mathbf{D}_2 L(q,\dot{q})-\mathbf{D}_1 L(q,\dot{q}) \in \Delta_Q^{\circ}(q), \qquad \dot{q} \in \Delta_Q(q),
\end{equation}
where $\Delta_Q^{\circ}\subset T^{\ast}Q$ is the annihilator  of the distribution $\Delta_Q \subset TQ$ and is defined by, for each $q \in Q$,
\[
\Delta_Q^{\circ}(q)=\left\{ \beta \in T_{q}^{\ast}Q \mid \left< \beta, v_q \right>=0, \;\textrm{for all}\;v_q \in \Delta_{Q}(q) \right\}.
\]

An element $\beta=\beta_i(q) dq^i \in \Delta_Q^{\circ}$ represents the constraint force, where the components are expressed as $\beta_i(q)=\mu_r \omega_i^r(q)$ using Lagrange multipliers $\mu_r$. Consequently, the local coordinate expressions for the Lagrange--d'Alembert equations in \eqref{LDAEq} are given by:
\begin{equation*}
 \frac{d}{dt} \frac{\partial L}{\partial \dot{q}^{i}} - \frac{\partial L}{\partial q^{i}} =\mu_{r}\omega^{r}_{i}(q), \;\; \omega_{i}^{r}(q)\dot{q}^{i}=0.
\end{equation*}
For the unconstrained case in which $\Delta_{Q}=TQ$, it follows that the equations \eqref{LDAEq} recover the Euler--Lagrange equations \eqref{EulerLagEqn}.

\paragraph{Energy conservation.}
Along the solution curve $q(t) \in Q$ of the Lagrange--d'Alembert equations in \eqref{LDAEq}, the energy $E_{L}$ is conserved, since
\[
\frac{dE_L}{dt}(q,\dot{q})=\left( \frac{d}{dt} \mathbf{D}_2 L(q,\dot{q})-\mathbf{D}_1 L(q,\dot{q})\right) \cdot \dot{q}
=0,
\]
where $\frac{d}{dt} \mathbf{D}_2 L(q,\dot{q})-\mathbf{D}_1 L(q,\dot{q})\in \Delta_{Q}^{\circ}(q)$ and $\dot{q} \in \Delta_{Q}(q)$.
\begin{definition}[Second-order nonholonomic constraint submanifold]
Recall that $\ddot{Q}$ is the second-order submanifold of $TTQ$ given in \eqref{SecOrdSubMan}. 
We restrict $\ddot{Q}$ to the distribution $\Delta_Q \subset TQ$ and define the \textit{second-order nonholonomic constraint submanifold} of $TTQ$ by
\[
\ddot{Q}_{\Delta_Q} := \{ w \in \ddot{Q} \mid \tau_{TQ}(w) \in \Delta_Q \}.
\]
In other words, this is the space of admissible accelerations. Namely, an element $w=(q,v,\dot{q},\dot{v}) \in TTQ$ satisfies the second-order condition $\dot q = v$ together with the constraint $v = \tau_{TQ}(w) \in \Delta_Q(q)$. 
Thus, an element of $\ddot{Q}_{\Delta_Q}$ is identified with $(q,\dot{q},\ddot{q})$, where $\dot{q} \in \Delta_Q(q)$.
\end{definition}
%
\begin{definition}[The Lagrange--d'Alembert operator]
Let $\iota_{\Delta_Q}: \Delta_Q \hookrightarrow TQ$ and 
$\iota^{(2)}_{\Delta_Q}: \ddot{Q}_{\Delta_Q} \hookrightarrow \ddot{Q}$ 
be the natural inclusions. 

Now, we define the \textit{Lagrange--d'Alembert operator} as the restriction of the Euler--Lagrange operator to $\ddot{Q}_{\Delta_Q}$:
\[
\mathcal{D}_{LD}(L) := \mathcal{D}_{EL}(L) \circ \iota^{(2)}_{\Delta_Q} 
: \ddot{Q}_{\Delta_Q} \to T^*Q.
\]
Its coordinate expression is given by
\[
\mathcal{D}_{LD}(L)_i(q,\dot q,\ddot q)\,dq^i
= \left(\frac{\partial L}{\partial q^i}(q,\dot q)
- \frac{d}{dt}\frac{\partial L}{\partial \dot q^i}(q,\dot q)\right)dq^i
\in T^*_q Q.
\]
\end{definition}
Now the Lagrange--d'Alembert principle in \eqref{LADP_op} can be restated by using the Lagrange--d'Alembert operator as follows:
\begin{equation}\label{LADP_int}
\mathbf{d} \mathfrak{S}(q)\cdot \delta{q}=\int^T_0 \mathcal{D}_{LD}(L) \left(q, \dot{q}, \ddot{q}\right) \cdot \delta q \,dt+\Theta_L \left(q,\dot{q}\right)\cdot \hat{\delta{q}}\biggr{\arrowvert}^T_0=0,
\end{equation}
for chosen variations $\delta q \in \Delta_Q(q)$ with the fixed  endpoint conditions $\delta q(0)=\delta q(T)=0$, together with the nonholonomic constraints $\dot{q} \in \Delta_{Q}(q)$, so that $\ddot{q}\in \ddot{Q}_{\Delta_Q}$. 
\medskip

Hence, it follows from equation \eqref{LADP_int} that if a curve $q=q(t) \in \mathcal{C}(Q)$ is a critical point of $\mathfrak{S}: \mathcal{C} (Q) \to \mathbb{R}$, i.e.,
$\mathbf{d} \mathfrak{S}(q)=0$ for all $\delta q(t) \in \Delta_Q(q(t))$ with the fixed endpoint conditions, then $q(t)$ satisfies the {\it Lagrange--d'Alembert equations}:
\begin{equation}\label{LDAEOp_eqn}
\mathcal{D}_{LD}(L)(q, \dot{q}, \ddot{q}) \in \Delta_{Q}^{\circ}(q), \qquad \dot{q} \in \Delta_{Q}(q),
\end{equation}
which is equivalent with the equations in \eqref{LDAEq}.

\begin{remark}\rm
In the present formulation, the Lagrange--d'Alembert operator is defined as
\[
\mathcal{D}_{LD}(L) := \mathcal{D}_{EL}(L) \circ \iota^{(2)}_{\Delta{Q}}: \ddot{Q}_{\Delta_Q} \to T^*Q,
\]
whose values are covectors in the ambient cotangent bundle \(T^*Q\).
The Lagrange--d'Alembert equations are then expressed by the condition in \eqref{LDAEOp_eqn}, namely, 
\[
\mathcal{D}_{LD}(L)(q, \dot{q}, \ddot{q}) \in \Delta_Q^\circ(q), 
\qquad 
\dot{q} \in \Delta_Q(q).
\]
This viewpoint keeps the dynamics in the ambient space and identifies the constraint force explicitly as covectors in $\Delta_Q^\circ$.

In conjunction with $\mathcal{D}_{LD}(L): \ddot{Q}_{\Delta_Q} \to T^*Q$, the projected Lagrange--d'Alembert  operator 
$
\mathcal{LD}(L): \ddot{Q}_{\Delta_Q} \to \Delta_Q^*,
$
whose values lie in the dual bundle $\Delta_Q^*$ rather than in the entire cotangent bundle, was employed in the paper \cite{CeMaRa2001b}.
The two formulations are related through the natural projection
\[
\mathcal{LD}(L) = (\iota_{\Delta_{Q}})^{\ast} \circ \mathcal{D}_{LD}(L),
\]
where $(\iota_{\Delta_{Q}})^{\ast}: T^*Q \to \Delta_Q^*$ is the dual of the natural injection $ \iota_{\Delta_{Q}}: \Delta_{Q}  \hookrightarrow  TQ$.
In this setting, we obtain the {\it constrained Lagrange--d'Alembert equations} as 
\begin{equation*}
\mathcal{LD}(L)(q, \dot{q}, \ddot{q}) =0.
\end{equation*}
In the above, since the kernel of the projection $(\iota_{\Delta_{Q}})^{\ast}: T^*Q \to \Delta_Q^*$ is precisely the annihilator  $\Delta_Q^\circ(q)$, the condition $\mathcal{LD}(L)(q, \dot{q}, \ddot{q}) =0$ is pointwise equivalent to $\mathcal{D}_{LD}(L)(q, \dot{q}, \ddot{q}) \in \Delta_Q^\circ(q)$.
Hence, the operator $\mathcal{LD}(L)$ may be viewed as the projection of the ambient-space operator $\mathcal{D}_{LD}(L)$ onto the dual of the constraint distribution.
This relation clarifies that both approaches are compatible; that is,
$\mathcal{D}_{LD}(L)$ emphasizes the embedding into $T^*Q$
while $\mathcal{LD}(L)$ focuses on the intrinsic geometry of the constraint subbundle.
\end{remark}

\subsection{Structural  properties in the Lagrange--d'Alembert principle}\label{Sec:structural properties}
In this subsection, we study structural properties underlying the Lagrange--d'Alembert principle. In particular, we derive a finite-time characterization of the evolution of the Lagrangian two-form along the constrained flow, which clarifies the failure of symplecticity in nonholonomic systems in terms of an exact two-form. 

\paragraph{Constraint distributions and annihilators.}
Associated with $\Delta_{Q}$ on $Q$, we define a distribution $\Delta_{TQ}$ on $TQ$ by
\[
\Delta_{TQ}=(T\tau_{Q})^{-1}(\Delta_{Q})  \subset TTQ, 
\]
which is locally denoted by, for each $v_{q} \in T_{q}Q$,
\[
\Delta_{TQ}(v_{q})=\left\{ (q,v,\dot{q},\dot{v}) \in T_{v_{q}}TQ \mid  \dot{q} \in \Delta_{Q}(q) \right\}.
\]
Therefore, the annihilator of $\Delta_{TQ}$ is given by the pull-back of $\Delta_Q^\circ$ as
\[
\Delta_{TQ}^{\circ}=\tau_Q^{\ast}\Delta_Q^{\circ} \subset T^{\ast}TQ.
\]
%
Let $\beta \in \Gamma(\Delta^{\circ}_{Q})$ be the covector field on $Q$, which represents the constraint force enforcing the constraints $\Delta_{Q} \subset TQ$.  An element of $\Delta^{\circ}_{TQ}$ is given by $\boldsymbol{\beta}:=\tau_Q^{\ast}\beta$, which is a horizontal one-form on $TQ$  that annihilates $w_{v_q} \in \Delta_{TQ}(v_{q})$ such that, for each $v_q \in TQ$, 
\[
\boldsymbol{\beta}(v_{q})\cdot w_{v_q}= \left<\beta(\tau_Q(v_{q})), T\tau_Q(w_{v_q})\right>=0.
\]
\begin{definition}[Symplectic orthogonal subbundles and normality]\label{def:SympOrthSub_Normality}
Associated with the Lagrangian two-form $\Omega_L$, define the
$\Omega_L$--orthogonal subbundle $\mathcal V\subset TTQ|_{\Delta_Q}$
(assumed to have constant rank) by, for each $v_{q}\in \Delta_{Q}$,
\[
\mathcal V_{v_q}
=
\{\,w\in T_{v_q}TQ \mid
\mathbf i_w\Omega_L(v_q)\in\Delta_{TQ}^\circ(v_q)\,\},
\]
for which we sometimes write $\mathcal{V}=(T\Delta_{Q})^{\Omega_{L}}$.
Assume that the constraints are normal~\cite{MadeLeDadeDi1996, KM1998, Marle1998}, i.e., for every $v_q\in\Delta_Q$,
\[
T_{v_q}\Delta_Q\cap\mathcal V_{v_q}=\{0\}.
\]
Under this assumption, the tangent space splits as
\[
T_{v_q}TQ=T_{v_q}\Delta_Q\oplus\mathcal V_{v_q},
\]
and hence 
\[
TTQ|_{\Delta_Q}=T\Delta_Q\oplus\mathcal V \quad \text{over } \Delta_Q.
\]
\end{definition}

\paragraph{The Lagrange--d'Alembert vector fields of constrained dynamics.}
Under the normality condition given in Definition \ref{def:SympOrthSub_Normality}, restricting the Lagrangian vector field $X_{L}$ to the constraint
submanifold $\Delta_{Q}\subset TQ$, we have the unique decomposition
\begin{equation*}
X_{L}=X_{\Delta_{Q}}+X_{\mathcal{V}}, \quad \textrm{over \;$\Delta_Q$}.
\end{equation*}
This decomposition separates the constrained dynamics and the constraint force.
Namely, $X_{\Delta_Q}: \Delta_Q \to T\Delta_{Q}$ represents the constrained dynamics, where the component of motion tangent to $\Delta_Q$ describes the actual evolution of the system under the constraints. On the other hand, $X_{\mathcal V}: \Delta_Q \to \mathcal V$ represents 
the constraint force field, taking values in the symplectic orthogonal subbundle $\mathcal V$. It is given by
\[
X_{\mathcal{V}}(q,\dot q)
= \big(\Omega^{\flat}_L(q,\dot{q})|_{\mathcal V_{(q,\dot{q})}}\big)^{-1} \cdot \beta\big(\tau_{Q}(q,\dot q)\big).
\]
In the above, $\Omega_L^{\flat}|_{\mathcal V}$ 
is the restriction of $\Omega_L^{\flat}: TTQ \to T^{\ast}TQ$ to $\mathcal V$, which is an isomorphism under
the assumptions of the hyperregularity of $L$ and the normality of the constraints. Note that, associated with the Lagrange--d'Alembert equations for the constrained system \eqref{LDAEOp_eqn}, we have the relation
\begin{equation}\label{eq: LDA}
\mathcal{D}_{LD}(L) \circ X_{\Delta_{Q}}(q,\dot{q})
= \beta\big(\tau_{Q}(q,\dot q)\big)
\in \Delta_Q^\circ\big(\tau_{Q}(q,\dot q)\big)
\end{equation}
for each $(q,\dot{q}) \in \Delta_{Q}$. Consequently, the constrained dynamics can be
written as
\begin{equation*}
\left\{
\begin{aligned}
& \ddot{q} = X_{\Delta_{Q}}(q,\dot{q}), \\[2mm]
& X_{\mathcal{V}}(q,\dot{q})
  =(\Omega^{\flat}_{L}(q,\dot{q})|_{\mathcal V})^{-1}\!\big(\boldsymbol{\beta}(q,\dot{q})\big).
\end{aligned}
\right.
\end{equation*}

\begin{remark}\rm
In equation \eqref{eq: LDA}, although the Lagrange--d'Alembert operator $\mathcal{D}_{LD}(L)$ is defined on the space of admissible accelerations $\ddot{Q}_{\Delta_Q}$, the nonholonomic dynamics is described by a vector field $X_{\Delta_{Q}}$ taking values in $T\Delta_Q \subset \ddot{Q}_{\Delta_Q}$. Consequently, the operator is evaluated only on accelerations compatible with the constraint manifold. This reflects the fact that the actual trajectory $(q, \dot{q})$ is confined to $\Delta_Q$, even though the operator itself is defined on the ambient acceleration space.
\end{remark}

\paragraph{Failure of symplectic preservation in nonholonomic systems.}
The Lagrange--d'Alembert principle \eqref{LADP_op} provides a fundamental identity characterizing the failure of symplectic preservation in the presence of nonholonomic constraints.

\begin{framed}
\begin{proposition}\rm\label{Prop:ConvOmegaL}
Let $X_{\Delta_Q}$ be the vector field on the constraint submanifold $\Delta_Q$ determined by the Lagrange--d'Alembert principle, and let 
\[
\varphi_t:\Delta_Q\to\Delta_Q
\]
be its (local) flow.  Let $\beta \in \Gamma(\Delta_Q^\circ) \subset \Omega^{1}(Q)$ and define $\boldsymbol{\beta} := \tau_Q^{\ast}\beta\in \Gamma(\Delta_{TQ}^\circ) \subset \Omega^{1}(TQ)$. Recall $\iota_{\Delta_Q}:\Delta_Q \hookrightarrow TQ$ is the inclusion map.  
Then, along solution curves of the constrained dynamics, one has the identity on $\Delta_Q$:
\begin{equation*}
\pounds_{X_{\Delta_{Q}}}\,(\iota_{\Delta_Q}^{\ast}\Omega_{L}) \;=\; \mathbf{d}\,(\iota_{\Delta_Q}^{\ast}\boldsymbol{\beta}).
\end{equation*}
Equivalently, the following relation gives a finite-time expression of this failure.
\begin{equation}\label{Non_Symp_ExtForceLagSys}
\varphi_t^{\ast}(\iota_{\Delta_Q}^{\ast}\Omega_L) = \iota_{\Delta_Q}^{\ast}\Omega_L + \int_0^t \varphi_s^{\ast}\mathbf{d}(\iota_{\Delta_Q}^{\ast}\boldsymbol{\beta})\, ds
\end{equation}
where all forms and pull-backs are restricted to $\Delta_Q$.
\end{proposition}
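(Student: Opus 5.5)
The plan is to derive the infinitesimal identity from Cartan's magic formula, after first establishing a \emph{forced} version of the Lagrangian condition \eqref{LagCond} on $\Delta_Q$. The finite-time formula \eqref{Non_Symp_ExtForceLagSys} then follows by integrating in time.

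\textbf{Step 1 (forced Lagrangian condition).} I will show that on $\Delta_Q$
\[
\iota_{\Delta_Q}^{\ast}\big(\mathbf{i}_{X_{\Delta_Q}}\Omega_L\big)=\iota_{\Delta_Q}^{\ast}\mathbf{d}E_L+\iota_{\Delta_Q}^{\ast}\boldsymbol{\beta},
\]
where $\beta$ is the constraint force determined by \eqref{eq: LDA}. To do this I repeat the local computation in the proof of Theorem~\ref{Thm:LagSys}, without assuming the Euler--Lagrange equations, for a second-order vector $X_{\Delta_Q}(q,\dot q)=(q,\dot q,\dot q,\ddot q)$ and an arbitrary $w=(\delta q,\delta v)$.

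The $\delta v$-terms of $\Omega_L(X_{\Delta_Q},w)$ and $\mathbf{d}E_L\cdot w$ coincide because of the second-order condition $\dot q=v$. The $\delta q$-terms combine into
\[
\Big(\mathbf{D}_1L-\tfrac{d}{dt}\mathbf{D}_2L\Big)\cdot\delta q=\mathcal{D}_{LD}(L)(X_{\Delta_Q})\cdot\delta q .
\]
By \eqref{eq: LDA} this equals $\beta\cdot T\tau_Q(w)=\boldsymbol{\beta}\cdot w$, which gives the claim with a plus sign. Consistently, in the splitting $X_L=X_{\Delta_Q}+X_{\mathcal V}$ this is the statement that $\mathbf{i}_{X_{\mathcal V}}\Omega_L$ reproduces $-\boldsymbol{\beta}$ up to the stated conventions.

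\textbf{Step 2 (pulling back to $\Delta_Q$).} Because $X_{\Delta_Q}$ is tangent to $\Delta_Q$, contraction commutes with pull-back:
\[
\iota_{\Delta_Q}^{\ast}\big(\mathbf{i}_{X_{\Delta_Q}}\Omega_L\big)=\mathbf{i}_{X_{\Delta_Q}}\big(\iota_{\Delta_Q}^{\ast}\Omega_L\big).
\]
The form $\iota_{\Delta_Q}^{\ast}\Omega_L=-\mathbf{d}(\iota_{\Delta_Q}^{\ast}\Theta_L)$ is closed. Cartan's formula then gives
\[
\pounds_{X_{\Delta_Q}}\iota_{\Delta_Q}^{\ast}\Omega_L=\mathbf{d}\,\mathbf{i}_{X_{\Delta_Q}}\iota_{\Delta_Q}^{\ast}\Omega_L=\mathbf{d}\big(\mathbf{d}\,\iota_{\Delta_Q}^{\ast}E_L+\iota_{\Delta_Q}^{\ast}\boldsymbol{\beta}\big)=\mathbf{d}(\iota_{\Delta_Q}^{\ast}\boldsymbol{\beta}),
\]
using $\mathbf{d}^2=0$.

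\textbf{Step 3 (finite-time formula).} For the flow one has
\[
\frac{d}{ds}\varphi_s^{\ast}\alpha=\varphi_s^{\ast}\pounds_{X_{\Delta_Q}}\alpha .
\]
Applying this to $\alpha=\iota_{\Delta_Q}^{\ast}\Omega_L$ and integrating from $0$ to $t$ yields \eqref{Non_Symp_ExtForceLagSys}. Conversely, differentiating \eqref{Non_Symp_ExtForceLagSys} at $t=0$ recovers the Lie-derivative identity, so the two statements are equivalent.

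\textbf{Main obstacle.} The delicate point is Step 1, and specifically making $\beta$ a well-defined smooth one-form on $\Delta_Q$ so that $\mathbf{d}(\iota_{\Delta_Q}^{\ast}\boldsymbol{\beta})$ makes sense. The multipliers $\mu_r$ in $\beta=\mu_r\omega^r$ depend on $(q,\dot q)$. They are determined uniquely and smoothly by the normality assumption of Definition~\ref{def:SympOrthSub_Normality} together with hyperregularity, through the invertibility of $\Omega_L^\flat|_{\mathcal V}$.

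Consequently $\boldsymbol{\beta}$ should be read as the one-form on $\Delta_Q$ given by $(q,\dot q)\mapsto \mu_r(q,\dot q)\,\tau_Q^{\ast}\omega^r$. It is this form, not a pull-back of a fixed form on $Q$, that appears in the identity, and its exterior derivative includes terms $\mathbf{d}\mu_r\wedge\tau_Q^{\ast}\omega^r$. I would state this interpretation explicitly and check the sign convention against \eqref{eq: LDA}.
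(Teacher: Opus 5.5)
Your argument is correct, but it runs in the opposite direction from the paper's proof.

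The paper obtains the finite-time formula \eqref{Non_Symp_ExtForceLagSys} first, by a variational argument. It introduces the restricted action $\hat{\mathfrak{S}}(v_q)=\int_0^t L(\varphi_s(v_q))\,ds$ on $\Delta_Q$ and computes $\mathbf{d}\hat{\mathfrak{S}}$ from the Lagrange--d'Alembert identity. The boundary term yields $\varphi_t^{\ast}(\iota_{\Delta_Q}^{\ast}\Theta_L)-\iota_{\Delta_Q}^{\ast}\Theta_L$, and the interior term yields $\int_0^t\varphi_s^{\ast}(\iota_{\Delta_Q}^{\ast}\boldsymbol{\beta})\,ds$. It then applies $\mathbf{d}$, uses $\mathbf{d}^2=0$, and only afterwards differentiates at $t=0$ to get the Lie-derivative identity.

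You go infinitesimal first: the pointwise forced Lagrangian condition $\mathbf{i}_{X_{\Delta_Q}}\Omega_L=\mathbf{d}E_L+\boldsymbol{\beta}$, then Cartan's formula on $\Delta_Q$, then integration along the flow. This is essentially the argument the paper itself gives later for Proposition~\ref{Prop:structural property}, where $\boldsymbol{\beta}$ is defined as $\mathbf{i}_X\Omega_L-\mathbf{d}E_L$.

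What each route buys:
\begin{itemize}
\item Your route is purely local and avoids differentiating a functional over the family of solution curves. It also shows exactly where the second-order condition enters: the cancellation of the $\delta v$-terms and of the $\mathbf{D}_1\mathbf{D}_2L$-terms.
\item The paper's route ties the defect directly to the variational principle. It is the continuous counterpart of the discrete argument for $\Phi^{\ast}\Omega_{L_d}=\Omega_{L_d}+\mathbf{d}\boldsymbol{\beta}_d$ invoked in the remark that follows.
\end{itemize}

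Your caveat about $\boldsymbol{\beta}$ is well taken. The identity holds for the velocity-dependent constraint force $\mu_r(q,\dot q)\,\tau_Q^{\ast}\omega^r$ on $\Delta_Q$. It does not hold for an arbitrary fixed $\beta\in\Gamma(\Delta_Q^\circ)$ on $Q$, as the statement literally reads. The paper's proof tacitly uses your reading, since its integrand $\mathcal{D}_{LD}(L)\cdot\delta q$ is evaluated along the trajectory through $v_q$.

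Your sign check is also right. Step~1 forces $\mathbf{i}_{X_{\mathcal V}}\Omega_L=-\boldsymbol{\beta}$, so the paper's expression $X_{\mathcal V}=(\Omega_L^\flat|_{\mathcal V})^{-1}\beta$ has a sign slip. This affects neither proof.
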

\end{framed}
\begin{proof}
Let $\mathcal{C}_{L}(Q) \subset \mathcal{C}(Q)$ be the {\it space of solution curves of the Lagrange--d'Alembert equations} in \eqref{LDAEq}. 
Consider the restriction of the action functional 
\begin{equation*}
\mathfrak{S}(q)= \int_0^t L \left( q(s), \dot{q}(s) \right) ds
\end{equation*}
to the submanifold $\mathcal{C}_{L}(Q) \subset \mathcal{C}(Q)$. We then identify 
a solution curve in $\mathcal{C}_{L}(Q)$, i.e., 
\[
s \mapsto \varphi_s(v_q), \quad s \in [0,t]
\]
with the initial condition $v_q=(q(0),\dot{q}(0)) \in \Delta_Q \subset TQ$ so that
\begin{equation*}
\hat{\mathfrak{S}}(v_{q})=\mathfrak{S}(q(\cdot)),\quad q \in \mathcal{C}_{L}(Q) \;\;\text{and}\;\; (q(0),\dot{q}(0))=v_{q}.
\end{equation*}
Define the restricted action functional $\hat{\mathfrak{S}}(v_{q}): \Delta_{Q} \to \mathbb{R}$ by
\begin{align*}
\hat{\mathfrak{S}}(v_{q})=\int_0^t L(\varphi_s(v_q))ds
\end{align*}
where we note $\hat{\mathfrak{S}}$ is a function on $\Delta_Q$. Then, it follows from \eqref{LDAEOp_eqn} that
\begin{equation}\label{dHatS}
\begin{split}
&\mathbf{d}\hat{\mathfrak{S}}(v_q)\cdot w_{v_q}
=\int^t_0 \mathcal{D}_{LD}(L)(q, \dot{q}, \ddot{q})\cdot \delta q ds+\Theta_L \left(\dot{q}\right)\cdot \hat{\delta{q}}\biggr{\arrowvert}^t_0\\[2mm]
&=\int_0^t \boldsymbol{\beta}(\iota_{\Delta_{Q}}(\varphi_s(v_q)))\cdot (T\iota_{\Delta_{Q}}({\varphi_{s}}_{\ast}(w_{v_{q}})))ds\\
&\hspace{3cm}+
\Theta_L(\iota_{\Delta_{Q}}(\varphi_t(v_q)))\cdot (T\iota_{\Delta_{Q}}{\varphi_{t}}_{\ast}(w_{v_{q}}))-\Theta_L(\iota_{\Delta_{Q}}(v_q))\cdot T\iota_{\Delta_{Q}}(w_{v_{q}})\\[2mm]
&=\int_0^t \varphi_s^{\ast}(\iota_{\Delta_{Q}}^{\ast}\boldsymbol{\beta})(v_q)\cdot w_{v_{q}}ds
+\varphi_t^{\ast}(\iota_{\Delta_{Q}}^{\ast}\Theta_L)(v_q)\cdot w_{v_{q}}-(\iota_{\Delta_{Q}}^{\ast}\Theta_L)(v_q)\cdot w_{v_q},
\end{split}
\end{equation}
for arbitrary $w_{v_{q}}=(q(0),\dot{q}(0),\delta{q}(0),\delta\dot{q}(0)) \in T_{v_{q}}\Delta_{Q}$. 
From \eqref{dHatS}, we obtain
\begin{align}\label{dHatS_2}
\mathbf{d}\hat{\mathfrak{S}}=\varphi_t^{\ast}(\iota_{\Delta_{Q}}^{\ast}\Theta_L)-(\iota_{\Delta_{Q}}^{\ast}\Theta_L) + \int_0^t \varphi_s^{\ast}(\iota_{\Delta_{Q}}^{\ast}\boldsymbol{\beta}) ds.
\end{align}
Taking the exterior derivative of \eqref{dHatS_2},
\begin{equation*}
\begin{split}
0&=\mathbf{d}\mathbf{d}\hat{\mathfrak{S}}\\
&=\mathbf{d}\left(\varphi^*_t(\iota_{\Delta_{Q}}^{\ast}\Theta_L)-(\iota_{\Delta_{Q}}^{\ast}\Theta_L) +\int_0^t \varphi_s^{\ast}(\iota_{\Delta_{Q}}^{\ast}\boldsymbol{\beta}) ds\right)\\
&=-\varphi^*_t(\iota_{\Delta_{Q}}^{\ast}\Omega_L)+(\iota_{\Delta_{Q}}^{\ast}\Omega_L) + \int_0^t \varphi_s^{\ast}\mathbf{d}(\iota_{\Delta_{Q}}^{\ast}\boldsymbol{\beta}) ds
\end{split}
\end{equation*}
and hence we get
\begin{equation*}
\varphi^{\ast}_t (\iota_{\Delta_{Q}}^{\ast}\Omega_L) = (\iota_{\Delta_{Q}}^{\ast}\Omega_L) + \int_0^t \varphi_s^{\ast}\mathbf{d}(\iota_{\Delta_{Q}}^{\ast}\boldsymbol{\beta}) ds.
\end{equation*}
By taking the derivative with respect to $t$ at $t=0$, it follows that 
\[
\frac{d}{dt}\biggr{\arrowvert}_{t=0}\left( \varphi_{t}^{\ast}(\iota_{\Delta_{Q}}^{\ast}\Omega_{L})\right)=\mathbf{d}(\iota_{\Delta_{Q}}^{\ast}\boldsymbol{\beta}).
\]

Since
\[
\pounds_{X_{\Delta_Q}}(\iota_{\Delta_{Q}}^{\ast}\Omega_{L})=\frac{d}{dt}\biggr{\arrowvert}_{t=0}\left( \varphi_{t}^{\ast}(\iota_{\Delta_{Q}}^{\ast}\Omega_{L})\right),
\]
we obtain the desired relation:
\[
\pounds_{X_{\Delta_Q}}(\iota_{\Delta_{Q}}^{\ast}\Omega_{L})=\mathbf{d}(\iota_{\Delta_{Q}}^{\ast}\boldsymbol{\beta}).
\] 
By abuse of notation, we sometimes write
\[
\pounds_{X_{\Delta_Q}}\Omega_{L}=\mathbf{d}\boldsymbol{\beta},
\]
meaning the pull-back of all quantities to $\Delta_Q$.
\end{proof}
Thus, although nonholonomic dynamics is not symplectic in the usual sense, the deviation from symplecticity is completely controlled by an exact two-form generated by the reaction forces.
\begin{remark}
In \cite{CoMa2001}, it was shown that the Lagrangian two-form is not preserved along the constrained flow, namely,
\[
\pounds_{X_{\Delta_Q}}\Omega_L = d\boldsymbol{\beta},
\]
and also that there exists the relation
\[
\Phi^*\Omega_{L_d} = \Omega_{L_d} + \mathbf{d}\boldsymbol{\beta}_d
\]
in the discrete level, where we denote the discrete
time evolution map on $Q \times Q$ by $\Phi$  and also denote the discrete Lagrangian two-form by $\Omega_{L_d}$. Here we develop the 
expression in \eqref{Non_Symp_ExtForceLagSys}, i.e., 
\begin{equation*}
\varphi^{\ast}_t (\iota_{\Delta_{Q}}^{\ast}\Omega_L) = (\iota_{\Delta_{Q}}^{\ast}\Omega_L) + \int_0^t \varphi_s^{\ast}\mathbf{d}(\iota_{\Delta_{Q}}^{\ast}\boldsymbol{\beta}) ds
\end{equation*}
can be regarded as a continuous-time analogue of the discrete relation $\Phi^*\Omega_{L_d} = \Omega_{L_d} + \mathbf{d}\boldsymbol{\beta}_d$. In particular, the time-integrated accumulation of the exact two-form in the continuous setting corresponds to the discrete two-form $\mathbf{d}\boldsymbol{\beta}_d$, suggesting a close relationship between the non-conservation of the Lagrangian two-form in continuous nonholonomic systems and its discrete counterpart, see \cite{GBYo2018b, PeYo2025}.
\end{remark}

\begin{remark}
In the holonomic case, the constraint distribution is integrable (i.e., $\Delta_Q = TN$, where $N\subset Q$), and there exist local functions $\phi^{r} \in C^{\infty}(Q)$ such that 
\[
N=\big\{ q \in Q \mid \phi^r(q)=0 \big\},
\]
and the constraint force is expressed as $\beta=\mu_{r}\mathbf{d}\phi^{r} \in \Gamma(\Delta_{Q}^{\circ})$ with $\mu_r \in C^\infty(Q)$. Since each $\phi^r$ is constant on $N$, the pullback of $\mathbf{d}\phi^r$ satisfies $\iota_{\Delta_Q}^*\tau_{Q}^{\ast}(\mathbf{d}\phi^r) = 0$. Consequently, the pullback of the lifted one-form $\boldsymbol{\beta} = \tau_Q^* \beta$ to $\Delta_Q$ vanishes identically, i.e.,
$\iota_{\Delta_Q}^{\ast} \boldsymbol{\beta} = (\iota_{\Delta_Q}^{\ast} \tau_{Q}^{\ast}\mu_r) \cdot \iota_{\Delta_Q}^{\ast} \tau_{Q}^{\ast}(\mathbf{d}\phi^r) = 0$, which immediately yields
\[
\mathbf{d}(\iota_{\Delta_Q}^* \boldsymbol{\beta}) = 0.
\]
This implies the preservation of the Lagrangian two-form.
\end{remark}

\section{Lagrange-Dirac dynamical systems on tangent bundles}
So far, we have developed the variational structures associated with Lagrangian mechanics for both unconstrained and nonholonomic systems. In the nonholonomic case, the dynamics is generally not symplectic, although its deviation is governed by a structural relation. This suggests the existence of some underlying geometric structure preserved beyond the symplectic framework. In this section, we formalize this idea by introducing a Dirac structure on $TQ$, referred to as a \textit{Lagrange--Dirac structure}, induced by the Lagrangian two-form $\Omega_{L}$ and a constraint distribution on $Q$. Here, the Lagrangian $L$ is allowed to be degenerate, so that $\Omega_{L}$ may be presymplectic.

We first formulate the corresponding Lagrange--Dirac dynamical system directly on $TQ$, yielding a first-order formulation of Lagrange--d'Alembert--Dirac equations. When $L$ is hyperregular, this formulation recovers the standard first-order form of the Lagrange--d'Alembert equations. 

In the degenerate case, a constraint algorithm is required to determine the final constraint submanifold. Therefore, in the following, we restrict our attention to the hyperregular case, so that the induced dynamics is well defined on the constraint manifold when analyzing dynamical properties such as structure preservation along the flow.

Now, we project the dynamics onto the admissible directions to obtain constrained Lagrange--Dirac dynamical systems. Moreover, we show that the apparent failure of preservation of the Lagrangian two-form in nonholonomic dynamics can be interpreted in terms of gauge equivalence of the underlying Lagrange--Dirac structures. This leads naturally to a gauge covariance property of Lagrange--Dirac dynamical systems along the constrained flow.
 
\subsection{Lagrange-Dirac structures on tangent bundles}\label{Sec:LagDiracStr}
\paragraph{Dirac structures.}
Following  \cite{CoWe1988}, we first recall the notion of a linear Dirac structure $D_{V}$ on a vector space $V$.
Denote by $\langle \cdot , \cdot \rangle$ the natural pairing between $V$ and its dual space $V^{\ast}$, and define a symmetric pairing on $V \oplus V^{\ast}$ by
\begin{equation*}
\langle \! \langle (v,\alpha),
(\bar{v},\bar{\alpha}) \rangle \!  \rangle
=\langle \alpha, \bar{v} \rangle
+\langle \bar{\alpha}, v \rangle,
\end{equation*}
for $(v,\alpha), (\bar{v},\bar{\alpha}) \in V \oplus  V^{\ast}$.
Then, a \textit{linear Dirac structure} on $V$ is a subspace $D_{V} \subset V \oplus V^{\ast}$ satisfying
\[
D_{V} = D_{V}^{\perp},
\]
where $D_{V}^{\perp}$ denotes the orthogonal complement of $D_{V}$ with respect to the symmetric pairing $\langle\!\langle \cdot , \cdot \rangle\!\rangle$.
\medskip

For a subbundle $D_M\subset TM\oplus T^*M$, if, for each $x\in M$, a subspace
\[
D_M(x)\subset T_xM\times T_x^*M
\]
is a linear Dirac structure on $T_{x}M$, then $D_M$ is called a Dirac structure on $M$.

As a typical example, a Dirac structure on $M$ is defined by a two-form $\Omega_M$ on $M$ and a distribution $\Delta_{M}$ on $M$ by
\begin{equation}\label{DiracManifold}
\begin{split}
D_{M}(x)=\big\{ (v_{x}, \alpha_{x}) \in T_{x}M \times T^{\ast}_{x}M
  \; \mid \; & v_{x} \in \Delta_{M}(x), \; \mbox{and} \\ 
  & \big<\alpha_{x}, w_{x} \big> = \Omega_M(x)(v_{x},w_{x}) \; \;
\mbox{for all} \; \; w_{x} \in \Delta_{M}(x) \big\}.
\end{split}
\end{equation}

If the condition
\[
\langle \pounds_{X_1} \alpha_2, X_3 \rangle
+\langle \pounds_{X_2} \alpha_3, X_1 \rangle
+\langle \pounds_{X_3} \alpha_1, X_2 \rangle = 0
\]
holds for all $(X_i, \alpha_i), \, i=1,2,3,$ taking values in $D_{M}$, then $D_{M}$ is said to be \textit{integrable}. In this paper, however, we are primarily concerned with Dirac structures that are not necessarily integrable, since nonholonomic constraint distributions are typically nonintegrable\footnote{
In some literature, a Dirac structure that does not satisfy the integrability condition is called an almost Dirac structure. Following the convention in nonholonomic mechanics, we simply refer to such structures as Dirac structures, as the nonintegrability of $\Delta_{M}$ is a central feature of the systems under study.
%
}.

\begin{remark}[\textbf{Courant bracket}]\rm  
Denote by $\Gamma(TM \oplus T^{\ast}M)$ the space of local sections of $TM \oplus T^{\ast}M$, 
which is endowed with the Courant bracket:
\[
\begin{split}
\left[(X_{1},\alpha_{1}),(X_{2},\alpha_{2})\right] &:= \left( \left[X_{1}, X_{2} \right],  \pounds_{X_1} \alpha_{2}- \pounds_{X_2} \alpha_{1} + \mathbf{d}\left< \alpha_{2}, X_{1} \right>\right)\\
&\;=\left( \left[X_{1}, X_{2} \right],  \mathbf{i}_{X_1} \mathbf{d}\alpha_{2}- \mathbf{i}_{X_2} \mathbf{d}\alpha_{1} +\mathbf{d} \left< \alpha_{2}, X_{1} \right>\right).
\end{split}
\]
This bracket is skew-symmetric but fails to satisfy the Jacobi identity. As shown in \cite{Dorfman1993}, a Dirac structure $D_{M} \subset TM \oplus T^{\ast}M$ is integrable if and only if it is closed under the Courant bracket,
\[
\left[\Gamma(D_{M}), \Gamma(D_{M}) \right] \subset \Gamma(D_{M}).
\]
However, the Courant bracket will not play an explicit role in the developments below.
\end{remark}

\paragraph{Induced Dirac structures on the cotangent bundles.}
In mechanics, one of the fundamental examples is the induced Dirac structure $D_{\Delta_Q}$ on the cotangent bundle $T^{\ast}Q$, constructed from a constraint distribution $\Delta_Q$ on $Q$ and the canonical symplectic form $\Omega_{T^{\ast}Q}$; see \cite{YoMa2006a}.

As a primary example of the construction \eqref{DiracManifold}, a Dirac structure $D_{\Delta_Q}$ on $T^{\ast}Q$, which is induced from a distribution $\Delta_{Q} \subset TQ$, is defined by the subbundle of $T  T^{\ast}Q \oplus T ^{\ast} T^{\ast}Q$ whose fiber is given, at each $p_{q} \in
T^{\ast}Q$, as
\begin{align}\label{IndDiracStrCot}
D_{\Delta_Q}(p_{q})
& =\{ (v_{p_{q}}, \beta_{p_{q}}) \in T_{p_{q}}T^{\ast}Q \times
T^{\ast}_{p_{q}}T^{\ast}Q  \mid v_{p_{q}} \in
\Delta_{T^{\ast}Q}(p_{q}),  \; \mbox{and} \;  \nonumber
\\ & \qquad \qquad
\left<\beta_{p_{q}}, w_{p_{q}} \right> = \Omega_{T^{\ast}Q}( p _q) (v_{p_{q}},w_{p_{q}}) \;\; \mbox{for
all} \;\; w_{p_{q}} \in \Delta_{T^{\ast}Q}(p_{q})\},
\end{align}
where $\Delta_{T^{\ast}Q}=( T\pi_{Q})^{-1} \, (\Delta_{Q}) \subset TT^{\ast}Q$ is the lifted distribution on $T^{\ast}Q$ via the tangent map of the canonical projection $\pi_{Q}:T^{\ast}Q \to Q$.

Such induced Dirac structures play an essential role in the formulation of the dynamics of nonholonomic mechanics, electric circuits, fluids as well as nonequilibrium thermodynamic systems in the context of {\it implicit Lagrangian systems} or \textit{Lagrange--Dirac systems on the cotangent bundle}, which may allow the cases of degenerate Lagrangians as shown in \cite{YoMa2006a, YoMa2006b, YoMa2007a, YoMa2007b, YoMa2009, YoGB2025, GBYo2015, GBYo2018}. 
\medskip

From here on, we focus on a \textit{Dirac structure on the tangent bundle $TQ$} induced by a Lagrangian $L: TQ \to \mathbb{R}$, together with the associated \textit{Lagrange--Dirac dynamical system on $TQ$}.

\paragraph{Lagrange-Dirac structures on the tangent bundles.} 
We now construct an induced Dirac structure on the tangent bundle $TQ$ from a Lagrangian two-form $\Omega_L$ and a distribution $\Delta_Q$ on $Q$.

Denote by $\langle \cdot , \cdot \rangle$ the paring between $T_{v_{q}}TQ$ and its dual space $T^{\ast}_{v_{q}}TQ$ at each $v_{q} \in TQ$, and define a symmetric pairing $\langle \! \langle \cdot, \cdot \rangle\!\rangle$ on $TTQ \oplus T^{\ast}TQ$ by, for each $v_{q}$, 
\begin{equation}\label{SymPairing}
\langle \! \langle (w_{v_{q}},\alpha_{v_{q}}),
(\bar{w}_{v_{q}},\bar{\alpha}_{v_{q}}) \rangle \!  \rangle
=\langle \alpha_{v_{q}}, \bar{w}_{v_{q}} \rangle
+\langle \bar{\alpha}_{v_{q}}, w_{v_{q}} \rangle,
\end{equation}
for $(w_{v_{q}},\alpha_{v_{q}}), (\bar{w}_{v_{q}},\bar{\alpha}_{v_{q}}) \in T_{v_{q}}TQ \times  T^{\ast}_{v_{q}}TQ$.

\begin{theorem}
Let  $L: TQ \to \mathbb{R}$ be a Lagrangian, possibly degenerate. Associated with $\Delta_{Q} \subset TQ$, we can define the distribution $\Delta_{TQ}$ on $TQ$ by
\[
\Delta_{TQ}=(T\tau_{Q})^{-1}(\Delta_{Q})\subset TTQ.
\]
Then, the subbundle $D_{L}\subset TTQ\oplus T^*TQ$ that is defined by, for each $v_q \in TQ$,
\begin{align}\label{LagDirac}
D_{L}(v_q)
& =\big\{ (w_{v_{q}}, \alpha_{v_{q}}) \in T_{v_q}TQ \times
T^{\ast}_{v_q}TQ\mid w_{v_{q}} \in
\Delta_{TQ}(v_q), \;   \mbox{and}\nonumber\\
&\hspace{1cm}
\langle \alpha_{v_{q}}, u_{v_{q}}\rangle = \Omega_L(v_q) (w_{v_{q}},u_{v_{q}}), \;\textrm{for all} \;\; u_{v_{q}} \in \Delta_{TQ}(v_q)\big\},
\end{align}
is a Dirac structure on $TQ$.
\end{theorem}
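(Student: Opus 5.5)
The plan is to prove the statement fiberwise: fix $v_q\in TQ$, set $V=T_{v_q}TQ$, $\Delta=\Delta_{TQ}(v_q)\subset V$ and $\omega=\Omega_L(v_q)$, a skew bilinear form on $V$ that may be degenerate. I will show that $D:=D_L(v_q)$ satisfies $D=D^{\perp}$ with respect to the pairing \eqref{SymPairing}. This is the linear-algebra fact behind \eqref{DiracManifold}. Nondegeneracy of $\omega$ will never be used, which is what permits $L$ to be degenerate. Smoothness of $D_L$ as a subbundle will then follow separately, because $\Delta_{TQ}=(T\tau_Q)^{-1}(\Delta_Q)$ has constant rank $2n-(n-\mathrm{rank}\,\Delta_Q)$ and $\Omega_L$ is smooth.

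The first step is isotropy, $D\subset D^\perp$. Take $(w,\alpha)$ and $(\bar w,\bar\alpha)$ in $D$. Since $\bar w\in\Delta$, we have $\langle\alpha,\bar w\rangle=\omega(w,\bar w)$, and likewise $\langle\bar\alpha,w\rangle=\omega(\bar w,w)$. Skew-symmetry of $\omega$ makes the sum vanish.

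The second step is the reverse inclusion $D^\perp\subset D$, for which I will use two special families of test elements of $D$.
\begin{enumerate}
\item[(i)] $(0,\gamma)$ with $\gamma\in\Delta^\circ$ lies in $D$. Pairing $(\bar w,\bar\alpha)\in D^\perp$ against these gives $\langle\gamma,\bar w\rangle=0$ for every $\gamma\in\Delta^\circ$. Since $V$ is finite dimensional, this yields $\bar w\in(\Delta^\circ)^\circ=\Delta$.
\item[(ii)] For any $w\in\Delta$, the element $(w,\mathbf{i}_w\omega)$ lies in $D$. Pairing against it gives $\langle\bar\alpha,w\rangle+\omega(w,\bar w)=0$, so $\langle\bar\alpha,w\rangle=\omega(\bar w,w)$ for all $w\in\Delta$.
\end{enumerate}
Together, (i) and (ii) say exactly that $(\bar w,\bar\alpha)\in D$.

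As a consistency check, I could instead count dimensions. With $k=\dim\Delta$ and $N=\dim V$, we get $\dim D=k+\dim\Delta^\circ=k+(N-k)=N$, since $\alpha$ is determined on $\Delta$ and free on a complement. Together with isotropy, this also forces maximality. There is no real obstacle here. The only points needing care are, first, that the double-annihilator step uses finite dimensionality (in the infinite-dimensional fluid setting one would need a closedness assumption), and second, that the rank of $\Delta_{TQ}$ is constant so that $D_L$ is a genuine smooth subbundle.
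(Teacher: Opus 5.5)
Your proposal is correct and follows the paper's proof essentially step for step. Isotropy comes from skew-symmetry of $\Omega_L$. The inclusion $D_L^\perp\subset D_L$ comes from testing against $(0,\gamma)$ with $\gamma\in\Delta_{TQ}^\circ$, which forces $\bar w\in\Delta_{TQ}$ via the double annihilator, and then against elements $(w,\alpha)\in D_L$ with $w\in\Delta_{TQ}$, which recovers the defining identity. Your explicit choice $\alpha=\mathbf{i}_w\omega$ makes step (ii) independent of step (i), and your remarks on constant rank, smoothness and the dimension count are extra care the paper omits; none of this changes the argument.
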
 
\begin{proof}
The orthogonal complement of $D_{L} \subset TTQ \oplus
T^{\ast}TQ$ with respect to the symmetric pairing $\langle \! \langle \cdot, \cdot \rangle\!\rangle$ is given, at the point $v_{q} \in TQ$, by
\[
\begin{split}
D^{\perp}_{L}(v_{q})
=\{ (u_{v_{q}}, \beta_{v_{q}}) \in T_{v_{q}}TQ \times &
T^{\ast}_{v_{q}}TQ  \mid
\left<\alpha_{v_q},u_{v_q}\right> + \left<\beta_{v_q}, y_{v_q}\right> = 0
\;\; \text{for all } (y_{v_q}, \alpha_{v_q}) \in D_L(v_q)\}.
\end{split}
\]
We first show that
$D_{L}(v_{q}) \subset D^{\perp}_{L}(v_{q})$. Let
$(y_{v_{q}}, \alpha_{v_{q}}) \in D_{L}(v_{q})$ and $(y_{v_{q}}^{\prime},
\alpha_{v_{q}}^{\prime}) \in D_L(v_{q})$. Then,
\[
\langle \alpha_{v_{q}}, y_{v_{q}}^{\prime}\rangle +\langle \alpha_{v_{q}}^{\prime}, y_{v_{q}}\rangle 
=\Omega_{L}(v_{q})(y_{v_{q}},y_{v_{q}}^{\prime})
+\Omega_{L}(v_{q})(y_{v_{q}}^{\prime},y_{v_{q}})=0.
\]
Therefore,
$D_{L}(v_{q}) \subset D^{\perp}_{L}(v_{q})$.
\medskip

Next, let us check $D^{\perp}_{L}(v_{q})\subset D_{L}(v_{q})$.
Let $(u_{v_{q}}, \beta_{v_{q}}) \in D^{\perp}_{L}(v_{q})$. By the definition of the orthogonal complement,
\[
\langle \alpha_{v_q}, u_{v_q} \rangle + \langle \beta_{v_q}, y_{v_q} \rangle = 0 \quad \text{for all } (y_{v_q}, \alpha_{v_q}) \in D_L(v_q).
\]
First, consider the case where $y_{v_q} = 0$. For $(0, \alpha_{v_q})$ to be in $D_L(v_q)$, the condition $\langle \alpha_{v_q}, w_{v_q} \rangle = \Omega_L(0, w_{v_q}) = 0$ must hold for all $w_{v_q} \in \Delta_{TQ}(v_q)$. This implies $\alpha_{v_q} \in \Delta_{TQ}^\circ(v_q)$. Since $(u_{v_q}, \beta_{v_q})$ is orthogonal to all such $(0, \alpha_{v_q})$, we have $\langle \alpha_{v_q}, u_{v_q} \rangle = 0$ for all $\alpha_{v_q} \in \Delta_{TQ}^\circ(v_q)$, which proves $u_{v_q} \in \Delta_{TQ}(v_q)$.

It remains to verify that
$\langle\beta_{v_{q}}, y_{v_{q}} \rangle= \Omega_{L}(v_{q})(u_{v_{q}}, 
y_{v_{q}})$ for all $y _{v_{q}} \in \Delta_{TQ}(v_{q})$. Thus, let $y _{v_{q}} 
\in \Delta_{TQ}(v_{q}) $ be arbitrary and choose $\alpha _{v_{q}} $ 
satisfying  $\langle\alpha_{v_{q}}, w_{v_{q}}\rangle
=\Omega_{L}(v_{q})(y_{v_{q}},w_{v_{q}})$ for all $w_{v_{q}} \in 
\Delta_{TQ}(v_{q})$. From 
$\langle \alpha_{v_{q}}, u_{v_{q}}\rangle+\langle\beta_{v_{q}}, y_{v_{q}} \rangle=0$ 
and the fact that we have already proved that
$u _{v_{q}} \in
\Delta_{TQ}(v_{q})$, we get
\[
\Omega_{L}(v_{q})(y_{v_{q}},u_{v_{q}})+\langle\beta_{v_{q}}, y_{v_{q}} \rangle=0 \quad \mbox{for all}
\quad y_{v_{q}} \in \Delta_{TQ}(v_{q}),
\]
that is, $\langle\beta_{v_{q}}, y_{v_{q}} \rangle= \Omega_{L}(v_{q})(u_{v_{q}}, y_{v_{q}})$
for all $y _{v_{q}} \in \Delta_{TQ}(v_{q})$. Thus, $(u _{v_{q}}, \beta _{v_{q}}) \in 
D_{L}(v_{q})$ and hence 
$
D_{L}^{\perp} \subset 
D_{L}
$ 
as required. Therefore,
$D_L^\perp = D_L$.
\end{proof}

\begin{framed}
\begin{proposition}\label{Prop:CondLagDiracSys}
For the Lagrange--Dirac structure \eqref{LagDirac}, let us express the condition 
\begin{equation}\label{CondLagDiracStruct}
(w_{v_{q}}, \alpha_{v_{q}}) \in D_{L}(v_{q}),
\end{equation}
by using the local coordinates $v_{q}=(q,v) \in TQ$, $w_{v_{q}}=(q,v, \xi,\eta),u_{v_{q}}=(q,v,\zeta,\nu) \in T_{v_{q}}TQ$, and $\alpha_{v_{q}}=(q,v,\beta, \gamma)\in T^{\ast}_{v_{q}}TQ$. Namely, the condition,
\[
\left( (q,v,\xi,\eta), (q,v, \beta, \gamma)\right) \in D_{L}(q,v)
\]
is equivalent to 
\begin{equation}\label{LagDiracCond}
\begin{split}
& \gamma=\mathbf{D}_{2}\mathbf{D}_{2} L(q,v)\cdot \xi, 
\quad \xi \in \Delta_{Q}(q),\\[2mm]
&\displaystyle\vspace{0.2cm} \beta \cdot \zeta
-\mathbf{D}_1 (\mathbf{D}_2 L(q,v) \cdot \xi) \cdot  \zeta  + \mathbf{D}_1 (\mathbf{D}_2 L(q,v) \cdot \zeta)\cdot \xi  \\
&\hspace{4.5cm}+  \mathbf{D}_{2}(\mathbf{D}_{2} L(q,v) \cdot \zeta) \cdot  \eta= 0, \quad \forall \zeta \in \Delta_Q(q).
\end{split}
\end{equation}
\end{proposition}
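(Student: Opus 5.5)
The plan is to unpack the definition \eqref{LagDirac} in local coordinates, using the local formula \eqref{LocalLagTwoForm} for $\Omega_L$. By definition, $(w_{v_q},\alpha_{v_q})\in D_L(v_q)$ means two things. First, $w_{v_q}\in\Delta_{TQ}(v_q)$. Second, $\langle\alpha_{v_q},u_{v_q}\rangle=\Omega_L(v_q)(w_{v_q},u_{v_q})$ for every $u_{v_q}\in\Delta_{TQ}(v_q)$.

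Since $\Delta_{TQ}=(T\tau_Q)^{-1}(\Delta_Q)$ and $T\tau_Q(q,v,\xi,\eta)=(q,\xi)$, the first condition is exactly $\xi\in\Delta_Q(q)$, with $\eta$ unrestricted. For the same reason, a test vector $u_{v_q}=(q,v,\zeta,\nu)$ lies in $\Delta_{TQ}(v_q)$ if and only if $\zeta\in\Delta_Q(q)$, with $\nu\in T_qQ$ arbitrary.

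Next, I write the pairing as $\langle\alpha_{v_q},u_{v_q}\rangle=\beta\cdot\zeta+\gamma\cdot\nu$. I apply \eqref{LocalLagTwoForm} with $(\dot q,\dot v)=(\xi,\eta)$ and $(\delta q,\delta v)=(\zeta,\nu)$:
\[
\Omega_L(v_q)(w_{v_q},u_{v_q})=\mathbf{D}_2(\mathbf{D}_2L\cdot\xi)\cdot\nu-\mathbf{D}_2(\mathbf{D}_2L\cdot\zeta)\cdot\eta+\mathbf{D}_1(\mathbf{D}_2L\cdot\xi)\cdot\zeta-\mathbf{D}_1(\mathbf{D}_2L\cdot\zeta)\cdot\xi .
\]
The second condition therefore becomes a single identity that must hold for all $\zeta\in\Delta_Q(q)$ and all $\nu\in T_qQ$. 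This formula is purely algebraic in $\Omega_L$, so it is valid whether or not $L$ is degenerate.

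The key step is to decouple the identity using the independence of $\zeta$ and $\nu$. Taking $\zeta=0$ (allowed, since $0\in\Delta_Q(q)$) with $\nu$ arbitrary kills every term containing $\zeta$. What remains is $\gamma\cdot\nu=\mathbf{D}_2(\mathbf{D}_2L(q,v)\cdot\xi)\cdot\nu$ for all $\nu$. I then use $\mathbf{D}_2(\mathbf{D}_2L\cdot\xi)\cdot\nu=(\mathbf{D}_2\mathbf{D}_2L(q,v)\cdot\xi)\cdot\nu$, which holds because $\xi$ is fixed and does not depend on $v$. This gives $\gamma=\mathbf{D}_2\mathbf{D}_2L(q,v)\cdot\xi$. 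Substituting this back, the $\nu$-terms cancel identically. The remaining condition for all $\zeta\in\Delta_Q(q)$ is then exactly the second line of \eqref{LagDiracCond}, after moving all terms to one side.

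For the converse, suppose $\xi\in\Delta_Q(q)$ and both relations in \eqref{LagDiracCond} hold. Adding them reproduces the defining identity for arbitrary $(\zeta,\nu)$ with $\zeta\in\Delta_Q(q)$, which gives the equivalence.

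The only real obstacle is bookkeeping. One must keep the sign conventions of \eqref{LocalLagTwoForm} and the argument order $\Omega_L(w,u)$ straight, so that the signs of the $\mathbf{D}_1$-terms come out as stated. One must also not mistakenly restrict $\nu$: only the base component of $u_{v_q}$ is constrained by $\Delta_{TQ}$. It is this freedom in $\nu$ that produces an equality for $\gamma$ on all of $T_qQ$, rather than merely modulo $\Delta_Q^\circ$.
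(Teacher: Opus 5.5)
Your proposal is correct and takes the same approach as the paper's proof. You both write $\langle\alpha_{v_q},u_{v_q}\rangle=\beta\cdot\zeta+\gamma\cdot\nu$, expand $\Omega_L(w_{v_q},u_{v_q})$ with the local formula, use the freedom in $\nu$ (your choice $\zeta=0$) to get $\gamma=\mathbf{D}_2\mathbf{D}_2L(q,v)\cdot\xi$, and read off the remaining identity for all $\zeta\in\Delta_Q(q)$; your explicit converse direction is a small addition that the paper leaves implicit.
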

\end{framed}
\begin{proof}
It follows from the condition \eqref{CondLagDiracStruct} that we have
\[
\left<\alpha_{v_{q}}, u_{v_{q}}\right> = \Omega_L(v_q) (w_{v_{q}},u_{v_{q}}).
\]
Using the local coordinates $(q,v)$ for $v_{q} \in TQ$, $(q,v, \xi,\eta)$ for $w_{v_{q}} \in T_{v_{q}}TQ$, $(q,v,\zeta,\nu)$ for $u_{v_{q}}\in T_{v_{q}}TQ$, $(q,v,\beta, \gamma)$ for $\alpha_{v_{q}}\in T^{\ast}_{v_{q}}TQ$, the right-hand side is expressed by
\begin{equation*}
\begin{split}
\Omega _L(q,v)((\xi,\eta), (\zeta,\nu)) 
&=\mathbf{D}_1 (\mathbf{D}_2 L(q,v) \cdot \xi) \cdot  \zeta  - \mathbf{D}_1 (\mathbf{D}_2 L(q,v) \cdot \zeta)\cdot \xi \\
& \qquad+ \mathbf{D}_{2}(\mathbf{D}_{2} L(q,v) \cdot \xi) \cdot \nu  -  \mathbf{D}_{2}(\mathbf{D}_{2} L(q,v) \cdot \zeta) \cdot  \eta,
\end{split}
\end{equation*}
while the left-hand side is 
\[
\left<\alpha_{v_{q}}, u_{v_{q}}\right> = \left<(q,v,\beta, \gamma), (q,v,\zeta,\nu)\right> = \beta \cdot \zeta + \gamma \cdot \nu.
\]
Comparing both sides, we obtain
\begin{equation}\label{VarCondDiracLagSys}
\begin{split}
&\mathbf{D}_1 (\mathbf{D}_2 L(q,v) \cdot \xi) \cdot  \zeta  - \mathbf{D}_1 (\mathbf{D}_2 L(q,v) \cdot \zeta)\cdot \xi \\
& \qquad+ \mathbf{D}_{2}(\mathbf{D}_{2} L(q,v) \cdot \xi) \cdot \nu  -  \mathbf{D}_{2}(\mathbf{D}_{2} L(q,v) \cdot \zeta) \cdot  \eta= \beta \cdot \zeta + \gamma \cdot \nu
\end{split}
\end{equation}
for all $u_{v_q}=(q,v,\zeta,\nu) \in \Delta_{TQ}(v_q)$, 
that is, for all $\zeta \in \Delta_Q(q)$ and all $\nu$, 
together with $\xi \in \Delta_Q(q)$ because $w_{v_q}=(q,v, \xi,\eta) \in \Delta_{TQ}(v_q)$.
\medskip

By arbitrariness of $\nu$, we obtain
\[
\gamma=\mathbf{D}_{2}\mathbf{D}_{2} L(q,v) \cdot \xi, \quad \xi \in \Delta_Q(q),
\]
where $\mathbf{D}_{2}\mathbf{D}_{2} L(q,v)$ is viewed as a linear map acting on $\xi$. We also obtain
\[
\beta \cdot \zeta
-\mathbf{D}_1 (\mathbf{D}_2 L(q,v) \cdot \xi) \cdot  \zeta  + \mathbf{D}_1 (\mathbf{D}_2 L(q,v) \cdot \zeta)\cdot \xi  +  \mathbf{D}_{2}(\mathbf{D}_{2} L(q,v) \cdot \zeta) \cdot  \eta= 0, \quad \forall \zeta \in \Delta_Q(q).
\]
Thus we conclude \eqref{LagDiracCond}.
\end{proof}

\paragraph{Finite-dimensional cases.}

In the finite dimensional case where $\Delta_{Q} \subset TQ$ is given in \eqref{ConstraintDistribution}, the condition for the Lagrange--Dirac system in \eqref{VarCondDiracLagSys} is denoted by
\begin{equation*}
\bigg(\gamma_{i} - \frac{\partial^2 L}{\partial v^{i} \partial v^{j} } \xi^{j} \bigg) \nu^{i}
+
\bigg\{\beta_{i} - \left( \frac{\partial^2 L}{ \partial q^i\partial v^{j}}- \frac{\partial^2 L}{\partial q^j \partial v^{i}}\right)\xi^j + \frac{\partial^2 L}{\partial v^{j} \partial v^{i}} \eta^{j} \bigg\}\zeta^i=0,
\end{equation*}
for all $\nu$ and all variations $\zeta^i$ satisfying the variational constraints
\[
\omega_{i}^{r}(q)\zeta^{i}=0,\quad r=1,\ldots,m <n,
\]
together with the kinematic constraints
\[
\omega_{i}^{r}(q)\xi^{i}=0,\quad r=1,\ldots,m <n.
\]
Then, using Lagrange multipliers $\mu_{r}$, we obtain 
\begin{equation*}
\left\{
\begin{aligned}
&\;\gamma_{i} - \frac{\partial^2 L}{\partial v^{i}\partial v^{j}} \xi^{j}=0, \\[3mm]
&\;\beta_{i}  - \left( \frac{\partial^2 L}{\partial q^i\partial v^{j}}  - \frac{\partial^2 L}{\partial q^j\partial v^{i}}\right)\xi^j+ \frac{\partial^2 L}{\partial v^{j}\partial v^{i}}\eta^{j}=\mu_{r}\omega^{r}_{i},\\[3mm]
&\omega_{i}^{r}(q)\xi^{i}=0,\quad r=1,\ldots,m <n.
\end{aligned}
\right.
\end{equation*}
%

Note that the induced Dirac structure $D_{L}$ is not a natural geometric object on the tangent bundle $TQ$, since it depends explicitly on the choice of Lagrangian $L:TQ \to \mathbb{R}$.  In this sense, we refer to $D_{L}$ as a \textbf{\textit{Lagrange-Dirac structure}}, or a \textbf{\textit{Lagrange-Dirac bundle}}, on the tangent bundle $TQ$.

\paragraph{Forward and backward Dirac maps.}
Let us introduce the notions of \textit{forward and backward Dirac maps} between two distinct Dirac structures, following \cite{BurRad2003, BuCr2005, YoMa2007a}.
\medskip

Let $V$ and $W$ be vector spaces, and denote by $\mathrm{Dir}(V)$ and $\mathrm{Dir}(W)$ the spaces of Dirac structures on $V$ and $W$ respectively. For a linear map $\varphi:V \to W$, the {\it forward Dirac map} $\mathcal{F}\varphi: \mathrm{Dir}(V) \to \mathrm{Dir}(W)$ is defined by
\[
D_{W}=\{ (\varphi(x),\eta) \mid x \in V,\; \eta \in W^{\ast}, \; (x,\varphi^{\ast}\eta) \in D_{V}\},
\]
where $\varphi^{\ast} : W ^{\ast} \rightarrow V ^{\ast}$ is the dual of $\varphi$, and we write 
\[
D_{W} = \mathcal{F}\varphi(D_{V}) = \varphi_{\ast}D_{V}.
\]
On the other hand, a {\it backward Dirac map} $\mathcal{B}\varphi:  \mathrm{Dir}(W) \to \mathrm{Dir}(V)$ is defined by
\[
D_{V} = \{ (x,\varphi^{\ast}\eta) \mid x \in V,\; \eta \in W^{\ast}, \; (\varphi(x),\eta) \in D_{W}\},
\]
which we denote by
\[
D_{V} = \mathcal{B}\varphi(D_{W}) = \varphi^{\ast}D_{W}.
\]

For the case of manifolds, recall that a Dirac structure $D_{M}$ on a manifold $M$ is a subbundle of $TM \oplus T^{\ast}M$ such that $D_{M}(x)$ is a 
linear Dirac structure on $T_{x}M$ at each point $x \in M$. Therefore, one can regard $D_{M}$ as a {\it Dirac structure on the (tangent) vector bundle $\tau_{M}:TM \rightarrow M$}, rather than a Dirac structure on the base manifold $M$. This viewpoint is essential in the context of Dirac bundle reduction, as will be shown. In this context, we denote the spaces of Dirac structures on  $M$ and $N$ by $\mathrm{Dir}(TM)$ and $\mathrm{Dir}(TN)$ respectively\footnote{In the literature, the space of Dirac structures on a manifold $M$ is often denoted by $\mathrm{Dir}(M)$ rather than $\mathrm{Dir}(TM)$.}, in order to emphasize the fiberwise linear structures over the tangent bundles.

Let $\varphi: M \to N$ be a smooth map. Then the tangent map defines a bundle morphism $T\varphi: TM \to TN$ satisfying the commutative relation:
$
\tau_{N}\circ T\varphi=\varphi\circ \tau_{M}.
$
\[
  \xymatrix{
    TM \ar[r]^{T\varphi} \ar[d]_{\tau_{M}} & TN \ar[d]_{\tau_{N}} \\
    M \ar[r]_{\varphi} & N
  }
\]
Then, a {\it forward Dirac map} $\mathcal{F}(T\varphi): \mathrm{Dir}(TM) \to \mathrm{Dir}(TN)$ is defined by, for all $x \in M$, 
\begin{equation*}
\begin{split}
D_{N}(\varphi(x))= \big\{ ( T_{x}\varphi(w_{x}),\alpha_{\varphi(x)} ) \mid w_{x} \in T_{x}M,\;  \alpha_{\varphi(x)} & \in T^{\ast}_{\varphi(x)}N,\; \\
& (w_{x}, (T_{x}\varphi_{x})^{\ast}(\alpha_{\varphi(x)}))  \in D_{M}(x) \big\},
\end{split}
\end{equation*}
where $T_{x}\varphi: T_{x}M \to T_{\varphi(x)}N$ is the tangent map of $\varphi: M \to N$ at $x \in M$ and $(T_{x}\varphi)^{\ast}: T^{\ast}_{\varphi(x)}N \to T^{\ast}_{x}M$ is the dual map of $T_{x}\varphi$. Then, we write the relationship between the two Dirac structures as
\[
D_{N}=\mathcal{F}(T\varphi)(D_{M})=\varphi_{\ast}D_{M},
\] 
where $\varphi_{\ast}D_{M}$ is the {\it pushforward} of the Dirac structure $D_{M}$ by $\varphi: M \to N$.
\medskip

Similarly, for a smooth map $\varphi: M \to N$,  a {\it backward Dirac map} $\mathcal{B}(T\varphi): \mathrm{Dir}(TN) \to \mathrm{Dir}(TM)$ is defined by, for all $x \in M$, 
\begin{equation*}
\begin{split}
D_{M}(x)=\big\{ ( w_{x},(T_{x}\varphi)^{\ast}\alpha_{\varphi(x)}) \mid w_{x} \in T_{x}M,\;  \alpha_{\varphi(x)} & \in T^{\ast}_{\varphi(x)}N,\; \\
& (T_{x}\varphi(w_{x}), \alpha_{\varphi(x)})  \in D_{N}(\varphi(x)) \big\},
\end{split}
\end{equation*}
and we write the relationship between the two Dirac structures as
\[
D_{M}=\mathcal{B}(T\varphi)(D_{N})=\varphi^{\ast}D_{N},
\]
where $D_{M}$ is the {\it pullback} of $D_{N}$ by  $\varphi: M \to N$. If $\varphi$ is a diffeomorphism, then 
the forward Dirac map is the inverse of the backward Dirac map.
\medskip

In the present context, we emphasize that the Lagrange--Dirac structure $D_{L}$ on $TQ$ can be defined from the induced Dirac structure $D_{\Delta_Q}$ on $T^{\ast}Q$ even for degenerate Lagrangians by using the backward Dirac map $\mathcal{B}(T\mathbb{F}L): \mathrm{Dir}(TT^{\ast}Q) \to \mathrm{Dir}(TTQ)$ associated with the Legendre transformation $\mathbb{F}L: TQ \to T^{\ast}Q$. Specifically, we define
\begin{equation*}
D_{L} = \mathcal{B}(T\mathbb{F}L)(D_{\Delta_{Q}}) = (\mathbb{F}L)^{\ast} D_{\Delta_Q}.
\end{equation*}
This allows us to treat degenerate Lagrangian systems in the context of Lagrange--Dirac dynamical systems. Locally, at each $v_q \in TQ$ with $p_q = \mathbb{F}L(v_q)$, this is expressed as
\begin{equation*}
\begin{split}
D_{L}(v_{q})= \big\{ (w_{v_{q}},(T_{v_{q}}\mathbb{F}L)^{\ast}(\beta_{p_{q}}) ) \, \mid \, 
(T_{v_{q}}\mathbb{F}L(w_{v_{q}}), \beta_{p_{q}})  \in D_{\Delta_{Q}}(p_{q}) \big\}.
\end{split}
\end{equation*}
Conversely, in the case where $L$ is hyperregular, $\mathbb{F}L$ is a diffeomorphism. In this setting, the forward and backward maps are equivalent, and the induced Dirac structure $D_{\Delta_Q}$ on $T^*Q$ can be recovered from $D_L$ via the forward Dirac map:
\begin{equation*}
D_{\Delta_{Q}} = \mathcal{F}(T\mathbb{F}L)(D_{L}) = (\mathbb{F}L)_{\ast} D_{L},
\end{equation*}
which is locally represented by, for all $v_{q} \in TQ$ and $p_{q}=\mathbb{F}L(v_{q})$, 
\begin{equation*}
\begin{split}
D_{\Delta_{Q}}(p_{q}) = \big\{ (T_{v_{q}}\mathbb{F}L(w_{v_{q}}),\beta_{p_{q}} ) \mid 
 (w_{v_{q}}, (T_{v_{q}}\mathbb{F}L)^{\ast}(\beta_{p_{q}}))  \in D_{L}(v_{q}) \big\}.
\end{split}
\end{equation*}
%


\subsection{Lagrange--Dirac dynamical systems on tangent bundles}\label{Sec:LagDiracSys_TangentBundle}
\paragraph{Lagrange--Dirac dynamical systems on $TQ$.}
The Lagrange--Dirac structure on $TQ$ is defined in terms of the presymplectic form $\Omega_L$ induced by the Lagrangian $L$ and a constraint distribution $\Delta_Q$, and it enables the treatment of the degenerate cases. Then, we obtain an implicit representation of the dynamics for Lagrange--Dirac dynamical systems on $TQ$, in which the Lagrange--Dirac structure effectively captures the interplay between the degeneracy of the Lagrangian and the nonholonomic constraints. In the following, we provide a concrete characterization of these systems through the corresponding equations of motion with their structural properties.
\begin{framed}
\begin{proposition}\label{Prop: IntLDA_Eqn}
Given a Lagrangian (possibly degenerate) $L: TQ \to \mathbb{R}$, define the energy $E_{L}$ on $TQ$ by, for $v_q=(q,v) \in TQ$,
\begin{equation*}
E_{L}(q,v)=\mathbb{F}L(v_q) \cdot v_q -L(q,v).
\end{equation*}
Let $D_L \subset TTQ \oplus T^*TQ$ be the Lagrange--Dirac structure given in \eqref{LagDirac}.
\medskip

Then, the following statements hold:
\begin{itemize}
\item[(i)]
If a curve $(q(t),v(t)), \; t \in [0,T]$ on  $TQ$  satisfies the condition:
\begin{equation}\label{LDDS}
\big((\dot q(t),\dot v(t)), \mathbf{d}E_{L}(q(t),v(t))\big) \in D_{L}(q(t),v(t)),
\end{equation}
then the curve is a solution curve of the {\it intrinsic Lagrange--d'Alembert--Dirac equations}:
\begin{equation}\label{intrinsic_LagrangeDiracdynamicalSystem}
\left\{
\begin{array}{l}
\displaystyle \mathbf{i}_{(\dot q(t),\dot v(t))}\Omega_{L}(q(t),v(t))-\mathbf{d}E_{L}(q(t),v(t)) \in \Delta_{TQ}^{\circ}(q(t),v(t)),\\[4mm]
\displaystyle (\dot q(t),\dot v(t)) \in \Delta_{TQ}(q(t),v(t)).
\end{array} 
\right.
\end{equation} 

\item[(ii)]
The implicit dynamics for the Lagrange--d'Alembert--Dirac equations are given by:
\begin{equation}\label{LagrangeDiracCond}
\begin{cases}
&\displaystyle\vspace{0.2cm} \frac{d}{dt} \mathbf{D}_2 L(q,v) - \mathbf{D}_{1}L(q,v) -\mathbf{D}_{1}\mathbf{D}_{2}L(q,v)\cdot \left(\dot{q}-v\right)
 \in \Delta_{Q}^\circ(q),\\[2mm]
& \mathbf{D}_{2}\mathbf{D}_{2} L(q,v)\cdot (\dot{q}-v)=0, \\[2mm]
& \dot{q} \in \Delta_{Q}(q).
\end{cases}
\end{equation}

\item[(iii)]For the hyperregular case, we recover the \textit{first-order system of the Lagrange--d'Alembert equations}:
\begin{equation}\label{LDA-DiracEqn} 
\frac{d}{dt} \mathbf{D}_2 L(q,v)-\mathbf{D}_1 L(q,v) \in \Delta_Q^{\circ}(q),\qquad  \dot q=v,\qquad \dot{q} \in \Delta_{Q}(q).
\end{equation}
\end{itemize}

\end{proposition}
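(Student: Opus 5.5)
The plan is to unpack the Dirac condition \eqref{LDDS} using the definition \eqref{LagDirac} for part (i), then use the coordinate characterization of Proposition \ref{Prop:CondLagDiracSys} for (ii), and finally specialize to invertible fiber Hessian for (iii).

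For (i), I set $w_{v_q}=(\dot q,\dot v)$ and $\alpha_{v_q}=\mathbf{d}E_L(q,v)$ in \eqref{LagDirac}. Membership in $D_L$ requires $(\dot q,\dot v)\in\Delta_{TQ}(q,v)$. It also requires $\langle \mathbf{d}E_L,u\rangle=\Omega_L((\dot q,\dot v),u)$ for all $u\in\Delta_{TQ}$. This equality says exactly that the covector $\mathbf{i}_{(\dot q,\dot v)}\Omega_L-\mathbf{d}E_L$ annihilates $\Delta_{TQ}$, i.e.\ lies in $\Delta_{TQ}^\circ$. That gives \eqref{intrinsic_LagrangeDiracdynamicalSystem} immediately.

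For (ii), I apply Proposition \ref{Prop:CondLagDiracSys} with $\xi=\dot q$, $\eta=\dot v$, and $(\beta,\gamma)=(\mathbf{D}_1E_L,\mathbf{D}_2E_L)$. As computed in the proof of Theorem \ref{Thm:LagSys}, these components are
\[
\beta=\mathbf{D}_1\mathbf{D}_2L\cdot v-\mathbf{D}_1L,\qquad \gamma=\mathbf{D}_2\mathbf{D}_2L\cdot v .
\]
The first line of \eqref{LagDiracCond} then reads $\mathbf{D}_2\mathbf{D}_2L\cdot v=\mathbf{D}_2\mathbf{D}_2L\cdot\dot q$, which is the second equation of \eqref{LagrangeDiracCond}; it also gives $\dot q\in\Delta_Q(q)$. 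For the second line, I substitute $\beta$ and regroup the terms $\mathbf{D}_1(\mathbf{D}_2L\cdot\zeta)\cdot\dot q+\mathbf{D}_2(\mathbf{D}_2L\cdot\zeta)\cdot\dot v$. Treating $\zeta$ as a fixed vector at each instant, as in the proof of Theorem \ref{Thm:LagSys}, these combine into $\big(\tfrac{d}{dt}\mathbf{D}_2L(q,v)\big)\cdot\zeta$.

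The remaining terms are $\mathbf{D}_1(\mathbf{D}_2L\cdot v)\cdot\zeta-\mathbf{D}_1L\cdot\zeta-\mathbf{D}_1(\mathbf{D}_2L\cdot\dot q)\cdot\zeta$, which equal $-\big(\mathbf{D}_1L+\mathbf{D}_1\mathbf{D}_2L\cdot(\dot q-v)\big)\cdot\zeta$. The resulting covector vanishes on all $\zeta\in\Delta_Q(q)$, so it lies in $\Delta_Q^\circ(q)$, which is the first line of \eqref{LagrangeDiracCond}.

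The main obstacle is keeping track of this regrouping. In $\mathbf{D}_1(\mathbf{D}_2L\cdot\xi)\cdot\zeta$ the derivative acts on $q$ with $\xi$ held fixed, and I must not confuse it with $\mathbf{D}_1(\mathbf{D}_2L\cdot\zeta)\cdot\xi$. These two terms differ by the antisymmetric part $\partial^2L/\partial q^i\partial v^j-\partial^2L/\partial q^j\partial v^i$. Checking the result in coordinates against the finite-dimensional form displayed after Proposition \ref{Prop:CondLagDiracSys} settles the signs.

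For (iii), hyperregularity makes $\mathbf{D}_2\mathbf{D}_2L$ invertible, so the second equation forces $\dot q=v$. The correction term $\mathbf{D}_1\mathbf{D}_2L\cdot(\dot q-v)$ then vanishes, and \eqref{LDA-DiracEqn} follows.
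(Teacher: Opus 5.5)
Your proposal is correct and follows essentially the same route as the paper: you unpack the definition \eqref{LagDirac} with $\alpha_{v_q}=\mathbf{d}E_L$ for (i), and for (ii) you insert $\mathbf{D}_1E_L=\mathbf{D}_1\mathbf{D}_2L\cdot v-\mathbf{D}_1L$ and $\mathbf{D}_2E_L=\mathbf{D}_2\mathbf{D}_2L\cdot v$ into Proposition~\ref{Prop:CondLagDiracSys} and recombine the $\zeta$-terms by the chain rule with $\zeta$ held fixed. Your use of the invertible fiber Hessian for (iii) also matches the paper, and your explicit regrouping of the $\mathbf{D}_1\mathbf{D}_2L$ terms, including the antisymmetric-part check, is more detailed than the paper's and the signs are right.
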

\end{framed}
\begin{proof}
\begin{itemize}
\item[(i)]
Using \eqref{LagDirac} and \eqref{LDDS}, we have
\[
\Omega_L(v_q)  \left((\dot q,\dot v), w_{v_{q}} \right) =\mathbf{d} E_{L}(q,v) \cdot w_{v_{q}},
\]
for all $w_{v_{q}} \in \Delta_{TQ}(q,v)$. Then, it follows that we obtain the intrinsic Lagrange--d'Alembert--Dirac equations:
\begin{equation*}
\mathbf{i}_{(\dot q,\dot v)} \, \Omega _L(q,v)-\mathbf{d} E_{L}(q,v) \in  \Delta_{TQ}^{\circ}(q,v),
\end{equation*}
together with the kinematic constraints $(\dot q,\dot v) \in \Delta_{TQ}(q,v)$.

\item[(ii)]
Let us derive the local expressions for the Lagrange--d'Alembert--Dirac equations \eqref{LagrangeDiracCond} from \eqref{intrinsic_LagrangeDiracdynamicalSystem}. Recall the energy $E_{L}$ is defined by, for $(q,v) \in TQ$, 
\[
E_{L}(q,v) = \mathbb{F}L(q,v) \cdot v  - L(q,v) = \mathbf{D}_{2}L(q,v)\cdot v-L(q,v),
\]
and also that the differential of $E_{L}$ is locally denoted such that
\[
\mathbf{d}E_{L}(q,v) \cdot  (\delta{q}, \delta{v}) =\mathbf{D}_{1}E_{L}(q,v)\cdot \delta q+\mathbf{D}_{2}E_{L}(q,v)\cdot \delta v,\\
\]
where
$\mathbf{D}_{1}E_{L}(q,v)=\mathbf{D}_{1}\mathbf{D}_{2}L(q,v)\cdot v - \mathbf{D}_{1}L(q,v)$,  and $\mathbf{D}_{2}E_{L}(q,v)= \mathbf{D}_{2}\mathbf{D}_{2} L(q,v)\cdot v$.
\medskip

Using Proposition \ref{Prop:CondLagDiracSys}, the Lagrange--Dirac condition \eqref{LDDS} yields 
an \textit{implicit dynamics of the Lagrange--d'Alembert--Dirac systems on $TQ$} as in \eqref{LagrangeDiracCond}:
\begin{equation*}
\begin{cases}
&\displaystyle\vspace{0.2cm} \frac{d}{dt} \mathbf{D}_2 L(q,v)- \mathbf{D}_{1}L(q,v) 
-  \mathbf{D}_{1}\mathbf{D}_{2}L(q,v)\cdot \left(\dot{q}-v\right) \in \Delta_{Q}^\circ(q),\\[2mm]
& \mathbf{D}_{2}\mathbf{D}_{2} L(q,v)\cdot (\dot{q}-v)=0, \\[2mm]
& \dot{q} \in \Delta_{Q}(q).
\end{cases}
\end{equation*}
To derive the first equation of \eqref{LagrangeDiracCond}, we use the chain rule along $(q(t),v(t))$:
\[
\frac{d}{dt}\big(\mathbf{D}_2 L(q,v)\cdot \delta q\big)
=
\mathbf{D}_1 (\mathbf{D}_2 L(q,v)\cdot \delta q)\cdot \dot q
+
\mathbf{D}_2 (\mathbf{D}_2 L(q,v)\cdot \delta q)\cdot \dot v.
\]
Here, $\delta q$ is treated as a fixed variation along the curve, i.e., independent of time, so that $\frac{d}{dt}(\delta q)=0$. Hence, 
\[
\frac{d}{dt}\left( \mathbf{D}_2 L(q,v) \cdot \delta{q}\right)=\frac{d}{dt}\left( \mathbf{D}_2 L(q,v) \right)\cdot \delta{q}.
\]

The Lagrange--Dirac dynamics \eqref{LagrangeDiracCond} allows degenerate Lagrangian systems together with nonholonomic constraints. In fact, the second equation of \eqref{LagrangeDiracCond} describes the failure of the second-order condition that reflects the constraints due to the degeneracy of Lagrangian and the third equation the nonholonomic constraints imposed on the motion.

\item[(iii)]
In the case in which the Lagrangian is hyperregular, namely, $\mathbf{D}_{2}\mathbf{D}_{2} L $ is globally nondegenerate, it follows from the second equation of \eqref{LagrangeDiracCond} that 
\[
\dot{q}=v.
\]
Substituting this relation into the first equation of \eqref{LagrangeDiracCond} leads to the \textit{first-order system of the conventional Lagrange--d'Alembert equations} \eqref{LDA-DiracEqn}:
\begin{equation*}
\begin{split}
 \frac{d}{dt}\mathbf{D}_2 L(q,v) - \mathbf{D}_{1}L(q,v) \in \Delta_Q^\circ(q),\qquad \dot{q}=v, \qquad \dot{q}\in \Delta_{Q}(q).
\end{split}
\end{equation*}
Now we call a triple $(E_{L}, D_{L})$ that satisfies the condition \eqref{LDDS} a {\it Lagrange--Dirac dynamical system on the velocity phase space $TQ$}. Specifically, for the case where there is no constraint, i.e., $\Delta_Q=TQ$, the Lagrange-Dirac dynamical system recovers the {\it Euler-Lagrange equations} \eqref{EulLagEqn} in Theorem \ref{Thm:LagSys}.
\end{itemize}

This completes the proof. 
\end{proof}
This implicit representation of the dynamics, which we call the \textit{Lagrange--Dirac dynamics}, provides a unified framework where the regularity of the Lagrangian and hence the second-order condition are not assumed a priori. Instead, they emerge naturally as properties of the specific dynamics under consideration.

\begin{remark}
The term $\mathbf{D}_{1}\mathbf{D}_{2}L(q,v)\cdot \left(\dot{q}-v\right)$ appearing in the first equation in \eqref{LagrangeDiracCond} reflects the internal coupling between the velocity $\dot q$ and the auxiliary variable $v$ in the presence of degeneracy. It ensures the consistency of the implicit dynamics on $TQ$ when the second-order condition $\dot{q}=v$ is not enforced.
\end{remark}

\begin{remark}
The second equation in \eqref{LagrangeDiracCond} implies the constraint due to the degeneracy of $L$; the mismatch between $\dot{q}$ and $v$ must lie in the kernel of the Hessian $\mathbf{D}_{2}\mathbf{D}_{2}L(q,v)$:
\[
\dot q - v \in \operatorname{ker} \big(\mathbf{D}_{2}\mathbf{D}_{2}L(q,v)\big).
\]
This can be interpreted as a tangent-bundle counterpart of the Dirac primary constraint on the cotangent bundle $T^{\ast}Q$, 
which restricts the momentum variables to the image of the Legendre transform
\[
p \in \operatorname{Im}(\mathbb{F}L) \subset T^{\ast}Q,
\]
arising from the degeneracy of the Legendre transformation.

Indeed, both conditions originate from the degeneracy of the Legendre transformation: 
the cotangent-bundle constraint restricts admissible momenta, whereas the tangent-bundle condition restricts admissible second-order directions through the kernel of the fiber Hessian.
\medskip

Thus, these two conditions encode the degeneracy of Lagrangians from dual bundle perspectives.
\end{remark}

\paragraph{Finite-dimensional cases.}
From the Lagrange--Dirac condition
\[
( (\dot q,\dot v),\mathbf dE_L(q,v))\in D_L(q,v),
\]
it follows that
\[
\mathbf dE_L(q,v) \cdot (\delta q,\delta v) = \Omega_L(q,v)((\dot q,\dot v),(\delta q,\delta v)),
\]
where $(\delta q,\delta v)\in\Delta_{TQ}(q,v)$ is an admissible variation, so in particular $\delta q\in\Delta_Q(q)$.
\medskip

In finite dimensions, using local coordinates $(q^{i},v^{i})$ for $(q,v) \in TQ$, we have
\[
E_{L}(q,v) = \frac{\partial L}{\partial v^{i}}v^{i} -L(q,v),
\]
and hence the differential of $E_{L}$ is denoted by
\[
\mathbf{d}E_{L}(q,v)=\frac{\partial{E_{L}}}{\partial{q^{i}}}dq^{i}+\frac{\partial{E_{L}}}{\partial{v^{i}}}dv^{i},
\]
where
\begin{equation*}
\begin{aligned}
\frac{\partial{E_{L}}}{\partial{q^{i}}}&=\frac{\partial^2 L}{\partial q^{i} \partial v^j} \, v^j -  \frac{\partial L}{\partial q^{i}}, \qquad
\frac{\partial{E_{L}}}{\partial{v^{i}}}&= \frac{\partial^2 L}{\partial v^{i}\partial v^{j}} v^{j}.
\end{aligned}
\end{equation*}
\medskip

A direct computation gives
\[
\begin{split}
dE_L(q,v) \cdot (\delta q,\delta v) &= \frac{\partial{E_{L}}}{\partial{q^{i}}}\,\delta q^{i}+\frac{\partial{E_{L}}}{\partial{v^{i}}}\,\delta v^{i}\\
&=\left(\frac{\partial^2 L}{\partial q^{i} \partial v^j} \, v^j -  \frac{\partial L}{\partial q^{i}}\right) \delta q^{i}
+ \left( \frac{\partial^2 L}{\partial v^{i}\partial v^{j}} v^{j}\right)\delta v^{i},
\end{split}
\]
while the Lagrangian two-form is computed in coordinates as
\begin{equation*}
\begin{aligned}
\Omega _L(q,v)((\dot{q},\dot{v}), (\delta{q},\delta{v})) 
= \frac{\partial^2 L}{\partial v^{j}\partial v^{i}}  \left(\dot{q}^{i} \delta v^{j}-\delta q^{i} \dot{v}^{j}\right)
+ 
\left( \frac{\partial^2 L}{\partial q^{i}\partial v^{j}} -  \frac{\partial^2 L}{\partial q^{j} \partial v^{i}}\right)\dot{q}^j\delta{q}^i.
\end{aligned}
\end{equation*}

From the Lagrange--Dirac condition \eqref{LDDS}, the variational conditions are:
\[
\frac{\partial^2 L}{\partial v^{i} \partial v^{j}} (v^{j} \!- \!\dot{q}^{j} )\delta v^{i}
+\bigg\{ \frac{\partial^2 L}{\partial q^{i} \partial v^j} \, v^j -  \frac{\partial L}{\partial q^{i}} +  \frac{\partial^2 L}{\partial v^{j} \partial v^{i}}  \dot{v}^{j} - \left( \frac{\partial^2 L}{\partial q^i\partial v^{j}} \!-\!  \frac{\partial^2 L}{\partial q^j \partial v^{i}}\right)\dot{q}^j\bigg\}\delta{q}^i=0
\]
for all $\delta{v}$ and for all  variations $\delta{q}^{i}$ satisfying
\[
\omega_{i}^{r}(q)\delta{q}^{i}=0,
\]
together with the nonholonomic constraints 
\[
\omega_{i}^{r}(q)\dot{q}^{i}=0.
\]

Thus, by using undetermined Lagrange multipliers $\mu_{r},\; r=1,\dots,m<n$, we get the \textit{local Lagrange--d'Alembert--Dirac equations}:
\begin{equation}\label{LocalImpDynLagDirSys}
\begin{cases}
\displaystyle \frac{d}{dt} \frac{\partial L}{\partial v^i} -  \frac{\partial L}{\partial q^{i}}
- \frac{\partial^2 L}{\partial q^{i}\partial v^j } \left(\dot{q}^{j}- v^{j}\right) =\mu_{r}\omega^{r}_{i},\\[5mm]
\displaystyle \frac{\partial^2 L}{\partial v^{i} \partial v^{j}} \big(\dot{q}^{j}- v^{j} \big)=0,\\[5mm]
\displaystyle \omega_{i}^{r}(q)\dot{q}^{i}=0,
\end{cases}
\end{equation}
where we use 
\[
\frac{d}{dt} \frac{\partial L}{\partial v^i} = \frac{\partial^2 L}{\partial v^{j} \partial v^{i}}  \dot{v}^{j} 
+ \frac{\partial^2 L}{\partial q^j \partial v^{i}}  \dot{q}^j.
\]

In the \textit{hyperregular cases}, the matrix $\left(\frac{\partial^2 L}{\partial v^i\partial v^j}\right)$ has full rank and therefore we obtain the second-order condition
$
\dot q^j=v^j.
$
 We finally recover the first-order system of the Lagrange--d'Alembert equations:
\[
\frac{d}{dt}\frac{\partial L}{\partial v^i}-\frac{\partial L}{\partial q^i}=\mu_{r}\omega^{r}_{i},
\qquad \dot q^i=v^i, \qquad \omega_{i}^{r}(q)\dot{q}^{i}=0.
\]
\smallskip

The Lagrange--Dirac dynamical system $(E_L, D_L)$ on $TQ$ in \eqref{LDDS} yields the coorinate expressions of the implicit dynamics given in \eqref{LocalImpDynLagDirSys}, in which 
the first equations denote the equations of motion, the second equations are the constraints due to degeneracy of the Lagrangian, corresponding to Dirac's primary constraints, and the last equations are nonholonomic kinematic constraints.
\medskip

For the case in which $L$ is hyperregular, from the implicit dynamics in \eqref{LocalImpDynLagDirSys}, the Lagrange--d'Alembert equations in \eqref{LDAEq} or \eqref{LDAEOp_eqn} are recovered. This result confirms that the Lagrange--Dirac dynamical system $(E_L, D_L)$ is a coordinate-independent, geometric formulation that is pointwise equivalent to the classical Lagrange--d'Alembert equations. Specifically, the second-order condition $\dot{q}=v$ arises naturally from the Dirac inclusion under the assumption of hyperregularity. Of course, for the unconstrained case, namely, $\Delta_Q=TQ$, we can recover the Euler--Lagrange equation \eqref{EulerLagEqn}.

\paragraph{Energy conservation.} 
Here we check the energy conservation for the Lagrange--Dirac dynamical system in the general case where the Lagrangian $L$ is degenerate and 
there exist nonholonomic constraints.

\begin{proposition}[Energy conservation of  Lagrange--Dirac dynamical systems]\label{EneCon_UnreducedLDDS}
Along a solution curve $(q(t), v(t))$, $t\in [0,T]$ of the Lagrange--Dirac system, the energy $E_{L}(q,v)=\mathbf{D}_{2}L(q,v) \cdot v-L(q,v)$ is conserved.
\end{proposition}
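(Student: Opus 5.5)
The plan is to use the intrinsic form of the Lagrange--Dirac condition \eqref{LDDS}, together with the skew-symmetry of $\Omega_L$, and to choose the velocity of the curve itself as the test vector. Write $X(t) = (\dot q(t), \dot v(t)) \in T_{(q(t),v(t))}TQ$. Differentiating along the curve gives
\[
\frac{d}{dt}E_L(q(t),v(t)) = \mathbf{d}E_L(q,v)\cdot X .
\]
By the definition \eqref{LagDirac} of $D_L$, the condition \eqref{LDDS} says two things. First, $X \in \Delta_{TQ}(q,v)$. Second,
\[
\langle \mathbf{d}E_L(q,v), u\rangle = \Omega_L(q,v)(X,u) \quad \text{for all } u\in\Delta_{TQ}(q,v).
\]
Since $X$ itself lies in $\Delta_{TQ}$, it is an admissible choice of $u$. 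Taking $u = X$ gives
\[
\frac{d}{dt}E_L = \Omega_L(X,X) = 0
\]
by skew-symmetry. This argument requires neither regularity of $L$ nor the second-order condition, because it is purely intrinsic.

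As a consistency check, I would also verify the result in coordinates using the implicit equations \eqref{LagrangeDiracCond}, since the degenerate case is where one might worry. Compute
\[
\frac{d}{dt}E_L = \mathbf{D}_1E_L\cdot\dot q + \mathbf{D}_2E_L\cdot \dot v ,
\]
where
\[
\mathbf{D}_1E_L=\mathbf{D}_1\mathbf{D}_2L\cdot v-\mathbf{D}_1L, \qquad \mathbf{D}_2E_L=\mathbf{D}_2\mathbf{D}_2L\cdot v .
\]
The first equation of \eqref{LagrangeDiracCond} is paired with $\dot q\in\Delta_Q(q)$, which kills the annihilator term. This yields
\[
\Big(\frac{d}{dt}\mathbf{D}_2L - \mathbf{D}_1L - \mathbf{D}_1\mathbf{D}_2L\cdot(\dot q - v)\Big)\cdot \dot q = 0 .
\]
Next, expand $\frac{d}{dt}\mathbf{D}_2L\cdot\dot q$ by the chain rule and use symmetry of the fiber Hessian together with the second equation of \eqref{LagrangeDiracCond}, namely $\mathbf{D}_2\mathbf{D}_2L\cdot(\dot q-v)=0$. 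This allows $\dot q$ to be replaced by $v$ in the Hessian term, so that term becomes $\mathbf{D}_2E_L\cdot\dot v$.

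The main bookkeeping obstacle is the mixed term $\mathbf{D}_1\mathbf{D}_2L$. The chain rule produces $\mathbf{D}_1(\mathbf{D}_2L\cdot\dot q)\cdot\dot q$, and this must combine with $-\mathbf{D}_1(\mathbf{D}_2L\cdot(\dot q-v))\cdot\dot q$ to leave exactly $\mathbf{D}_1(\mathbf{D}_2 L\cdot v)\cdot \dot q$, which together with $-\mathbf{D}_1 L\cdot\dot q$ is $\mathbf{D}_1E_L\cdot\dot q$. I expect this to work out directly. In any case, the intrinsic skew-symmetry argument is the proof I would present, with the coordinate computation offered as a remark.
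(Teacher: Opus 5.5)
Your argument is correct and is essentially the paper's: part (i) of the paper's proof invokes isotropy of $D_L$ on the pair $\big((\dot q,\dot v),\mathbf{d}E_L\big)$, which is exactly your choice $u=X$ giving $\mathbf{d}E_L\cdot X=\Omega_L(X,X)=0$. The paper also adds a coordinate check from \eqref{LagrangeDiracCond}, organized differently (it reduces $\frac{d}{dt}E_L$ to $-\big(\mathbf{D}_2\mathbf{D}_2L\cdot(\dot q-v)\big)\cdot\dot v$); your version of that check, including the cancellation of the $\mathbf{D}_1\mathbf{D}_2L$ terms, also works.
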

\begin{proof}
\begin{itemize}
\item[(i)]
We can easily prove the energy conservation by using the maximally isotropic property of $D_{L}$: 
In fact, since we have
\[
(X(q(t),v(t)),\mathbf{d}E_{L}(g(t),v(t))) \in D_L(q(t),v(t)),
\]
and $D_L$ is isotropic, it follows that
\[
\mathbf{d}E_{L}(q(t),v(t))\cdot X(q(t),v(t))=0,\quad \textrm{for all}\;\;t \in [0,T].
\]
Hence, we get
\[
\frac{d}{dt}E_{L}(q(t),v(t))=0.
\]

\item[(ii)] Here, we also check to directly compute the time derivative of the energy $E_{L}$ along the solution curve $(q(t),v(t))$ as follows. 

By taking the time derivative of $E_L$, it follows that
\begin{equation}\label{dELdt_1stStep}
\begin{split}
\frac{d}{dt}E_L &= \dot{\mathbf{D}}_2 L \cdot v + \mathbf{D}_2 L \cdot \dot{v} - (\mathbf{D}_1 L \cdot \dot{q} + \mathbf{D}_2 L \cdot \dot{v}) \\
&= \dot{\mathbf{D}}_2 L \cdot v - \mathbf{D}_1 L \cdot \dot{q},
\end{split}
\end{equation}
where we set $\dot{\mathbf{D}}_2 L:=\frac{d}{dt}\left(\mathbf{D}_2 L\right)$.
Here, from the first equation of \eqref{LagrangeDiracCond}, there exists some $\beta \in \Delta_Q^\circ(q)$:
\[
\beta=\dot{\mathbf{D}}_2 L - \mathbf{D}_1 L - \mathbf{D}_1 \mathbf{D}_2 L \cdot (\dot{q} - v) 
\]
and hence
\begin{equation}\label{EqMo_D2L}
\dot{\mathbf{D}}_2 L= \mathbf{D}_1 L + \mathbf{D}_1 \mathbf{D}_2 L \cdot (\dot{q} - v) +\beta.
\end{equation}
Therefore, substituting $\dot{\mathbf{D}}_2 L$ into \eqref{dELdt_1stStep} yields
\begin{equation*}
\frac{d}{dt}E_L = \left( \mathbf{D}_1 L + \mathbf{D}_1 \mathbf{D}_2 L \cdot (\dot{q} - v) + \beta \right) \cdot v - \mathbf{D}_1 L \cdot \dot{q}.
\end{equation*}

Rearranging the terms to group them by the velocity mismatch $(\dot{q} - v)$:
\begin{equation}\label{EnergyRate_D1L+beta}
\begin{split}
\frac{d}{dt}E_L 
&=\left( \mathbf{D}_1 L + \mathbf{D}_1 \mathbf{D}_2 L \cdot (\dot{q} - v) + \beta \right) \cdot \big((v - \dot{q}) +\dot{q} \big) - \mathbf{D}_1 L \cdot \dot{q}\\
&=\mathbf{D}_1 L \cdot (v - \dot{q}) + \left(\mathbf{D}_1 \mathbf{D}_2 L \cdot (\dot{q} - v) \right) \cdot (v - \dot{q}) + \beta \cdot (v - \dot{q}) \\
&\qquad + \mathbf{D}_1 L \cdot \dot{q} + \left(\mathbf{D}_1 \mathbf{D}_2 L \cdot (\dot{q} - v)\right) \cdot \dot{q} + \beta \cdot \dot{q} - \mathbf{D}_1 L \cdot \dot{q}\\
&=\big( \mathbf{D}_1 L + \beta \big) \cdot (v-\dot{q}) +  \big(\mathbf{D}_1 \mathbf{D}_2 L \cdot (\dot{q} - v) \big) \cdot (v - \dot{q})+ \left(\mathbf{D}_1 \mathbf{D}_2 L \cdot (\dot{q} - v)\right) \cdot \dot{q}.
\end{split}
\end{equation}
In the above, note that $\beta \in \Delta_Q^\circ$ and $\dot{q} \in \Delta_Q$, which implies $\beta \cdot \dot{q} = 0$.
%
From the chain rule, 
\[
\dot{\mathbf{D}}_2 L = \mathbf{D}_1 \mathbf{D}_2 L \cdot \dot{q} + \mathbf{D}_2 \mathbf{D}_2 L \cdot \dot{v}. 
\]
As shown in \eqref{EqMo_D2L}, there is another expression for $\dot{\mathbf{D}}_2 L$  and hence equating them yields

\begin{equation*}
 \mathbf{D}_1 L + \mathbf{D}_1 \mathbf{D}_2 L \cdot (\dot{q} - v) +\beta= \mathbf{D}_1 \mathbf{D}_2 L \cdot \dot{q} + \mathbf{D}_2 \mathbf{D}_2 L \cdot \dot{v}
 \end{equation*}
 and hence
\[
\mathbf{D}_1 L  + \beta=\mathbf{D}_2 \mathbf{D}_2 L \cdot \dot{v} + \mathbf{D}_1 \mathbf{D}_2 L \cdot \dot{q} - \mathbf{D}_1 \mathbf{D}_2 L \cdot (\dot{q} - v).
\]

By substituting this into \eqref{EnergyRate_D1L+beta}, we get
\begin{equation*}
\begin{split}
\frac{d}{dt}E_L 
&=\big( \mathbf{D}_2 \mathbf{D}_2 L \cdot \dot{v} + \mathbf{D}_1 \mathbf{D}_2 L \cdot \dot{q} - \mathbf{D}_1 \mathbf{D}_2 L \cdot (\dot{q} - v) \big) \cdot (v-\dot{q}) \\
&\hspace{3cm}+  \big(\mathbf{D}_1 \mathbf{D}_2 L \cdot (\dot{q} - v) \big) \cdot (v - \dot{q})+ \left(\mathbf{D}_1 \mathbf{D}_2 L \cdot (\dot{q} - v)\right) \cdot \dot{q}  \\[2mm]
&=\big( \mathbf{D}_2 \mathbf{D}_2 L \cdot \dot{v}   \big) \cdot (v-\dot{q}) \\[2mm]
&=-\big( \mathbf{D}_2 \mathbf{D}_2 L \cdot (\dot{q} - v)  \big) \cdot \dot{v},
\end{split}
\end{equation*}
where the terms involving $\mathbf{D}_1\mathbf{D}_2L$ cancel each other out.

\quad By the second equation of the Lagrange--Dirac dynamical system, i.e., $\mathbf{D}_2 \mathbf{D}_2 L \cdot (\dot{q} - v) = 0$, we conclude:
\begin{equation*}
\frac{d}{dt}E_L = 0.
\end{equation*}

\end{itemize}

\end{proof}


\subsection{Constrained Lagrange--Dirac dynamics on admissible subbundles}
The Lagrange--Dirac dynamical system provides the Lagrange--d'Alembert--Dirac equations on $TQ$, where the dynamics cannot be uniquely determined by the Lagrangian vector field, since the constraint force is included by using undetermined Lagrange multipliers. Here we decompose the motion on $TQ$ into the tangential and complementary directions under the normality condition of the constraint distribution, and extract the constrained dynamics through the constrained Lagrange--Dirac structure on the admissible subbundle. 

In this subsection, we restrict ourselves to the \textit{hyperregular case} so that the Legendre transformation is a diffeomorphism. Under the additional assumption that the restricted two-form is fiberwise nondegenerate, the constrained dynamics on the admissible subbundle is uniquely characterized.

The extension to degenerate Lagrangians involves additional consistency conditions and will not be pursued here.

\paragraph{Setting for constrained dynamics.}
Let $L:TQ\to\mathbb{R}$ be a hyperregular Lagrangian and let $\Delta_Q$ be a given constraint distribution on $Q$. 
Denote by $\mathbb{F}L:TQ\to T^*Q$ the Legendre transformation, $E_L$ the energy, and $D_L=(\mathbb{F}L)^*D_{\Delta_Q}$ the Lagrange--Dirac structure over $TQ$. Define the constraint distribution $\Delta_{TQ}=(T\tau_{Q})^{-1}(\Delta_{Q})\subset TTQ$ and define also its annihilator $\Delta_{TQ}^\circ\subset T^*TQ$ in a usual way.
\medskip

Let $v_{q}=(q(t),v(t)), \, t \in [0,T]$ be a solution curve on $TQ$ of the Lagrange--Dirac dynamical system $(E_L, D_L)$, which satisfies
\[
\big(X(v_{q}),\mathbf{d}E_L(v_{q}) \big)\in D_L(v_{q}),
\]
where $X(q(t),v(t))=(\dot{q}(t),\dot{v}(t))$ denotes the tangent vector at each point $(q(t),v(t))$. 

Then, it follows from Proposition \ref{Prop: IntLDA_Eqn} that each $(q(t),v(t))$ satisfies the intrinsic Lagrange--d'Alembert--Dirac equations:
\[
\mathbf{i}_{X(v_{q})}\Omega_L(v_{q})  - \mathbf{d}E_L(v_{q}) \in \Delta_{TQ}^{\circ}(v_{q}),
\]
where $X(v_{q}) \in \Delta_{TQ}(v_{q})$.

\paragraph{Decomposition of the tangent bundle.}

Assume that the constraint satisfies the normality condition given in Definition \ref{def:SympOrthSub_Normality}, i.e., for $v_{q} \in \Delta_{Q}$,
\[
T_{v_q}\Delta_Q\cap\mathcal V_{v_q}=\{0\},
\]
where the $\Omega_L$--orthogonal complement of $T\Delta_Q$ is given by 
\begin{equation}\label{Omega_L_OrthogonalCompSpace}
\mathcal V_{v_q}
=\{\,w\in T_{v_q}TQ \mid \Omega_L(v_q)(w,Y)=0,\ \forall Y\in T_{v_q}\Delta_Q\}.
\end{equation}
Then, the tangent space $T_{v_q}TQ$ at each $v_{q} \in \Delta_{Q}$ can be split into the tangential and the $\Omega_L$--orthogonal subspaces as
\[
T_{v_q}TQ = T_{v_q}\Delta_Q \oplus \mathcal V_{v_q}.
\]
Thus, any vector in $T_{v_q}TQ$, in particular, the tangent vector $X(v_q)\in T_{v_q}TQ$, admits the decomposition
\[
X(v_q)=X_{\Delta_Q}(v_q)+X_{\mathcal V}(v_q).
\]

\paragraph{Constrained dynamics on admissible subbundles.}

Restricting the admissible directions to vectors tangent to the constraint manifold, define the subbundle:
\[
\mathcal K := T\Delta_Q \cap \Delta_{TQ}.
\]

Projecting the tangent vector $X(v_q)\in\Delta_{TQ}(v_q)$ at each $v_{q} \in \Delta_{Q}$ onto the decomposition
$T_{v_q}TQ=T_{v_q}\Delta_Q\oplus\mathcal V_{v_q}$ yields
\[
X_{\Delta_Q}(v_q)\in T_{v_q}\Delta_Q\cap\Delta_{TQ}(v_q)=\mathcal K_{v_q}, \qquad X_{\mathcal V}(v_q)\in \mathcal{V}_{v_q}.
\]
Hence, it follows that
\[
X_{\Delta_Q}\in\Gamma(\mathcal K),\qquad
X_{\mathcal V}\in\Gamma(\mathcal V|_{\Delta_Q}),
\]
where $X_{\Delta_Q}$ represents the \textit{constrained dynamics} and
$X_{\mathcal V}$ the vector field associated with the \textit{constraint force}.
\medskip

Define the restricted two-form $\Omega_{L_{\mathcal K}} \in\Gamma(\Lambda^2(\mathcal K))$
and the restricted differential $\mathbf dE_{L_{\mathcal K}} \in\Gamma(\Lambda^{1}(\mathcal K))$
fiberwise by, for each $v_q\in\Delta_Q$,
\[
\Omega_{L_{\mathcal K}}(v_q)
:=
\big(
\iota_{\Delta_Q}^\ast\Omega_L(v_q)
\big)
\big|_{\mathcal K_{v_q}\times\mathcal K_{v_q}},
\qquad
\mathbf dE_{L_{\mathcal K}}(v_q)
:=
\big(
\mathbf d(E_L\circ\iota_{\Delta_Q})(v_q)
\big)
\big|_{\mathcal K_{v_q}}.
\]

\begin{proposition}[Constrained dynamics]\label{prop:LDD-K}
Under the assumption that the restricted two-form $\Omega_{L_{\mathcal K}}$ is fiberwise nondegenerate on $\mathcal K$, the tangential component $X_{\Delta_Q}(v_q) \in \mathcal K_{v_q}$ of the tangent vector $X(v_q)$ satisfying the Lagrange--Dirac condition
\[
(X(v_q), \mathbf{d}E_L(v_q)) \in D_L(v_q), \quad \textrm{at each}\;\;v_q \in \Delta_Q,
\]
is uniquely determined by
\begin{equation}\label{ConstLagDiracSys}
\mathbf{i}_{X_{\Delta_Q}(v_q)}\Omega_{L_{\mathcal K}}(v_q)
=\mathbf{d}E_{L_{\mathcal K}}(v_q).
\end{equation}
\end{proposition}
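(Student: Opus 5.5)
The plan has two parts: show that the tangential component satisfies \eqref{ConstLagDiracSys}, then get uniqueness from nondegeneracy.

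Fix $v_q\in\Delta_Q$ and write the Lagrange--Dirac condition in its intrinsic form from Proposition \ref{Prop: IntLDA_Eqn}(i):
\[
\Omega_L(v_q)(X(v_q),u)=\mathbf dE_L(v_q)\cdot u\quad\text{for all }u\in\Delta_{TQ}(v_q),\qquad X(v_q)\in\Delta_{TQ}(v_q).
\]
Now restrict the test vectors to $u=Y\in\mathcal K_{v_q}=T_{v_q}\Delta_Q\cap\Delta_{TQ}(v_q)$. This restriction is legitimate because $\mathcal K_{v_q}\subset\Delta_{TQ}(v_q)$. Substitute the normality splitting $X=X_{\Delta_Q}+X_{\mathcal V}$. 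The term $\Omega_L(v_q)(X_{\mathcal V}(v_q),Y)$ vanishes because $Y\in T_{v_q}\Delta_Q$ and $X_{\mathcal V}(v_q)\in\mathcal V_{v_q}$, which is the $\Omega_L$-orthogonal complement of $T_{v_q}\Delta_Q$ by \eqref{Omega_L_OrthogonalCompSpace}. What remains is
\[
\Omega_L(v_q)(X_{\Delta_Q}(v_q),Y)=\mathbf dE_L(v_q)\cdot Y\quad\text{for all }Y\in\mathcal K_{v_q}.
\]

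Next, translate this identity into the restricted objects. Both $X_{\Delta_Q}(v_q)$ and $Y$ lie in $T_{v_q}\Delta_Q$. Hence $\Omega_L(v_q)(X_{\Delta_Q},Y)=(\iota_{\Delta_Q}^\ast\Omega_L)(v_q)(X_{\Delta_Q},Y)$, and $\mathbf dE_L(v_q)\cdot Y=\mathbf d(E_L\circ\iota_{\Delta_Q})(v_q)\cdot Y$, where we identify $T\Delta_Q$ with its image under $T\iota_{\Delta_Q}$. We also need $X_{\Delta_Q}(v_q)\in\mathcal K_{v_q}$, and this requires that the projection of $X\in\Delta_{TQ}$ onto $T\Delta_Q$ stays in $\Delta_{TQ}$. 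The paper asserts this just before the proposition, and I would take it as given. The more robust route is to observe that only $X_{\Delta_Q}$ is needed, and that $X_{\Delta_Q}$ is tangent to $\Delta_Q$ and equals $X$ when the solution remains in $\Delta_Q$. In that case $X\in T\Delta_Q\cap\Delta_{TQ}=\mathcal K$ and $X_{\mathcal V}=0$. With $X_{\Delta_Q}(v_q)\in\mathcal K_{v_q}$ in hand, the displayed identity is exactly \eqref{ConstLagDiracSys}, read as an equality of covectors on $\mathcal K_{v_q}$.

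Uniqueness follows from linear algebra. Suppose $Z_1,Z_2\in\mathcal K_{v_q}$ both satisfy \eqref{ConstLagDiracSys}. Then $\mathbf i_{Z_1-Z_2}\Omega_{L_{\mathcal K}}(v_q)=0$, and fiberwise nondegeneracy of $\Omega_{L_{\mathcal K}}$ gives $Z_1=Z_2$. Existence also holds, because $\Omega^\flat_{L_{\mathcal K}}(v_q):\mathcal K_{v_q}\to\mathcal K_{v_q}^\ast$ is an isomorphism in finite rank, so $X_{\Delta_Q}(v_q)=(\Omega^\flat_{L_{\mathcal K}}(v_q))^{-1}\mathbf dE_{L_{\mathcal K}}(v_q)$.

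The main obstacle is the step $X_{\Delta_Q}\in\mathcal K$, that is, compatibility of the normality splitting with $\Delta_{TQ}$. I would handle it as described above, by using the paper's stated projection property or the tangency of solution curves to $\Delta_Q$. Once that is settled, the rest is the $\Omega_L$-orthogonality cancellation and nondegeneracy.
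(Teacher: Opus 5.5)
Your argument is the paper's argument. You split $X=X_{\Delta_Q}+X_{\mathcal V}$, test the Dirac condition against $Y\in\mathcal K\subset\Delta_{TQ}$, drop $X_{\mathcal V}$ by $\Omega_L$-orthogonality, and conclude by nondegeneracy of $\Omega_{L_{\mathcal K}}$.

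On the step you single out, use your second route and not the paper's projection claim. With $\mathcal V=(T\Delta_Q)^{\Omega_L}$ as in \eqref{Omega_L_OrthogonalCompSpace}, the statement that $X\in\Delta_{TQ}$ projects into $\mathcal K$ is not a general fact. Two observations show this.

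\begin{itemize}
\item Take $L=\tfrac12|v|^2$. Any $w\in\mathcal V$ has $\mathbf i_w\Omega_L=\lambda_r\,\mathbf d(\omega^r_i v^i)$, which gives $T\tau_Q(w)=\lambda_r\omega^r_i\,\partial/\partial q^i$. This vector lies in $\Delta_Q$ only if $\lambda=0$, because the Gram matrix of the $\omega^r$ is invertible. Hence $\mathcal V\cap\Delta_{TQ}=\{0\}$, and $X_{\Delta_Q}\in\Delta_{TQ}$ holds only in the trivial case $X_{\mathcal V}=0$.
\item When the number $m$ of constraints is odd, as in the Heisenberg example, $T_{v_q}\Delta_Q$ is odd-dimensional. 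Then $T\Delta_Q\cap\mathcal V\neq\{0\}$, so this splitting does not even exist.
\end{itemize}

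Your tangency route is sound, and the tangency is automatic, not an extra assumption. Hyperregularity gives $\dot q=v$, so $\dot q\in\Delta_Q(q)$ puts every solution curve inside $\Delta_Q$. Hence $X\in T\Delta_Q\cap\Delta_{TQ}=\mathcal K$ and $X_{\mathcal V}=0$. The restricted equation then follows directly from the Dirac condition, with no orthogonality argument needed.

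If you want the pointwise reading of the statement, split along a different complement. There $X(v_q)$ carries an arbitrary Lagrange multiplier and need not be tangent to $\Delta_Q$. Use the complement $\{w\mid \mathbf i_w\Omega_L\in\Delta_{TQ}^\circ\}$ of Definition \ref{def:SympOrthSub_Normality}. Its elements are vertical, since $\Omega_L^\flat$ maps vertical vectors onto horizontal covectors when $\mathbf D_2\mathbf D_2L$ is invertible. So this complement lies in $\Delta_{TQ}$, and the projection really does land in $\mathcal K$. It is also $\Omega_L$-orthogonal to all of $\Delta_{TQ}\supset\mathcal K$, so your cancellation step goes through unchanged. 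The tangential component is then independent of the multiplier.
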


\begin{proof}
From the Lagrange--Dirac condition,
\[
(X(v_q), \mathbf{d}E_L(v_q)) \in D_L(v_q),
\]
for each $v_q \in \Delta_Q$, the tangent vector $X(v_q)$ admits the decomposition $X(v_q) = X_{\Delta_Q}(v_q) + X_{\mathcal V}(v_q)$.  
Since $\mathcal V$ is $\Omega_L$--orthogonal to $T\Delta_Q$, i.e.,
\[
\Omega_L(v_q)(X_\mathcal{V}(v_q), Y) = 0, \qquad \forall\,Y\in T_{v_q}\Delta_Q,
\]
restricting to vectors tangent to $\Delta_Q$ removes the contribution of $X_\mathcal{V}(v_q)$,
leaving only the admissible dynamics along $\Delta_Q$. 
\medskip

On the Lagrange--Dirac dynamics
\[
\mathbf{i}_{X_{\Delta_Q}(v_q)}\Omega_L(v_q)
-\mathbf{d}E_{L}(v_{q}) \in \Delta_{TQ}^{\circ}(v_{q}),
\]
acting $Y\in\mathcal K_{v_q}\subset\Delta_{TQ}(v_q)$ yields
\[
\mathbf{d}E_L(v_q) \cdot  Y  = \Omega_L(v_q)(X_{\Delta_Q}(v_q), Y),
\]
where every element of $\Delta_{TQ}^\circ(v_q)$ vanishes on $Y \in \mathcal K_{v_q}$.
\medskip

Recall that the restricted two-form $\Omega_{L_{\mathcal K}}$ and the energy differential $\mathbf dE_{L_{\mathcal K}}$ are defined as
\[
\Omega_{L_{\mathcal K}}(v_q)
:=
\big(
\iota_{\Delta_Q}^\ast\Omega_L(v_q)
\big)
\big|_{\mathcal K_{v_q}\times\mathcal K_{v_q}},
\qquad
\mathbf dE_{L_{\mathcal K}}(v_q)
:=
\big(
\mathbf d(E_L\circ\iota_{\Delta_Q})(v_q)
\big)
\big|_{\mathcal K_{v_q}},
\]
and since we assume that \(\Omega_{L_{\mathcal K}}\) is fiberwise nondegenerate, we can uniquely determine the section $X_{\Delta_Q}\in\Gamma(\mathcal K)$ representing the constrained dynamics on $\Delta_{Q}$ by
\[
\mathbf{i}_{X_{\Delta_Q}(v_q)} \Omega_{L_{\mathcal K}}(v_q) = \mathbf{d}E_{L_{\mathcal K}}(v_q), \qquad v_q \in \Delta_Q.
\]
Thus, the constrained motion is characterized intrinsically on
$\mathcal K$, while the complementary bundle $\mathcal V$ 
corresponds to the directions associated with the constraint force. 
This completes the proof.
\end{proof}

\begin{remark}
Proposition~\ref{prop:LDD-K} shows that the constrained dynamics $X_{\Delta_Q}$ is completely determined 
on the admissible subbundle 
\[
\mathcal K = T\Delta_Q \cap \Delta_{TQ} \subset T\Delta_Q.
\]
In contrast, an extension to a vector field $X$ on $TQ$ is not unique: one may add an arbitrary component 
$X_{\mathcal V} \in \Gamma(\mathcal V|_{\Delta_Q})$ without violating the Lagrange--Dirac condition.

Thus, the indeterminacy of the dynamics corresponds precisely to the freedom of adding 
constraint forces, represented by the $\Omega_L$--orthogonal complement $\mathcal V$.
The physically relevant motion is uniquely characterized by the tangential component 
$X_{\Delta_Q} \in \Gamma(\mathcal K)$, which is determined by
\[
\mathbf{i}_{X_{\Delta_Q}(v_q)}\Omega_{L_{\mathcal K}}(v_q)=\mathbf dE_{L_{\mathcal K}}(v_q), \qquad v_q \in \Delta_Q.
\]

The uniqueness of $X_{\Delta_Q}(v_q)$ is equivalent to the injectivity of the map
\[
(\Omega_{L_{\mathcal K}})^\flat_{v_q}:\mathcal K_{v_q}\to\mathcal K_{v_q}^*, 
\qquad Z \mapsto \mathbf{i}_Z \Omega_{L_{\mathcal K}}(v_q),
\]
that is, to the nondegeneracy of $\Omega_{L_{\mathcal K}}(v_q)$. If this form has a nontrivial kernel, then $X_{\Delta_Q}$ is determined only up to addition of vectors in $\ker(\Omega_{L_{\mathcal K}}(v_q))$, and hence is not unique.
\end{remark}

Furthermore, the constrained dynamics~\eqref{ConstLagDiracSys} on $\mathcal K$
can be equivalently formulated as a Lagrange--Dirac dynamical system induced by the restriction
of $D_L$ to $\mathcal K$, as shown in the following theorem.

\begin{framed}
\begin{theorem}[Constrained Lagrange--Dirac dynamical systems on $\mathcal{K}$]\label{thm:ConstLagDirDySys}
\label{prop:induced_K_Dirac}
Let $D_L=(\mathbb{F}L)^*D_{\Delta_Q}\subset TTQ\oplus T^*TQ$ be the Lagrange--Dirac structure on $TQ$, and let
$\iota_{\Delta_Q}:\Delta_Q\hookrightarrow TQ$ be the natural inclusion. Define the backward image Dirac structure
\[
D_{L_{\Delta_Q}}
:=\mathcal{B}(T\iota_{\Delta_Q})(D_L)
\subset T\Delta_Q\oplus T^*\Delta_Q,
\]
where we assume that $D_{L_{\Delta_Q}}$ has constant rank (i.e., is a smooth Dirac structure).

Let $\mathcal K:=T\Delta_Q\cap\Delta_{TQ}\subset T\Delta_Q$ be the admissible subbundle and assume that $\mathcal K$ has constant rank. Now, we define the induced subbundle
\[
D_{L_{\mathcal K}}
\subset \mathcal K\oplus \mathcal K^*,
\]
whose fiber is given by, for each $v_q\in\Delta_Q$,
\begin{equation*}
\begin{split}
D_{L_{\mathcal K}}(v_q)
&:=
\left\{
(w_{v_q},\alpha_{v_q})\in \mathcal K_{v_q}\times \mathcal K_{v_q}^*
\ \middle|\
\textrm{There exists some}\;\widetilde{\alpha}_{v_q}\in T_{v_q}^*\Delta_Q
\right.\\
&\left.\hspace{3cm}
\textrm{ such that}\;\; (w_{v_q},\widetilde{\alpha}_{v_q})\in D_{L_{\Delta_Q}}(v_q),\quad \alpha_{v_q}
=\widetilde{\alpha}_{v_q}|_{\mathcal K_{v_q}}
\right\}.
\end{split}
\end{equation*}
Assume furthermore that the restricted two-form
\[
\Omega_{L_{\mathcal K}}
=\iota_{\Delta_Q}^*\Omega_L|_{\mathcal K\times\mathcal K}
\]
is fiberwise nondegenerate.
\medskip

Then the following statements hold:

\begin{enumerate}

\item[(i)]
For each $v_q\in\Delta_Q$,
\[
D_{L_{\mathcal K}}(v_q)
=
\left\{
(w_{v_q},\alpha_{v_q})\in\mathcal K_{v_q}\times\mathcal K_{v_q}^*
\mid
\alpha_{v_q}
=\mathbf{i}_{w_{v_q}}\Omega_{L_{\mathcal K}}(v_q)
\right\}.
\]
Hence, we have
\begin{equation}\label{Graph_DiracLK}
D_{L_{\mathcal K}}=\operatorname{graph}((\Omega_{L_{\mathcal K}})^\flat) \subset \mathcal K\oplus \mathcal K^*.
\end{equation}

\item[(ii)]
$D_{L_{\mathcal K}}$ is a smooth Dirac structure on the vector bundle 
$
\mathcal K\to \Delta_Q.
$

\item[(iii)]
The \textbf{constrained Lagrange--Dirac dynamical system} $(X_{\Delta_{Q}}, E_{L_{\mathcal K}}, D_{L_{\mathcal K}})$ satisfying the condition
\[
\big(X_{\Delta_Q}(v_q),\mathbf dE_{L_{\mathcal K}}(v_q)\big)
\in D_{L_{\mathcal K}}(v_q),
\qquad \forall v_q\in\Delta_Q.
\]
is equivalent to the \textbf{constrained Lagrange--d'Alembert--Dirac equation}
\[
\mathbf{i}_{X_{\Delta_Q}(v_q)}\Omega_{L_{\mathcal K}}(v_q)
=
\mathbf dE_{L_{\mathcal K}}(v_q).
\]
\end{enumerate}
\end{theorem}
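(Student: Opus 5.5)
The plan is to first make the backward image $D_{L_{\Delta_Q}}=\mathcal{B}(T\iota_{\Delta_Q})(D_L)$ explicit, and then restrict it to $\mathcal K$. By the definition of the backward Dirac map and of $D_L$ in \eqref{LagDirac}, a pair $(w,\widetilde\alpha)\in T_{v_q}\Delta_Q\times T^*_{v_q}\Delta_Q$ lies in $D_{L_{\Delta_Q}}(v_q)$ exactly when two conditions hold. First, there is $\alpha\in T^*_{v_q}TQ$ with $(T\iota_{\Delta_Q}(w),\alpha)\in D_L(v_q)$. Second, $\widetilde\alpha=(T\iota_{\Delta_Q})^*\alpha$. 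The first condition means $w\in\Delta_{TQ}(v_q)$ and $\alpha|_{\Delta_{TQ}}=\mathbf{i}_w\Omega_L|_{\Delta_{TQ}}$. Since $w\in T\Delta_Q$ as well, we get $w\in\mathcal K_{v_q}$.

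For (i), take $(w,\alpha_{v_q})\in D_{L_{\mathcal K}}(v_q)$ with witness $\widetilde\alpha$. For every $Y\in\mathcal K_{v_q}\subset\Delta_{TQ}(v_q)$,
\[
\alpha_{v_q}(Y)=\widetilde\alpha(Y)=\alpha(Y)=\Omega_L(v_q)(w,Y)=\Omega_{L_{\mathcal K}}(v_q)(w,Y).
\]
This proves the inclusion ``$\subset$''. For ``$\supset$'', take $w\in\mathcal K_{v_q}$. Choose any $\alpha\in T^*_{v_q}TQ$ extending the functional $\mathbf{i}_w\Omega_L|_{\Delta_{TQ}(v_q)}$ from the subspace $\Delta_{TQ}(v_q)$; such an extension exists because this is linear algebra on a subspace. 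Then $(w,\alpha)\in D_L(v_q)$, so $\widetilde\alpha:=(T\iota_{\Delta_Q})^*\alpha$ is a witness, and its restriction to $\mathcal K_{v_q}$ equals $\mathbf{i}_w\Omega_{L_{\mathcal K}}(v_q)$. This yields \eqref{Graph_DiracLK}.

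For (ii), I will use that the graph of a skew map $\mathcal K_{v_q}\to\mathcal K^*_{v_q}$ is a linear Dirac structure. The proof is the same isotropy-plus-dimension argument as in the proof that $D_L$ is Dirac, now with $\Delta_{TQ}$ replaced by all of $\mathcal K$. Isotropy comes from skew-symmetry of $\Omega_{L_{\mathcal K}}$. For maximality, a graph has dimension $\operatorname{rank}\mathcal K$, which is half of $\dim(\mathcal K\oplus\mathcal K^*)$; alternatively, one repeats the $D^\perp\subset D$ argument with covectors of the form $(0,\alpha)$. Smoothness follows because $\mathcal K$ has constant rank and $\Omega_{L_{\mathcal K}}$ is a smooth section of $\Lambda^2\mathcal K^*$, so $(\Omega_{L_{\mathcal K}})^\flat$ is a smooth bundle map and its graph is a smooth subbundle. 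Nondegeneracy is not needed for (i)–(ii); it enters only in (iii).

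For (iii), once (i) is established, the condition $(X_{\Delta_Q},\mathbf dE_{L_{\mathcal K}})\in D_{L_{\mathcal K}}$ is literally the equation $\mathbf dE_{L_{\mathcal K}}=\mathbf{i}_{X_{\Delta_Q}}\Omega_{L_{\mathcal K}}$. Nondegeneracy then gives uniqueness, consistent with Proposition~\ref{prop:LDD-K}. I would also note that Proposition~\ref{prop:LDD-K} shows the tangential part of a solution of \eqref{LDDS} satisfies this equation, tying the system back to $D_L$.

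The main obstacle is the ``$\supset$'' step in (i). There one must produce a witness $\widetilde\alpha$ in the backward image, and one must be careful that $D_{L_{\Delta_Q}}$ is defined via pullback of covectors and not via restriction of forms. The extension argument above handles this pointwise. The constant-rank hypothesis on $D_{L_{\Delta_Q}}$ is only needed to guarantee smoothness and not for the pointwise identity.
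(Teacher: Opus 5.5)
Your proposal is correct and follows essentially the same route as the paper: you unwind the backward Dirac map to get $\langle\widetilde{\alpha}_{v_q},u\rangle=\Omega_L(v_q)(w_{v_q},u)$ for $u\in\mathcal K_{v_q}$, use isotropy plus the dimension count of a graph for (ii), and read (iii) off from (i). The one difference is in your favor. The paper writes out only the inclusion of $D_{L_{\mathcal K}}(v_q)$ into the graph and leaves the reverse inclusion implicit, even though its dimension count in (ii) depends on it; you supply the witness explicitly, and you can simply take it to be $\mathbf{i}_{w_{v_q}}\Omega_L(v_q)$, since $(w_{v_q},\mathbf{i}_{w_{v_q}}\Omega_L(v_q))\in D_L(v_q)$ for every $w_{v_q}\in\Delta_{TQ}(v_q)$, so no extension argument is needed.
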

\end{framed}
\begin{proof}
We proceed in several steps.

\medskip
\noindent

\begin{itemize}
\item[(i)]
By definition of the backward Dirac map under the inclusion
$
\iota_{\Delta_Q}:\Delta_Q\hookrightarrow TQ,
$
we have
\[
(w_{v_q},\widetilde{\alpha}_{v_q})
\in D_{L_{\Delta_Q}}(v_q)
\subset T_{v_q}\Delta_Q\oplus T^*_{v_q}\Delta_Q,
\]
if and only if there exists a covector
\[
\beta_{v_q}\in T^*_{v_q}TQ
\]
such that
\[
\widetilde{\alpha}_{v_q}
=(T\iota_{\Delta_Q})^*(\beta_{v_q}),
\qquad
\big((T\iota_{\Delta_Q})(w_{v_q}),\beta_{v_q}\big)\in D_L(v_q)\subset T_{v_{q}}TQ \times T_{v_{q}}^{\ast}TQ.
\]

Since $\iota_{\Delta_Q}$ is the natural inclusion, we identify $(T\iota_{\Delta_Q})(w_{v_q})$ with $w_{v_q}$ itself by regarding $T\Delta_Q$ as a subbundle of $TTQ$.
Then, we have
\[
(w_{v_q},\beta_{v_q})\in D_L(v_q),
\]
where $D_L$ is given fiberwise by
\[
D_L(v_q)
=
\left\{
(w_{v_q},\beta_{v_q})\in T_{v_q}TQ\times T^*_{v_q}TQ
\mid
w_{v_q}\in\Delta_{TQ}(v_q),\;
\beta_{v_q}-\mathbf{i}_{w_{v_q}}\Omega_L(v_q)\in\Delta_{TQ}^\circ(v_q)
\right\}.
\]
Hence, it follows that
\[
\beta_{v_q}-\mathbf{i}_{w_{v_q}}\Omega_L(v_q)
\in
\Delta_{TQ}^\circ(v_q).
\]
\quad Now restrict the vector $w_{v_q}\in T_{v_q}\Delta_Q$ to admissible subbundles
\[
w_{v_q}\in\mathcal K_{v_q}
=
T_{v_q}\Delta_Q\cap\Delta_{TQ}(v_q).
\]
By definition of $D_{L_{\mathcal K}}$, we set
\[
\alpha_{v_q}
=
\widetilde{\alpha}_{v_q}|_{\mathcal K_{v_q}}.
\]

Let $u_{v_q}\in\mathcal K_{v_q}\subset\Delta_{TQ}(v_q)$. Then every element of
$\Delta_{TQ}^\circ(v_q)$ vanishes on $u_{v_q}$, hence
\[
\langle\beta_{v_q},u_{v_q}\rangle
=
\Omega_L(v_q)(w_{v_q},u_{v_q}).
\]

Since $\alpha_{v_q}$ is the restriction of $\beta_{v_q}$ to $\mathcal K_{v_q}$, we obtain
\[
\langle\alpha_{v_q},u_{v_q}\rangle
=
\Omega_L(v_q)(w_{v_q},u_{v_q}),
\qquad
\forall u_{v_q}\in\mathcal K_{v_q}.
\]
By definition of the restricted two-form,
\[
\Omega_{L_{\mathcal K}}(v_q)
=
\Omega_L(v_q)|_{\mathcal K_{v_q}\times\mathcal K_{v_q}},
\]
therefore
\[
\langle\alpha_{v_q},u_{v_q}\rangle
=
\Omega_{L_{\mathcal K}}(v_q)(w_{v_q},u_{v_q}),
\qquad
\forall u_{v_q}\in\mathcal K_{v_q}.
\]
Hence
\[
\alpha_{v_q}
=
\mathbf{i}_{w_{v_q}}\Omega_{L_{\mathcal K}}(v_q).
\]

\item[(ii)]
We next show that $D_{L_{\mathcal K}} \subset \mathcal K\oplus\mathcal K^*$ defines a smooth Dirac structure on the vector bundle
\[
\mathcal K\to \Delta_Q.
\]

By part (i), for each $v_q\in\Delta_Q$, we have already proved
\[
D_{L_{\mathcal K}}(v_q)
=
\left\{
(w_{v_q},\alpha_{v_q})\in \mathcal K_{v_q}\times \mathcal K_{v_q}^*
\mid
\alpha_{v_q}
=
\mathbf{i}_{w_{v_q}}\Omega_{L_{\mathcal K}}(v_q)
\right\}.
\]
Hence fiberwise,
\[
D_{L_{\mathcal K}}(v_q)
=\operatorname{graph}\!\big((\Omega_{L_{\mathcal K}})^\flat_{v_q}\big),
\]
where
\[
(\Omega_{L_{\mathcal K}})^\flat_{v_q}:
\mathcal K_{v_q}\to\mathcal K_{v_q}^*,
\qquad
w_{v_q}\mapsto \mathbf{i}_{w_{v_q}}\Omega_{L_{\mathcal K}}(v_q).
\]

Since $\Omega_{L_{\mathcal K}}$ depends smoothly on $v_q$, the bundle map
$
(\Omega_{L_{\mathcal K}})^\flat:\mathcal K\to\mathcal K^*
$
is smooth. Therefore its graph
\[
D_{L_{\mathcal K}}=\operatorname{graph}\!\big((\Omega_{L_{\mathcal K}})^\flat\big)
\subset
\mathcal K\oplus\mathcal K^*
\]
is a smooth vector subbundle.

\quad We now verify isotropy with respect to the canonical symmetric pairing on
$\mathcal K\oplus\mathcal K^*$: for each $v_{q} \in \Delta_{Q}$,
\[
\big\langle\!\big\langle (w_{v_{q}},\alpha_{v_{q}}),(u_{v_{q}},\beta_{v_{q}})\big\rangle\!\big\rangle
:=
\langle\alpha_{v_{q}},u_{v_{q}}\rangle+\langle\beta_{v_{q}},w_{v_{q}}\rangle.
\]
Take two arbitrary elements
$(w_{v_{q}},\mathbf{i}_{w_{v_{q}}}\Omega_{L_{\mathcal K}}(v_q)),
(u_{v_{q}},\mathbf{i}_{u_{v_{q}}}\Omega_{L_{\mathcal K}}(v_q))
\in D_{L_{\mathcal K}}(v_q)$.
Then
\[
\begin{aligned}
\big\langle\!\big\langle
(w_{v_{q}},\mathbf{i}_{w_{v_{q}}}\Omega_{L_{\mathcal K}}(v_q)),
(u_{v_{q}},\mathbf{i}_{u_{v_{q}}}\Omega_{L_{\mathcal K}}(v_q))
\big\rangle\!\big\rangle
&=
\Omega_{L_{\mathcal K}}(v_q)(w_{v_{q}},u_{v_{q}})
+
\Omega_{L_{\mathcal K}}(v_q)(u_{v_{q}},w_{v_{q}}) =0,
\end{aligned}
\]
since $\Omega_{L_{\mathcal K}}(v_q)$ is skew-symmetric. Thus each fiber
$D_{L_{\mathcal K}}(v_q)$ is isotropic.

\quad Finally,
\[
\dim D_{L_{\mathcal K}}(v_q)=\dim \mathcal K_{v_q},
\]
because it is the graph of a linear map $(\Omega_{L_{\mathcal K}})_{v_q}^\flat: \mathcal K_{v_q} \to
\mathcal K_{v_q}^*$. On the other hand,
\[
\dim\big(\mathcal K_{v_q}\times\mathcal K_{v_q}^*\big)
=
2\dim\mathcal K_{v_q}.
\]
Hence $D_{L_{\mathcal K}}(v_q)$ is an isotropic subspace of maximal possible
dimension, i.e.\ it is maximally isotropic. Therefore
\[
D_{L_{\mathcal K}}
\subset
\mathcal K\oplus\mathcal K^*
\]
is a smooth maximally isotropic subbundle, namely, a Dirac structure on
$\mathcal K$.
\item[(iii)]

By (ii), the condition
\[
\big(X_{\Delta_Q}(v_q),\mathbf dE_{L_{\mathcal K}}(v_q)\big)
\in D_{L_{\mathcal K}}(v_q)
\]
holds if and only if
\[
\mathbf{i}_{X_{\Delta_Q}(v_q)} \Omega_{L_{\mathcal K}}(v_q)
=
\mathbf dE_{L_{\mathcal K}}(v_q),
\]
which is precisely the constrained Lagrange--d'Alembert--Dirac equation.
\end{itemize}
\end{proof}

\subsection{Gauge covariance of Lagrange--Dirac dynamical systems}
In this subsection, under the assumption of hyperregularity, we investigate a gauge covariance property of Lagrange--Dirac dynamical systems under the flow generated by the constrained dynamics. In particular, we show that the underlying Dirac structure is preserved up to a time-dependent gauge transformation, which encodes a form of invariance corresponding to the apparent failure of symplecticity in the Lagrange--d'Alembert formulation.

\paragraph{Symplectic preservation in unconstrained cases.} 
Let $L$ be a hyperregular Lagrangian on $TQ$ and consider a distribution $\Delta_{Q}$ on $Q$.  For the special case of unconstrained system, i.e., $\Delta_{Q}=TQ$, as was already illustrated, 
there exists a Lagrangian vector field $X_{L}$ on $TQ$ uniquely determined such that the Lagrange--Dirac condition holds:
\[
(X_{L}, \mathbf{d}E_{L}) \in D_{L},
 \]
from which the Euler--Lagrange equations are obtained as
\[
\mathbf{i}_{X_{L}}\Omega_{L}=\mathbf{d}E_{L}.
\]

Then, we obtain $\pounds_{X_{L}}\Omega_{L}=0$, since
\[
\pounds_{X_{L}}\Omega_{L}
= \mathbf{i}_{X_{L}}\mathbf{d}\Omega_{L}
+ \mathbf{d}(\mathbf{i}_{X_{L}}\Omega_{L})
= \mathbf{d}(\mathbf{i}_{X_{L}}\Omega_{L})
= \mathbf{d}\mathbf{d}E_{L}
= 0,
\]
where $\mathbf{d}\Omega_{L}=0$ since the Lagrangian two-form $\Omega_{L}$ is closed. Thus, $\Omega_{L}$ is preserved along the flow:
\[
\varphi^{\ast}_t \; \Omega_L = \Omega_L,
\]
where $\varphi_t: TQ \to TQ$ is the {\it flow} associated with $X_L$.

\paragraph{Time-dependent exact two-form deformations of Lagrangian two-forms.} 
In the presence of nonholonomic constraints $\Delta_{Q} \subset TQ$, 
the restricted two-form $\Omega_{L_{\mathcal K}}$ is typically fiberwise nondegenerate, which uniquely determines the constrained dynamics $X_{\Delta_Q} \in \Gamma(\mathcal K)$ and the associated flow $\varphi_t: \Delta_Q \to \Delta_Q$. Unlike in unconstrained Lagrangian mechanics, the Lagrangian two-form is generally not invariant under the nonholonomic flow $\varphi_t$.
The following proposition clarifies that this failure of invariance is not arbitrary, but is characterized by a time-dependent deformation through an exact two-form.

\begin{framed}
\begin{proposition}\label{Prop:structural property}
Let $(X,\mathbf dE_L)\in D_L$ be a Lagrange--Dirac dynamical system.
Assume that along $\Delta_Q$, the vector field $X$ decomposes relative to a smooth decomposition
$
TTQ|_{\Delta_Q}
=
T\Delta_Q\oplus\mathcal V
$
as
\[
X=X_{\Delta_Q}+X_{\mathcal V},
\]
where $X_{\Delta_Q}\in\Gamma(\mathcal K)$ and $X_{\mathcal V}\in\Gamma(\mathcal V)$, and where the subbundle $\mathcal{V}$ is $\Omega_L$--orthogonal to $T\Delta_Q$, i.e., $\Omega_L(X_{\mathcal{V}}, Y) = 0$ for all $Y \in \Gamma(T\Delta_Q)$.

\medskip

Let $\varphi_t:\Delta_Q\to\Delta_Q$ be the flow of $X_{\Delta_Q}$ and define
\[
\boldsymbol\beta
:=
\mathbf i_X\Omega_L-\mathbf dE_L
\in\Gamma(\Delta_{TQ}^\circ).
\]
Then, we obtain
\begin{equation}\label{GaugeTrans_LagTwoForm}
\varphi^{\ast}_t (\iota_{\Delta_{Q}}^{\ast}\Omega_L)
= (\iota_{\Delta_{Q}}^{\ast}\Omega_L)
+ \int_0^t \varphi_s^{\ast}\mathbf{d}(\iota_{\Delta_{Q}}^{\ast}\boldsymbol{\beta}) ds,
\end{equation}
where $\iota_{\Delta_Q}: \Delta_Q \hookrightarrow TQ$ is the inclusion map.
\end{proposition}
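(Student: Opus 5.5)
The plan is to derive the infinitesimal identity $\pounds_{X_{\Delta_Q}}(\iota_{\Delta_Q}^{\ast}\Omega_L)=\mathbf{d}(\iota_{\Delta_Q}^{\ast}\boldsymbol{\beta})$ directly from the Dirac condition via Cartan's formula, and then integrate along the flow $\varphi_t$. This differs from the variational route used in Proposition~\ref{Prop:ConvOmegaL}: here the exact two-form comes straight out of the Lagrange--Dirac structure.

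First, I will pull back the defining relation $\boldsymbol{\beta}=\mathbf{i}_X\Omega_L-\mathbf{d}E_L$ to $\Delta_Q$. For $Y\in T_{v_q}\Delta_Q$ with $v_q\in\Delta_Q$, I write $X=X_{\Delta_Q}+X_{\mathcal V}$. By the $\Omega_L$--orthogonality assumption, $\Omega_L(X_{\mathcal V},Y)=0$, so $(\iota_{\Delta_Q}^{\ast}\mathbf{i}_X\Omega_L)(Y)=\Omega_L(X_{\Delta_Q},Y)$. Since $X_{\Delta_Q}$ is tangent to $\Delta_Q$, this equals $\mathbf{i}_{X_{\Delta_Q}}(\iota_{\Delta_Q}^{\ast}\Omega_L)(Y)$. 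Using that pull-back commutes with $\mathbf{d}$, I obtain the identity of one-forms on $\Delta_Q$:
\[
\mathbf{i}_{X_{\Delta_Q}}(\iota_{\Delta_Q}^{\ast}\Omega_L)=\mathbf{d}(E_L\circ\iota_{\Delta_Q})+\iota_{\Delta_Q}^{\ast}\boldsymbol{\beta}.
\]
This is the one step that needs care: the pull-back of $\mathbf{i}_X\Omega_L$ is exactly the interior product with the tangential component only because the decomposition is $\Omega_L$--orthogonal. Without that hypothesis an extra term $\iota_{\Delta_Q}^{\ast}\mathbf{i}_{X_{\mathcal V}}\Omega_L$ would survive and need not be exact. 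I will also note that $\boldsymbol{\beta}\in\Gamma(\Delta_{TQ}^\circ)$ by Proposition~\ref{Prop: IntLDA_Eqn}(i), although this membership is not needed for the formula itself.

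Second, I will apply Cartan's magic formula on $\Delta_Q$. The form $\iota_{\Delta_Q}^{\ast}\Omega_L=-\mathbf{d}(\iota_{\Delta_Q}^{\ast}\Theta_L)$ is closed, so
\[
\pounds_{X_{\Delta_Q}}(\iota_{\Delta_Q}^{\ast}\Omega_L)=\mathbf{d}\,\mathbf{i}_{X_{\Delta_Q}}(\iota_{\Delta_Q}^{\ast}\Omega_L)=\mathbf{d}\mathbf{d}(E_L\circ\iota_{\Delta_Q})+\mathbf{d}(\iota_{\Delta_Q}^{\ast}\boldsymbol{\beta})=\mathbf{d}(\iota_{\Delta_Q}^{\ast}\boldsymbol{\beta}).
\]

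Finally, I will integrate this identity in time using the standard relation
\[
\frac{d}{ds}\varphi_s^{\ast}\sigma=\varphi_s^{\ast}\pounds_{X_{\Delta_Q}}\sigma,
\]
valid for the (local) flow of $X_{\Delta_Q}$ and any form $\sigma$ on $\Delta_Q$. Taking $\sigma=\iota_{\Delta_Q}^{\ast}\Omega_L$ gives
\[
\frac{d}{ds}\varphi_s^{\ast}(\iota_{\Delta_Q}^{\ast}\Omega_L)=\varphi_s^{\ast}\mathbf{d}(\iota_{\Delta_Q}^{\ast}\boldsymbol{\beta}).
\]
Integrating from $0$ to $t$ and using $\varphi_0=\mathrm{id}$ yields \eqref{GaugeTrans_LagTwoForm}. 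As a consistency check, $\varphi_s^{\ast}$ commutes with $\mathbf{d}$, so the accumulated term equals $\mathbf{d}\int_0^t\varphi_s^{\ast}(\iota_{\Delta_Q}^{\ast}\boldsymbol{\beta})\,ds$. This shows it is exact, consistent with \eqref{dHatS_2}, and is what allows it to be read as a gauge ($B$-field) transformation afterwards.
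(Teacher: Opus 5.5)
Your proposal is correct and follows essentially the paper's argument: it uses the $\Omega_L$--orthogonality of $\mathcal V$ to identify $\iota_{\Delta_Q}^{\ast}(\mathbf{i}_X\Omega_L)$ with $\mathbf{i}_{X_{\Delta_Q}}(\iota_{\Delta_Q}^{\ast}\Omega_L)$, then Cartan's formula with closedness of $\Omega_L$, then integration of $\frac{d}{ds}\varphi_s^{\ast}\sigma=\varphi_s^{\ast}\pounds_{X_{\Delta_Q}}\sigma$. The only difference is the order of steps. The paper first computes $\pounds_X\Omega_L=\mathbf{d}\boldsymbol{\beta}$ on $TQ$ and then pulls back, whereas you pull back the one-form identity first and apply Cartan's formula directly on $\Delta_Q$; this is slightly more economical, since it only uses the values of $X$ along $\Delta_Q$.
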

\end{framed}
\begin{proof}
By using Cartan's formula and the closedness of the Lagrangian two-form $\mathbf{d}\Omega_L=0$, the Lie derivative of $\Omega_L$ along $X$ is given by
\[
\pounds_{X}\Omega_L = \mathbf{i}_{X}\mathbf{d}\Omega_L + \mathbf{d}(\mathbf{i}_{X}\Omega_L) = \mathbf{d}(\mathbf{i}_{X}\Omega_L).
\]
Substituting the Lagrange--Dirac condition $\mathbf{i}_{X}\Omega_L = \mathbf{d}E_L + \boldsymbol{\beta}$ into the above, we have
\[
\pounds_{X}\Omega_L = \mathbf{d}(\mathbf{d}E_L + \boldsymbol{\beta}) = \mathbf{d}\boldsymbol{\beta}.
\]
Restricting this relation to the constraint distribution $\Delta_Q$ via the pullback $\iota_{\Delta_Q}^*$, we obtain
\begin{equation}\label{PullBack_Step1}
\iota_{\Delta_Q}^*(\pounds_{X}\Omega_L) = \iota_{\Delta_Q}^*\mathbf{d}\boldsymbol{\beta} = \mathbf{d}(\iota_{\Delta_Q}^*\boldsymbol{\beta}).
\end{equation}

Next, we evaluate the left-hand side of \eqref{PullBack_Step1}. Using the fact that the exterior derivative commutes with the pullback, we have
\[
\iota_{\Delta_Q}^*(\pounds_{X}\Omega_L) = \iota_{\Delta_Q}^* \mathbf{d}(\mathbf{i}_{X}\Omega_L) = \mathbf{d} \left( \iota_{\Delta_Q}^*(\mathbf{i}_{X}\Omega_L) \right).
\]
For any $Y \in \Gamma(T\Delta_Q)$, the term inside the derivative satisfies
\[
\left( \iota_{\Delta_Q}^*(\mathbf{i}_{X}\Omega_L) \right)(Y) = \Omega_L(X, {\iota_{\Delta_Q}}_* Y).
\]
Using the decomposition $X = X_{\Delta_Q} + X_{\mathcal{V}}$ and recalling that the subbundle $\mathcal{V}$ is defined as the $\Omega_L$--orthogonal complement to $T\Delta_Q$ in \eqref{Omega_L_OrthogonalCompSpace} under the normality condition, we have $\Omega_L(X_{\mathcal{V}}, {\iota_{\Delta_Q}}_* Y) = 0$. Thus, we obtain
\[
\Omega_L(X, {\iota_{\Delta_Q}}_* Y) = \Omega_L(X_{\Delta_Q}, {\iota_{\Delta_Q}}_* Y) = \left( \mathbf{i}_{X_{\Delta_Q}} (\iota_{\Delta_Q}^* \Omega_L) \right)(Y).
\]
Thus, we have the identity $\iota_{\Delta_Q}^*(\mathbf{i}_{X}\Omega_L) = \mathbf{i}_{X_{\Delta_Q}} (\iota_{\Delta_Q}^* \Omega_L)$, which leads to
\begin{equation}\label{PullBack_Step2}
\iota_{\Delta_Q}^*(\pounds_{X}\Omega_L) = \mathbf{d} \left( \mathbf{i}_{X_{\Delta_Q}} (\iota_{\Delta_Q}^* \Omega_L) \right).
\end{equation}

On the other hand, since $X_{\Delta_Q}$ is tangent to $\Delta_Q$, the Lie derivative of the restricted two-form $\iota_{\Delta_Q}^* \Omega_L$ along $X_{\Delta_Q}$ is well defined on $\Delta_Q$ and is given by
\[
\pounds_{X_{\Delta_Q}}(\iota_{\Delta_Q}^* \Omega_L) = \mathbf{i}_{X_{\Delta_Q}} \mathbf{d}(\iota_{\Delta_Q}^* \Omega_L) + \mathbf{d}\left(\mathbf{i}_{X_{\Delta_Q}} ( \iota_{\Delta_Q}^* \Omega_L) \right).
\]
Given that $\mathbf{d}(\iota_{\Delta_Q}^* \Omega_L) = \iota_{\Delta_Q}^* \mathbf{d}\Omega_L = 0$, this expression simplifies to
\begin{equation}\label{PullBack_Step3}
\pounds_{X_{\Delta_Q}}(\iota_{\Delta_Q}^* \Omega_L) = \mathbf{d}\left(\mathbf{i}_{X_{\Delta_Q}} (\iota_{\Delta_Q}^* \Omega_L) \right).
\end{equation}
Comparing \eqref{PullBack_Step1}, \eqref{PullBack_Step2}, and \eqref{PullBack_Step3}, we conclude that
\[
\pounds_{X_{\Delta_Q}}(\iota_{\Delta_Q}^* \Omega_L) = \mathbf{d}(\iota_{\Delta_Q}^* \boldsymbol{\beta}).
\]

Finally, by the definition of the Lie derivative, the time evolution of the pulled-back two-form along the flow $\varphi_t$ of $X_{\Delta_Q}$ is
\[
\frac{d}{dt} \varphi_t^* (\iota_{\Delta_Q}^* \Omega_L) = \varphi_t^* \pounds_{X_{\Delta_Q}} (\iota_{\Delta_Q}^* \Omega_L) = \varphi_t^* \mathbf{d}(\iota_{\Delta_Q}^* \boldsymbol{\beta}).
\]
Integrating this equation from $0$ to $t$ yields the desired structural property:
\[
\varphi_t^* (\iota_{\Delta_Q}^* \Omega_L) = \iota_{\Delta_Q}^* \Omega_L + \int_0^t \varphi_s^* \mathbf{d}(\iota_{\Delta_Q}^* \boldsymbol{\beta}) ds.
\]
\end{proof}

\begin{remark}[Consistency with the variational approach]
\label{rem:consistency_variational}
It is worth noting that the identity \eqref{GaugeTrans_LagTwoForm} obtained in Proposition~\ref{Prop:structural property} is formally identical to the one derived in Proposition~\ref{Prop:ConvOmegaL} via the Lagrange--d'Alembert variational principle. In the variational setting, $\boldsymbol{\beta}$ is interpreted as the one-form associated with the constraint force. Here, the same structure emerges naturally from the geometric properties of the Lagrange--Dirac system $(E_L, D_L)$. This consistency highlights that the Lagrange--Dirac framework not only recovers the equations of motion but also preserves the underlying non-symplectic geometric structure inherent in nonholonomic mechanics.
\end{remark}

\paragraph{Gauge transformations of Lagrange--Dirac structures.}
While the structural property in \eqref{GaugeTrans_LagTwoForm} highlights the lack of symplectic invariance of the Lagrangian two-form due to the time-dependent exact two-form deformation, it also suggests that the time evolution of the Lagrange--Dirac structure $D_{L_{\mathcal K}}$ under the nonholonomic flow is precisely characterized by a gauge transformation. This motivates the interpretation of nonholonomic evolution as a \textit{gauge transformation} of Dirac structures, a concept developed in \cite{SeWe2001, BurRad2003}. By viewing the deformation as a gauge change, we can recover a fundamental covariance property for the constrained system.

\begin{definition}[Gauge transformations of Dirac structures]
Let $\operatorname{Dir}(TM)$ be the space of Dirac structures on $M$. 
For a two-form $B$ on a manifold $M$,  we define a map $\tau_{B}: \operatorname{Dir}(TM) \to \operatorname{Dir}(TM)$ by
\[
\tau_{B}(D_{M})=\Big\{ \big(X, \alpha + \mathbf{i}_{X}B\big)\; \big| \; (X, \alpha) \in D_{M} \Big\},
\]
where $D_{M} \subset TM \oplus T^{\ast}M$ is a Dirac structure on $M$; see \cite{SeWe2001, BurRad2003}.

\end{definition}

From this geometric perspective, we now state the main theorem of this section, which characterizes the covariance of Lagrange--Dirac systems under the nonholonomic flow.

\begin{framed}
\begin{theorem}[Gauge covariance of Lagrange--Dirac systems]\label{thm:preservation_LDK}
Let $(X_{\Delta_Q},E_{L_{\mathcal{K}}}, D_{L_{\mathcal{K}}})$ be the constrained Lagrange--Dirac dynamical system.
Let $\varphi_t:\Delta_Q\to\Delta_Q$ be the associated flow of the constrained vector field $X_{\Delta_Q}\in\Gamma(\mathcal K)$. 

Then the constrained Lagrange--Dirac dynamical system is covariant in the sense that
\[
\big( \varphi_t^* X_{\Delta_Q},\, \varphi_t^* \mathbf{d}E_{L_{\mathcal{K}}} \big) \in \tau_{B_t}(D_{L_{\mathcal{K}}}),
\]
where
\[
\tau_{B_t}:  \operatorname{Dir}(\mathcal{K}) \to \operatorname{Dir}(\mathcal{K})
\]
is the gauge transformation of Dirac structures on the subbundle $\mathcal{K}$ associated with the exact two-form $B_t$ restricted to $\mathcal{K}$, given by
\[
B_t = \left. \left( \int_0^t \varphi_s^* \mathbf{d}(\iota_{\Delta_Q}^* \boldsymbol{\beta}) ds \right) \right|_{\mathcal K \times \mathcal K} \in \Gamma(\Lambda^2(\mathcal{K})).
\]
\end{theorem}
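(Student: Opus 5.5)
The plan is to reduce the claim to Proposition~\ref{Prop:structural property}, by first identifying the gauge-transformed Dirac structure as a graph. By Theorem~\ref{thm:ConstLagDirDySys}(i), $D_{L_{\mathcal K}}=\operatorname{graph}((\Omega_{L_{\mathcal K}})^\flat)$. The definition of $\tau_B$ then gives directly
\[
\tau_{B_t}(D_{L_{\mathcal K}})=\{(w,\mathbf i_w\Omega_{L_{\mathcal K}}+\mathbf i_wB_t)\mid w\in\mathcal K\}=\operatorname{graph}\big((\Omega_{L_{\mathcal K}}+B_t)^\flat\big).
\]
Restricting \eqref{GaugeTrans_LagTwoForm} to $\mathcal K\times\mathcal K$ identifies $\Omega_{L_{\mathcal K}}+B_t$ with $\big(\varphi_t^*(\iota_{\Delta_Q}^*\Omega_L)\big)|_{\mathcal K\times\mathcal K}$. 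So the statement to prove is equivalent to the pointwise identity
\[
\mathbf i_{\varphi_t^*X_{\Delta_Q}}\big(\varphi_t^*\iota_{\Delta_Q}^*\Omega_L\big)\big|_{\mathcal K}=\varphi_t^*\mathbf dE_{L_{\mathcal K}}\big|_{\mathcal K}.
\]

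Next I would simplify both sides using that $\varphi_t$ is the flow of $X_{\Delta_Q}$. First, $\varphi_t^*X_{\Delta_Q}=T\varphi_t^{-1}\circ X_{\Delta_Q}\circ\varphi_t=X_{\Delta_Q}$, since a flow preserves its own generator. Second, by naturality of interior products under diffeomorphisms, the left-hand side equals $\varphi_t^*\big(\mathbf i_{X_{\Delta_Q}}\iota_{\Delta_Q}^*\Omega_L\big)$. Third, the proof of Proposition~\ref{Prop:structural property} gives, on all of $T\Delta_Q$ and not only on $\mathcal K$,
\[
\mathbf i_{X_{\Delta_Q}}\iota_{\Delta_Q}^*\Omega_L=\iota_{\Delta_Q}^*(\mathbf i_X\Omega_L)=\mathbf d(E_L\circ\iota_{\Delta_Q})+\iota_{\Delta_Q}^*\boldsymbol\beta .
\]
Pulling back by $\varphi_t$ and using energy conservation (Proposition~\ref{EneCon_UnreducedLDDS}, so that $E_L\circ\iota_{\Delta_Q}\circ\varphi_t=E_L\circ\iota_{\Delta_Q}$) yields
\[
\varphi_t^*\big(\mathbf i_{X_{\Delta_Q}}\iota_{\Delta_Q}^*\Omega_L\big)=\mathbf d(E_L\circ\iota_{\Delta_Q})+\varphi_t^*\iota_{\Delta_Q}^*\boldsymbol\beta .
\]
Energy conservation likewise gives $\varphi_t^*\mathbf dE_{L_{\mathcal K}}=\mathbf dE_{L_{\mathcal K}}$ on the right-hand side, once restriction to $\mathcal K$ and pullback are shown to commute.

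It then remains to show that $\varphi_t^*\iota_{\Delta_Q}^*\boldsymbol\beta$ vanishes on $\mathcal K$. I expect this to be the main obstacle. The form $\boldsymbol\beta=\tau_Q^*\beta$ annihilates $\Delta_{TQ}$. Its pullback evaluated on $w\in\mathcal K_{v_q}$ equals $\langle\beta, T\tau_Q(T\varphi_t w)\rangle$, so it vanishes exactly when $T\varphi_t(\mathcal K)\subset\Delta_{TQ}$, that is, when the flow preserves the admissible subbundle $\mathcal K=T\Delta_Q\cap\Delta_{TQ}$. The same invariance is what makes "$\varphi_t^*$" meaningful as an operation on $\Gamma(\Lambda^2\mathcal K)$, on $\mathcal K\oplus\mathcal K^*$, and hence on $\operatorname{Dir}(\mathcal K)$; the theorem implicitly relies on it.

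I would try two routes. The first is to interpret all pullbacks in the statement as pullbacks of forms on $\Delta_Q$ followed by restriction to $\mathcal K$; then the extra term $\varphi_t^*\iota_{\Delta_Q}^*\boldsymbol\beta|_{\mathcal K}$ must be absorbed into $\tau_{B_t}$. This would be done by checking that, in the variational derivation \eqref{dHatS_2}, the accumulated $\boldsymbol\beta$-contributions combine consistently with $B_t$. The second is to add, as an explicit hypothesis, the invariance $T\varphi_t(\mathcal K)\subset\mathcal K$ (equivalently $[X_{\Delta_Q},\Gamma(\mathcal K)]\subset\Gamma(\mathcal K)$). Under this hypothesis the term vanishes and the identity closes immediately. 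I would start with the second route to obtain a clean proof, and then examine whether the first route removes the hypothesis.
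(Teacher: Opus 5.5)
Your reduction is correct, and it is sharper than the paper's own argument. Read $\varphi_t^*$ as pullback of forms on $\Delta_Q$ followed by restriction to $\mathcal K$. Using $\varphi_t^*X_{\Delta_Q}=X_{\Delta_Q}$ and $\mathbf i_{X_{\Delta_Q}}(\iota_{\Delta_Q}^*\Omega_L)=\mathbf d(E_L\circ\iota_{\Delta_Q})+\iota_{\Delta_Q}^*\boldsymbol\beta$ on all of $T\Delta_Q$, the asserted inclusion is \emph{equivalent} to $(\varphi_t^*\iota_{\Delta_Q}^*\boldsymbol\beta)|_{\mathcal K}=0$, i.e.\ to $\mathbf i_{X_{\Delta_Q}}B_t=0$ on $\mathcal K$. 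Energy conservation is not even needed, since $\varphi_t^*\mathbf d(E_L\circ\iota_{\Delta_Q})$ appears on both sides.

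The genuine gap is that neither route can establish this, because it fails in the nonholonomic case. Route 1 has no slack: the reduction is an equivalence and $B_t$ is prescribed by the statement, so there is nothing to ``absorb'' the extra term into; it would have to vanish. Write $\boldsymbol\beta=\mu_r\,\tau_Q^*\omega^r$, extend $w\in\mathcal K_{v_q}$ to $W\in\Gamma(\mathcal K)$, and use that $X_{\Delta_Q}$ is second-order. Then
\[
\left.\frac{d}{dt}\right|_{t=0}\big(\varphi_t^*\iota_{\Delta_Q}^*\boldsymbol\beta\big)(w)=-\,\iota_{\Delta_Q}^*\boldsymbol\beta\big([X_{\Delta_Q},W]\big)=-\,\mu_r\,\mathbf d\omega^r\big(T\tau_Q(w),v_q\big),
\]
which is nonzero wherever the constraint force is nonzero and $\Delta_Q$ is not involutive. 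For the Heisenberg constraint, for instance, $\mathbf d\omega=2\,dx\wedge dy$ is nondegenerate on $\Delta_Q$.

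Route 2's hypothesis is not a mild technicality either. The same bracket computation shows that $T\varphi_t(\mathcal K)\subset\mathcal K$ holds only when $\Delta_Q$ is involutive. In that case $\mathbf d(\iota_{\Delta_Q}^*\boldsymbol\beta)|_{\mathcal K\times\mathcal K}=\mu_r\,(\tau_Q^*\mathbf d\omega^r)|_{\mathcal K\times\mathcal K}=0$, so $B_t=0$. You would therefore only prove $\varphi_t^*D_{L_{\mathcal K}}=D_{L_{\mathcal K}}$ in the holonomic case.

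The paper's proof is your route 2 with the hypothesis left implicit. It pulls back $D_{L_{\mathcal K}}=\operatorname{graph}((\Omega_{L_{\mathcal K}})^\flat)$ and the equation $\mathbf i_{X_{\Delta_Q}}\Omega_{L_{\mathcal K}}=\mathbf dE_{L_{\mathcal K}}$ by $\varphi_t$, on the grounds that $\varphi_t^*$ is well defined on $\mathcal K$ ``due to the flow being generated by $X_{\Delta_Q}\in\Gamma(\mathcal K)$''. Being generated by a section of $\mathcal K$ does not make the flow preserve $\mathcal K$. Since the dynamical equation holds only on $\mathcal K$, it can be transported only if $T\varphi_t(\mathcal K)\subset\mathcal K$. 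So your diagnosis is right, and the statement needs correcting rather than proving.

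What the same computation does give is a statement on $\Delta_Q$. Note first that $D_{L_{\Delta_Q}}$ is the structure of the form \eqref{DiracManifold} built from $\iota_{\Delta_Q}^*\Omega_L$ and $\mathcal K$. Then $\varphi_t^*D_{L_{\Delta_Q}}=\tau_{\widetilde B_t}(D')$, where $\widetilde B_t=\int_0^t\varphi_s^*\mathbf d(\iota_{\Delta_Q}^*\boldsymbol\beta)\,ds$. Here $D'$ is the analogous structure built from $\iota_{\Delta_Q}^*\Omega_L$ and the transported distribution $(T\varphi_t)^{-1}(\mathcal K)$ in place of $\mathcal K$. The pair $(X_{\Delta_Q},\mathbf d(E_L\circ\iota_{\Delta_Q}))=(\varphi_t^*X_{\Delta_Q},\varphi_t^*\mathbf d(E_L\circ\iota_{\Delta_Q}))$ lies in it. Thus the flow acts by a gauge transformation \emph{combined with} a transport of the admissible distribution. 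That transport is trivial only in the involutive case, where the restriction $B_t$ vanishes anyway.
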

\end{framed}
\begin{proof}
First, we observe the transformation of the Dirac structure under the flow by utilizing the restriction to the admissible subbundle $\mathcal{K}$.

By Proposition \ref{Prop:structural property}, the pullback of the full Lagrangian two-form $\iota_{\Delta_Q}^* \Omega_L$ on $\Delta_Q$ along the flow $\varphi_t$ satisfies
\[
\varphi_t^* (\iota_{\Delta_Q}^* \Omega_L) = \iota_{\Delta_Q}^* \Omega_L + \int_0^t \varphi_s^* \mathbf{d}(\iota_{\Delta_Q}^* \boldsymbol{\beta}) ds,
\]
where 
\[
\int_0^t \varphi_s^* \mathbf{d}(\iota_{\Delta_Q}^* \boldsymbol{\beta}) ds \in \Gamma(\Lambda^2(T\Delta_Q))
\]
is the exact two-form on $\Delta_Q$. 
\medskip

Restricting both sides of this identity to the admissible subbundle $\mathcal{K} \times \mathcal{K}$, recalling that $\Omega_{L_{\mathcal{K}}} = \iota_{\Delta_Q}^* \Omega_L |_{\mathcal K \times \mathcal K}$, and setting
\[
B_t := \left. \left( \int_0^t \varphi_s^* \mathbf{d}(\iota_{\Delta_Q}^* \boldsymbol{\beta}) ds \right) \right|_{\mathcal K \times \mathcal K} \in \Gamma(\Lambda^2(\mathcal{K})), 
\]
we obtain the relation for the restricted forms:
\[
\varphi_t^* \Omega_{L_{\mathcal{K}}} = \Omega_{L_{\mathcal{K}}} + B_t.
\]
Since $D_{L_{\mathcal{K}}} = \operatorname{graph}((\Omega_{L_{\mathcal{K}}})^\flat) \subset \mathcal{K} \oplus \mathcal{K}^*$, its pullback by $\varphi_t$ (which is well defined on $\mathcal{K}$ due to the flow being generated by $X_{\Delta_Q} \in \Gamma(\mathcal{K})$) is given by the graph of the pulled-back restricted two-form:
\[
\varphi_t^* D_{L_{\mathcal{K}}} = \operatorname{graph}(\varphi_t^* \Omega_{L_{\mathcal{K}}}) = \operatorname{graph}(\Omega_{L_{\mathcal{K}}} + B_t).
\]
By the definition of the gauge transformation $\tau_{B_t}$ on $\operatorname{Dir}(\mathcal{K})$, this is precisely
\begin{equation}\label{eq:proof_gauge_rel}
\varphi_t^* D_{L_{\mathcal{K}}} = \tau_{B_t}(D_{L_{\mathcal{K}}}).
\end{equation}

Next, we consider the constrained Lagrange--Dirac condition on $\mathcal{K}$ given by \eqref{ConstLagDiracSys}:
\[
\mathbf{i}_{X_{\Delta_Q}} \Omega_{L_{\mathcal{K}}} = \mathbf{d}E_{L_{\mathcal{K}}}.
\] 
Applying the pullback $\varphi_t^*$ to both sides, and using the naturality of the interior product $\varphi_t^*(\mathbf{i}_{X} \alpha) = \mathbf{i}_{\varphi_t^* X} (\varphi_t^* \alpha)$, we obtain
\[
\mathbf{i}_{\varphi_t^* X_{\Delta_Q}} (\varphi_t^* \Omega_{L_{\mathcal{K}}}) = \varphi_t^* \mathbf{d}E_{L_{\mathcal{K}}},
\]
where $\varphi_t^* X_{\Delta_Q}$ denotes the pullback vector field.
\medskip

This equation is equivalent to saying that the pair $(\varphi_t^* X_{\Delta_Q}, \varphi_t^* \mathbf{d}E_{L_{\mathcal{K}}})$ belongs to the graph of $\varphi_t^* \Omega_{L_{\mathcal{K}}}$ over $\mathcal{K}$, which is the pullback Dirac structure:
\[
\big( \varphi_t^* X_{\Delta_Q},\, \varphi_t^* \mathbf{d}E_{L_{\mathcal{K}}} \big) \in \varphi_t^* D_{L_{\mathcal{K}}}.
\]
Substituting the relation \eqref{eq:proof_gauge_rel} into the above, we arrive at the desired covariance property:
\[
\big( \varphi_t^* X_{\Delta_Q},\, \varphi_t^* \mathbf{d}E_{L_{\mathcal{K}}} \big) \in \tau_{B_t} (D_{L_{\mathcal{K}}}).
\]
\end{proof}

\begin{remark}
The nonholonomic flow $\varphi_t: \Delta_{Q} \to \Delta_{Q}$ associated with the constrained vector field $X_{\Delta_{Q}}$ acts on the Lagrange--Dirac structure $D_{L_{\mathcal K}}\subset \mathcal K\oplus\mathcal K^*$ not by a simple pullback in general, i.e., $\varphi_t^{\ast}D_{L_{\mathcal{K}}} \neq D_{L_{\mathcal{K}}}$, but by a time-dependent exact two-form $B_{t}$ transformation. 

The identity
\[
\varphi_t^* D_{L_{\mathcal K}} = \tau_{B_t}(D_{L_{\mathcal K}}),
\]
demonstrates that the Dirac structure is preserved \textit{up to a gauge transformation} determined by the non-closedness of the constraints.

From a physical perspective, this shows that the isotropy property underlying energy conservation is preserved under the gauge transformation. Indeed, if
\[
(\varphi_t^*X_{\Delta_Q},\, \varphi_t^*\mathbf{d}E_{L_{\mathcal K}}) \in \tau_{B_t}(D_{L_{\mathcal K}}),
\]
then by the definition of the gauge transformation and the invariance of the vector field under its own flow, i.e., $\varphi_t^*X_{\Delta_Q}=X_{\Delta_Q}$, we have $\varphi_t^*\mathbf{d}E_{L_{\mathcal K}} = \mathbf{d}E_{L_{\mathcal K}} + \mathbf{i}_{X_{\Delta_Q}}B_t$ for $(X_{\Delta_Q}, \mathbf{d}E_{L_{\mathcal K}}) \in D_{L_{\mathcal{K}}}$. Thus, we obtain the energy conservation:
\[
\begin{split}
\langle \varphi_t^*\mathbf{d}E_{L_{\mathcal K}}, \varphi_t^*X_{\Delta_Q} \rangle 
&= \langle \mathbf{d}E_{L_{\mathcal K}} + \mathbf{i}_{X_{\Delta_Q}}B_t, X_{\Delta_Q} \rangle \\
&= \langle \mathbf{d}E_{L_{\mathcal K}}, X_{\Delta_Q} \rangle + B_t(X_{\Delta_Q}, X_{\Delta_Q}) \\
&= \langle \mathbf{d}E_{L_{\mathcal K}}, X_{\Delta_Q} \rangle \\
&= 0,
\end{split}
\]
where we utilized the skew-symmetry of $B_t$ and the fundamental isotropy of $D_{L_{\mathcal K}}$.

Thus, the isotropy of $D_{L_{\mathcal K}}$, which encodes the workless nature of constraint forces, is robust under the $\varphi_{t}$-induced gauge transformation $\tau_{B_{t}}$.
\end{remark}

\begin{remark}
In the general context, if a Dirac structure $D_{M}$ is integrable (i.e., closed under the Courant bracket), its gauge transform $\tau_{B}(D_M)$ remains integrable provided that $\mathbf{d}B = 0$. In our present context, since $B_t$ is defined as an integral of an exact form, it is naturally closed, i.e., $\mathbf{d}B_t = 0$. However, since nonholonomic constraints are generally nonintegrable, the underlying distribution $\Delta_Q$ and the associated Dirac structures are treated as \emph{almost Dirac structures} throughout this paper. It is important to note that the gauge transformation $\tau_{B_{t}}$ preserves the maximal isotropy property of the subbundle independently of the integrability or the closedness of $B_{t}$.
\end{remark}


\section{Reduction of Lagrange--Dirac structures}
In \cite{YoMa2007a}, the theory of Lie--Dirac reduction was introduced as a reduction procedure for the $G$-invariant Dirac structure on the cotangent bundle $T^*G$ of a Lie group $G$, which is induced by a nonholonomic constraint $\Delta_{G} \subset TG$. In this section, we develop a \textit{reduction theory for Lagrange--Dirac structures on $TG$}, namely Dirac structures depending on a Lagrangian $L$, as the Lagrangian counterpart to the Lie--Dirac reduction on $T^*G$.

\subsection{Lie--Dirac reduction}
\paragraph{Invariance and reduction of Dirac structures.}
We briefly review the general notions of invariance and reduction of Dirac structures, following \cite{YoMa2007a,YoMa2009, BS2001}.

Let $G$ be a Lie group acting freely and properly on a manifold $M$, denoted by $\Phi:G \times M \to M$. For $h\in G$ and $x\in M$, the action is given by
\[
hx \equiv h\cdot x\equiv\Phi(h,x)\equiv\Phi_h(x),\,
\]
and let $\pi^{/G}_{M}: M\to B:=M/G$ be the quotient projection,
\medskip

The action $\Phi_h$ on $M$ naturally lifts to the Pontryagin bundle $TM \oplus T^*M$ as follows:
\begin{equation}
\hat{\Phi}_{h}(v, \alpha) = \left( (\Phi_h)_{*} v, (\Phi_h^{-1})^* \alpha \right),
\end{equation}
where $(\Phi_h)_*: TM \to TM$ is the tangent lift and $(\Phi_h^{-1})^*: T^*M \to T^*M$ is the cotangent lift. 

This combined action is sometimes called the natural lift, and it is characterized by the property that it preserves the natural pairing between $TM$ and $T^*M$, i.e.,$$\langle (\Phi_h^{-1})^* \alpha, (\Phi_h)_* v \rangle = \langle \alpha, v \rangle$$for all $(v, \alpha) \in TM \oplus T^*M$. Consequently, the symmetric pairing on the Pontryagin bundle is invariant under $\hat{\Phi}_h$.
\medskip 

A Dirac structure $D_{M} \subset TM \oplus T^*M$ is then said to be $G$-invariant if it is preserved as a subbundle under this natural lift:
\begin{equation*}
\hat{\Phi}_h(D_M) = D_M \quad \text{for all } h \in G.
\end{equation*}
Equivalently, for all $h \in G$ and $(X, \alpha) \in D_{M}$,
\begin{equation}\label{invariance_Dirac}
((\Phi_h)_* X, (\Phi_h^{-1})^* \alpha) \in D_{M}.
\end{equation} 
We also denote the push-forward of a one-form $\alpha$ by $(\Phi_h)_{*} \alpha := (\Phi_h^{-1})^{*} \alpha$. Then, a Dirac structure $D_{M} \subset TM \oplus T^*M$ is said to be $G$-invariant if
\begin{equation*}
((\Phi_h)_{*} X, (\Phi_h)_{*} \alpha) \in D_{M},
\end{equation*}
for all $h \in G$ and $(X, \alpha) \in D_{M}$.


\begin{definition}[Dirac bundle reduction]
Let $D_{M}$ be a $G$-invariant Dirac structure on $M$.  
The quotient
\[
D^{/G}_{M}:=D_{M}/G \subset (TM \oplus T^{\ast}M)/G\cong (TM)/G \oplus (T^*M)/G
\]
is called the \textit{reduced Dirac structure} over the reduced bundle $TM/G \to M/G$. Figure~\ref{DiracBundleReduction} illustrates this Dirac reduction.
\end{definition}

\medskip

\begin{figure}[h]
\centering
\begin{tikzcd}
TM \oplus T^{\ast}M \arrow[d, "/G"'] 
& D_{M} \arrow[l, hook'] \arrow[d, "/G"] \\
F=(TM\oplus T^{\ast}M)/G \arrow[d]
& D_{M}/G \arrow[l, hook']\\
B = M/G
\end{tikzcd}
\caption{Dirac reduction under a free and proper $G$-action.}
\label{DiracBundleReduction}
\end{figure}

For the special case $M=T^{\ast}G$, this construction 
was developed in \cite{YoMa2007a}, which is referred to as the \textit{Lie--Dirac reduction}. 
We review this case below, since the reduction of Lagrange--Dirac structures on $TG$ will be formulated in direct analogy with it.


\paragraph{Induced Dirac structures on $T^{\ast}G$ and trivialized expressions.}
The left translation action of a Lie group $G$ on itself,
\[
L_h:G\to G,\qquad g\mapsto hg,
\]
induces natural left actions on the tangent bundle $TG$
and the cotangent bundle $T^*G$ through the tangent and cotangent lifts.
The induced actions on $TG$ and $T^*G$ are explicitly given by
\begin{equation*}
h  v_{g} \equiv h \cdot v_{g} \equiv T_{g} L_h \cdot v_{g},\qquad  h p_{g} \equiv h \cdot p_{g}\equiv 
T^\ast_{hg} L_{h^{-1}} \cdot p_{g}.
\end{equation*} 
Here $T_{g}L_{h}:T_{g}G \to T_{hg}G$ is the tangent of the left translation map $L_{h}:G \to G;\, g\mapsto hg$ at the point $g$ and $T^{\ast}_{hg}L_{h^{-1}}:T^{\ast}_{g}G \to T^{\ast}_{hg}G$ is the dual of the map $T_{hg}L_{h^{-1}}:T_{hg}G \to T_{g}G$. 
Throughout the paper, we often use the concatenation notation for the tangent and cotangent lifts of the Lie group actions on $TG$ and $T^{\ast}G$ respectively by writing $h  v_{g}$ and $h  p_{g}$ instead of $T_{g} L_h \cdot v_{g}$ and $T^\ast_{hg} L_{h^{-1}} \cdot p_{g}$.

We assume that the constraint distribution $\Delta_G \subset TG$ is left-invariant under the group action $L_h : G \to G$, $g \mapsto hg$, that is,
\begin{equation}\label{invariance_Delta}
\Delta_G(hg) = T_g L_h \bigl(\Delta_G(g)\bigr)=h \cdot \Delta _G(g), \quad \text{for all } h \in G.
\end{equation}

Now, recall from the definition of induced Dirac structures in \eqref{IndDiracStrCot}, we consider the case in which $Q=G$ to define an induced Dirac structure $D_{\Delta_G}$ on $T^{\ast}G$ from the canonical two-form $\Omega_{T^{\ast}G}$ on $T^{\ast}G$ and the lifted distribution $\Delta_{T^{\ast}G}=(T\pi_{G})^{-1}(\Delta_{G})$ by, for each $p_g \in T^{\ast}G$,
\begin{align}\label{IndDirStrLie}
D_{\Delta_G}(p_{g})
& =\{ (v_{p_{g}}, \alpha_{p_{g}}) \in T_{p_{g}}T^{\ast}G \times
T^{\ast}_{p_{g}}T^{\ast}G  \mid v_{p_{g}} \in
\Delta_{T^{\ast}G}(p_{g}),  \; \mbox{and} \;  \nonumber
\\ & \qquad \qquad
\left<\alpha_{p_{g}}, w_{p_{g}} \right> = \Omega_{T^{\ast}G}( p _g) (v_{p_{g}},w_{p_{g}}) \;\; \mbox{for
all} \;\; w_{p_{g}} \in \Delta_{T^{\ast}G}(p_{g})\},
\end{align}
where $\pi_{G}: T^{\ast}G \to G$ is the canonical projection.
\medskip

By a slight abuse of notation, we sometimes write $p_g=(g,p)$ for
an element of $T^*G$ when separating base and fiber variables is convenient.
Then the cotangent-lifted left action is written as
\[
\Psi_h(p_{g})=\Psi_h(g,p):=(hg,hp).
\]
Since  the left translation action of $G$ on itself is free and proper, its cotangent lift to $T^{\ast}G$ is also free and proper. From the invariance of $\Delta _G$ and $\Omega_{T^*G}$, the induced Dirac structure $D_{\Delta_{G}}$ on $T^{\ast}G$ is also $G$--invariant under $\Psi _h$, i.e., 
\begin{equation*}
((\Psi _{h})_{\ast}X,(\Psi_{h}^{-1})^{\ast}\alpha) \in D_{\Delta_{G}},
\end{equation*}
for all $(X, \alpha) \in D_{\Delta_{G}}$ and $ h \in G$, as in \eqref{invariance_Dirac}.

\paragraph{Trivialization diffeomorphisms.}
For the tangent bundle $TG$ of a Lie group $G$, we consider the left trivialization diffeomorphism:
\begin{equation*}\label{lambda}
\lambda_{TG}: TG  \to G \times \mathfrak{g} : \;\; \lambda_{TG}(g,v)=(g, T_{g}L_{g^{-1}}\cdot v),
\end{equation*}
in which the tangent bundle $TG$ is diffeomorphic to $G \times \mathfrak{g}$. 

On the other hand, for the cotangent bundle $T^{\ast}G$ of $G$,  the left trivialization diffeomorphism is given by:
\begin{eqnarray*}\label{bar_lambda}
\lambda_{T^{\ast}G}:  T^{\ast}G \to G \times \mathfrak{g}^{\ast} : \;\; \lambda_{T^{\ast}G}(g,p)=(g,T_{e}^{\ast}L_{g} \cdot p),
\end{eqnarray*}
where the cotangent bundle $T^{\ast} G$ is diffeomorphic to $G \times \mathfrak{g}^{\ast}$.
\medskip

Associated with the lifted distribution $\Delta_{T^{\ast}G}=(T\pi_{G})^{-1}(\Delta_{G})$, the trivialized lifted distribution $\Delta_{G \times \mathfrak{g}^{\ast}}$ on $G \times \mathfrak{g}^\ast $ is defined by 
\[
\Delta_{G \times \mathfrak{g}^{\ast}}:=(T\overline{\pi}_{G})^{-1}(\Delta_{G}), 
\]
where $\overline{\pi}_{G}: G \times \mathfrak{g}^{\ast} \to G$ is the trivialized projection  defined by $\pi_{G}=\overline{\pi}_{G} \circ \lambda_{T^{\ast}G}$. 
For each $(g, \mu )= \lambda_{T^{\ast}G}(g,p) \in G \times \mathfrak{g}^\ast $, we obtain
\[
\Delta_{G \times \mathfrak{g}^{\ast}}(g,\mu)=\left\{(v ,\rho) \in T_{g}G \times \mathfrak{g}^{\ast} \mid v \in \Delta_G(g) \right\}= \Delta _G(g)\times \mathfrak{g}^\ast.
\]
Employing the left trivialization diffeomorphism $\lambda_{T^{\ast}G}:T^{\ast}G \to G \times \mathfrak{g}^{\ast}$, the one-form $\theta$ on $G \times \mathfrak{g}^{\ast}$ and the two-form $\omega$ on $G \times \mathfrak{g}^{\ast}$ are defined by
\[
\theta=(\lambda_{T^{\ast}G}^{-1})^{\ast}\Theta \quad \textrm{and} \quad \omega=(\lambda_{T^{\ast}G}^{-1})^{\ast}\Omega,
 \]
 where $\Theta$ is the canonical one-form on $T^{\ast}G$ and $\Omega=-\mathbf{d}\Theta$ is the canonical two-form. In fact, the one-form $\theta$ on $G \times \mathfrak{g}^{\ast}$ is locally expressed, at each $(g,\mu) \in G \times \mathfrak{g}^{\ast}$ and $(v,\rho) \in T_{(g,\mu)}(G \times \mathfrak{g}^{\ast}) \cong T_{g}G \times \mathfrak{g}^{\ast}$, by
\[
\theta(g,\mu) \cdot (v,\rho)=\mu(T_{g}L_{g^{-1}} \, v).
\]
The symplectic structure $\omega=-\mathbf{d}\theta$ on $G \times \mathfrak{g}^{\ast}$,  at  each point $(g,\mu) \in G \times \mathfrak{g}^{\ast}$, is therefore represented by
 \begin{equation*} 
\label{symGtig}
\omega(g,\mu)((v ,\rho),(w ,\sigma))
= \langle \sigma, g ^{-1}v  \rangle- \langle \rho, g ^{-1} w  \rangle
     + \langle \mu, [g ^{-1}v , g ^{-1}w ] \rangle,
\end{equation*}
where $(w,\sigma) \in T_{(g,\mu)}(G \times \mathfrak{g}^{\ast}) \cong T_{g}G \times \mathfrak{g}^{\ast}$.
\medskip

Hence, the trivialized Dirac structure $\overline{D}_{\Delta_{G}}\subset T(G \times\mathfrak{g}^\ast ) \oplus T^*(G \times \mathfrak{g}^\ast )$ on $G \times \mathfrak{g}^{\ast}$ is obtained as
\[
\overline{D}_{\Delta_{G}}=\mathcal{F}(T\lambda_{T^{\ast}G})(D_{\Delta_{G}})=(\lambda_{T^{\ast}G})_{\ast}D_{\Delta_{G}},
\]
which is locally given, for each $(g,\mu) \in G \times \mathfrak{g}^\ast$, by
\[
\begin{split}
\overline{D}_{\Delta_{G}}(g,\mu)&=\{  \left((v ,\rho),(p ,\eta)\right) \in (T_{g}G \times \mathfrak{g}^{\ast}) \times (T_{g}^{\ast}G \times \mathfrak{g}) \mid (v,\rho) \in  \Delta_{G \times \mathfrak{g}^{\ast}}(g,\mu) , \\
\text{and} &\;\; \langle p , w  \rangle +  \langle \sigma, \eta \rangle =\omega(g,\mu)\left((v,\rho),(w ,\sigma)\right), \;\; \text{for all} \;\;  (w ,\sigma) \in \Delta_{G \times \mathfrak{g}^{\ast}}(g,\mu) \}.
\end{split}
\]

\paragraph{Invariance of the induced Dirac structure.} Associated with the $G$-invariance \eqref{invariance_Delta} of the distribution $ \Delta _G$, we can uniquely determine its value at $e$ as
\[
\mathfrak{g}  ^ \Delta  := \Delta _G(e)\subset T_e G= \mathfrak{g}  .
\]
Similarly, the distribution
$ \Delta _{G \times \mathfrak{g}^\ast }\subset T( G \times \mathfrak{g}^\ast )$ is completely determined by its value at $(e, \mu ) \in G \times \mathfrak{g}^\ast$ as
\[
\Delta _{G \times \mathfrak{g}^\ast }( e , \mu )= \mathfrak{g}  ^ \Delta \times \mathfrak{g}^\ast\subset T_{(e, \mu )}(G \times \mathfrak{g}^\ast ).
\]
Denote by $\bar{\Psi } _h(g, \mu ):=(hg, \mu )$ the left $G$-action on $G \times \mathfrak{g}^\ast$ induced, via $\lambda_{T^{\ast}G}$, by the cotangent lift of left translation by $G$ on $T^*G$. 
Since both $\omega$ and $\Delta_{G\times\mathfrak g^*}$ are $G$-invariant, the induced Dirac structure $\overline{D}_{\Delta_{G}}$ on $G \times \mathfrak{g}^{\ast}$ is also $G$--invariant under $\bar{\Psi}_h$, i.e., 
\begin{equation*}
((\bar{\Psi }_{h})_{\ast}X,(\bar{\Psi }^{-1} _{h})^{\ast}\alpha) \in \overline{D}_{\Delta_{G}},
\end{equation*}
for all $(X, \alpha) \in \overline{D}_{\Delta_{G}}$ and $ h \in G$, as in \eqref{invariance_Dirac}. 
\medskip

Using the natural identifications
\[
T_{(e,\mu)}(G\times\mathfrak g^\ast)\cong \mathfrak{g}  \times \mathfrak{g}^\ast,
\qquad
T^*_{(e,\mu)}(G\times\mathfrak g^\ast)\cong \mathfrak{g}^\ast  \times \mathfrak{g},
\]
the fiber $\overline D_{\Delta_G}(e,\mu)$ is uniquely determined by its expression at $(e,\mu)$ as
\begin{equation*}
\begin{split}
\overline{D}_{\Delta_{G}}(e,\mu)=&\{ ((\xi,\rho),(\beta ,\eta)) \in (\mathfrak{g}  \times \mathfrak{g}^\ast) \times ( \mathfrak{g}^\ast  \times \mathfrak{g}) \mid  (\xi,\rho) \in  \mathfrak{g}^{\Delta} \times \mathfrak{g}^{\ast}, \\
& \; \text{and} \;  \langle \beta , \zeta \rangle +  \langle \sigma, \eta \rangle=\omega(e,\mu)((\xi,\rho),(\zeta,\sigma)) \;\;\text{for all} \;\; (\zeta,\sigma) \in \mathfrak{g}^{\Delta} \times \mathfrak{g}^{\ast} \}, \nonumber
\end{split}
\end{equation*}
where $\xi=g^{-1}v$,  $\zeta=g^{-1}w$, and $\beta=g^{-1}p$.  
\paragraph{Lie--Dirac bundle reduction.} 
Using the left trivialization $T^\ast G \cong G \times \mathfrak{g}^\ast $, 
the quotient of the Pontryagin bundle $TT^{\ast}G \oplus T^{\ast}T^{\ast}G$ by $G$ is identified with
\[
F:=\left( TT^*G \oplus T^*T^*G \right) /G\cong Y \oplus Y^{\ast}\rightarrow B:=T^{\ast}G/G \cong \mathfrak{g}^\ast.
\]
In the above, we have the identifications:
\[
TT^*G/G \cong Y:=\mathfrak{g}^* \times (\mathfrak{g}  \times \mathfrak{g}^\ast), 
\quad \textrm{and}\quad 
T^{\ast}T^{\ast}G/G \cong Y^{\ast}:=\mathfrak{g}^* \times (\mathfrak{g}^\ast  \times \mathfrak{g}).
\]

Then, by taking the quotient of $\overline{D}_{\Delta_{G}}$ by $G$, we define the quotient bundle 
\[
D_{\Delta_{G}}^{/G}:=\overline{D}_{\Delta_{G}}/G \subset Y \oplus Y^{\ast}, 
\]
whose fiber, at each $\mu \in \mathfrak{g}^{\ast}$, is given by
\begin{equation}\label{reducedInducedDirac}
\begin{split}
D_{\Delta_{G}}^{/G}(\mu)=& \{ ((\xi,\rho),(\beta ,\eta)) \in (\mathfrak{g}  \times \mathfrak{g}^\ast) \times  (\mathfrak{g}^\ast  \times \mathfrak{g}) \mid 
(\xi,\rho) \in \mathfrak{g}^{\Delta} \times \mathfrak{g}^{\ast}, \\
&\text{and} \;\;  \langle \beta , \zeta \rangle +  \langle \sigma, \eta \rangle =\omega^{/G}(\mu)((\xi,\rho),(\zeta,\sigma)) \;\; \text{for all} \;\; (\zeta,\sigma) \in \mathfrak{g}^{\Delta} \times \mathfrak{g}^{\ast}
\}.
\end{split}
\end{equation}
Here, $\omega^{/G}$ is the {\it reduced two-form on the bundle $Y=\mathfrak{g}^* \times (\mathfrak{g}  \times \mathfrak{g}^\ast)$} given fiberwise, for each $\mu \in \mathfrak{g}^{\ast}$, by
\begin{equation*}
\omega^{/G}(\mu)((\xi,\rho),(\zeta,\sigma))= \left\langle \sigma, \xi \right\rangle
- \left\langle \rho, \zeta \right\rangle + \left\langle \mu, [\xi,\zeta] \right\rangle,
\end{equation*}
which combines the canonical symplectic term with the Lie--Poisson term. 
\medskip

Furthermore, the quotient bundle $D_{\Delta_{G}}^{/G}(\mu) \subset Y \oplus Y^{\ast}$ is locally given, for each $\mu \in \mathfrak{g}^{\ast}$, by
\begin{equation*}
D_{\Delta_{G}}^{/G}(\mu)= \{ ((\xi,\rho),(\beta ,\eta )) \in (\mathfrak{g}  \times \mathfrak{g}^\ast) \times  (\mathfrak{g}^\ast  \times \mathfrak{g})\mid  \xi=\eta,\;\;  \xi \in \mathfrak{g}^{\Delta},\;
\beta  + \rho- \operatorname{ad}_{\xi}^{\ast}\mu \in (\mathfrak{g}^{\Delta})^{\circ} \},
\end{equation*}
where $(\mathfrak g^\Delta)^\circ \subset \mathfrak g^*$ denotes the annihilator of $\mathfrak g^\Delta$. It was shown by \cite{YoMa2007a} that the quotient bundle $D_{\Delta_{G}}^{/G} \subset Y \oplus Y^{\ast}$ is a Dirac structure on the bundle $Y= \mathfrak{g}^* \times (\mathfrak{g}  \times \mathfrak{g}^\ast) \to \mathfrak g^*$.
\medskip

Figure \ref{DiracBundleReduction_dualLieAlgebra} illustrates this reduction procedure. The reduced Dirac structure $D_{\Delta_{G}}^{/G} \subset Y \oplus Y^{\ast}$ is called the \textit{Lie--Dirac structure}  on the reduced bundle $Y$ over $\mathfrak{g}^\ast$. For further details, see \cite{YoMa2007a, GBYo2015}.

\begin{figure}[htpb]
\adjustbox{scale=1.0,center}{
\begin{tikzcd}
TT^{\ast}G \oplus T^{\ast}T^{\ast}G \arrow[d, "/G"'] 
  & D_{\Delta_{G}} \arrow[l, hook'] \arrow[d, "/G"] \\
F=(TT^{\ast}G/G) \oplus (T^{\ast}T^{\ast}G/G)\; \cong\;  Y \oplus Y^{\ast}  \arrow[d] 
  & D_{\Delta_{G}}^{/G} \arrow[l, hook']\\
 B=T^{\ast}G/G \cong\mathfrak{g}^{\ast}
\end{tikzcd}
}
\caption{Lie--Dirac reduction.}
\label{DiracBundleReduction_dualLieAlgebra}
\end{figure}

%
\subsection{Reduction of  Lagrange--Dirac structures}\label{Sec:LagDiracStr_LieGroup}
\paragraph{Lagrange--Dirac structures over Lie groups.} 
Let $L$ be a Lagrangian on $TQ$, \textit{possibly degenerate}. Recall from  \eqref{LagDirac} that a Lagrange--Dirac structure $D_{L}\subset TTQ\oplus T^*TQ$ on $TQ$ is defined by a distribution $\Delta_Q \subset TQ$ and the Lagrangian two-form $\Omega_{L}$ associated with $L$. We now specialize to the case $Q=G$ and consider the reduction of the Lagrange--Dirac structure on $TG$. We assume that $L$ is left invariant and also that a given distribution $\Delta_{G}$ on $G$ is left invariant, as before. Then, we define a lifted distribution on $TG$ by $\Delta_{TG}=(T\tau_G)^{-1}(\Delta_G) \subset TTG$, where $\tau_G: TG \to G$ is the canonical projection.

Recall that the Lagrangian forms on $TG$ are defined by $\Theta_{L}=(\mathbb{F}L)^{\ast}\Theta_{T^{\ast}G}$ and $\Omega_{L}=(\mathbb{F}L)^{\ast}\Omega_{T^{\ast}G}$.
Then the \textit{Lagrange--Dirac structure} $D_L$ on $TG$ is defined by, for each $v_g \in TG$,
\begin{align}\label{LagDiracStr_LieGroup}
D_{L}(v_g) 
& =\big\{ (w_{v_{g}}, \alpha_{v_{g}}) \in T_{v_g}TG \times
T^{\ast}_{v_g}TG\mid w_{v_{g}} \in
\Delta_{TG}(v_g), \;   \mbox{and}\nonumber\\
&\hspace{1cm}
\left<\alpha_{v_{g}}, u_{v_{g}}\right>=\Omega_L(v_g) (w_{v_{g}},u_{v_{g}}), \;\textrm{for all} \;\; u_{v_{g}} \in \Delta_{TG}(v_g)\big\}.
\end{align}
Recall also that this definition is well defined even for degenerate Lagrangians.

\begin{remark}
We can also construct the Lagrange--Dirac structure $D_{L}$ on $TG$ from the induced Dirac structure $D_{\Delta_{G}}$ on $T^{\ast}G$ by the backward Dirac map under the Legendre transformation $\mathbb{F}L:TG \to T^{\ast}G$ as:
\begin{equation*}
D_{L}=\mathcal{B}(T\mathbb{F}L)(D_{\Delta_{G}})= (\mathbb{F}L)^{\ast} D_{\Delta_G}.
\end{equation*}
This expression is locally written as, for all $v_{g} \in TG$ and $p_{g}=\mathbb{F}L(v_{g})$,
\begin{equation}\label{eq:DLpullback}
\begin{split}
D_{L}(v_{g})=\{ (w_{v_{g}},(T_{v_{g}}\mathbb{F}L)^{\ast}(\gamma_{p_{g}}) ) \, \mid \, 
w_{v_{g}}\in T_{v_{g}}TG,\; \gamma_{p_{g}} \in T^{\ast}_{p_{g}}T^{\ast}G, \\
(T_{v_{g}}\mathbb{F}L(w_{v_{g}}), \gamma_{p_{g}}) \in D_{\Delta_{G}}(p_{g}) \}.
\end{split}
\end{equation}
\end{remark}

\paragraph{Invariance of the Lagrange--Dirac structure.}
Denote by $\Phi _h(g,v):=(hg, T_{g}L_{h} \cdot v)$ the left $G$-action on $TG$ induced by the tangent lift of left translation by $G$. Since  the left translation action of $G$ on itself is free and proper, its tangent lifts to $TG$ is also free and proper. From the invariance of $ \Delta _G$, the Lagrange--Dirac structure $D_{L}$ on $TG$ is also $G$--invariant under $\Phi _h$, i.e., 
\begin{equation*}
((\Phi _{h})_{\ast}X,(\Phi ^{-1} _{h})^{\ast}\alpha) \in D_{L},
\end{equation*}
for all $(X, \alpha) \in D_{L}$ and $ h \in G$.

\paragraph{Trivialization of the Lagrange--Dirac structure.}
Since the Lagrangian $L:TG \to \mathbb{R}$ is left invariant on $TG$, we have
\[
L(hg, T_{g}L_{h} \cdot v)=L(g,v)
\]
for all $g,h \in G$ and $v \in T_{g}G$. Using the left trivializing diffeomorphism 
\[
\lambda_{TG}:TG \to G \times \mathfrak{g}; \; (g,v) \mapsto \left(g, T_gL_{g^{-1}}v \right), 
\]
we define the trivialized Lagrangian $\overline{L}=L \circ \lambda_{TG}^{-1}$ induced on $G \times \mathfrak{g}$, which is also left $G$--invariant. Therefore, we define the reduced Lagrangian $\ell: \mathfrak{g} \to \mathbb{R}$ by
\[
\ell(\eta):=\overline{L}(e,\eta).
\]

Using the left trivializing diffeomorphism $\lambda_{TG}:TG \to G \times \mathfrak{g}$, we define the Lagrangian one-form $\theta$ on $G \times \mathfrak{g}$ and the Lagrangian two-form $\omega$ on $G \times \mathfrak{g}$ as
\[
\theta_{\ell}=(\lambda_{TG}^{-1})^{\ast}\Theta_L \quad \textrm{and} \quad \omega_{\ell}=(\lambda_{TG}^{-1})^{\ast}\Omega_L.
 \]
By direct computations, it follows that the Lagrangian one-form $\theta_{\ell}$ on $G \times \mathfrak{g}$ is locally represented at each $(g,\eta) \in G \times \mathfrak{g}$ and $(\delta{g},\delta{\eta}) \in T_{(g,\eta)}(G \times \mathfrak{g}) \cong T_{g}G \times \mathfrak{g}$, as
\[
\theta_{\ell}(g,\eta) \cdot (\delta{g},\delta{\eta})=\langle\mathbf{D} \ell(\eta), T_{g}L_{g^{-1}} \, \delta{g}\rangle.
\]
The trivialized Lagrangian two-form $\omega_{\ell}=-\mathbf{d}\theta_{\ell}$ on $G \times \mathfrak{g}$ is obtained by, at  each point $(g,\eta) \in G \times \mathfrak{g}$, 
\begin{equation} \label{Triv_LagTwoForm}
\begin{split}
\omega_{\ell}(g,\eta)((\dot{g},\dot{\eta}),(\delta{g},\delta{\eta}))&= \left< \mathbf{D}^{2} \ell(\eta) \cdot \delta \eta, T_{g}L_{g^{-1}}\dot{g}  \right>- \left< \mathbf{D}^{2}\ell(\eta) \cdot \dot \eta, T_{g}L_{g^{-1}}\delta{g}  \right>\\
&
  \hspace{3cm}   + \left<\mathbf{D} \ell(\eta), [T_{g}L_{g^{-1}}\dot{g}, T_{g}L_{g^{-1}}\delta{g}] \right>,
\end{split}
\end{equation}
where $(\dot{g},\dot{\eta}) , (\delta{g},\delta{\eta}) \in T_{(g,\eta)}(G \times \mathfrak{g}) \cong T_{g}G \times \mathfrak{g}$. Note that we utilized the formula
\[
\omega_{\ell}(X,Y)=-\mathbf{d}\theta_{\ell}(X,Y)=-X[\theta_{\ell}(Y)]+Y[\theta_{\ell}(X)]+\theta_{\ell}([X,Y]),
\]
for vector fields $X, Y \in \mathfrak{X}(G \times \mathfrak{g})$.

The above expression  \eqref{Triv_LagTwoForm} is the Lagrangian counterpart of the reduced symplectic form on $G \times \mathfrak{g}^{\ast}$, where the Lie--Poisson term is replaced by the Lie bracket term weighted by $\mathbf{D} \ell$.
\medskip

Associated with the given distribution $\Delta_{G}$ on $G$, the constraint distribution lifts to
\[
\Delta_{TG}=(T\tau_{G})^{-1}(\Delta_{G}), 
\]
and hence we obtain the trivialized lifted distribution $\Delta_{G \times \mathfrak{g}}$ on $G \times \mathfrak{g}$ by 
\[
\Delta_{G \times \mathfrak{g}}:=(T\overline{\tau}_{G})^{-1}(\Delta_{G}), 
\]
where $\overline{\tau}_{G}: G \times \mathfrak{g} \to G$ is the projection defined by $\tau_{G}=\overline{\tau}_{G} \circ \lambda_{TG}$. For each $(g, \eta ) \in G \times \mathfrak{g}$, we obtain
\[
\Delta_{G \times \mathfrak{g}}(g,\eta)=\left\{(\dot{g} ,\dot{\eta}) \in T_{g}G \times \mathfrak{g} \mid \dot{g} \in \Delta_G(g) \right\}= \Delta _G(g) \times \mathfrak{g}.
\]

The trivialized Lagrange--Dirac structure on $G \times \mathfrak{g}$ is obtained by using the trivialized diffeomorphism $\lambda_{TG}:TG \to G \times \mathfrak{g}$ as
\[
\overline{D}_{\ell} := \mathcal{F}(T\lambda_{TG}) = (\lambda_{TG})_{\ast}D_{L},
\]
such that, for each $v_{g} \in TG$,
\begin{equation*}
\begin{split}
\overline{D}_{\ell}(\lambda_{TG}(v_{g}))=\big\{ ( T_{v_{g}}\lambda_{TG}(w_{v_{g}}),\alpha_{\lambda_{TG}(v_{g})} ) \mid w_{v_{g}}& \in T_{v_{g}}TG,\;  \alpha_{\lambda_{TG}(v_{g})}  \in T^{\ast}_{\lambda_{TG}(v_{g})}(G \times \mathfrak{g}),\; \\ 
& \quad (w_{v_{g}}, (T_{v_{g}}\lambda_{TG})^{\ast}(\alpha_{\lambda_{TG}(v_{g})}))  \in D_{L}(v_{g}) \big\},
\end{split}
\end{equation*}
where $T_{v_{g}}\lambda_{TG}: T_{v_{g}}TG \to T_{\lambda_{TG}(v_{g})}(G \times \mathfrak{g})$ is the tangent map of $\lambda_{TG} : TG \to G \times \mathfrak{g}$.
\medskip

Thus the local expression of $\overline{D}_{\ell}$ is given by, for each $(g,\eta)=\lambda_{TG}(g,v)\in G \times \mathfrak{g}$, 
\begin{equation*}
\begin{split}
\overline{D}_{\ell}(g,\eta)&=\big\{ (\dot{g} ,\dot{\eta}),(p ,\rho) \in (T_{g}G \times \mathfrak{g}) \times (T_{g}^{\ast}G \times \mathfrak{g}^{\ast})  \mid (\dot{g} ,\dot{\eta}) \in  \Delta_{G \times \mathfrak{g}}(g,\eta), \\
\text{and} &\;\; \langle p, \delta{g}  \rangle +  \langle \rho, \delta{\eta} \rangle =\omega_{\ell}(g,\eta)((\dot{g},\dot{\eta}),(\delta{g},\delta{\eta})), \;\; \text{for all} \;\; (\delta{g},\delta{\eta}) \in \Delta_{G \times \mathfrak{g}}(g,\eta) \big\}.
\end{split}
\end{equation*}

\paragraph{Invariance of the Lagrange--Dirac structure.} Since the distribution $ \Delta _G$ is $G$-invariant as in \eqref{invariance_Delta}, we define
\[
\mathfrak{g}  ^ \Delta := \Delta _G(e)\subset T_e G= \mathfrak{g}.
\]
The distribution $\Delta_{G \times \mathfrak g} \subset T(G \times \mathfrak g)$ is determined by its value at $(e,\eta) \in G \times  \mathfrak{g}$ as 
\[
\Delta _{G \times \mathfrak{g} }( e , \eta )= \mathfrak{g}  ^ \Delta \times \mathfrak{g}\subset T_{(e, \eta )}(G \times \mathfrak{g}  ).
\]
Let us denote by $\bar{\Phi }_h(g, \eta):=(hg, \eta)$ the left $G$-action on $G \times \mathfrak{g}$ induced, via $\lambda_{TG}:TG \to G \times \mathfrak{g} $, by the tangent lift of left translation by $G$ on $TG$. From the $G$-invariance of $D_{L}$ on $TG$, the trivialized Lagrange--Dirac structure $\overline{D}_{\ell}$ on $G \times \mathfrak{g}$ is also $G$--invariant, i.e.,
\begin{equation*}
((\bar{\Phi}_{h})_{\ast}X,(\bar{\Phi}_{h}^{-1})^{\ast}\alpha) \in \overline{D}_{\ell},
\end{equation*}
for all $(X, \alpha) \in \overline{D}_{\ell}$ and $ h \in G$, as in \eqref{invariance_Dirac}.
\medskip

Using the natural identifications 
\[
T_{(e, \eta )}(G \times \mathfrak{g}  ) \cong \mathfrak{g}  \times \mathfrak{g}, \qquad T^{\ast}_{(e, \eta )}(G \times \mathfrak{g}  ) \cong  \mathfrak{g}^{\ast} \times \mathfrak{g}^{\ast},
\]
the fiber  $\overline{D}_{\ell} (e,\eta)$ is uniquely determined by its value at $(e,\eta) \in G \times \mathfrak{g}$ as
\begin{equation*}
\begin{split}
&\overline{D}_{\ell}(e,\eta)=\big\{ ((\xi,\dot{\eta}),(\beta ,\rho)) \in (\mathfrak{g}  \times \mathfrak{g})\times ( \mathfrak{g}^{\ast} \times \mathfrak{g}^{\ast}) \mid  (\xi,\dot{\eta}) \in \Delta _{G \times \mathfrak{g} }( e , \eta ) = \mathfrak{g}^{\Delta} \times \mathfrak{g} \, \\
& \quad \text{and} \;  \langle \beta , \zeta \rangle +  \langle \rho, \delta{\eta} \rangle=\omega_{\ell}(e,\eta)((\xi,\dot{\eta}),(\zeta,\delta{\eta})) \;\;\text{for all} \;\; (\zeta,\delta{\eta}) \in \Delta _{G \times \mathfrak{g} }( e , \eta ) =\mathfrak{g}^{\Delta} \times \mathfrak{g} \big\}, \nonumber
\end{split}
\end{equation*}
where $\xi=g^{-1}\dot{g}$,  $\zeta=g^{-1}\delta{g}$, and $\beta=g^{-1}p$.  

\paragraph{Reduction of the Lagrangian forms on $TG$.}
Using the trivialization $G\times\mathfrak g \cong TG$, the quotient of the Pontryagin bundle $TTG \oplus T^{\ast}TG$ by $G$ is identified with
\[
F:=\left( TTG \oplus T^{\ast}TG \right) /G\cong A \oplus A^{\ast}\rightarrow B:=TG/G \cong \mathfrak{g}.
\]
In the above, the quotient of the bundle $TTG \to TG$ by $G$ is identified with the trivial bundle
\[
A:=TTG/G \cong \mathfrak{g} \times (\mathfrak g \times \mathfrak g) \to TG/G \cong \mathfrak{g}.
\]
We also define the dual bundle of $A$ by
\[
A^{\ast}:= T^{\ast}TG/G \cong \mathfrak{g} \times (\mathfrak g^{\ast} \times \mathfrak g^{\ast}) \to TG/G \cong \mathfrak{g}.
\]

\begin{definition}[Reduced symmetric pairing]\label{Def:RedSymPair}
Associated with the symmetric pairing $\langle \! \langle\, \cdot, \cdot \rangle\!\rangle$ on $TTG \oplus T^\ast TG$ given in \eqref{SymPairing}, we define the reduced symmetric paring  $\langle \! \langle\, \cdot, \cdot \rangle\!\rangle_{\mathrm{red}}$ on $F=A \oplus A^\ast$, defined for each $\eta \in \mathfrak{g}$ by
\begin{equation}\label{eq:RedSymPair}
\big\langle \! \big\langle\, \big((\xi,\zeta),(\beta, \rho) \big), \big((\bar{\xi},\bar{\zeta}),(\bar{\beta}, \bar{\rho}) \big)\big\rangle\!\big\rangle_{\mathrm{red}}
=\langle \beta, \bar{\xi}\rangle + \langle \rho, \bar{\zeta} \rangle + \langle \bar{\beta}, \xi \rangle + \langle \bar{\rho}, \zeta \rangle,
\end{equation}
where $(\xi,\zeta), (\bar{\xi},\bar{\zeta}) \in \mathfrak{g} \times \mathfrak{g}$ and $(\beta, \rho), (\bar{\beta}, \bar{\rho}) \in \mathfrak{g}^{\ast} \times \mathfrak{g}^{\ast}$.
\end{definition}

\begin{definition}[Reduced Lagrangian one-form]\label{Def:RedLagOneForm}

Let $\ell \colon \mathfrak g \to \mathbb R$ be a reduced Lagrangian. The trivialized Lagrangian one-form $\theta_\ell: G \times \mathfrak g \to T^{\ast}(G \times \mathfrak g)$ is $G$-invariant. Therefore, we define the reduced Lagrangian one-form by taking the quotient of $\theta_\ell$ with respect to the $G$-action as
\[
\theta_\ell^{/G}
:= \theta_\ell/G: \mathfrak{g} \to A^{\ast}=\mathfrak g \times (\mathfrak g^{\ast} \times \mathfrak g^{\ast}),
\]
which is a smooth section of the vector bundle $A^{\ast}=\mathfrak g \times (\mathfrak g^{\ast} \times \mathfrak g^{\ast}) \to \mathfrak g$, locally denoted by, for each $\eta \in \mathfrak{g}$, 
\begin{equation*}
\theta^{/G}_\ell(\eta)(\zeta, \delta{\eta})
:= \left\langle \mathbf D \ell(\eta),\, \zeta\right\rangle,
\end{equation*}
where $(\zeta,\delta{\eta})$ is a tangent vector of $\mathfrak g \times \mathfrak g$. Here, $\theta^{/G}_\ell(\eta)$ depends on $\eta$ only and pairs 
$\mathbf D \ell(\eta)\in\mathfrak g^*$ with $\zeta$, the $\zeta$--component of
the tangent vector.

\end{definition}
\begin{definition}\label{Def:RedLagTwoForm}
Since the trivialized Lagrangian two-form $\omega_{\ell}$ on $G \times \mathfrak{g}$
is $G$-invariant, it induces a reduced Lagrangian two-form by taking the quotient
with respect to the $G$-action. We define
\[
\omega_\ell^{/G}
:= \omega_\ell/G
\in \Gamma \left(\Lambda^2(A)\right),
\]
as a fiberwise skew-symmetric bilinear form on the vector bundle $A$, 
rather than as a differential two-form on the manifold $\mathfrak g$, since each fiber at $\eta\in\mathfrak g$ is naturally identified with
$\mathfrak g \times \mathfrak g$. Here, $\Lambda^2 (A)$ denotes the vector bundle over $\mathfrak{g}$, whose fiber at $\eta\in\mathfrak g$
consists of skew-symmetric bilinear forms on $\mathfrak g \times \mathfrak g$.

In particular, the reduced Lagrangian two-form $\omega_{\ell}^{/G}$ is locally given by, for each $\eta\in\mathfrak g$,
\begin{equation}\label{RedLagForm}
\omega_{\ell}^{/G}(\eta)((\xi,\dot{\eta}),(\zeta,\delta{\eta}))
= \left\langle \mathbf{D}^{2} \ell(\eta) \cdot \delta{\eta}, \xi \right\rangle
- \left\langle \mathbf{D}^{2} \ell(\eta) \cdot \dot{\eta}, \zeta \right\rangle
+ \left\langle \mathbf{D}\ell(\eta), [\xi,\zeta] \right\rangle,
\end{equation}
where $(\xi,\dot{\eta}),(\zeta,\delta{\eta}) \in \mathfrak g \times \mathfrak g$. On the right-hand side, the first two terms correspond to the canonical  pairing induced by the Hessian of $\ell$,
whereas the last term is the Lagrangian analogue of the Lie--Poisson structure.
\end{definition}
%
\begin{definition}\label{Def:RedExtDer}
We define the reduced exterior derivative
\[
\mathbf d^{/G}:\Gamma\left(\Lambda^k(A) \right)\to\Gamma\left(\Lambda^{k+1}(A)\right)
\]
by requiring the diagram
\[
\begin{tikzcd}
\Omega^k\left(G\times\mathfrak g \right)^G 
\arrow[r,"\mathbf d"] 
\arrow[d,"/G"']  
& 
\Omega^{k+1}(G\times\mathfrak g)^G 
\arrow[d,"/G"]\\
\Gamma\left(\Lambda^k(A)\right) 
\arrow[r,"\mathbf d^{/G}"] 
& 
\Gamma\left(\Lambda^{k+1}(A)\right)  
\end{tikzcd}
\]
to commute, where $\Omega^k(G\times\mathfrak g)^G$ denotes the space of $G$-invariant $k$-forms on $G\times\mathfrak g$, i.e., those satisfying
$\Phi_h^*\alpha=\alpha$ for all $h\in G$. Here we note that since the quotient map $/G:\Omega^k(G\times\mathfrak g)^G
\to \Gamma\left(\Lambda^k(A)\right)$
is a vector space isomorphism via the left trivialization $G\times\mathfrak g \cong TG$, the operator
$\mathbf d^{/G}$ is well defined.
\end{definition}

\begin{remark}
Throughout this paper, we follow the notation of \cite{AbMaRa1988} and write
$\Gamma(\Lambda^k(A))$ for the space of exterior $k$-forms on $A$, where $\Lambda^k(A):=L^k_a(A; \mathbb{R})$ denotes the space of continuous multilinear alternating maps.
\end{remark}

\begin{proposition}
For the reduced Lagrangian one-form $\theta_\ell^{/G}:= \theta_\ell/G\in \Gamma\!\left(\Lambda^{1}(A)\right)$,  the reduced Lagrangian two-form $\omega_{\ell}^{/G}\in \Gamma\left(\Lambda^{2}(A)\right)$ can also be given by
\[
\omega_{\ell}^{/G}=-\mathbf{d}^{/G}\theta_\ell^{/G}.
\]
\end{proposition}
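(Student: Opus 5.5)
The statement follows from the commutative diagram in Definition~\ref{Def:RedExtDer} once we have the unreduced identity $\omega_\ell=-\mathbf{d}\theta_\ell$ on $G\times\mathfrak g$ and know that both forms are $G$-invariant. The plan has three steps.

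\textbf{Step 1: the unreduced identity.} By definition, $\theta_\ell=(\lambda_{TG}^{-1})^\ast\Theta_L$ and $\omega_\ell=(\lambda_{TG}^{-1})^\ast\Omega_L$. Since $\Omega_L=-\mathbf d\Theta_L$ and pullback commutes with $\mathbf d$, we get
\[
\omega_\ell=(\lambda_{TG}^{-1})^\ast(-\mathbf d\Theta_L)=-\mathbf d\big((\lambda_{TG}^{-1})^\ast\Theta_L\big)=-\mathbf d\theta_\ell .
\]

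\textbf{Step 2: invariance.} The left action $\bar\Phi_h(g,\eta)=(hg,\eta)$ is conjugate under $\lambda_{TG}$ to the tangent lift $\Phi_h$ on $TG$. Since $L$ is left invariant, $\mathbb FL$ is equivariant with respect to the tangent and cotangent lifts. The canonical forms on $T^\ast G$ are invariant under cotangent lifts, so $\Theta_L$ and $\Omega_L$ are $\Phi_h$-invariant, and hence $\theta_\ell,\omega_\ell\in\Omega^\bullet(G\times\mathfrak g)^G$.

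As an independent check, which I would carry out instead of the abstract argument if it turns out cleaner, one can read invariance off the explicit formulas. In $\theta_\ell(g,\eta)\cdot(\delta g,\delta\eta)=\langle \mathbf D\ell(\eta),g^{-1}\delta g\rangle$ and in \eqref{Triv_LagTwoForm}, the group variable enters only through $g^{-1}\dot g$ and $g^{-1}\delta g$. These quantities are unchanged under $g\mapsto hg$ with $\dot g\mapsto h\dot g$.

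\textbf{Step 3: apply the diagram.} Taking the quotient of the Step~1 identity and using the commutativity that defines $\mathbf d^{/G}$ gives
\[
\omega_\ell/G=(-\mathbf d\theta_\ell)/G=-\mathbf d^{/G}(\theta_\ell/G),
\]
that is, $\omega_\ell^{/G}=-\mathbf d^{/G}\theta_\ell^{/G}$.

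\textbf{Main obstacle: consistency of the formulas.} The step I expect to need care is checking that this abstractly defined quotient agrees with the explicit formula \eqref{RedLagForm} rather than with some other expression. I would evaluate \eqref{Triv_LagTwoForm} at $g=e$ and obtain exactly \eqref{RedLagForm}. Independently, I would compute $-\mathbf d\theta_\ell$ using left-invariant extensions $X=(g\xi,\dot\eta)$ and $Y=(g\zeta,\delta\eta)$ with $\xi,\zeta,\dot\eta,\delta\eta$ constant, via
\[
-\mathbf d\theta_\ell(X,Y)=-X[\theta_\ell(Y)]+Y[\theta_\ell(X)]+\theta_\ell([X,Y]).
\]
Here $X[\langle\mathbf D\ell(\eta),\zeta\rangle]=\langle \mathbf D^2\ell(\eta)\cdot\dot\eta,\zeta\rangle$, and similarly for $Y$. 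The bracket $[X,Y]$ has $G$-component $g[\xi,\zeta]$, because the left-invariant vector fields $g\xi$ and $g\zeta$ bracket to $g[\xi,\zeta]$ and the constant $\mathfrak g$-components do not contribute. This produces $\langle\mathbf D\ell(\eta),[\xi,\zeta]\rangle$, so $-\mathbf d\theta_\ell(X,Y)$ matches \eqref{RedLagForm} term by term.

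The subtle points in this computation are sign conventions and the fact that the bracket term arises only from the $G$-part. Both are handled by the left-invariant-extension computation just described.
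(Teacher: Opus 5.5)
Your proposal is correct and follows the paper's route. The paper proves the statement by passing the unreduced identity $\omega_\ell=-\mathbf{d}\theta_\ell$ through the commutative diagram of Definition~\ref{Def:RedExtDer}. Your Steps 1--2 and your check against \eqref{RedLagForm} spell out the naturality of pullback and the $G$-invariance that the paper takes for granted.
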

\begin{proof}
This follows directly from Definition \ref{Def:RedLagOneForm}, \ref{Def:RedLagTwoForm}, and \ref{Def:RedExtDer}. Specifically, since the diagram in Definition  \ref{Def:RedExtDer} commutes and the unreduced two-form satisfies
\[
\omega_{\ell}^{/G} = \omega_\ell / G = (-\mathbf{d}\theta_\ell) / G = -\mathbf{d}^{/G}(\theta_\ell / G) = -\mathbf{d}^{/G}\theta_\ell^{/G}.
\]
\end{proof}
The reduced exterior derivative $\mathbf{d}^{/G}$ is well defined because the standard exterior derivative $\mathbf{d}$ preserves $G$-invariance of forms on $G\times \mathfrak{g}$. Consequently, the explicit expression of the reduced Lagrangian two-form \eqref{RedLagForm} is consistent with the relation  $\omega_{\ell}^{/G}=-\mathbf{d}^{/G}\theta_\ell^{/G}$, effectively incorporating both the Hessian and the Lie--Poisson structure (the Lie bracket and the coadjoint action).

\begin{remark}
Consider a $G$-principal bundle $M \to M/G$. The quotient bundle $TM/G \to M/G$ carries a natural Lie algebroid structure, which is known as the Atiyah algebroid, see \cite{Mack2005, YoMa2009}.
In the case $M = TG$, under the left trivialization $TG \cong G \times \mathfrak{g}$, the quotient bundle that is induced by the lifted $G$-action on $TTG$ , i.e., 
\[
A := TTG/G\cong \mathfrak{g} \times (\mathfrak g \times \mathfrak g) \to B := TG/G \cong \mathfrak{g},
\]
is the Atiyah algebroid. This observation provides a natural geometric interpretation of the reduced exterior derivative $\mathbf d^{/G}$; namely, it corresponds to the Lie algebroid differential associated with $A$. The nilpotency $(\mathbf d^{/G})^2=0$ follows from this identification. Further, the Lie bracket term appearing in the reduced Lagrangian two-form $\omega^{/G}_\ell$ originates from the Lie algebroid bracket on $A$.
\end{remark}

%
\paragraph{Reduction of the Lagrange--Dirac structure.} 
Here we illustrate that the reduction of the Lagrange--Dirac structure $D_{L} \subset TTG \oplus T^{\ast}TG$ on $TG$ can be constructed by taking the quotient.

\begin{framed}
\begin{definition}[Lagrange--Dirac reduction]\label{Def:LagDiracRed}
For the Lagrange--Dirac structure $D_{L} \subset TTG \oplus T^{\ast}TG$, by taking the quotient of $\overline{D}_{\ell}=(\lambda_{TG})_{\ast}D_{L}$ by $G$,  define a quotient subbundle 
\begin{equation}\label{quotientLagDirac}
D_{\ell}^{/G}:=\overline{D}_{\ell}/G \subset A \oplus A^{\ast}, 
\end{equation}
whose fiber each $\eta \in \mathfrak{g}$ is given by 
\begin{equation}\label{reduced_LagDirac}
\begin{split}
D_{\ell}^{/G}(\eta)=& \big\{ ((\xi,\dot{\eta}),(\beta ,\rho))\in (\mathfrak g \times \mathfrak g) \times (\mathfrak g^{\ast} \times \mathfrak g^{\ast}) \mid 
(\xi,\dot{\eta}) \in  \Delta_{\mathfrak{g}}(\eta), \\
&\;\;\text{and} \;\;  \langle \beta , \zeta \rangle +  \langle \rho, \delta{\eta} \rangle=\omega_{\ell}^{/G}(\eta)((\xi,\dot{\eta}),(\zeta,\delta{\eta})) \;\;\text{for all} \;\; (\zeta,\delta{\eta}) \in \Delta_{\mathfrak{g}}(\eta) \big\}.
\end{split}
\end{equation}
Equivalently, it is given by
\begin{equation*}
\begin{split}
D_{\ell}^{/G}(\eta)&= \big\{ ((\xi,\dot{\eta}),(\beta ,\rho )) \in (\mathfrak g \times \mathfrak g) \times (\mathfrak g^{\ast} \times \mathfrak g^{\ast}) \mid (\xi,\dot{\eta}) \in \Delta_{\mathfrak{g}}(\eta)\big.\\
&\big. \hspace{5cm} (\beta ,\rho )-\mathbf{i}_{(\xi,\dot{\eta})}\omega_{\ell}^{/G}(\eta) \in (\Delta_{\mathfrak{g}}(\eta))^{\circ}, \big\},
\end{split}
\end{equation*}
where
$\Delta_{\mathfrak{g}}(\eta)$ is the \textit{reduced constraint subspace} defined by the quotient of $\Delta _{G \times \mathfrak{g} }( e , \eta )$ as
\[
\Delta_{\mathfrak{g}}(\eta):=\Delta _{G \times \mathfrak{g} }( e , \eta )/G \cong \mathfrak{g}^ \Delta   \times \mathfrak{g} .
\]
\end{definition}
\end{framed}
We consider the local expression of the structure $D_{\ell}^{/G} \subset A \oplus A^{\ast}$ in the following proposition. 
\begin{proposition}
The quotient subbundle $D_{\ell}^{/G} \subset A \oplus A^{\ast}$ in \eqref{quotientLagDirac} is given, for each $\eta \in \mathfrak{g}$, by
\begin{equation}\label{LocReducedLagDiracAlt}
\begin{split}
D_{\ell}^{/G}(\eta)&= \big\{ ((\xi,\dot{\eta}),(\beta ,\rho )) \in (\mathfrak g \times \mathfrak g) \times (\mathfrak g^{\ast} \times \mathfrak g^{\ast}) \mid  \rho=\mathbf{D}^{2} \ell(\eta) \cdot  \xi, \;\; \xi \in \mathfrak{g}^{\Delta},\big.\\
&\hspace{5cm}\big.\beta +\mathbf{D}^{2} \ell(\eta) \cdot \dot{\eta}-\mathrm{ad}^{\ast}_{\xi}\,\mathbf{D} \ell(\eta) \in (\mathfrak{g}^{\Delta})^{\circ} \big\}.
\end{split}
\end{equation}
\end{proposition}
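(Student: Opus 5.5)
We start from the defining description \eqref{reduced_LagDirac} of $D_{\ell}^{/G}(\eta)$ and unpack the pairing condition. We use the explicit formula \eqref{RedLagForm} for $\omega_{\ell}^{/G}$ together with the identification $\Delta_{\mathfrak g}(\eta)\cong\mathfrak g^{\Delta}\times\mathfrak g$. The argument runs in exact parallel to the proof of Proposition~\ref{Prop:CondLagDiracSys} on $TQ$: we test against admissible pairs $(\zeta,\delta\eta)$ and separate the two components.

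First, the kinematic condition $(\xi,\dot\eta)\in\Delta_{\mathfrak g}(\eta)$ is simply $\xi\in\mathfrak g^{\Delta}$, with $\dot\eta\in\mathfrak g$ free. Next, we write the pairing condition as
\[
\langle\beta,\zeta\rangle+\langle\rho,\delta\eta\rangle
=\langle \mathbf D^2\ell(\eta)\cdot\delta\eta,\xi\rangle-\langle\mathbf D^2\ell(\eta)\cdot\dot\eta,\zeta\rangle+\langle\mathbf D\ell(\eta),[\xi,\zeta]\rangle
\]
for all $\zeta\in\mathfrak g^{\Delta}$ and all $\delta\eta\in\mathfrak g$. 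Since $\delta\eta$ is unconstrained, we set $\zeta=0$. Using the symmetry of the Hessian $\mathbf D^2\ell(\eta)$, we rewrite $\langle\mathbf D^2\ell(\eta)\cdot\delta\eta,\xi\rangle=\langle\mathbf D^2\ell(\eta)\cdot\xi,\delta\eta\rangle$. This yields $\rho=\mathbf D^2\ell(\eta)\cdot\xi$.

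Setting instead $\delta\eta=0$, we use $\langle\mathbf D\ell(\eta),[\xi,\zeta]\rangle=\langle\mathrm{ad}^{\ast}_{\xi}\mathbf D\ell(\eta),\zeta\rangle$. The condition becomes
\[
\langle\beta+\mathbf D^2\ell(\eta)\cdot\dot\eta-\mathrm{ad}^*_\xi\mathbf D\ell(\eta),\zeta\rangle=0
\]
for all $\zeta\in\mathfrak g^{\Delta}$. This is precisely membership of that covector in $(\mathfrak g^{\Delta})^{\circ}$.

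For the converse, both conditions are linear in $(\zeta,\delta\eta)$. A general admissible pair decomposes as $(\zeta,0)+(0,\delta\eta)$, so the two derived conditions together recover the original identity. This proves equality of sets.

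The only step needing care is the sign and convention of the $\mathrm{ad}^*$ term, namely checking that $\langle\mu,[\xi,\zeta]\rangle=\langle\mathrm{ad}^*_\xi\mu,\zeta\rangle$ with the paper's convention. A second point is to confirm that \eqref{RedLagForm} is the correct quotient of \eqref{Triv_LagTwoForm} at $g=e$. This holds because $G$-invariance lets us evaluate at the identity, where $T_eL_{e^{-1}}$ is the identity. We also note that the symmetry of $\mathbf D^2\ell(\eta)$ holds even for degenerate $\ell$, so no regularity assumption is needed.
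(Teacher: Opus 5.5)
Your proposal is correct and is essentially the paper's proof. You substitute \eqref{RedLagForm} into the pairing condition of \eqref{reduced_LagDirac} and use the symmetry of $\mathbf{D}^{2}\ell(\eta)$ together with $\langle \mathrm{ad}^{\ast}_{\xi}\mu,\zeta\rangle=\langle \mu,[\xi,\zeta]\rangle$, which is exactly the convention the paper fixes in its proof; you then read off the conditions from the independence of $\zeta\in\mathfrak{g}^{\Delta}$ and $\delta\eta\in\mathfrak{g}$. The only differences are that the paper does this in one combined rewriting rather than setting $\zeta=0$ and $\delta\eta=0$ separately, and leaves the reverse inclusion implicit, which your linearity remark makes explicit.
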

\begin{proof}
Suppose that the condition 
\[
((\xi,\dot{\eta}),(\beta ,\rho)) \in D_{\ell}^{/G}(\eta).
\]
From \eqref{RedLagForm} and \eqref{reduced_LagDirac}, then it follows
\[
\begin{split}
\langle \beta , \zeta \rangle +  \langle \rho, \delta{\eta} \rangle&=\omega_{\ell}^{/G}(\eta)((\xi,\dot{\eta}),(\zeta,\delta{\eta}))\\
&=\left\langle \mathbf{D}^{2} \ell (\eta)\cdot \delta{\eta}, \xi \right\rangle
- \left\langle \mathbf{D}^{2} \ell(\eta) \cdot \dot{\eta}, \zeta \right\rangle + \left\langle \mathbf{D} \ell(\eta), [\xi,\zeta] \right\rangle
\end{split}
\]
for all $(\zeta,\delta{\eta}) \in \mathfrak{g}^{\Delta} \times \mathfrak{g}$. 
Using the coadjoint operator $\mathrm{ad}^*_{\xi}: \mathfrak{g}^* \to \mathfrak{g}^*$ defined by
\[
\langle \mathrm{ad}^*_{\xi}\mu,\eta\rangle = \langle \mu,[\xi,\eta]\rangle,
\]
we rewrite the above identity
\[
\langle \beta +\mathbf{D}^{2} \ell (\eta)\cdot \dot{\eta}-\mathrm{ad}^{\ast}_{\xi}\,\mathbf{D} \ell(\eta),\; \zeta \rangle+\langle \rho-\mathbf{D}^{2} \ell (\eta)\cdot  \xi , \; \delta{\eta} \rangle=0,
\]
for all $(\zeta,\delta{\eta}) \in \mathfrak{g}^{\Delta} \times \mathfrak{g}$. 
\medskip

Thus, we obtain
\[
 \beta +\mathbf{D}^{2} \ell(\eta) \cdot \dot{\eta}-\mathrm{ad}^{\ast}_{\xi}\,\mathbf{D} \ell (\eta)\in (\mathfrak{g}^{\Delta})^{\circ} ,\quad  \rho=\mathbf{D}^{2} \ell (\eta)\cdot  \xi, \quad \xi \in \mathfrak{g}^{\Delta}.
\]
\end{proof}
\begin{remark}
As for the coadjoint operator $\mathrm{ad}^*_{\xi}: \mathfrak{g}^* \to \mathfrak{g}^*$, in the right-invariant convention, one has
\[
\langle \mathrm{ad}^*_{\xi}\mu,\eta\rangle = -\langle \mu,[\xi,\eta]\rangle.
\]
\end{remark}

\begin{theorem}
The quotient subbundle $D_{\ell}^{/G} \subset A \oplus A^{\ast}$ that is given by \eqref{LocReducedLagDiracAlt} is a reduced Lagrange--Dirac structure on the bundle $A$. 
\end{theorem}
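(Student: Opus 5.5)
We show that for each $\eta\in\mathfrak g$ the fiber $D_{\ell}^{/G}(\eta)$ is a linear Dirac structure on $A_\eta\cong\mathfrak g\times\mathfrak g$ with respect to the reduced symmetric pairing $\langle\!\langle\cdot,\cdot\rangle\!\rangle_{\mathrm{red}}$ of Definition \ref{Def:RedSymPair}. We also check that these fibers form a smooth subbundle. The argument imitates the proof that $D_L$ in \eqref{LagDirac} is a Dirac structure. Here $\Delta_{TQ}$ is replaced by the constant subspace $\Delta_{\mathfrak g}(\eta)=\mathfrak g^{\Delta}\times\mathfrak g$, and $\Omega_L$ by the fiberwise skew form $\omega_\ell^{/G}(\eta)$ of \eqref{RedLagForm}.

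\emph{Isotropy.} Take two elements $((\xi,\dot\eta),(\beta,\rho))$ and $((\bar\xi,\bar{\dot\eta}),(\bar\beta,\bar\rho))$ of $D_\ell^{/G}(\eta)$. Both vector parts lie in $\Delta_{\mathfrak g}(\eta)$, so each may be used as the test vector in the defining identity \eqref{reduced_LagDirac} of the other. The reduced pairing then equals
\[
\omega_\ell^{/G}(\eta)\big((\xi,\dot\eta),(\bar\xi,\bar{\dot\eta})\big)+\omega_\ell^{/G}(\eta)\big((\bar\xi,\bar{\dot\eta}),(\xi,\dot\eta)\big)=0 .
\]
This vanishes by skew-symmetry, which is visible from \eqref{RedLagForm}: the two Hessian terms are exchanged with opposite sign because $\mathbf D^2\ell(\eta)$ is symmetric, and the bracket term is skew.

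\emph{Maximality.} Let $((\zeta,\delta\eta),(\bar\beta,\bar\rho))$ lie in $D_\ell^{/G}(\eta)^\perp$. First pair it with elements of the form $(0,\alpha)$ where $\alpha\in(\Delta_{\mathfrak g}(\eta))^\circ=(\mathfrak g^\Delta)^\circ\times\{0\}$; these belong to $D_\ell^{/G}(\eta)$. This gives $\langle\alpha,\zeta\rangle=0$ for all $\alpha\in(\mathfrak g^\Delta)^\circ$, hence $\zeta\in\mathfrak g^\Delta$, and so $(\zeta,\delta\eta)\in\Delta_{\mathfrak g}(\eta)$.

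Next, fix an arbitrary $(\xi,\dot\eta)\in\Delta_{\mathfrak g}(\eta)$. By \eqref{LocReducedLagDiracAlt} there is a covector completing it to an element of $D_\ell^{/G}(\eta)$: take $\rho=\mathbf D^2\ell(\eta)\cdot\xi$ and $\beta=\mathrm{ad}^*_\xi\mathbf D\ell(\eta)-\mathbf D^2\ell(\eta)\cdot\dot\eta$. Orthogonality to this element, together with $(\zeta,\delta\eta)\in\Delta_{\mathfrak g}(\eta)$, yields
\[
\langle\bar\beta,\xi\rangle+\langle\bar\rho,\dot\eta\rangle=\omega_\ell^{/G}(\eta)\big((\zeta,\delta\eta),(\xi,\dot\eta)\big)
\]
for all $(\xi,\dot\eta)\in\Delta_{\mathfrak g}(\eta)$, again using skew-symmetry. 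This is exactly the membership condition in \eqref{reduced_LagDirac}, so $D^\perp\subset D$.

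As a cross-check we can count dimensions. The fiber has dimension $\dim\mathfrak g^\Delta+\dim\mathfrak g+\operatorname{codim}\mathfrak g^\Delta=2\dim\mathfrak g=\dim A_\eta$, which is the maximal dimension of an isotropic subspace. Since this count does not depend on $\eta$, the fiber dimension is constant. Smoothness then follows because the defining conditions in \eqref{LocReducedLagDiracAlt} depend smoothly on $\eta$ through $\mathbf D\ell$ and $\mathbf D^2\ell$.

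The main point needing care is the existence of the extension $(\beta,\rho)$ for every admissible $(\xi,\dot\eta)$. This holds with no nondegeneracy assumption on $\mathbf D^2\ell$, so the result is valid even for degenerate $\ell$. One should also note the link to the unreduced setting: $D_\ell^{/G}$ is the quotient of the $G$-invariant Dirac structure $\overline D_\ell=(\lambda_{TG})_*D_L$. The quotient pairing agrees with $\langle\!\langle\cdot,\cdot\rangle\!\rangle_{\mathrm{red}}$ because the lifted action preserves the pairing. This gives an alternative proof: the quotient of a $G$-invariant maximally isotropic subbundle is maximally isotropic fiberwise.
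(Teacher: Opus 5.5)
Your proof is correct and follows the same two-step scheme as the paper: isotropy, then $D_\ell^{/G}(\eta)^\perp\subset D_\ell^{/G}(\eta)$. You argue through the intrinsic characterization \eqref{reduced_LagDirac} and $\omega_\ell^{/G}(\eta)$, mirroring the paper's proof for $D_L$, whereas the paper computes directly with the explicit conditions \eqref{LocReducedLagDiracAlt} using $\mathrm{ad}^*$ and the symmetry of $\mathbf D^2\ell(\eta)$.

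Your version has a few small advantages. Pairing with the elements $(0,\alpha)$, $\alpha\in(\mathfrak g^\Delta)^\circ\times\{0\}$, explicitly shows that the vector part of an element of $D_\ell^{/G}(\eta)^\perp$ lies in $\mathfrak g^\Delta\times\mathfrak g$, a step the paper's reverse inclusion never states. Your dimension count also gives maximality on its own once isotropy is known. One minor correction: skew-symmetry of \eqref{RedLagForm} does not need the symmetry of $\mathbf D^2\ell(\eta)$; that symmetry is only used to pass between \eqref{reduced_LagDirac} and \eqref{LocReducedLagDiracAlt}.
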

\begin{proof}
For each fixed $\eta \in \mathfrak{g}$, we need to prove that $D_{\ell}^{/G}(\eta)=(D_{\ell}^{/G}(\eta))^{\perp}$.
\medskip

First, let us check $D_{\ell}^{/G}(\eta) \subset (D_{\ell}^{/G}(\eta))^{\perp}$. To do this, let $((\xi,\dot{\eta}),(\beta ,\rho )) \in D_{\ell}^{/G}(\eta)$, where 
\[
\rho=\mathbf{D}^{2}\ell(\eta) \cdot  \xi, \quad \xi \in \mathfrak{g}^{\Delta},\quad \beta +\mathbf{D}^{2} \ell(\eta) \cdot \dot{\eta}-\mathrm{ad}^{\ast}_{\xi}\,\mathbf{D} \ell(\eta) \in (\mathfrak{g}^{\Delta})^{\circ},
\]
and let $((\xi^{\prime},\dot{\eta}^{\prime}),(\beta^{\prime} ,\rho^{\prime} )) \in D_{\ell}^{/G}(\eta)$, where 
\[
\rho^{\prime}=\mathbf{D}^{2} \ell(\eta) \cdot  \xi^{\prime}, \quad \xi^{\prime} \in \mathfrak{g}^{\Delta},\quad \beta^{\prime} +\mathbf{D}^{2} \ell(\eta) \cdot \dot{\eta}^{\prime}-\mathrm{ad}^{\ast}_{\xi^{\prime}}\,\mathbf{D} \ell(\eta) \in (\mathfrak{g}^{\Delta})^{\circ}.
\]
Now, it follows from \eqref{eq:RedSymPair} that the reduced symmetric paring $\langle \! \langle\, \cdot, \cdot \rangle\!\rangle_{\mathrm{red}}$ on $A \oplus A^{\ast}$ is given, at each $\eta \in \mathfrak{g}$, by
\[
\begin{split}
&\langle \! \langle\,  \left((\xi,\dot{\eta}),(\beta ,\rho )\right), \left((\xi^{\prime},\dot{\eta}^{\prime}),(\beta^{\prime} ,\rho^{\prime} )\right) \rangle \! \rangle_{\mathrm{red}}\, \\
&\quad =\langle\beta^{\prime} , \xi\rangle  + \langle\rho^{\prime}, \dot{\eta} \rangle + \langle \beta , \xi^{\prime} \rangle + \langle \rho , \dot{\eta}^{\prime}\rangle\\
&\quad =-\langle\mathbf{D}^{2} \ell(\eta) \cdot \dot{\eta}^{\prime}-\mathrm{ad}^{\ast}_{\xi^{\prime}}\,\mathbf{D}\ell(\eta) , \xi\rangle + \langle\mathbf{D}^{2} \ell(\eta) \cdot  \xi^{\prime}, \dot{\eta} \rangle \\
&\hspace{5cm}- \langle\mathbf{D}^{2} \ell(\eta) \cdot \dot{\eta}-\mathrm{ad}^{\ast}_{\xi}\,\mathbf{D} \ell(\eta) , \xi^{\prime} \rangle + \langle  \mathbf{D}^{2} \ell(\eta) \cdot  \xi , \dot{\eta}^{\prime}\rangle\\
&\quad =\big\{\langle\mathbf{D}^{2} \ell(\eta) \cdot  \dot{\eta}, \xi^{\prime} \rangle
-\langle\mathbf{D}^{2} \ell(\eta) \cdot \dot{\eta}^{\prime}, \xi\rangle
+\langle\mathrm{ad}^{\ast}_{\xi^{\prime}}\,\mathbf{D} \ell(\eta) , \xi\rangle \big\}\\
&\hspace{5cm}+\left\{\langle \mathbf{D}^{2} \ell(\eta) \cdot  \dot\eta^{\prime}, \xi \rangle 
-\langle \mathbf{D}^{2} \ell(\eta) \cdot \dot{\eta}, \xi^{\prime}\rangle 
+\langle \mathrm{ad}^{\ast}_{\xi}\,\mathbf{D} \ell(\eta) , \xi^{\prime}\rangle  \right\}\\
&\quad = \omega_{\ell}^{/G}(\eta)((\xi^{\prime},\dot{\eta}^{\prime}),(\xi,\dot{\eta})) +\omega_{\ell}^{/G}(\eta)((\xi,\dot{\eta}),(\xi^{\prime},\dot{\eta}^{\prime}))=0.
\end{split}
\]
In the above, we used the skew-symmetric property of $\omega_{\ell}^{/G}(\eta)$, and the vanishing follows from the symmetry of the Hessian $\mathbf{D}^{2}\ell(\eta)$ and the skew-symmetry of the coadjoint operator $\mathrm{ad}^{\ast}_{\xi}: \mathfrak{g}^{\ast} \to \mathfrak{g}^{\ast}$. Thus, it follows $D_{\ell}^{/G}(\eta) \subset (D_{\ell}^{/G}(\eta))^{\perp}$.

\smallskip

Second, let us check $(D_{\ell}^{/G}(\eta))^{\perp} \subset D_{\ell}^{/G}(\eta)$. Let $ \left((\xi,\dot{\eta}),(\beta ,\rho )\right) \in (D_{\ell}^{/G}(\eta))^{\perp}$ for each $\eta \in \mathfrak{g}$. Then, we have 
\[
\begin{split}
\langle \! \langle\,  \left((\xi,\dot{\eta}),(\beta ,\rho )\right), \left((w,\nu),(\gamma,\mu)\right) \rangle \! \rangle_{\mathrm{red}}\, =\langle\gamma, \xi\rangle  + \langle\mu, \dot{\eta} \rangle + \langle \beta , w \rangle + \langle \rho , \nu\rangle =0,
\end{split}
\]
for all $ \left((w,\nu),(\gamma,\mu)\right) \in D_{\ell}^{/G}(\eta)$; namely, for all
\[
\gamma + \mathbf{D}^{2} \ell(\eta) \cdot \nu- \mathrm{ad}^{\ast}_{w}\,\mathbf{D} \ell(\eta) \in (\mathfrak{g}^{\Delta})^{\circ}, \quad \mu=\mathbf{D}^{2} \ell(\eta) \cdot w, \quad w \in \mathfrak{g}^{\Delta}.
\]
Therefore, setting $w=0$, it follows from $\mu=\mathbf{D}^{2} \ell(\eta) \cdot w=0$ that we have
\[
\begin{split}
\langle\gamma, \xi\rangle  + \langle \rho , \nu\rangle &=-\langle \mathbf{D}^{2} \ell(\eta) \cdot \nu- \mathrm{ad}^{\ast}_{w} \mathbf{D} \ell(\eta), \xi\rangle+\langle \rho , \nu\rangle\\
&=-\langle \mathbf{D}^{2}\ell(\eta) \cdot \nu, \xi\rangle+\langle \rho , \nu\rangle \\
&=\langle \rho-\mathbf{D}^{2} \ell(\eta) \cdot \xi, \nu\rangle=0,
\end{split}
\]
for all $\nu$. Hence we get $\rho=\mathbf{D}^{2} \ell(\eta) \cdot \xi$. Next, setting $\rho=\mathbf{D}^{2} \ell(\eta) \cdot \xi$ and $w \in \mathfrak{g}^{\Delta}$ is arbitrary, it follows  from $\mu=\mathbf{D}^{2}\ell(\eta) \cdot w$ that
\[
\begin{split}
&\langle\gamma, \xi\rangle  + \langle \mu, \dot{\eta} \rangle + \langle \beta , w \rangle + \langle \rho , \nu\rangle \\
&\quad =-\langle \mathbf{D}^{2} \ell(\eta) \cdot \nu- \mathrm{ad}^{\ast}_{w}\,\mathbf{D} \ell(\eta), \xi\rangle  + \langle \mathbf{D}^{2} \ell(\eta) \cdot w, \dot{\eta} \rangle + \langle\beta , w \rangle + \langle\mathbf{D}^{2} \ell(\eta) \cdot \xi , \nu\rangle\\
&\quad =\langle \mathrm{ad}^{\ast}_{w}\,\mathbf{D} \ell(\eta), \xi\rangle  + \langle\mathbf{D}^{2} \ell(\eta) \cdot w, \dot{\eta} \rangle + \langle\beta , w \rangle \\
&\quad =\langle -\mathrm{ad}^{\ast}_{\xi}\,\mathbf{D} \ell(\eta)+\mathbf{D}^{2}\ell(\eta) \cdot \dot{\eta}+\beta , w \rangle =0,
\end{split}
\]
for all $w \in \mathfrak{g}^{\Delta}$. Therefore, we get
$
-\mathrm{ad}^{\ast}_{\xi}\,\mathbf{D} \ell(\eta)+\mathbf{D}^{2} \ell(\eta) \cdot \dot{\eta}+\beta \in  (\mathfrak{g}^{\Delta})^{\circ}.
$
Hence, we get $(D_{\ell}^{/G}(\eta))^{\perp} \subset D_{\ell}^{/G}(\eta)$. Finally, we prove $D_{\ell}^{/G}(\eta)=(D_{\ell}^{/G}(\eta))^{\perp}$ for each $\eta \in \mathfrak{g}$. Thus $D_{\ell}^{/G} \subset A \oplus A^{\ast}$ is a Dirac structure on $A$.
\end{proof}

We call $D_{\ell}^{/G} \subset A \oplus A^{\ast}$ the \textbf{reduced Lagrange--Dirac structure}, or the \textbf{reduced Lagrange--Dirac bundle}, over the quotient bundle (Atiyah algebroid) $A$. Figure \ref{DiracBundleReduction_LieAlgebra} illustrates the Lagrange--Dirac reduction.
\begin{figure}[h]
\adjustbox{scale=1.0,center}{
\begin{tikzcd}
TTG \oplus T^{\ast}TG \arrow[d, "/G"']
& D_{L} \arrow[l, hook'] \arrow[d, "/G"] \\
F=(TTG \oplus T^{\ast}TG)/G \cong A \oplus A^{\ast} \arrow[d] 
&D^{/G}_{\ell} \arrow[l, hook']\\
B= TG/G \cong \mathfrak{g}
\end{tikzcd}
}
\caption{Lagrange--Dirac reduction.}
\label{DiracBundleReduction_LieAlgebra}
\end{figure}

\begin{remark}[Link to Courant algebroids.]
We can view the Lagrange--Dirac reduction in the context of the reduction of \textit{Courant algebroids}. Let $M$ be a manifold which is a $G$-principal bundle over $B=M/G$ and we consider the bundle $F=(TM \oplus T^{\ast}M)/G$ over $B$. The natural lift of the $G$-action to $TM \oplus T^{\ast}M$ preserves the symmetric pairing $\langle\!\langle \cdot, \cdot \rangle\!\rangle$ as well as the Courant bracket developed in \cite{Cour1990a}. 
Therefore, $F$ naturally inherits the structure of a Courant algebroid over $B$ in the general sense of \cite{LiuWeiXu1998}. Though this Courant algebroid is not of the form $TB \oplus T^{\ast}B$, we still have a Dirac structure in $F$. In fact, if $D_{M} \subset TM \oplus T^{\ast}M$ is a $G$-invariant integrable Dirac structure on $M$, then $D_{M}^{/G}:=D_{M}/G$ is a Dirac subbundle of $F$ and note that $D_{M}^{/G}$ is integrable if $D_{M}$ is, since $G$ preserves the Courant bracket. 
\end{remark}

In the case $M=T^{\ast}G$, the induced Dirac structure $D_{\Delta_{G}} \subset TM \oplus T^{\ast}M $ in \eqref{IndDirStrLie} is reduced to $D_{\Delta_{G}}^{/G}:=D_{\Delta_{G}}/G$, which is a subbundle of $F=(TM \oplus T^{\ast}M)/G \cong Y \oplus Y^{\ast}$ over $B=M/G\cong \mathfrak{g}^{\ast}$. This is the reduced Dirac structure $D_{\Delta_{G}}^{/G} \subset Y \oplus Y^{\ast}$ on the reduced bundle $Y \to \mathfrak{g}^{\ast}$. 
\medskip

In contrast, in the case $M=TG$, the Lagrange--Dirac structure $D_{L} \subset TM \oplus T^{\ast}M $ defined in \eqref{LagDiracStr_LieGroup} is reduced to $D_{\ell}^{/G}:=D_{L}/G$, which is a subbundle of $F=(TM \oplus T^{\ast}M)/G=A \oplus A^{\ast}$ over $B=M/G\cong \mathfrak{g}$. This is the reduced Lagrange--Dirac structure $D_{\ell}^{/G} \subset A \oplus A^{\ast}$ on the reduced bundle $A \to \mathfrak{g}$.

\subsection{Dirac morphisms in the Lagrange--Dirac reduction}
In this subsection, we illustrate the Lagrange--Dirac reduction in terms of Dirac morphisms. 
%
\paragraph{Dirac morphisms relating induced and Lagrange--Dirac structures.}
The Lagrange--Dirac structure $D_{L}$ on $TG$ is obtained from the induced Dirac structure $D_{\Delta_G}$ on $T^{\ast}G$ through the backward Dirac map $\mathcal{B}(T\mathbb{F}L):\mathrm{Dir}\left(TT^{\ast}G\right) \to \mathrm{Dir}\left(TTG\right)$ associated with the Legendre transformation $\mathbb{F}L: TG \to T^{\ast}G$ as
\begin{equation*}
%
D_{L}=\mathcal{B}(T\mathbb{F}L)\big(D_{\Delta_{G}}\big).
\end{equation*}

We obtain the trivialized expression via the backward Dirac map $\mathcal{B}(T\mathbb{F}\overline{L}):  \mathrm{Dir}\left(T(G\times \mathfrak{g}^{\ast})\right) \to \mathrm{Dir}\left(T(G\times \mathfrak{g})\right)$ associated with the trivialized Legendre transformation $\mathbb{F}\overline{L}: G \times \mathfrak{g} \to G \times \mathfrak{g}^{\ast}$ as
\[
%
\overline{D}_{\ell}=\mathcal{B}(T\mathbb{F}\overline{L})\big(\overline{D}_{\Delta_{G}}\big).
\]

We also have relations between $D_{\Delta_{G}}$ and $\overline{D}_{\Delta_{G}}$
through the forward Dirac map $\mathcal{F}\left(T\lambda_{T^{\ast}G}\right): \mathrm{Dir}\left(TT^{\ast}G\right) \allowbreak \to \mathrm{Dir}\left(T(G\times \mathfrak{g}^{\ast})\right)$,
and between $D_{L}$ and $\overline{D}_{\ell}$
through $\mathcal{F}\left(T\lambda_{TG}\right): \mathrm{Dir}\left(TTG \right) \to \mathrm{Dir}\left(T(G\times \mathfrak{g})\right)$. Therefore we get the following commutative diagram.

\begin{figure}[h]
\centering
\stackinset{l}{18ex}{b}{7ex}{%
\scalebox{.8}
{%
} 
}{%
\begin{tikzcd}[row sep = 15ex, column sep = 12ex, ampersand replacement=\&]
\mathrm{Dir}\left(TT^{\ast}G \right) 
    \arrow{r}{\mathcal{B}(T\mathbb{F}L)}
    \arrow[swap]{d}{\mathcal{F}(T\lambda_{T^{\ast}G})}
\& \mathrm{Dir}\left(TTG \right)
    \arrow{d}{\mathcal{F}(T\lambda_{TG})}  \\
\mathrm{Dir}\left(T(G \times \mathfrak{g}^{\ast})\right)
    \arrow[swap]{r}{\mathcal{B}(T\mathbb{F}\overline{L})}
\& \mathrm{Dir}\left(T(G \times \mathfrak{g})\right)
\end{tikzcd}
}
\caption{Dirac maps between $D_{\Delta_{G}}$ and $D_{L}$.}
\end{figure}

This diagram shows that the Legendre transformation and trivialization commute at the level of Dirac structures through Dirac morphisms.

\paragraph{Link to the Lie--Dirac bundle reduction via Dirac morphisms.}
Here we consider the relation between the Dirac structures on $T^{\ast}G$ and $TG$ together with their reductions in the context of the Dirac maps associated with the Legendre transformation.

\begin{itemize}
\item[(i)]
For the induced Dirac structure $D_{\Delta_{G}}$ on $T^{\ast}G$, as discussed previously, taking the quotient of $D_{\Delta_{G}} \subset TT^{\ast}G \oplus T^{\ast}T^{\ast}G$ by $G$ leads to the quotient bundle
\[
D_{\Delta_{G}}/G \subset (TT^{\ast}G \oplus T^{\ast}T^{\ast}G)/G \cong  (TT^{\ast}G/G) \oplus (T^{\ast}T^{\ast}G/G),
\]
which defines a reduced Dirac structure over the quotient bundle $TT^{\ast}G/G \to T^{\ast}G/G$. On the other hand, through the trivialization $T^{\ast}G \cong G \times \mathfrak{g}^{\ast}$, the induced Dirac structure $D_{\Delta_{G}} \subset TT^{\ast}G \oplus T^{\ast}T^{\ast}G$ is trivialized 
as $\overline{D}_{\Delta_{G}} \subset T(G \times \mathfrak{g}^{\ast}) \oplus T^{\ast}(G \times \mathfrak{g}^{\ast})$, and we define the reduced Dirac structure by the quotient bundle
\[
D_{\Delta_{G}}^{/G}:=\overline{D}_{\Delta_{G}}/G \subset \big(T(G \times \mathfrak{g}^{\ast})/G\big) \oplus \big(T^{\ast}(G \times \mathfrak{g}^{\ast})/G\big) \cong Y\oplus Y^{\ast},
\]
which defines the Lie--Dirac structure on the vector bundle $Y=\mathfrak{g}^{\ast} \times (\mathfrak{g} \times \mathfrak{g}^{\ast}) \to \mathfrak{g}^{\ast}$.

\quad This reduction procedure can be understood in the context of the Dirac maps as follows.

We introduce the projection $\pi^{/G}_{T(G \times \mathfrak{g}^{\ast})}: T(G \times \mathfrak{g}^{\ast}) \to T(G \times \mathfrak{g}^{\ast})/G \cong Y = \mathfrak{g}^\ast \times (\mathfrak{g} \times \mathfrak{g}^{\ast})$,
which is  a vector bundle morphism covering the quotient map $G \times \mathfrak{g}^* \to \mathfrak{g}^*$, given for each $(g,\mu) \in G \times \mathfrak{g}^{\ast}$ by 
\[
\pi^{/G}_{T_{(g,\mu)}(G \times \mathfrak{g}^{\ast})}: T_{(g,\mu)}(G \times \mathfrak{g}^{\ast}) \to (T(G \times \mathfrak{g}^{\ast})/G)_{\mu} \cong  \mathfrak{g} \times \mathfrak{g}^{\ast},
\]
and is a fiberwise isomorphism. Then, by using the forward Dirac map
\[
\mathcal{F}\pi^{/G}_{T(G \times \mathfrak{g}^{\ast})}: \mathrm{Dir}(T(G \times \mathfrak{g}^{\ast})) \to \mathrm{Dir}(Y),
\] 
the reduced Dirac structure $D_{\Delta_{G}}^{/G}\subset Y \oplus Y^{\ast}$ can be obtained by, for each $(g,\mu) \in G \times \mathfrak{g}^{\ast}$,
\[
D_{\Delta_{G}}^{/G}(\mu):=(\overline{D}_{\Delta_{G}}/G)(\mu) =\mathcal{F}\pi^{/G}_{T_{(g,\mu)}(G \times \mathfrak{g}^{\ast})}(\overline{D}_{\Delta_{G}})(g,\mu),
\]
and it follows that
\[
D_{\Delta_{G}}^{/G}= \mathcal{F}\pi^{/G}_{T(G \times \mathfrak{g}^{\ast})}(\overline{D}_{\Delta_{G}}).
\]
In the above, since the quotient projection $\pi^{/G}_{T(G \times \mathfrak{g}^{\ast})}$ is a fiberwise isomorphism, the associated forward Dirac map $\mathcal{F}\pi^{/G}_{T(G \times \mathfrak{g}^{\ast})}$ is well defined on each fiber.

\item[(ii)]
Similarly, for a Lagrange--Dirac structure $D_{L}$ over $TG$, taking quotients $D_{L} \subset TTG \oplus T^{\ast}TG$ by $G$ reads
\[
D_{L}/G \subset (TTG \oplus T^{\ast}TG)/G \cong  (TTG/G) \oplus (T^{\ast}TG/G),
\]
which defines a reduced Lagrange--Dirac structure on the bundle $TTG/G \to TG/G$. On the other hand, through the trivialization $TG \cong G \times \mathfrak{g}$, the Lagrange--Dirac structure $D_{L} \subset TTG\oplus T^{\ast}TG$ is trivialized as 
\[
\overline{D}_{\ell} \subset T(G \times \mathfrak{g}) \oplus T^{\ast}(G \times \mathfrak{g}).
\]
Then we define the reduced Lagrange--Dirac structure by the quotient bundle
\[
D_{\ell}^{/G} := \overline{D}_{\ell}/G \subset (T(G \times \mathfrak{g})/G) \oplus (T^{\ast}(G \times \mathfrak{g})/G) \cong A \oplus A^{\ast},
\]
which is the trivialized expression of the \textit{reduced Lagrange--Dirac structure} on $A=\mathfrak{g} \times (\mathfrak{g} \times \mathfrak{g})\to \mathfrak{g}$.

For fixed $\eta \in \mathfrak{g}$, we get the reduced Lagrange--Dirac structure
\[
D_{\ell}^{/G}(\eta) \subset (\mathfrak{g} \times \mathfrak{g}) \oplus (\mathfrak{g}^{\ast} \times \mathfrak{g}^{\ast}).
\]

\quad This reduction procedure can also be understood in the context of the Dirac maps as follows. We introduce the projection 
$
\pi_{T(G \times \mathfrak{g})}^{/G}: T(G \times \mathfrak{g}) \to T(G \times \mathfrak{g})/G \cong A=\mathfrak{g} \times (\mathfrak{g} \times \mathfrak{g}),
$
which is  a vector bundle morphism covering the quotient map $G \times \mathfrak{g} \to \mathfrak{g}$, given for each $(g, \eta) \in G \times \mathfrak{g}$ by
\[
\pi_{T_{(g,\eta)}(G \times \mathfrak{g})}^{/G}: T_{(g,\eta)}(G \times \mathfrak{g}) \to (T(G \times \mathfrak{g})/G)_{\eta}  \cong \mathfrak{g} \times \mathfrak{g},
\]
and is a fiberwise isomorphism. Then, by using the forward Dirac map
\[
\mathcal{F}\pi_{T(G \times \mathfrak{g})}^{/G}: \mathrm{Dir}(T(G \times \mathfrak{g})) \to \mathrm{Dir}(A),
\] 
the reduced Lagrange--Dirac structure $D_{\ell}^{/G} \subset A \oplus A^{\ast}$ can be obtained by, for each $(g,\eta) \in G \times \mathfrak{g}$,
\[
D_{\ell}^{/G}(\eta):=(\overline{D}_{\ell}/G)(\eta) = \mathcal{F}\pi_{T_{(g,\eta)}(G \times \mathfrak{g})}^{/G}(\overline{D}_{\ell})(g,\eta),
\]
and it follows that
\[
D_{\ell}^{/G}= 
\mathcal{F}\pi_{T(G \times \mathfrak{g})}^{/G}
(\overline{D}_{\ell}).
\]
In the above, since the quotient projection $\pi^{/G}_{T(G \times \mathfrak{g})}$ is a fiberwise isomorphism, the associated forward Dirac map $\mathcal{F}\pi^{/G}_{T(G \times \mathfrak{g})}$ is well defined on each fiber.

\item[(iii)] Regarding the relations between the Dirac structures $D_{\Delta_{G}}$ on $G \times \mathfrak{g}^{\ast}$ and $D_{L}$ on $G \times \mathfrak{g}$, we consider the backward Dirac map associated with the trivialized Legendre transformation $\mathbb{F}\overline{L}: G \times \mathfrak{g} \to G \times \mathfrak{g}^{\ast}$ as
\[
\mathcal{B}(T\mathbb{F}\overline{L}): \mathrm{Dir}\left(T(G \times \mathfrak{g}^{\ast})\right)  \to \mathrm{Dir}\left(T(G \times \mathfrak{g})\right).
\] 
Since the tangent map $T\mathbb{F}\overline{L}: T(G \times \mathfrak{g})  \to T(G \times \mathfrak{g}^{\ast})$ is $G$-equivariant, 
it induces a well defined quotient map
\[
T\mathbb{F}\overline{L}/G : A \to Y,
\]
where the following identifications hold:
\[
T(G \times \mathfrak{g})/G \cong  A=\mathfrak{g} \times (\mathfrak{g} \times \mathfrak{g}) \;\; \text{and} \;\; T(G \times \mathfrak{g}^{\ast})/G \cong Y=\mathfrak{g}^{\ast} \times (\mathfrak{g} \times \mathfrak{g}^{\ast}).
\]
The induced quotient map gives rise to the backward Dirac map, and the relation between the reduced Dirac structures $D_{\Delta_{G}}^{/G} \subset Y\oplus Y^{\ast}$ and $D_{\ell}^{/G} \subset A \oplus A^{\ast}$ is given by
\[
\mathcal{B}(T\mathbb{F}\overline{L}/G): \mathrm{Dir}(Y) \to \mathrm{Dir}(A).
\]

Thus, we have
\[
D_{\ell}^{/G}=\mathcal{B}(T\mathbb{F}\overline{L}/G)(D_{\Delta_{G}}^{/G}).
\]
\end{itemize}

We can summarize the relation between the Lie--Dirac reduction and the Lagrange--Dirac reduction in the following proposition.

\begin{proposition}\label{CommDiagDiracReduction}
Assume that the Legendre transformation $\mathbb{F}L: TG \to T^{\ast}G$ is $G$-equivariant and the following diagram commutes in the sense of Dirac maps:
\[
\stackinset{l}{18ex}{b}{7ex}{%
\scalebox{.8}
{%
} 
}{%
\begin{tikzcd}[row sep = 15ex, column sep = 12ex, ampersand replacement=\&]
\mathrm{Dir}\left(T(G \times \mathfrak{g}^{\ast})\right)
    \arrow{r}{\mathcal{B}(T\mathbb{F}\overline{L})}
    \arrow[swap]{d}{\mathcal{F}\pi^{/G}_{T(G \times \mathfrak{g}^{\ast})}}
\& \mathrm{Dir}\left(T(G \times \mathfrak{g})\right)
    \arrow{d}{\mathcal{F}\pi^{/G}_{T(G \times \mathfrak{g})}}  \\
\mathrm{Dir}\left(Y\right)
    \arrow[swap]{r}{\mathcal{B}(T\mathbb{F}\overline{L}/G)}
\& \mathrm{Dir}\left(A\right)
\end{tikzcd}
}
\]
\end{proposition}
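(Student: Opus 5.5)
The plan is to verify commutativity fiberwise, i.e., to show
\[
\mathcal{F}\pi^{/G}_{T(G\times\mathfrak g)}\big(\mathcal{B}(T\mathbb{F}\overline{L})(\overline{D}_{\Delta_G})\big)=\mathcal{B}(T\mathbb{F}\overline{L}/G)\big(\mathcal{F}\pi^{/G}_{T(G\times\mathfrak g^\ast)}(\overline{D}_{\Delta_G})\big),
\]
as subspaces of $(\mathfrak g\times\mathfrak g)\times(\mathfrak g^\ast\times\mathfrak g^\ast)$ at each $\eta\in\mathfrak g$.

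\textbf{Step 1: reduce to the identity fiber.} Since $\pi^{/G}_{T(G\times\mathfrak g)}$ and $\pi^{/G}_{T(G\times\mathfrak g^\ast)}$ are fiberwise isomorphisms, forward images under them are simply transports of the fibers. By $G$-invariance of $\overline{D}_\ell$ and $\overline{D}_{\Delta_G}$, these transports are independent of the representative $g$. It therefore suffices to work at $g=e$, where both projections become the identity under $T_{(e,\eta)}(G\times\mathfrak g)\cong\mathfrak g\times\mathfrak g$ and $T_{(e,\mu)}(G\times\mathfrak g^\ast)\cong\mathfrak g\times\mathfrak g^\ast$.

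\textbf{Step 2: compute the equivariant Legendre map.} By left invariance of $L$, which gives the assumed $G$-equivariance of $\mathbb{F}L$, we have $\mathbb{F}\overline{L}(g,\eta)=(g,\mathbf D\ell(\eta))$. Hence at $(e,\eta)$,
\[
T\mathbb{F}\overline{L}(\xi,\dot\eta)=(\xi,\mathbf D^2\ell(\eta)\cdot\dot\eta),
\]
and this map is exactly the quotient map $T\mathbb{F}\overline{L}/G:A_\eta\to Y_{\mathbf D\ell(\eta)}$. Equivariance guarantees the quotient map is well defined. Its dual sends $(\beta,\eta')\in\mathfrak g^\ast\times\mathfrak g$ to $(\beta,\mathbf D^2\ell(\eta)\cdot\eta')\in\mathfrak g^\ast\times\mathfrak g^\ast$.

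\textbf{Step 3: compare the two sides.} Going down-then-right, I apply the backward Dirac map of this linear map to the explicit description of $D^{/G}_{\Delta_G}(\mu)$ with $\mu=\mathbf D\ell(\eta)$. Here
\[
D^{/G}_{\Delta_G}(\mu)=\big\{((\xi,\rho),(\beta,\eta'))\;\big|\;\xi=\eta',\ \xi\in\mathfrak g^\Delta,\ \beta+\rho-\mathrm{ad}^\ast_\xi\mu\in(\mathfrak g^\Delta)^\circ\big\}.
\]
Substituting $\rho=\mathbf D^2\ell(\eta)\cdot\dot\eta$, the backward image consists of pairs $((\xi,\dot\eta),(\beta,\mathbf D^2\ell(\eta)\cdot\xi))$ with $\xi\in\mathfrak g^\Delta$ and
\[
\beta+\mathbf D^2\ell(\eta)\cdot\dot\eta-\mathrm{ad}^\ast_\xi\mathbf D\ell(\eta)\in(\mathfrak g^\Delta)^\circ.
\]
This is precisely \eqref{LocReducedLagDiracAlt}. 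Going right-then-down instead gives $\overline{D}_\ell(e,\eta)$ transported by the identity, and by the proposition establishing \eqref{LocReducedLagDiracAlt} this is the same set. Equivalently, one can note that $(T\mathbb{F}\overline{L})^\ast\omega=\omega_\ell$ at $(e,\eta)$, so the reduced forms satisfy $(T\mathbb{F}\overline{L}/G)^\ast\omega^{/G}=\omega^{/G}_\ell$; together with $(T\mathbb{F}\overline{L}/G)^{-1}(\mathfrak g^\Delta\times\mathfrak g^\ast)=\mathfrak g^\Delta\times\mathfrak g$, this gives the identity directly from \eqref{reduced_LagDirac}.

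\textbf{Main obstacle.} The delicate point is Step 1 for the backward map in the degenerate case. The backward map does not commute with quotients automatically, so I must check that $\mathcal{B}$ of an equivariant map applied to an invariant Dirac structure yields an invariant structure whose quotient equals the backward image under the quotient map. I would handle this by observing that the quotient projections are fiberwise isomorphisms intertwining $T\mathbb{F}\overline{L}$ with $T\mathbb{F}\overline{L}/G$, i.e., $\pi^{/G}\circ T\mathbb{F}\overline{L}=(T\mathbb{F}\overline{L}/G)\circ\pi^{/G}$. Fiberwise functoriality of forward and backward Dirac maps under compositions with isomorphisms then gives the claim with no regularity assumption on $\ell$.
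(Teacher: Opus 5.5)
Your proposal is correct and follows the same route as the paper. $G$-equivariance of $\mathbb{F}L$ lets $T\mathbb{F}\overline{L}$ descend to $T\mathbb{F}\overline{L}/G$, and since the quotient projections are fiberwise isomorphisms with $\pi^{/G}\circ T\mathbb{F}\overline{L}=(T\mathbb{F}\overline{L}/G)\circ\pi^{/G}$, functoriality of forward and backward Dirac maps yields commutativity; your explicit check at $(e,\eta)$, showing that both paths give exactly \eqref{LocReducedLagDiracAlt}, makes precise what the paper's one-line proof leaves implicit.
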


\begin{proof}
The commutativity follows from the $G$-equivariance of the Legendre transformation, which ensures that $T\mathbb{F}\overline{L}$ descends to the quotient, together with the fact that the quotient maps associated with the forward Dirac maps are fiberwise isomorphisms.
\end{proof}

This diagram shows that the reduction and the Legendre transformation commute at the level of Dirac structures via forward and backward Dirac maps.
Importantly, the above construction remains valid even for degenerate Lagrangians, since the pullback Dirac structure defined by the backward Dirac map
does not require the Legendre transformation to be invertible.

\subsection{Reduction of Lagrange--Dirac dynamical systems on Lie groups}
In this subsection, we develop the \textit{reduction of Lagrange--Dirac dynamical systems} on the tangent bundle $TG$ of a Lie group $G$ by using the reduced Lagrange--Dirac structure $D_{\ell}^{/G} \subset A\oplus A^\ast$ on the bundle $A=\mathfrak{g} \times V$. This reduction theory naturally yields an implicit reduced dynamics, referred to as \textit{Euler--Poincar\'e--Dirac equations},  for a general mechanical systems including degenerate Lagrangians, nonholonomic constraints as well as symmetry groups. In the particular case in which a given Lagrangian is hyperregular, the reduced Lagrange--Dirac dynamics recovers the \textit{Euler--Poincar\'e--Suslov equations}. 
\paragraph{Lagrange--Dirac dynamical systems on Lie groups $G$.}  
Let $L: TG \to \mathbb{R}$ be a left invariant Lagrangian, possibly degenerate. Let $\Delta_{G} \subset TG$ be a left invariant constraint distribution on $G$. 

Let $X: TG \to TTG$ be a section of the tangent bundle $TTG \to TG$, and assume that $X$ is left invariant, namely, for all $h \in G$,
\[
T_{(g,v)}\Phi_h\bigl(X(g,v)\bigr)
=
X(hg,T_gL_h\cdot v),
\]
or equivalently,
\[
h\cdot X(g,v)=X(hg,hv),
\]
where $\Phi_h(g,v):=(hg,hv)$ denotes the left action of $G$ on $TG$, and $T_{(g,v)}\Phi_h$ its tangent map.
\medskip

Let $D_{L}$ be the Lagrange--Dirac structure on $TG$, given in equation \eqref{LagDiracStr_LieGroup}. Using local coordinates $(g,v)$ for $TG$, $X$ induces the tangent vector at each point $(g,v) \in TG$ as
\begin{equation*}
X(g,v)= (g,v,\dot{g}, \dot{v}).
\end{equation*}

Define the energy $E_{L}: TG \to \mathbb{R}$ by 
\[
E_{L}(g,v) = \mathbb{F}L(g,v) \cdot v -L(g,v) = \mathbf{D}_{2}L(g,v)\cdot v-L(g,v).
\]
Then the differential of $E_{L}$ is locally given, for $(\delta g, \delta v) \in T_{(g,v)}TG$, by
\[
\begin{split}
\mathbf{d}E_{L}(g,v)\cdot(\delta g, \delta v)
&=\mathbf{D}_{1}E_{L}(g,v)\cdot \delta g+\mathbf{D}_{2}E_{L}(g,v)\cdot \delta v,
\end{split}
\]
where
$\mathbf{D}_{1}E_{L}(g,v)=\mathbf{D}_{1}\mathbf{D}_{2}L(g,v)\cdot v - \mathbf{D}_{1}L(g,v)$,  and $\mathbf{D}_{2}E_{L}(g,v)= \mathbf{D}_{2}\mathbf{D}_{2} L(g,v)\cdot v$.
\medskip

Recall from Proposition \ref{Prop: IntLDA_Eqn} that the {\it Lagrange--Dirac dynamical system} $(E_{L}, D_{L})$ satisfies the condition which is given, for each $(g,v) \in TG$, by

\begin{equation}\label{LagDiracDynSys_G}
(X(g,v),\mathbf{d}E_{L}(g,v)) \in D_{L}(g,v).
\end{equation}
In view of \eqref{LagrangeDiracCond}, the Lagrange--Dirac condition \eqref{LagDiracDynSys_G} yields an \textit{Lagrange--Dirac dynamics over a Lie group $G$}:
\begin{equation*}
\begin{cases}
\displaystyle\vspace{0.2cm} 
 \frac{d}{dt} \mathbf{D}_2 L(g,v)- \mathbf{D}_{1}L(g,v) - \mathbf{D}_{1}\mathbf{D}_{2}L(g,v)\cdot \left(\dot{g}-v \right) \in \Delta_{G}^\circ(g),\\[2mm]
\mathbf{D}_{2}\mathbf{D}_{2} L(g,v)\cdot (\dot{g}-v)=0, \\[2mm]
\dot{g} \in \Delta_{G}(g).
\end{cases}
\end{equation*}

When $L$ is hyperregular, namely, $\mathbf{D}_{2} \mathbf{D}_{2}L$ is globally nondegenerate, then we get the second-order condition $\dot{g}=v$ and hence we recover the \textit{first-order formulation of the Lagrange--d'Alembert equations $TG$}:
\begin{equation*}
\displaystyle \frac{d}{dt}\mathbf{D}_2 L(g,v) -\mathbf{D}_{1}L(g,v) \in \Delta_{G}^{\circ}(g),\qquad \dot{g} =v,\qquad \dot{g} \in \Delta_{G}(g).
\end{equation*}

\paragraph{Reduction of the differential of Lagrangians.}
Since the Lagrangian $L:TG \to \mathbb{R}$ is left invariant, we have
\[
L(hg, T_{g}L_{h} \cdot v)=L(g,v)
\]
for all $g,h \in G$ and $v \in T_{g}G$.  The energy $E_{L}(g,v)=\mathbb{F}L(g,v) \cdot v -L(g,v)$ is also left invariant, as the Legendre transformation $\mathbb{F}L: TG \to T^{\ast}G$ is equivariant as
\[ 
\mathbb{F}L(hg, T_{hg}L_{h}(v))=T^{\ast}_{hg}L_{h^{-1}}(\mathbb{F}L(g,v)).
\]
Equivalently,  it can be written as
\[
\mathbb{F}L(hg, hv)=h\cdot \mathbb{F}L(g,v).
\]
Hence, it follows that the differential of $E_L$ is the map
\[
\mathbf{d}E_L: TG \to T^{\ast}TG,
\]
written in coordinates as
\[
\mathbf{d}E_L(g,v)=\big(g,v,\mathbf{D}_{1} E_{L}(g,v),\mathbf{D}_{2} E_{L}(g,v) \big).
\]
Since the map $\mathbf{d}E_{L}:TG \to T^{\ast}TG$ is equivariant, taking the quotient by $G$ induces to the map 
\[
\mathbf{d}^{/G}E_{L} := \mathbf{d}E_{L}/G: TG/G \rightarrow (T ^{\ast} TG)/G,
\]
where $\mathbf{d}^{/G}$ denotes the induced differential map on the quotient by the $G$-action. 
\paragraph{Left Trivialized Expressions.}  
Recall the left trivializing diffeomorphism 
\[
\lambda_{TG}:TG \to G \times \mathfrak{g}: v_{g} \mapsto (g, T_{g}L_{g^{-1}} v), 
\]
and one has the left invariant energy $\overline{E}_{L}=E_{L} \circ \lambda_{TG}^{-1}$ induced on $G \times \mathfrak{g}$ as
\[
\overline{E}_{L}(g,\eta)
 = \overline{\mathbb{F}L}(g,\eta) \cdot \eta  - \bar{L}(g,\eta)= \mathbf{D}_{2}\overline{L}(g,\eta)\cdot \eta-\overline{L}(g,\eta),
\]
where $\eta=T_{g}L_{g^{-1}} v$
and its differential may be represented by
the map
\[
\overline{\mathbf{d}E}_{L} : G \times \mathfrak{g} \to G \times A^{\ast}, 
\]
where we recall $A^{\ast}= \mathfrak{g} \times V^\ast$. Then, it is expressed in coordinates by
\begin{eqnarray*}\label{d_triv_energy}
\overline{\mathbf{d}E}_{L}(g,\eta)=\big(g,\eta,T_{e}^{\ast}L_{g}\left( \mathbf{D}_{1} \overline{E}_{L} \right),\, \mathbf{D}_{2} \overline{E}_{L}  \big).
\end{eqnarray*}
Since $\overline{E}_{L}$ is $G$--invariant, define the reduced energy $\mathcal{E}_{\ell}(\eta)$ on $\mathfrak{g}$ by
\[
\mathcal{E}_{\ell}(\eta):=\overline{E}_{L}(e,\eta),
\]
which is given by
\[
\mathcal{E}_{\ell}(\eta)=\mathbf{D}\ell(\eta) \cdot \eta - \ell(\eta).
\]
\begin{definition}
Since the maps $\lambda_{TG}:TG \to G \times \mathfrak{g}$ and $\overline{\mathbf{d}E}_{L}: G \times \mathfrak{g} \to  G \times A^{\ast}$ are equivariant,
taking the quotient of the map induces the map 
\[
\mathbf{d}^{/G}\mathcal{E}_{\ell} := \overline{\mathbf{d}E_{L}}/G : \mathfrak{g} \rightarrow A^{\ast},
\]
which is locally given by
\begin{equation}\label{Loc_dReducedEnergy}
\mathbf{d}^{/G}\mathcal{E}_{\ell}(\eta)=\big(\eta,0,\mathbf{D}\mathcal{E}_{\ell}(\eta) \big).
\end{equation}
Here, the vanishing of the second component follows from the $G$-invariance of $\overline{E}_L$, and 
the third component $\mathbf{D}\mathcal{E}_{\ell}(\eta) \in \mathfrak{g}^{\ast}$ denotes the functional derivative given by
\[
\mathbf{D}\mathcal{E}_{\ell}(\eta) \cdot \delta{\eta} = \big(\mathbf{D}^{2}\ell(\eta) \cdot \eta \big) \cdot \delta{\eta},
\]
for all $\delta{\eta} \in \mathfrak{g}$.
\end{definition}
\paragraph{Reduction of vector fields.} Recall the section $X:TG  \to TTG$ is trivialized by $\lambda_{TG}:TG \to G \times \mathfrak{g}$ as
\[
\overline{X}:=X \circ \lambda_{TG}^{-1}: G \times \mathfrak{g} \to T(G \times \mathfrak{g})  \cong G \times A,
\]
where $A=\mathfrak{g} \times V$. Since $X$ is left invariant, it follows that
\[
h \cdot \overline{X}(g,\eta)=\overline{X}(hg,\eta)\; \textrm{for all}\; h \in G,
\]
and hence
\begin{equation*}\label{VectorFieldX}
\overline{X}(e,\eta)=(e,\eta,\xi,\dot{\eta}) \in G \times A.
\end{equation*}
Here we note that $\eta=T_{g} L_{g^{-1}}v$ and $\xi=T_{g} L_{g^{-1}}\dot{g}$, where $\dot{g}$ denotes the $T_{g}G$-component of $X(g,v)=(g,v,\dot{g},\dot{v}) \in TTG$, and $\dot{\eta}$ denotes the time derivative of $\eta (t)$.
\medskip

Thus, it follows from the $G$-invariance that the components $\xi$ and $\dot{\eta}$ 
depend only on $\eta$, and hence
\[
\overline{X}(e,\eta)=(e,\eta,\xi(\eta),\dot{\eta}(\eta)).
\]

\begin{definition}
Since $\lambda_{TG}$ is equivariant and $\overline{X}$ is left $G$-invariant, we define a reduced map on the quotient
\[
X^{/G}:=\overline{X}^{/G}: \mathfrak{g} \to A=\mathfrak{g} \times (\mathfrak{g} \times \mathfrak{g}),
\]
which defines a section of the trivial bundle $A \to \mathfrak{g}$, and is explicitly given by
\begin{equation}\label{reduced_VectorField}
X^{/G}(\eta)=(\eta, \xi(\eta), \dot{\eta}(\eta)).
\end{equation}
\end{definition}

\begin{remark}
The reduced map $X^{/G}$ 
is not a section of the tangent bundle of the Lie algebra, i.e., $T\mathfrak{g} \to \mathfrak{g}$, but is a section of the trivial bundle $A \to \mathfrak{g}$ induced from the $G$-invariant section $X$ of $TTG \to TG$.
\end{remark}

\paragraph{Euler--Poincar\'e--Dirac systems.} In this paragraph, we illustrate the reduction procedure for the Lagrange--Dirac dynamical systems over a Lie group $G$, and we show that the Euler--Poincar\'e--Dirac equations are obtained by using the reduced Lagrange--Dirac structure $D_{\ell}^{/G}$ on $A=\mathfrak{g} \times (\mathfrak{g} \times \mathfrak{g})$. Specifically, for the regular case, the Euler--Poincar\'e--Suslov equations are recovered. Moreover, for the unconstrained case $\Delta_{G}=TG$, the Euler--Poincar\'e equations are naturally obtained through the reduced Lagrange--Dirac structure.

\begin{definition}
Let $(E_{L}, D_{L})$ be a Lagrange--Dirac dynamical system as given in \eqref{LagDiracDynSys_G}, and let $D_{\ell}^{/G}$ be the reduced Lagrange--Dirac structure on $A=\mathfrak{g} \times (\mathfrak{g} \times \mathfrak{g})$, whose fiber at $\eta \in \mathfrak{g}$ is given by equation \eqref{reduced_LagDirac}, or  equivalently \eqref{LocReducedLagDiracAlt}. The \textit{reduction of a Lagrange--Dirac dynamical system} $(E_{L}, D_{L})$ is given by the couple $(\mathcal{E}_{\ell}, D_{\ell}^{/G})$ that satisfies the condition given, for each $\eta \in \mathfrak{g}$, by
\begin{equation}\label{condition_reducedLagDiracSys}
\big(X^{/G}(\eta),\mathbf{d}^{/G}\mathcal{E}_{\ell}(\eta) \big) \in D^{/G}_{\ell}(\eta).
\end{equation}
\end{definition}

\begin{definition}
The curve $\eta(t) \in \mathfrak{g}$ satisfying the condition \eqref{condition_reducedLagDiracSys} is \textit{a solution curve} of the reduced Lagrange--Dirac dynamical systems $(\mathcal{E}_{\ell}, D_{\ell}^{/G})$.
\end{definition}
\begin{proposition}\label{Prop:RedLagDiracDynSys}
Let $(\mathcal{E}_{\ell}, D_{\ell}^{/G})$ be the reduction of a Lagrange--Dirac dynamical system on $TG$. Let $\eta (t) \in \mathfrak{g}$ be a solution curve of the reduced Lagrange--Dirac dynamical systems $(\mathcal{E}_{\ell}, D_{\ell}^{/G})$. Then the curve $\eta (t)$ satisfies  \textbf{Euler--Poincar\'e--Dirac equations} on $\mathfrak{g} \times \mathfrak{g}$:
\begin{equation}\label{RedLagDADirac}
\begin{cases}
\displaystyle \frac{d}{dt}\mathbf{D} \ell (\eta) -\mathrm{ad}^{\ast}_{\xi}\,\mathbf{D} \ell (\eta)\in (\mathfrak{g}^{\Delta})^{\circ} ,\\[4mm] 
\displaystyle \mathbf{D}^{2} \ell (\eta)\cdot  (\xi-\eta )=0, \\[3mm] 
\displaystyle \xi \in \mathfrak{g}^{\Delta}.
\end{cases}
\end{equation}
In the hyperregular case, since the Hessian $\mathbf{D}^{2} \ell (\eta)$ is regular, we recover the \textit{Euler--Poincar\'e--Suslov equations} on $V$:
\begin{eqnarray}\label{EulPoiEqn}
\displaystyle \frac{d}{dt}\mathbf{D} \ell (\eta) - \mathrm{ad}_{\xi}^{\ast} \mathbf{D} \ell (\eta) \in (\mathfrak{g}^{\Delta})^{\circ}, \quad \xi = \eta, \quad \xi  \in \mathfrak{g}^{\Delta}.
\end{eqnarray}
Moreover, in the unconstrained case, we recover the Euler--Poincar\'e equations.
\end{proposition}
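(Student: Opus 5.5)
The plan is to substitute the explicit data of the reduced system into the local description \eqref{LocReducedLagDiracAlt} of $D_{\ell}^{/G}$ and read off the conditions. By \eqref{reduced_VectorField} the reduced section has vector part $(\xi,\dot\eta)\in\mathfrak g\times\mathfrak g$. By \eqref{Loc_dReducedEnergy} the reduced differential has covector part $(\beta,\rho)=(0,\mathbf D\mathcal E_\ell(\eta))$, with
\[
\mathbf D\mathcal E_\ell(\eta)=\mathbf D^2\ell(\eta)\cdot\eta .
\]
The first component vanishes because $\overline{E}_L$ is $G$-invariant. Condition \eqref{condition_reducedLagDiracSys} then means exactly that $((\xi,\dot\eta),(0,\mathbf D^2\ell(\eta)\cdot\eta))\in D_\ell^{/G}(\eta)$.

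Applying the three defining conditions of \eqref{LocReducedLagDiracAlt} one at a time gives the equations directly.

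\begin{itemize}
\item The condition $\rho=\mathbf D^2\ell(\eta)\cdot\xi$ becomes $\mathbf D^2\ell(\eta)\cdot\eta=\mathbf D^2\ell(\eta)\cdot\xi$, which is the second equation of \eqref{RedLagDADirac}.
\item The condition $\xi\in\mathfrak g^\Delta$ is the third equation as stated.
\item The condition $\beta+\mathbf D^2\ell(\eta)\cdot\dot\eta-\mathrm{ad}^*_\xi\mathbf D\ell(\eta)\in(\mathfrak g^\Delta)^\circ$ becomes, with $\beta=0$, the requirement $\mathbf D^2\ell(\eta)\cdot\dot\eta-\mathrm{ad}^*_\xi\mathbf D\ell(\eta)\in(\mathfrak g^\Delta)^\circ$. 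Along a curve $\eta(t)$, the chain rule gives $\frac{d}{dt}\mathbf D\ell(\eta)=\mathbf D^2\ell(\eta)\cdot\dot\eta$, and this yields the first equation.
\end{itemize}

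For the hyperregular case, invertibility of $\mathbf D^2\ell(\eta)$ turns the second equation into $\xi=\eta$. Substituting this into the other two equations gives \eqref{EulPoiEqn}.

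In the unconstrained case, $\mathfrak g^\Delta=\mathfrak g$, so $(\mathfrak g^\Delta)^\circ=\{0\}$. The system then reduces to $\frac{d}{dt}\mathbf D\ell(\eta)=\mathrm{ad}^*_\eta\mathbf D\ell(\eta)$, which are the (left-invariant) Euler--Poincar\'e equations.

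The main point needing care is the identification of $\dot\eta$ in $X^{/G}$ with the actual time derivative of the solution curve $\eta(t)$. This holds because $X$ is the tangent vector $(\dot g,\dot v)$ of a curve in $TG$, and $\lambda_{TG}$ maps it to $(\dot g,\dot\eta)$ with $\eta=g^{-1}v$. One should also check that $\mathbf D^2\ell(\eta)\cdot\dot\eta$ agrees with the trivialized form of $\frac{d}{dt}\mathbf D_2L-\mathbf D_1L-\mathbf D_1\mathbf D_2L\cdot(\dot g-v)$. This consistency, however, is already built into the reduced two-form \eqref{RedLagForm}, so no further computation is needed beyond the chain rule.
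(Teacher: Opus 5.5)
Your proposal is correct and follows essentially the same route as the paper's proof. You substitute \eqref{Loc_dReducedEnergy} and \eqref{reduced_VectorField} into \eqref{condition_reducedLagDiracSys}, read off the three conditions from \eqref{LocReducedLagDiracAlt}, and use the chain rule $\frac{d}{dt}\mathbf{D}\ell(\eta)=\mathbf{D}^{2}\ell(\eta)\cdot\dot\eta$. One small point: writing $\mathrm{ad}^{\ast}_{\eta}$ in the unconstrained case tacitly uses $\xi=\eta$, which requires hyperregularity as well. That matches how the paper intends its final remark.
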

\begin{proof}
Substituting \eqref{Loc_dReducedEnergy} and \eqref{reduced_VectorField} into the condition \eqref{condition_reducedLagDiracSys}, it follows that, for each $\eta \in \mathfrak{g}$, 
\[
\bigl( (\eta, \xi, \dot{\eta}), \left(\eta,0,\mathbf{D}\mathcal{E}_{\ell}(\eta)\right) \bigr) \in D^{/G}_{\ell}(\eta),
\]
where $X^{/G}(\eta)=(\eta, \xi, \dot{\eta})$ and $\mathbf{d}^{/G}\mathcal{E}_{\ell}(\eta)=\big(\eta, 0,\mathbf{D}\mathcal{E}_{\ell}(\eta)\big)$. In view of equation \eqref{LocReducedLagDiracAlt}, we obtain the Lagrange--Dirac dynamics:
\begin{equation}\label{RedLagDiracEqn}
\displaystyle \frac{d}{dt}\mathbf{D} \ell (\eta)-\mathrm{ad}^{\ast}_{\xi}\,\mathbf{D}\ell(\eta) \in (\mathfrak{g}^{\Delta})^{\circ} ,\quad  \mathbf{D}^{2}\ell (\eta) \cdot (\xi-\eta)=0, \quad \xi \in \mathfrak{g}^{\Delta}.
\end{equation}
In the above, note that $\frac{d}{dt}\mathbf{D} \ell(\eta) =\mathbf{D}^{2} \ell (\eta)\cdot \dot{\eta}$ by the chain rule. The first equation denotes the reduced dynamics and the third equation indicates the nonholonomic constraint.
The second equation is the \textit{constraint due to the degeneracy of Lagrangians, expressed in the variables of the Lie algebra}.
\medskip

When the given Lagrangian is hyperregular and hence the reduced Lagrangian $\ell: \mathfrak{g}\to \mathbb{R}$ is also hyperregular, the Hessian $\mathbf{D}^{2} \ell(\eta) $ is nondegenerate and then we get $\xi = \eta$ corresponding to the second-order condition in the reduced level. We finally recover the Euler--Poincar\'e--Suslov equations \eqref{EulPoiEqn}, see the papers \cite{Bl2003, YoMa2007a}. Moreover, in the unconstrained case where $\Delta_{G}=TG$, the \textit{Euler--Poincar\'e equations on $\mathfrak{g} \times \mathfrak{g}$} are naturally obtained, see \cite{MaRa1999}.
\end{proof}
As shown in the above, the reduced framework of the Lagrange--Dirac dynamical system yields the Euler--Poincar\'e--Dirac equations on $\mathfrak{g} \times \mathfrak{g}$, 
which unifies various formulations for mechanical systems with nonholonomic constraints and constraints due to degeneracy of the reduced Lagrangians, including the Euler--Poincar\'e--Suslov equations for the regular case and the Euler--Poincar\'e equations for the unconstrained case.

\paragraph{Energy conservation of the reduced Lagrange--Dirac system.}
In the unreduced level, we already showed the energy conservation of the Lagrange--Dirac dynamical systems as in Proposition \ref{EneCon_UnreducedLDDS}. Here we show that the energy conservation also holds n the reduced level.

\begin{proposition}
Along the solution curve $\eta(t) \in \mathfrak{g}, \; t\in [0,T]$ of the reduced Lagrange--Dirac dynamical system in \eqref{condition_reducedLagDiracSys}, the reduced energy $\mathcal{E}_{\ell}(\eta)$ is conserved.
\end{proposition}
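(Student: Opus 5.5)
I will prove this in two complementary ways, mirroring Proposition~\ref{EneCon_UnreducedLDDS}. The first argument is structural. Substituting $X^{/G}(\eta)=(\eta,\xi,\dot\eta)$ and $\mathbf{d}^{/G}\mathcal{E}_{\ell}(\eta)=(\eta,0,\mathbf{D}\mathcal{E}_{\ell}(\eta))$ into the condition \eqref{condition_reducedLagDiracSys} shows that this pair lies in $D^{/G}_{\ell}(\eta)$. Since $D^{/G}_{\ell}(\eta)$ was shown to be a Dirac structure (hence isotropic) with respect to the reduced pairing \eqref{eq:RedSymPair}, pairing the element with itself gives
\[
0=\tfrac12\big\langle\!\big\langle (X^{/G},\mathbf{d}^{/G}\mathcal{E}_{\ell}),(X^{/G},\mathbf{d}^{/G}\mathcal{E}_{\ell})\big\rangle\!\big\rangle_{\mathrm{red}}
=\langle 0,\xi\rangle+\langle \mathbf{D}\mathcal{E}_{\ell}(\eta),\dot\eta\rangle .
\]
Along a solution curve, $\dot\eta$ is the time derivative of $\eta(t)$, so this expression equals $\frac{d}{dt}\mathcal{E}_{\ell}(\eta(t))$, and we conclude $\frac{d}{dt}\mathcal{E}_{\ell}=0$.

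The second argument is a direct computation from the Euler--Poincar\'e--Dirac equations \eqref{RedLagDADirac}. From $\mathcal{E}_{\ell}(\eta)=\mathbf{D}\ell(\eta)\cdot\eta-\ell(\eta)$ we get
\[
\frac{d}{dt}\mathcal{E}_{\ell}=\langle \mathbf{D}^2\ell(\eta)\cdot\dot\eta,\eta\rangle=\Big\langle \tfrac{d}{dt}\mathbf{D}\ell(\eta),\eta\Big\rangle .
\]
I then write $\eta=\xi+(\eta-\xi)$ and use the symmetry of the Hessian:
\[
\langle \mathbf{D}^2\ell(\eta)\cdot\dot\eta,\eta-\xi\rangle=\langle \mathbf{D}^2\ell(\eta)\cdot(\eta-\xi),\dot\eta\rangle=0
\]
by the degeneracy constraint. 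The remaining term is handled with the first equation of \eqref{RedLagDADirac}. It gives
\[
\Big\langle \tfrac{d}{dt}\mathbf{D}\ell,\xi\Big\rangle=\langle \mathrm{ad}^*_\xi\mathbf{D}\ell,\xi\rangle+\langle \beta,\xi\rangle
\]
for some $\beta\in(\mathfrak{g}^\Delta)^\circ$. The first term vanishes because $\langle\mathbf{D}\ell,[\xi,\xi]\rangle=0$, and the second vanishes because $\xi\in\mathfrak{g}^\Delta$.

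The main point needing care is the identification of the third component $\mathbf{D}\mathcal{E}_{\ell}(\eta)$ with $\mathbf{D}^2\ell(\eta)\cdot\eta$, together with the fact that $\dot\eta$ in $X^{/G}$ really is the time derivative of the solution curve. The symmetry of $\mathbf{D}^2\ell$ is what makes the mismatch term $\xi-\eta$ drop out in the degenerate case. Once these are in place, both arguments are a few lines each.
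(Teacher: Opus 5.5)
Your proposal is correct, and your second argument is essentially the paper's proof. The paper first substitutes the first equation of \eqref{RedLagDADirac} into $\langle \tfrac{d}{dt}\mathbf{D}\ell(\eta),\eta\rangle$. It then splits $\eta=\xi-(\xi-\eta)$ and substitutes back, arriving at $-\langle \mathbf{D}^2\ell(\eta)\cdot(\xi-\eta),\dot\eta\rangle=0$. You split first and handle the two pieces separately. This uses exactly the same facts: $\langle\mathbf{D}\ell,[\xi,\xi]\rangle=0$, $\langle\beta,\xi\rangle=0$, and Hessian symmetry together with the degeneracy constraint. Your ordering is slightly more direct.

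Your first argument is a genuinely different route at the reduced level. The paper uses isotropy only for the unreduced system, in Proposition~\ref{EneCon_UnreducedLDDS}(i), and proves the present statement purely by computation. Your isotropy argument is valid, because the paper has shown $D^{/G}_{\ell}(\eta)=(D^{/G}_{\ell}(\eta))^{\perp}$ for the pairing \eqref{eq:RedSymPair}.

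What it buys is that conservation becomes structural. It follows from the skew-symmetry of $\omega^{/G}_{\ell}$ together with the zero component of $\mathbf{d}^{/G}\mathcal{E}_{\ell}(\eta)=(\eta,0,\mathbf{D}\mathcal{E}_{\ell}(\eta))$ that pairs with $\xi$. It needs no explicit handling of the $\mathrm{ad}^*$ term, and it would carry over unchanged to any Dirac structure of the form \eqref{reduced_LagDirac}.

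The computational argument, by contrast, uses only the equations \eqref{RedLagDADirac} themselves. It therefore applies directly to critical curves of the Euler--Poincar\'e--Dirac variational principle of Section~\ref{sec:variational}, without passing through $D^{/G}_{\ell}$.
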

\begin{proof}
Let us check by directly computing the time derivative of the reduced energy $\mathcal{E}_{\ell}(\eta) = \mathbf{D}\ell(\eta) \cdot \eta - \ell(\eta)$ along the solution curve $\eta(t) \in \mathfrak{g}, \; t\in [0,T]$. 
\medskip

Differentiating $E_\ell(\eta(t))$ with respect to time yields
\begin{equation}
\frac{d}{dt}\mathcal{E}_{\ell} = \dot{\mathbf{D}}\ell \cdot \eta + \mathbf{D}\ell \cdot \dot{\eta} - \mathbf{D}\ell \cdot \dot{\eta} = \dot{\mathbf{D}}\ell \cdot \eta,
\end{equation}
where we set $\dot{\mathbf{D}}\ell:=\frac{d}{dt}\left( \mathbf{D}\ell\right)$.
Substituting the first equation of the reduced Lagrange--Dirac (or Euler--Poincar\'e--Dirac) system, $\dot{\mathbf{D}}\ell = \text{ad}^*_\xi \mathbf{D}\ell + \alpha$ where $\alpha \in (\mathfrak{g}^\Delta)^\circ$, yields:
\begin{equation}\label{dEelldt}
\frac{d}{dt}\mathcal{E}_{\ell} = (\text{ad}^*_\xi \mathbf{D}\ell + \alpha) \cdot \eta.
\end{equation}

We decompose $\eta$ as $\eta = \xi - (\xi - \eta)$ and then \eqref {dEelldt} is transformed into 
\begin{equation}\label{dEelldt_mismatch}
\begin{split}
\frac{d}{dt}\mathcal{E}_{\ell} &= \text{ad}^*_\xi \mathbf{D}\ell \cdot \xi + \alpha \cdot \xi - \text{ad}^*_\xi \mathbf{D}\ell \cdot (\xi - \eta) - \alpha \cdot (\xi - \eta)\\
&=  -(\text{ad}^*_\xi \mathbf{D}\ell + \alpha) \cdot (\xi - \eta),
\end{split}
\end{equation}
where $\text{ad}^*_\xi \mathbf{D}\ell \cdot \xi = \mathbf{D}\ell \cdot [\xi, \xi] = 0$, and the term $\alpha \cdot \xi = 0$ since $\xi \in \mathfrak{g}^\Delta$ and $\alpha \in (\mathfrak{g}^\Delta)^\circ$.
\medskip

Using the first equation again to replace the term $(\text{ad}^*_\xi \mathbf{D}\ell + \alpha)$ in \eqref{dEelldt_mismatch} with $\dot{\mathbf{D}}\ell$, it follows that
\begin{equation*}
\frac{d}{dt}\mathcal{E}_{\ell} = -\dot{\mathbf{D}}\ell \cdot (\xi - \eta).
\end{equation*}
Since $\dot{\mathbf{D}}\ell = \mathbf{D}^2 \ell(\eta) \cdot \dot{\eta}$, by applying the symmetry of the Hessian $\mathbf{D}^2 \ell(\eta)$, we get
\begin{equation*}
\frac{d}{dt}\mathcal{E}_{\ell} = -(\mathbf{D}^2 \ell(\eta) \cdot \dot{\eta}) \cdot (\xi - \eta) = -(\mathbf{D}^2 \ell(\eta) \cdot (\xi - \eta)) \cdot \dot{\eta}.
\end{equation*}
By the second equation of the reduced system, $\mathbf{D}^2 \ell(\eta) \cdot (\xi - \eta) = 0$, we conclude:
\begin{equation*}
\frac{d}{dt}\mathcal{E}_{\ell} = 0.
\end{equation*}
\end{proof}

\paragraph{Euler-Poincar\'e--Dirac reconstruction of dynamics.} As mentioned above, for the hyperregular case, the reduced Lagrange--Dirac system recovers the Euler-Poincar\'e-Suslov equations as $\frac{d}{dt}\mathbf{D} \ell (\eta) - \mathrm{ad}^{\ast}_{\xi}\,\mathbf{D}\ell(\eta) \in \big(\mathfrak{g}^{\Delta}\big)^{\circ}$ together with $\xi=\eta$ and $\xi \in \mathfrak{g}^{\Delta}$.
Noting that $\xi(t) =T_{g}L_{g^{-1}}\,\dot{g}(t)  \in \mathfrak{g}$ and  $\eta(t) =T_{g}L_{g^{-1}}\,v(t)  \in \mathfrak{g}$, the kinematic equation $\xi(t)=\eta(t)$ and the constraint $\xi(t) \in \mathfrak{g}^{\Delta}$, respectively,  correspond to the reduced forms of the second-order condition $\dot{g}(t) = v(t)$ and the nonholonomic constraint $\dot{g} \in \Delta_{G}(g)$ in the equations of motion of the Lagrange--Dirac system on $TG$:
\[
\displaystyle \frac{d}{dt}\mathbf{D}_{2} L  - \mathbf{D}_{1}L \in \Delta_{G}^{\circ}(g),\qquad \dot{g}=v, \qquad \dot{g} \in \Delta_{G}(g).
\]
This can be summarized in the context of Euler--Poincar\'e--Dirac reconstruction of dynamics given by the following theorem.
\begin{theorem}
Let $G$ be a Lie group. Let $L$ be a left invariant hyperregular Lagrangian on $TG$ and $\ell=L|\mathfrak{g}$ be the reduced Lagrangian. Let 
$\Delta_{G}$ be a left invariant distribution on $G$ and let $\mathfrak{g}^{\Delta}=\Delta_{G}(e)$ be the reduced constraint subspace of the Lie algebra $\mathfrak{g}=T_{e}G$. 

Suppose $g _0\in G $, $v_{0} \in T_{g_{0}}G$ and $\eta_{0}=T_{g_{0}} L_{g_{0}^{-1}}v_{0} \in \mathfrak{g}$.  
Let $\eta(t)\in \mathfrak{g}$ be a solution of the reduced Lagrange--Dirac system
\begin{eqnarray*}
\frac{d}{dt}\mathbf{D} \ell (\eta(t))- \mathrm{ad}_{\xi(t)}^{\ast} \mathbf{D} \ell (\eta(t)) \in \big(\mathfrak{g}^{\Delta}\big)^{\circ}, \qquad \xi(t)=\eta(t), \qquad 
\xi(t) \in \mathfrak{g}^{\Delta}
\end{eqnarray*}
with $\eta(0)=\eta_0$.
Then the solution $(g(t), v(t))$ of the original Lagrange--Dirac system on $TG$
\begin{equation*}
\frac{d}{dt}\mathbf{D}_{2}L(g(t),v(t))-\mathbf{D}_{1}L(g(t),v(t)) \in \Delta^{\circ}_{G}(g), \qquad    \dot{g}(t) = v(t), \qquad \dot{g}(t) \in \Delta_{G}(g(t)), 
\end{equation*} 
with $(g(0),v(0))=(g_0,v_0)$ is obtained by solving
\[
\dot{g}(t) = T_e L_{g(t)} \eta(t), \quad v(t)=T_e L_{g(t)}\eta(t).
\]
\end{theorem}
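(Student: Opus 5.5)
The plan is to run the reduction backwards. I build $(g(t),v(t))$ from the reduced solution and then check, one by one, the three relations of the unreduced system in their left-trivialized form.

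First, given $\eta(t)$ with $\eta(0)=\eta_0$, solve the reconstruction equation $\dot g(t)=T_eL_{g(t)}\eta(t)$ with $g(0)=g_0$. This is a time-dependent ODE on $G$ (a left-invariant vector field that varies with time), so it has a unique solution on the same interval as $\eta(t)$. Then define $v(t):=T_eL_{g(t)}\eta(t)$. At $t=0$ we get $v(0)=T_eL_{g_0}\eta_0=v_0$, since $\eta_0=T_{g_0}L_{g_0^{-1}}v_0$.

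Next come the two kinematic relations. Since $\dot g=T_eL_g\eta=v$, the second-order condition $\dot g=v$ holds directly. Because $\xi=\eta$, we have $\xi(t)=T_gL_{g^{-1}}\dot g$, which matches the identification used in the reduction of vector fields. Finally, $\dot g=T_eL_g\xi$ with $\xi\in\mathfrak g^\Delta=\Delta_G(e)$, so the left invariance of $\Delta_G$ gives $\dot g\in T_eL_g(\Delta_G(e))=\Delta_G(g)$.

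The main step, and the obstacle, is to show that the reduced relation $\frac{d}{dt}\mathbf D\ell(\eta)-\mathrm{ad}^*_\xi\mathbf D\ell(\eta)\in(\mathfrak g^\Delta)^\circ$ lifts to $\frac{d}{dt}\mathbf D_2L-\mathbf D_1L\in\Delta_G^\circ(g)$. I would not differentiate $L$ in group coordinates. Instead I would use the Dirac machinery already built in the paper. By Proposition~\ref{Prop: IntLDA_Eqn}(iii), in the hyperregular case the unreduced equations are equivalent to $(X(g,v),\mathbf dE_L(g,v))\in D_L(g,v)$ along the curve, with $X=(\dot g,\dot v)$. Push this condition forward by $\lambda_{TG}$ to $\overline D_\ell$. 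By left invariance of $\overline D_\ell$, $\overline E_L$ and the trivialized curve, membership at $(g,\eta)$ is equivalent to membership at $(e,\eta)$, that is, to membership of the quotient pair $(X^{/G}(\eta),\mathbf d^{/G}\mathcal E_\ell(\eta))$ in $D_\ell^{/G}(\eta)$. Here I need that $T\lambda_{TG}(\dot g,\dot v)$ trivializes to $(\xi,\dot\eta)$, which follows from the definition of the trivialized vector field. By \eqref{LocReducedLagDiracAlt}, \eqref{Loc_dReducedEnergy} and Proposition~\ref{Prop:RedLagDiracDynSys}, this last membership is exactly the hypothesis on $\eta(t)$.

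The delicate point is the quotient equivalence. Membership in $\overline D_\ell/G$ must be faithful pointwise. This holds because the quotient projections are fiberwise isomorphisms (as noted for $\mathcal F\pi^{/G}_{T(G\times\mathfrak g)}$) and because $\overline{\mathbf dE}_L(g,\eta)$ corresponds under translation to $\mathbf d^{/G}\mathcal E_\ell(\eta)$. If this abstract route feels thin, I would fall back on a direct check. Pair $\frac{d}{dt}\mathbf D_2L-\mathbf D_1L$ with $T_eL_g\zeta$ for $\zeta\in\mathfrak g^\Delta$. Use $\mathbf D_2L(g,v)=T^*_gL_{g^{-1}}\mathbf D\ell(\eta)$, and obtain $\mathbf D_1L$ from invariance, $L(g,v)=\ell(g^{-1}v)$. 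The standard Euler--Poincaré computation then gives $\langle\frac{d}{dt}\mathbf D\ell-\mathrm{ad}^*_\eta\mathbf D\ell,\zeta\rangle$, which vanishes by hypothesis.

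Uniqueness of solutions of the hyperregular unreduced system, which is a first-order ODE once the multiplier is fixed by the constraint, then identifies the constructed curve with the solution having initial data $(g_0,v_0)$.
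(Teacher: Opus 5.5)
Your proposal is correct and follows essentially the paper's route: reconstruct $g(t)$, check the initial data, the second-order condition and the constraint via left invariance of $\Delta_G$, and obtain the dynamical equation from the Dirac-morphism picture. You actually carry out that last step---through $\lambda_{TG}$, the $G$-invariance of $\overline D_\ell$ and \eqref{LocReducedLagDiracAlt}---whereas the paper only attributes it to Proposition~\ref{CommDiagDiracReduction}, whose Legendre-transform leg is not needed here.

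The only soft spot is your closing uniqueness claim. Fixing the multiplier requires $\bigl\langle\omega^{s},(\mathbf D_2\mathbf D_2L)^{-1}\omega^{r}\bigr\rangle$ to be invertible, which is automatic for a definite fiber Hessian but not for every hyperregular $L$; the phrase ``the solution'' in the statement equally presupposes this.
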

\begin{proof}
The commutativity of the Dirac morphisms established in Proposition \ref{CommDiagDiracReduction} ensures that the unreduced Lagrange--Dirac dynamics on $TG$ is reconstructed from the reduced dynamics on $A=\mathfrak{g} \times (\mathfrak{g} \times \mathfrak{g})$. Specifically, one can verify that the curve
$(g(t),v(t))$
defined by the reconstruction equations satisfies the given initial conditions
$(g_0,v_0)\in TG$,
maintains the second-order condition
$\dot g(t)=v(t)$,
and satisfies the nonholonomic constraint
$\dot g(t)\in\Delta_G(g(t))$
for all $t$.
\end{proof}
\begin{remark}[Reconstructibility of Degenerate Systems]\label{rem:recon_degsys}
It is important to emphasize that the geometric framework developed in this paper remains valid even for degenerate systems. 
As long as the reduced implicit system \eqref{RedLagDADirac} is solvable for $(\eta(t), \xi(t))$---as is typical in physical applications such as electric circuits---the original dynamics on $TG$ can be reconstructed. 
The commutativity of Dirac morphisms ensures that this reconstruction is not merely a formal calculation but is geometrically consistent with the original Lagrange--Dirac structure. 
While the determination of the final constraint manifold requires constraint algorithms in general, the practical solvability of the reduced DAEs provides a direct path to reconstructing the full space dynamics.
\end{remark}


\section{Variational principles for Lagrange--Dirac dynamics}\label{sec:variational}

In this section, we present a \textit{new variational principle on the velocity phase space that formulates the Lagrange--Dirac dynamics on $TQ$}. 
This variational principle generalizes both the Lagrange--d'Alembert principle and Hamilton's principle, allowing degenerate Lagrangians. 
In the hyperregular case, it naturally recovers the standard Lagrange--d'Alembert equations and, in the unconstrained case, the Euler--Lagrange equations.

In the case where the configuration manifold is a Lie group, namely $Q=G$, we further develop a reduced variational principle that yields the Euler--Poincar\'e--Dirac equations on $\mathfrak{g}\times\mathfrak{g}$. 
In particular, for the hyperregular case, the reduced variational principle recovers the Euler--Poincar\'e--Suslov equations, while in the unconstrained case it reduces to the Euler--Poincar\'e equations.

\subsection{The Lagrange--Dirac variational principle}
\label{SecHamVelocityPhaseSpacePrin}

In this subsection, we introduce a variational principle on the tangent bundle $TQ$, referred to as the \textit{Lagrange--Dirac variational principle}. 
This principle naturally yields a first-order formulation of the Euler--Lagrange--Dirac equations in the unconstrained case, allowing degenerate Lagrangians.

\paragraph{A generalization of Hamilton's principle.}

Hamilton's principle is formulated as the criticality condition of the action functional $\mathfrak{S}(q)$ defined on the space of curves on $Q$. 
Here, we extend the standard Hamilton principle from the configuration manifold $Q$ to the tangent bundle $TQ$ (velocity phase space). 
We refer to this extension as the \textit{Lagrange--Dirac variational principle on the velocity phase space}.

\begin{framed}
\begin{proposition}[The Lagrange--Dirac variational principle]\label{LagrangeDiracVarPrin}
Let $L: TQ\to \mathbb{R}$ be a Lagrangian, possibly degenerate. Consider the action functional 
\begin{equation}\label{ActionInt_HamVelocityPhaseSpacePrin}
\mathcal{S}(q,v)=\int_{0}^{T} \bigg\{ L(q(t),v(t)) + \mathbf{D}_{2}L(q(t),v(t))\cdot \big(\dot{q}(t)-v(t)\big) \bigg\} dt,
%
\end{equation}
for curves $(q(t), v(t)),\;t \in [0,T]$ on $TQ$, whose base curve $q(t)=\tau_{Q}(q(t),v(t))$ on $Q$ joins two distinct points 
$q_{0}=\tau_{Q}(q(0),v(0))$ and $q_{1}=\tau_{Q}(q(T),v(T))$. 
If a curve $(q(t), v(t)),\;t \in [0,T]$ is critical of the action functional, i.e.,
\[
\delta \mathcal{S}(q,v)=0
\]
for arbitrary variations $\delta q$ and $\delta v$ satisfying
$\delta q(0)=\delta q(T)=0$,
then it satisfies the \textit{Euler--Lagrange--Dirac equations}:
\begin{equation}\label{Unconst_ImpLagDiracDynamics}
\begin{cases}
\displaystyle \frac{d}{dt}\mathbf{D}_{2}L(q,v)-\mathbf{D}_{1}L(q,v) - \mathbf{D}_{1}\mathbf{D}_{2}L(q,v) \cdot \left(\dot{q}-v \right)=0,\\[3mm]
\mathbf{D}_{2}\mathbf{D}_{2}L(q,v) \cdot \left( \dot{q}-v \right)=0.
\end{cases}
\end{equation}
For the hyperregular case, the Euler--Lagrange--Dirac equations \eqref{Unconst_ImpLagDiracDynamics} recover the first-order formulation of the Euler--Lagrange equations on $TQ$:
\begin{equation*}
\frac{d}{dt} \mathbf{D}_2 L(q,v) = \mathbf{D}_1 L(q,v), \qquad \dot{q}=v.
\end{equation*} 
By eliminating $v$-variables, these equations on $TQ$ are equivalent to the usual Euler--Lagrange equations on $Q$ as in \eqref{EulerLagEqn}.
\end{proposition}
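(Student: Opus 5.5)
The plan is to compute the first variation of $\mathcal{S}$ directly in local coordinates on $TQ$. Then the two independent variations $\delta q$ and $\delta v$ yield the two equations in \eqref{Unconst_ImpLagDiracDynamics}.

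We take a deformation $(q_\epsilon(t),v_\epsilon(t))$ with $\delta q=\frac{d}{d\epsilon}q_\epsilon|_{\epsilon=0}$, $\delta v=\frac{d}{d\epsilon}v_\epsilon|_{\epsilon=0}$ and $\delta q(0)=\delta q(T)=0$. Since $\delta\dot q=\frac{d}{dt}\delta q$, differentiating the integrand gives
\[
\begin{split}
\delta\mathcal{S}=\int_0^T\Big\{&\mathbf{D}_1L\cdot\delta q+\mathbf{D}_2L\cdot\delta v
+\big(\mathbf{D}_1(\mathbf{D}_2L\cdot(\dot q-v))\big)\cdot\delta q\\
&+\big(\mathbf{D}_2(\mathbf{D}_2L\cdot(\dot q-v))\big)\cdot\delta v
+\mathbf{D}_2L\cdot\big(\tfrac{d}{dt}\delta q-\delta v\big)\Big\}\,dt .
\end{split}
\]
In this expression the two terms $\pm\mathbf{D}_2L\cdot\delta v$ cancel. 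This cancellation is the key structural feature of the phase-space Lagrangian: the correction term $\mathbf{D}_2L\cdot(\dot q-v)$ kills the explicit $\delta v$-dependence coming from $L$ itself. The term $\mathbf{D}_2L\cdot\frac{d}{dt}\delta q$ is integrated by parts. Its boundary contribution $\mathbf{D}_2L\cdot\delta q\big|_0^T$ vanishes by the fixed-endpoint condition on $\delta q$; note that no condition on $\delta v$ at the endpoints is needed. The remaining bulk term is $-\frac{d}{dt}\mathbf{D}_2L(q,v)\cdot\delta q$.

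Collecting terms, we get
\[
\delta\mathcal{S}=\int_0^T\Big\{\big(\mathbf{D}_1L+\mathbf{D}_1\mathbf{D}_2L\cdot(\dot q-v)-\tfrac{d}{dt}\mathbf{D}_2L\big)\cdot\delta q+\big(\mathbf{D}_2\mathbf{D}_2L\cdot(\dot q-v)\big)\cdot\delta v\Big\}dt.
\]
Here $\mathbf{D}_2(\mathbf{D}_2L\cdot w)\cdot\delta v$ with $w=\dot q-v$ held fixed is identified with $(\mathbf{D}_2\mathbf{D}_2L\cdot w)\cdot\delta v$ by symmetry of the fiber Hessian, and similarly for the $\mathbf{D}_1$ term. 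Since $\delta q$ (vanishing at endpoints) and $\delta v$ are arbitrary and independent, the fundamental lemma of the calculus of variations gives the two equations of \eqref{Unconst_ImpLagDiracDynamics}.

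For the hyperregular case, invertibility of $\mathbf{D}_2\mathbf{D}_2L$ makes the second equation force $\dot q=v$. The extra term in the first equation then vanishes, which gives the first-order Euler--Lagrange system. Substituting $v=\dot q$ recovers \eqref{EulerLagEqn}, exactly as in Theorem \ref{Thm:LagSys} and Proposition \ref{Prop: IntLDA_Eqn}(iii).

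The only delicate point is the bookkeeping of the variation of the correction term. One must differentiate $\mathbf{D}_2L(q,v)$ with respect to both $q$ and $v$ while keeping $\dot q-v$ as a separate factor, and correctly account for the $-\delta v$ inside $\delta(\dot q-v)$. That cancellation is what I would check first, because the rest is routine.
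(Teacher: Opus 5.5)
Your proposal is correct and follows essentially the same route as the paper. Both compute $\delta\mathcal{S}$ directly, let the $\pm\mathbf{D}_2L\cdot\delta v$ terms cancel, integrate $\mathbf{D}_2L\cdot\frac{d}{dt}\delta q$ by parts with the boundary term killed by $\delta q(0)=\delta q(T)=0$, read off the two equations from the independent variations $\delta q$ and $\delta v$, and then, in the hyperregular case, use invertibility of $\mathbf{D}_2\mathbf{D}_2L$ to force $\dot q=v$.
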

\end{framed}
\begin{proof}
By direct computations, the critical condition is 
\begin{equation*}
\begin{split}
&\delta \int_{0}^{T} \bigg\{ L(q(t),v(t)) + \mathbf{D}_{2}L(q(t),v(t)) \cdot \big(\dot{q}(t)-v(t)\big) \bigg\}dt\\
&= \int_{0}^{T} \bigg\{ \mathbf{D}_{1}L(q,v) \cdot\delta{q}  + \mathbf{D}_{2}L(q,v) \cdot \delta{v} \bigg.\\
&\hspace{2cm}\bigg. 
+ \big(\mathbf{D}_{2}\mathbf{D}_{2}L(q,v) \cdot \delta{v}+\mathbf{D}_{1}\mathbf{D}_{2}L(q,v) \cdot \delta{q}\big) \cdot \big(\dot{q}-v\big)
+ \mathbf{D}_{2}L(q,v) \cdot \big(\delta\dot{q}-\delta v\big) \bigg\}dt\\
&= \int_{0}^{T} \bigg\{ \mathbf{D}_{1}L(q,v) \cdot \delta{q}  + \big( \mathbf{D}_{2}\mathbf{D}_{2}L(q,v) \cdot \delta{v}  \big) \cdot \left(\dot{q}-v\right) \bigg.\\
&\bigg. \hspace{4cm} +\big( \mathbf{D}_{1}\mathbf{D}_{2}L(q,v) \cdot \delta{q} \big) \cdot  \big(\dot{q}-v \big) + \mathbf{D}_{2}L(q,v) \cdot \frac{d}{dt} \delta{q} \bigg\}dt\\
&= \int_{0}^{T} \bigg\{  \big(\mathbf{D}_{2}\mathbf{D}_{2}L(q,v)\left(\dot{q}-v\right)\big) \cdot \delta{v} \bigg.\\
&\hspace{1cm}\bigg. + \bigg( \mathbf{D}_{1}\mathbf{D}_{2}L(q,v) \cdot \left(\dot{q}-v \right)+\mathbf{D}_{1}L(q,v)-\frac{d}{dt}\mathbf{D}_{2}L(q,v)\bigg) \cdot \delta{q}\bigg\}dt+ \mathbf{D}_{2}L(q,v) \cdot \delta{q} \biggr{\arrowvert}_{0}^{T}=0,
\end{split}
\end{equation*}
for all $\delta{q}$ and $\delta{v}$. Then, taking the variations leads to 
\begin{equation*}
\begin{cases}
\displaystyle \frac{d}{dt}\mathbf{D}_{2}L(q,v)-\mathbf{D}_{1}L(q,v) - \mathbf{D}_{1}\mathbf{D}_{2}L(q,v) \cdot \left(\dot{q} -v\right)=0,\\[3mm]
\mathbf{D}_{2}\mathbf{D}_{2}L(q,v) \cdot  \left(\dot{q} - v\right)=0,
\end{cases}
\end{equation*}
where the boundary term vanishes because $\delta q(0)=\delta q(T)=0$.
When the fiber Hessian $\mathbf{D}_{2}\mathbf{D}_{2}L(q,v)$ is not regular, the second equation represents the constraint due to degeneracy of $L$.

In the hyperregular case, $\mathbf{D}_2\mathbf{D}_2 L(q,v)$ is nonsingular and hence we obtain $\dot{q}=v$. By 
substituting $\dot{q}=v$ into the first equation, the system recovers the first-order formulation of the Euler--Lagrange equations on $TQ$ obtained in \eqref{EulLagEqn}:
\[
\frac{d}{dt} \mathbf{D}_2 L(q,v) = \mathbf{D}_1 L(q,v), \qquad \dot{q}=v.
\]
\end{proof}
\begin{remark}
In the degenerate case, the second equation $\mathbf{D}_{2}\mathbf{D}_{2}L(q,v) \cdot  \left(\dot{q} - v\right)=0$ in \eqref{Unconst_ImpLagDiracDynamics} reveals that the second-order condition $\dot{q}=v$ is not 
enforced for all directions but is restricted to the complement of $\operatorname{ker}\,\mathbf{D}_{2}\mathbf{D}_{2}L(q,v)$. As discussed in Remark \ref{rem:recon_degsys}, the solvability of the implicit DAE system ensures the existence of a consistent trajectory $(q(t),v(t))$ on $TQ$.
\end{remark}

\paragraph{Finite dimensional cases.}
By direct computations using local coordinates $q^{i}, v^{i},\;i=1,\dots,n$ for $u=(q,v) \in TQ$, it follows that 
\[
\begin{split}
\displaystyle
\delta&\int_{0}^{T}\bigg\{ L(q,v) +\frac{\partial L}{\partial v^{j}}\big( \dot{q}^{j} -v^{j} \big) \bigg\}dt\\[2mm]
\displaystyle&=\int_{0}^{T} \bigg\{   \frac{\partial L}{\partial q^{i}} \delta{q}^{i} +  \frac{\partial L}{\partial v^{i}} \delta{v}^{i} 
+ \frac{\partial^{2} L}{\partial q^{i} \partial v^{j}} \left(\dot{q}^{j}-v^{j} \right)\delta{q}^{i} 
+ \frac{\partial^{2} L}{\partial v^{i} \partial v^{j}} \left(\dot{q}^{j}-v^{j} \right)\delta{v}^{i} 
+ \frac{\partial L}{\partial v^{j}}\big( \delta\dot{q}^{j} - \delta v^{j} \big)
\bigg\}dt\\[3mm]
\displaystyle&=\int_{0}^{T} \bigg[ \frac{\partial^{2} L}{\partial v^{i}v^{j}}\left(\dot{q}^{j}-v^{j} \right)\delta{v}^{i} + \bigg\{  \frac{\partial L}{\partial q^{i}}+ \frac{\partial^{2} L}{\partial q^{i} \partial v^{j}}\left(\dot{q}^{j}-v^{j}\right) -  \frac{d}{dt}\frac{\partial L}{\partial v^{i}}\bigg\} \delta{q}^{i} \bigg]dt=0,
\end{split}
\]
for all $\delta{q}^{i}$ and $\delta{v}^{i}$, together with the fixed endpoint conditions $\delta{q}^{i}(0)=\delta{q}^{i}(T)=0$. Taking variations $\delta{q}^{i}$ and $\delta{v}^{i}$ leads to
the Lagrange--Dirac equations:
\begin{equation*}
\begin{cases}
\displaystyle \frac{d}{dt}\frac{\partial L}{\partial v^{i}} - \frac{\partial L}{\partial q^{i}}  -  \frac{\partial^{2} L}{\partial q^{i} \partial v^{j}}\left( \dot{q}^{j} - v^{j}\right)=0,\\[5mm]
\displaystyle \frac{\partial^{2} L}{\partial v^{i} \partial v^{j}} \left(\dot{q}^{j}-v^{j}\right)=0,
\end{cases}
\end{equation*}
In the hyperregular case, these equations recover the local first-order formulation of the Euler-Lagrange equations:
\[
\displaystyle \frac{d}{dt} \frac{\partial L}{\partial v^{i}}  - \frac{\partial L}{\partial q^{i}}=0,\qquad 
\dot{q}^{i}=v^{i}. 
\]

\begin{remark}\rm
Notice that using the Legendre transform $p=\frac{\partial L}{\partial v}$, the action functional given in \eqref{ActionInt_HamVelocityPhaseSpacePrin} can be rewritten in terms of the action functional for the curves $(q(t), v(t), p(t))$ in the Pontryagin bundle $TQ \oplus T^{\ast}Q$ as
\[
\mathcal{S}(q,v, p)=\int_{0}^{T} \bigg\{ L(q(t),v(t)) +  p(t) \cdot \big(\dot{q}(t)-v(t)\big) \bigg\}dt,
\]
which is called the action functional of the \textit{Hamilton-Pontryagin principle} proposed in \cite{YoMa2006a}.
\end{remark}

\paragraph{Intrinsic expressions.}
Here, we consider the intrinsic expression of the Lagrange--Dirac variational principle given in Proposition \ref{LagrangeDiracVarPrin}.

\begin{framed}
\begin{proposition}[The intrinsic Lagrange--Dirac variational principle]
Let $u=(q,v) \in TQ$ and let $E_{L}: TQ \to \mathbb{R}$ be the energy defined by $E_{L}(u):=\mathbb{F}L(u) \cdot u - L(u)$. Denote by $\Theta_{L}$ the Lagrangian one-form on $TQ$.
\medskip

If a curve $u(t)$ on $TQ$ is critical of the action functional, i.e., 
\begin{equation}\label{VarPrinTan}
\delta\mathcal{S}(u)=\delta \int_{0}^{T} \bigg(\Theta_{L}(u)(\dot{u})-E_{L}(u) \bigg)dt=0,
\end{equation}
for all variations $\delta{u}$ of the curves $u(t)$ on $TQ$, together with the fixed endpoint conditions, i.e., $T_{u}\tau_{Q}(\delta{u})(0)=T_{u}\tau_{Q}(\delta{u})(T)=0$, then the critical curve $u(t)$ satisfies the intrinsic Euler--Lagrange--Dirac equation:
\begin{equation}\label{IntrinsicEulerLagrangeEqn}
\mathbf{i}_{\dot{u}(t)}\Omega_{L}(u)=\mathbf{d}E_{L}(u).
\end{equation} 
\end{proposition}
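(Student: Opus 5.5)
The plan is to compute the first variation of $\mathcal{S}(u)=\int_0^T(\Theta_L(u)(\dot u)-E_L(u))\,dt$ intrinsically, using Cartan calculus, and then extract the pointwise condition from the fundamental lemma. Let $u_\epsilon(t)$ be a deformation of $u(t)$ with infinitesimal variation $\delta u(t)=\frac{\partial}{\partial\epsilon}\big|_{\epsilon=0}u_\epsilon(t)\in T_{u(t)}TQ$.

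First we handle the kinetic term. View $\Theta_L(u)(\dot u)\,dt$ as the pullback $\sigma^\ast\Theta_L$ of the one-form $\Theta_L$ along the two-parameter map $\sigma(t,\epsilon)=u_\epsilon(t)$. The standard homotopy formula (equivalently, $\mathbf{d}\sigma^\ast\Theta_L=\sigma^\ast\mathbf{d}\Theta_L$ evaluated on $\partial_\epsilon,\partial_t$) gives
\[
\frac{d}{d\epsilon}\Big|_{\epsilon=0}\int_0^T \Theta_L(u_\epsilon)(\dot u_\epsilon)\,dt
=\int_0^T \mathbf{d}\Theta_L(u)(\delta u,\dot u)\,dt+\Theta_L(u)(\delta u)\Big|_0^T .
\]
Since $\Omega_L=-\mathbf{d}\Theta_L$, the integrand equals $\Omega_L(u)(\dot u,\delta u)$. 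The boundary term is $\mathbf{D}_2L(q,v)\cdot T\tau_Q(\delta u)$, because $\Theta_L$ is horizontal. This quantity vanishes under the stated endpoint conditions $T_u\tau_Q(\delta u)(0)=T_u\tau_Q(\delta u)(T)=0$. This is the point where the weakened endpoint condition, which fixes only the base points, is exactly what is needed.

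Next comes the energy term, whose variation is simply $-\int_0^T\mathbf{d}E_L(u)\cdot\delta u\,dt$. Combining the two pieces gives
\[
\delta\mathcal{S}(u)=\int_0^T\big\langle \mathbf{i}_{\dot u}\Omega_L(u)-\mathbf{d}E_L(u),\,\delta u\big\rangle\,dt=0 .
\]
Here $\delta u$ ranges over all sections along $u(t)$ whose base components vanish at the endpoints. Such sections include all compactly supported sections in $(0,T)$, so the fundamental lemma yields $\mathbf{i}_{\dot u}\Omega_L=\mathbf{d}E_L$ pointwise, which is \eqref{IntrinsicEulerLagrangeEqn}.

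As a consistency check, in coordinates $\Theta_L(u)(\dot u)-E_L(u)=\mathbf{D}_2L\cdot\dot q-\mathbf{D}_2L\cdot v+L$. This is exactly the integrand of \eqref{ActionInt_HamVelocityPhaseSpacePrin}, so the result agrees with Proposition \ref{LagrangeDiracVarPrin}. Moreover, the equation obtained is the unconstrained case $\Delta_Q=TQ$ of \eqref{intrinsic_LagrangeDiracdynamicalSystem}.

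The main obstacle is to justify the homotopy formula and the sign conventions carefully. This is where the sign of $\Omega_L=-\mathbf{d}\Theta_L$ must be matched with the orientation of $(\delta u,\dot u)$. To settle it, I would first verify the identity in the local chart via \eqref{LocalLagTwoForm}, comparing with the explicit computation in the proof of Proposition \ref{LagrangeDiracVarPrin}, and only then write it intrinsically. No regularity of $L$ is used anywhere, so the argument covers degenerate Lagrangians.
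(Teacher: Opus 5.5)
Your proposal is correct and follows essentially the same route as the paper: compute the first variation, identify the kinetic contribution as $\mathbf{d}\Theta_L(u)(\delta u,\dot u)=\Omega_L(u)(\dot u,\delta u)$, and kill the boundary term $\Theta_L(u)\cdot\delta u=\mathbb{F}L(u)\cdot T_u\tau_Q(\delta u)$ using the base-point endpoint conditions. The paper obtains the $\mathbf{d}\Theta_L$ term in its accompanying remark by integrating by parts and using that $\partial_t$ and $\partial_\epsilon$ commute, which is what your pullback of $\Theta_L$ to the $(t,\epsilon)$-rectangle encodes; your signs are right, and your explicit use of compactly supported variations in the fundamental lemma is a slightly more careful finish of the same argument.
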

\end{framed}

\begin{proof}
Consider the variations of a curve $u(t),\;t \in [0,T]$ by $u_{\epsilon}(t)=u(t,\epsilon)$, where $\epsilon \in (-a,a)$ and hence  
$u_{0}(t)=u(t)$.  
The critical condition \eqref{VarPrinTan} becomes 
\[
\begin{split}
\delta \mathcal{S}(u) &= \mathbf{d}\mathcal{S}(u) \cdot \delta{u} = \left.\frac{d}{d\epsilon}\right|_{\epsilon =0}\mathcal{S}(u_{\epsilon})\\[2mm]
&=\left.\frac{d}{d\epsilon}\right|_{\epsilon =0}  \int_{0}^{T} \bigg(\Theta_{L}(u_{\epsilon}(t))(\dot{u}_{\epsilon}(t))-E_{L}(u_{\epsilon}(t)) \bigg)dt\\[2mm]
&=\int_{0}^{T} \bigg(-\mathbf{d}\Theta_{L}(u(t))(\dot{u}(t),\delta{u}(t))-\mathbf{d}E_{L}(u(t))\cdot \delta{u}(t) \bigg) dt+\Theta_{L}(u(t))\cdot \delta{u}(t)\biggr|_{0}^{T}\\[2mm]
&=\int_{0}^{T} \bigg(\mathbf{i}_{\dot{u}}\Omega_{L}(u)-\mathbf{d}E_{L}(u) \bigg) \cdot \delta{u}\,dt+\Theta_{L}(u)\cdot \delta{u}\biggr|_{0}^{T}\\[2mm]
&=0,
\end{split}
\]
for all variations $\delta{u}(t)$ of the curves $u(t) \in TQ$, together with the fixed endpoint  conditions $T_{u}\tau_{Q}(\delta{u})(0)=T_{u}\tau_{Q}(\delta{u})(T)=0$, i.e., $\delta{q}(0)=\delta{q}(T)=0$. 

Hence, by the fixed endpoint conditions, we have
\[
\Theta_L(u) \cdot \delta u \Bigr|_{0}^{T} =  \mathbb{F}L(u) \cdot T_u \tau_Q (\delta u)  \Bigr|_{0}^{T} = 0,
\]
and thus we get \eqref{IntrinsicEulerLagrangeEqn} follows.
Note that $\Omega_{L}=-\mathbf{d}\Theta_{L}$ holds.
\end{proof}
Note that \eqref{IntrinsicEulerLagrangeEqn} coincides with the intrinsic Euler--Lagrange equation in \eqref{InstrinsicLagrangeSystem}, except that the present formulation allows degenerate Lagrangians.

\begin{remark}
We provide a detailed derivation of the transition from the second to the third line in the variation, which involves integration by parts. Since $\Theta_{L}$ is a one-form on $TQ$, 
its Fr\'echet derivative at $u \in TQ$ in the direction $\delta u$ is again a 
one-form, denoted $\mathbf{D}\Theta_{L}(u) \cdot \delta u$. Hence, we have
\[
\left.\frac{d}{d\varepsilon}\right|_{\varepsilon=0}
\Theta_{L}(u_{\varepsilon})\cdot \dot u_{\varepsilon}
=
\big(\mathbf{D}\Theta_{L}(u) \cdot \delta u \big) \cdot \dot{u}
+
\Theta_{L}(u)\!\left(\left.\frac{d}{d\varepsilon}\right|_{\varepsilon=0}
\dot u_{\varepsilon}\right).
\]
Since the variation  and time derivative commute,
$\delta{\dot{u}}=\frac{d}{dt}\delta u$, we obtain
\[
\int_{0}^{T}
\Theta_{L}(u)\cdot \left(\frac{d}{dt}\delta u\right)\,dt
=
\Theta_{L}(u)\cdot\delta u\bigg|_{0}^{T}
-
\int_{0}^{T}
\frac{d}{dt}\!\left(\Theta_{L}(u)\right)\cdot\delta u\,dt,
\]
where the boundary term appears from integration by parts.

Using the chain rule,
$\tfrac{d}{dt}(\Theta_{L}(u(t)))=\mathbf{D}\Theta_{L}(u)\cdot \dot u$, we have
\[
\big(\mathbf{D}\Theta_{L}(u) \cdot \delta u\big) \cdot \dot u
-
\big(\mathbf{D}\Theta_{L}(u) \cdot \dot u \big) \cdot \delta u
=
-\,\mathbf d\Theta_{L}(u)(\dot u,\delta u).
\]
Here we used the standard identity for the exterior derivative of a one-form $\alpha \in \Lambda^{1}(M)$:
\[
\mathbf{d}\alpha(X, Y)=X[\alpha(Y)]-Y[\alpha(X)]-\alpha([X,Y]),\quad \textrm{for}\quad X, Y \in \mathfrak{X}(M).
\]
In the present setting, the Lie bracket term $\Theta_L([\dot u, \delta u])$ does not contribute, since $\dot u=\partial_t u(t,\varepsilon)$ and $\delta u=\partial_\varepsilon u(t,\varepsilon)$ arise as partial derivatives of a smooth two-parameter variation $u(t,\varepsilon)$ on $TQ$. Therefore, the corresponding directional derivatives commute, i.e., $\partial_t\partial_\varepsilon u=\partial_\varepsilon\partial_t u$. Hence, the corresponding vector fields satisfy $[\dot u,\delta u]=0$.
\medskip

Combining these expressions yields
\[
\delta\mathcal S(u)
=
\int_{0}^{T}
\Big(
-\,\mathbf d\Theta_{L}(u)(\dot u,\delta u)
-
\mathbf dE_{L}(u)\cdot\delta u
\Big)\,dt
+
\Theta_{L}(u)\cdot\delta u\Big|_{0}^{T}.
\]

Finally, using 
$\Theta_{L}(u)\cdot\delta u
=\langle\mathbb{F}L(u),T_{u}\tau_{Q}(\delta u)\rangle$,
the fixed endpoint condition 
$T_{u}\tau_{Q}(\delta u)(0)=
 T_{u}\tau_{Q}(\delta u)(T)=0$
implies that the boundary term vanishes. Hence, the critical condition
\[
\delta\mathcal{S}(u)=0
\]
leads directly to the intrinsic Euler--Lagrange--Dirac equations \eqref{IntrinsicEulerLagrangeEqn}. A similar derivation for Hamilton's phase space principle was given in \cite{CeMa1987}.
\end{remark}

\paragraph{Variational link to the Lagrangian systems.}
In the hyperregular case, the Lagrange--Dirac equation \eqref{IntrinsicEulerLagrangeEqn} 
coincides with the intrinsic Euler--Lagrange equation \eqref{InstrinsicLagrangeSystem} obtained from the Lagrangian condition \eqref{LagCond}. 
From this intrinsic form, one can recover the first-order system of local Euler--Lagrange equations \eqref{EulLagEqn}. 
Of course, this corresponds to the special case of a Lagrange--Dirac system in which the constraint distribution is trivial, $\Delta_{Q}=TQ$.

\subsection{The Lagrange--d'Alembert--Dirac variational principle}
We shall extend the Lagrange--Dirac variational principle in \S\ref{SecHamVelocityPhaseSpacePrin} to the d'Alembert-type variational principle for the case in which a constraint distribution $\Delta_{Q} \subset TQ$ exists. Specifically, we shall call this the \textit{Lagrange-d'Alembert--Dirac principle}, because it allows us to consider nonholonomic constraints in addition to the degeneracy due to Lagrangians. This principle naturally yields the Lagrange--Dirac dynamics on the tangent bundle. In the hyperregular case, we recover a first-order formulation of the Lagrange-d'Alembert equations.

\begin{framed}
\begin{proposition}[The Lagrange--d'Alembert--Dirac variational principle]
Let $L: TQ \to \mathbb{R}$ be a Lagrangian, possibly degenerate. Consider the following action functional
\begin{equation}\label{ActionInt_LagDADiracPrin}
\mathcal{S}(q,v)=\int_{0}^{T} \bigg\{ L(q(t),v(t)) + \mathbf{D}_{2}L(q(t),v(t)) \cdot \big(\dot{q}(t)-v(t)\big) \bigg\}dt.
\end{equation}
If a curve $(q(t), v(t)),\;t \in [0,T]$ on $TQ$ is critical, i.e., 
\[
\delta \mathcal{S}(q,v)=0,
\]
for all chosen variations  $\delta{q}(t) \in \Delta_{Q}(q(t))$ with the fixed endpoint conditions $\delta{q}(0)=\delta{q}(T)=0$, for all $\delta{v}$, and also subject to the kinematic constraint $\dot{q}(t)\in \Delta_{Q}(q(t))$, then the curve 
$(q(t), v(t)),\;t \in [0,T]$ satisfies the \textit{Lagrange--d'Alembert--Dirac equations} in \eqref{LagrangeDiracCond}:
\begin{equation*}
\begin{cases}
\displaystyle \frac{d}{dt}\mathbf{D}_{2}L(q,v)-\mathbf{D}_{1}L(q,v) - \mathbf{D}_{1}\mathbf{D}_{2}L(q,v) \cdot \left(\dot{q}-v \right) \in \Delta_{Q}^{\circ}(q),\\[3mm]
\displaystyle \mathbf{D}_{2}\mathbf{D}_{2}L(q,v) \cdot  \left( \dot{q}-v \right)=0,\\[2mm]
\displaystyle \dot{q} \in \Delta_{Q}(q).
\end{cases}
\end{equation*}
For the hyperregular case, the system above recovers the first-order formulation of the Lagrange--d'Alembert equations on $TQ$:
\begin{equation*}
\frac{d}{dt} \mathbf{D}_2 L(q,v) - \mathbf{D}_1 L(q,v)\in \Delta_{Q}^{\circ}(q), \qquad \dot{q}=v, \qquad \dot{q} \in \Delta_{Q}(q).
\end{equation*} 
\end{proposition}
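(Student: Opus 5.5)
The plan is to reuse, almost verbatim, the variation computed in the proof of Proposition~\ref{LagrangeDiracVarPrin}, and change only the final step where the fundamental lemma is applied. The action \eqref{ActionInt_LagDADiracPrin} coincides with \eqref{ActionInt_HamVelocityPhaseSpacePrin}. Therefore, for any variation $(\delta q,\delta v)$ with $\delta q(0)=\delta q(T)=0$, the same integration by parts of the term $\mathbf{D}_2L\cdot\delta\dot q$ (with $\delta\dot q=\frac{d}{dt}\delta q$) gives
\[
\delta\mathcal{S}=\int_0^T\Big\{\big(\mathbf{D}_2\mathbf{D}_2L(q,v)\cdot(\dot q-v)\big)\cdot\delta v+\Big(\mathbf{D}_1L+\mathbf{D}_1\mathbf{D}_2L\cdot(\dot q-v)-\tfrac{d}{dt}\mathbf{D}_2L\Big)\cdot\delta q\Big\}dt .
\]
The boundary term $\mathbf{D}_2L\cdot\delta q\,|_0^T$ vanishes by the fixed endpoint conditions. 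I will quote this identity rather than recompute it.

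Next, I would impose the restricted class of variations. Since $\delta v$ is arbitrary, including compactly supported in the interior, the fundamental lemma gives $\mathbf{D}_2\mathbf{D}_2L(q,v)\cdot(\dot q-v)=0$. What remains is
\[
\int_0^T\Big(\mathbf{D}_1L+\mathbf{D}_1\mathbf{D}_2L\cdot(\dot q-v)-\tfrac{d}{dt}\mathbf{D}_2L\Big)\cdot\delta q\,dt=0
\]
for all $\delta q(t)\in\Delta_Q(q(t))$ vanishing at the endpoints. To conclude that the integrand covector lies in $\Delta_Q^\circ(q)$ pointwise, I would use the regularity (constant rank) of $\Delta_Q$. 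Locally, pick a smooth frame $e_1(q),\dots,e_{n-m}(q)$ of $\Delta_Q$ and write $\delta q=\sum_a f^a(t)e_a(q(t))$ with arbitrary scalar test functions $f^a$ vanishing at the endpoints. The fundamental lemma, applied componentwise, then shows that the pairing of the bracketed covector with each $e_a$ vanishes. In coordinates this is equivalent to introducing multipliers $\mu_r$ with the right-hand side $\mu_r\omega^r_i$, exactly as in \eqref{LocalImpDynLagDirSys}. The kinematic constraint $\dot q\in\Delta_Q(q)$ is imposed by hypothesis, which gives the third equation of \eqref{LagrangeDiracCond}.

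For the hyperregular case, invertibility of $\mathbf{D}_2\mathbf{D}_2L$ turns the second equation into $\dot q=v$. Substituting this into the first equation kills the $\mathbf{D}_1\mathbf{D}_2L\cdot(\dot q-v)$ term and yields the first-order Lagrange--d'Alembert system, as in Proposition~\ref{Prop: IntLDA_Eqn}(iii).

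The only step needing any care is the localization from the integral identity to the pointwise annihilator condition, namely ensuring that the constrained test variations are rich enough. I would handle it with the local frame argument above, patched by a partition of unity in time along the curve. Everything else is inherited from the unconstrained computation.
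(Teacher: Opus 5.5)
Your proposal is correct and follows essentially the same route as the paper, which reuses the variation formula from the unconstrained Lagrange--Dirac principle and reads off the three equations from the arbitrariness of $\delta v$, the restriction $\delta q\in\Delta_Q(q)$, and the imposed kinematic constraint, with the hyperregular case following from invertibility of $\mathbf{D}_2\mathbf{D}_2L$. Your local-frame argument only makes explicit the localization step that the paper summarizes as ``taking the variations directly.''
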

\end{framed}
 \begin{proof}
By direct computations, it follows that the critical condition is 
\begin{equation*}
\begin{split}
&\delta \int_{0}^{T} \Big\{ L(q(t),v(t)) +  \mathbf{D}_{2}L(q(t),v(t)) \cdot \big(\dot{q}(t)-v(t)\big) \Big\}dt\\
&= \int_{0}^{T} \bigg\{ \Big(\mathbf{D}_{2}\mathbf{D}_{2}L(q,v)\cdot \left(\dot{q}-v\right)\Big) \cdot \delta{v}  \bigg.\\
&\hspace{1cm}\bigg. + \Big( \mathbf{D}_{1}\mathbf{D}_{2}L(q,v) \cdot \left(\dot{q}-v \right)+\mathbf{D}_{1}L(q,v)-\frac{d}{dt}\mathbf{D}_{2}L(q,v) \Big) \cdot \delta{q} \bigg\}dt + \mathbf{D}_{2}L(q,v) \cdot \delta{q} \biggr{\arrowvert}_{0}^{T}=0,
\end{split}
\end{equation*}
for all chosen variations  $\delta{q}(t) \in \Delta_{Q}(q(t))$ with the fixed endpoint conditions $\delta{q}(0)=\delta{q}(T)=0$ and for all $\delta{v}$. Taking the variations directly leads to the Lagrange--d'Alembert--Dirac equations. 

In the hyperregular case, we recover the second-order condition $\dot{q}=v$ and hence we obtain the first-order formulation of the Lagrange--d'Alembert equations on $TQ$. Of course, in the unconstrained case, we recover the first-order system of the Euler--Lagrange equations.
\end{proof}

\paragraph{Finite dimensional cases.} 
We suppose that the nonholonomic constraints $\Delta_{Q}\subset TQ$ are given by
\[
\Delta_{Q}(q):=\Big\{ (q, \dot{q})\in TQ \;\; \Big| \;\;\big< \omega^{r}(q), \dot{q} \big>=0 \Big\},
\]
where $\omega^{r}(q)=\omega^{r}_{i}(q)dq^{i},\; r=1,\dots,m<n$ are $m$ constraint one-forms on $Q$.

Using local coordinates $(q^{i}, v^{i}),\;i=1,\dots,n$ for $u=(q,v) \in TQ$, the critical condition of the action functional is computed under the fixed endpoints by:
\[
\begin{split}
\displaystyle
\delta&\int_{0}^{T}\bigg\{ L(q,v) +\frac{\partial L}{\partial v^{j}}\big( \dot{q}^{j} -v^{j} \big) \bigg\}dt\\[2mm]
\displaystyle&=\int_{0}^{T} \bigg[ \frac{\partial^{2} L}{\partial v^{i}v^{j}}\left(\dot{q}^{j}-v^{j} \right)\delta{v}^{i} + \bigg\{  \frac{\partial L}{\partial q^{i}}+ \frac{\partial^{2} L}{\partial q^{i} \partial v^{j}}\left(\dot{q}^{j}-v^{j}\right) -  \frac{d}{dt}\frac{\partial L}{\partial v^{i}}\bigg\} \delta{q}^{i} \bigg]dt=0,
\end{split}
\]
for all $\delta{v}^{j}$ and for all $\delta{q}^{j}$ that satisfy
\[
\omega^{r}_{i}(q)\delta q^{i}=0,\quad r=1,\dots, m<n,
\]
together with the nonholonomic constraints 
\[
\omega_{i}^{r}(q)\dot{q}^{i}=0.
\]
Introducing Lagrange multipliers $\mu_{r}$, we thus obtain the local Lagrange--d'Alembert--Dirac equations:
\begin{equation*}
\begin{cases}
\displaystyle \frac{d}{dt}\left( \frac{\partial L}{\partial v^{i}}\right)- \frac{\partial L}{\partial q^{i}} -  \left(\frac{\partial^{2} L}{\partial q^{i} \partial v^{j}}\right)\left( \dot{q}^{j} - v^{j}\right)=\mu_{r}\omega^{r}_{i}(q),\\[5mm]
\displaystyle \frac{\partial^{2} L}{\partial v^{i} \partial v^{j}} \left(\dot{q}^{j}-v^{j}\right)=0,\\[4mm]
\displaystyle \omega_{i}^{r}(q)\dot{q}^{i}=0.
\end{cases}
\end{equation*}
In the hyperregular case, we recover the first-order formulation of the local Lagrange-d'Alembert equations on $TQ$:
\begin{equation*}
\left\{
\begin{array}{l}
\displaystyle\frac{d}{dt}\frac{\partial L}{\partial v^{i}} -\frac{\partial L}{\partial {q}^{i}}=\mu_{r}\omega^{r}_{i}(q),\\[4mm]
\displaystyle\frac{dq^{i}}{dt}=v^{i},\\[4mm]
\omega^{r}_{i}(q)\dot{q}^{i}=0.
\end{array} 
\right.
\end{equation*}

\paragraph{Intrinsic expressions.}
Now we consider the intrinsic expression of the Lagrange--d'Alembert--Dirac variational principle. Recall the distribution $\Delta_{TQ}$ on $TQ$ is defined by
\[
\Delta_{TQ}=(T\tau_{Q})^{-1}(\Delta_{Q}),
\]
which is locally given by, for each $(q,v) \in TQ$,
\[
\Delta_{TQ}(q,v)=\left\{ (q,v,\delta{q},\delta{v}) \in T_{(q,v)}TQ \mid (q,\delta{q}) \in \Delta_{Q}(q)\right\}.
\]
\begin{framed}
\begin{proposition}[The intrinsic Lagrange--d'Alembert--Dirac variational principle]
Consider the action functional for curves $u(t),\;t \in [0,T],$ on $TQ$  joining to fixed endpoints $q_{0}=\tau_{Q}(u(0))$ and $q_{T}=\tau_{Q}(u(T))$ in $Q$: 
\begin{equation*}
\mathcal{S}(u)=\int_{0}^{T} \bigg(\Theta_{L}(u(t))\cdot \dot{u}(t)-E_{L}(u(t)) \bigg)dt.
\end{equation*}

If a curve $u(t)$ is critical of the action functional, i.e, 
\begin{equation*}
\begin{split}
\delta\mathcal{S}(u)=0,
\end{split}
\end{equation*}
with respect to $\delta{u} \in \Delta_{TQ}(u)$ with $T_{u}\tau_{Q}(\delta{u})(0)=T_{u}\tau_{Q} (\delta{u})(T)=0$ and also subject to the kinematic constraint $\dot{u}\in \Delta_{TQ}(u)$, then the curve $u(t)$ satisfies 
the intrinsic Lagrange--d'Alembert--Dirac equations:
\begin{equation}\label{IntrinsicLADEqn}
\mathbf{i}_{\dot{u}(t)}\Omega_{L}(u)-\mathbf{d}E_{L}(u) \in \Delta^{\circ}_{TQ}(u), \quad \dot{u}\in \Delta_{TQ}(u).
\end{equation} 
\end{proposition}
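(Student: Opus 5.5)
We follow the computation in the proof of the intrinsic Lagrange--Dirac variational principle (unconstrained case) and then replace the arbitrariness of $\delta u$ by its restriction to $\Delta_{TQ}(u)$.

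First, take a deformation $u_\epsilon(t)=u(t,\epsilon)$ of the curve $u(t)$ on $TQ$ with $\delta u(t)=\partial_\epsilon u_\epsilon(t)|_{\epsilon=0}\in\Delta_{TQ}(u(t))$. Differentiate under the integral:
\[
\delta\mathcal S(u)=\int_0^T\Big((\mathbf D\Theta_L(u)\cdot\delta u)\cdot\dot u+\Theta_L(u)\cdot\tfrac{d}{dt}\delta u-\mathbf dE_L(u)\cdot\delta u\Big)dt .
\]
Integrate the middle term by parts. The commutation $\partial_t\partial_\epsilon u=\partial_\epsilon\partial_t u$ (so $[\dot u,\delta u]=0$), together with $\Omega_L=-\mathbf d\Theta_L$, gives, exactly as in the preceding remark,
\[
\delta\mathcal S(u)=\int_0^T\big(\mathbf i_{\dot u}\Omega_L(u)-\mathbf dE_L(u)\big)\cdot\delta u\,dt+\big\langle\mathbb FL(u),T_u\tau_Q(\delta u)\big\rangle\Big|_0^T .
\]
The boundary term vanishes by the endpoint condition $T_u\tau_Q(\delta u)(0)=T_u\tau_Q(\delta u)(T)=0$. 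None of this uses nondegeneracy of $\Omega_L$, so it holds for degenerate $L$.

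Second, we localize. Criticality gives $\int_0^T\langle\mathbf i_{\dot u}\Omega_L(u)-\mathbf dE_L(u),\delta u\rangle\,dt=0$ for all admissible $\delta u$. The main point requiring care is that the admissible variations are rich enough. Given any smooth curve $w(t)\in\Delta_{TQ}(u(t))$ compactly supported in $(0,T)$, we must realize it as $\delta u$ of an actual deformation. We would construct $u_\epsilon(t)=\exp_{u(t)}(\epsilon w(t))$ using an auxiliary Riemannian metric on $TQ$. Then $\delta u=w$ by construction, the endpoints are unchanged since $w$ vanishes there, and only first-order information enters $\delta\mathcal S$.

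Note that the kinematic constraint $\dot u\in\Delta_{TQ}(u)$ is imposed on the critical curve, not on the varied curves. This is the standard Lagrange--d'Alembert convention, so we need not require $\dot u_\epsilon\in\Delta_{TQ}$.

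By the fundamental lemma of the calculus of variations applied fiberwise, using a local frame of the regular distribution $\Delta_{TQ}$, the covector $\mathbf i_{\dot u}\Omega_L(u)-\mathbf dE_L(u)$ annihilates $\Delta_{TQ}(u(t))$ for each $t$. That is, it lies in $\Delta_{TQ}^\circ(u)$. Together with the imposed constraint $\dot u\in\Delta_{TQ}(u)$, this is \eqref{IntrinsicLADEqn}.

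Finally, as a consistency check, we note that \eqref{IntrinsicLADEqn} is the same as the intrinsic equations \eqref{intrinsic_LagrangeDiracdynamicalSystem} of Proposition~\ref{Prop: IntLDA_Eqn}, hence equivalent to $(\dot u,\mathbf dE_L(u))\in D_L(u)$. In local coordinates $u=(q,v)$ we have $\Theta_L(u)\cdot\dot u-E_L(u)=L(q,v)+\mathbf D_2L(q,v)\cdot(\dot q-v)$, which recovers the coordinate version of the Lagrange--d'Alembert--Dirac variational principle stated above.
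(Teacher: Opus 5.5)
Your proposal is correct and follows the paper's own argument. You compute the first variation, convert it into $\mathbf{i}_{\dot u}\Omega_L(u)-\mathbf{d}E_L(u)$ via $\Omega_L=-\mathbf{d}\Theta_L$, and kill the boundary term using $\Theta_L(u)\cdot\delta u=\langle\mathbb{F}L(u),T_u\tau_Q(\delta u)\rangle$. Restricting to $\delta u\in\Delta_{TQ}(u)$ then gives the result; your explicit construction of admissible deformations and the fiberwise fundamental lemma just supply the localization step the paper leaves implicit.
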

\end{framed}

\begin{proof}
The critical condition yields
\[
\begin{split}
\delta \mathcal{S}(u)
&=\left.\frac{d}{d\epsilon}\right|_{\epsilon =0}  \int_{0}^{T} \bigg(\Theta_{L}(u_{\epsilon}(t)) \cdot \dot{u}_{\epsilon}(t)-E_{L}(u_{\epsilon}(t)) \bigg) dt\\[2mm]
&=\int_{0}^{T} \bigg(\mathbf{i}_{\dot{u}(t)}\Omega_{L}(u)-\mathbf{d}E_{L}(u)\bigg) \cdot \delta{u}\, dt+\Theta_{L}(u)(t)\cdot \delta{u}_{0}(t)\biggr|_{0}^{T}\\[2mm]
&=0,
\end{split}
\]
for all variations $\delta{u}(t) \in \Delta_{TQ}(u)$, together with the fixed endpoint conditions $T_{u}\tau_{Q} (\delta{u})(0)=T_{u}\tau_{Q} (\delta{u})(T)=0$. Since $\Theta_{L}(u) \cdot \delta{u}\bigr|_{0}^{T}= \mathbb{F}L(u) \cdot T_{u}\tau_{Q}(\delta{u}) \bigr|_{0}^{T}$ vanishes, we get \eqref{IntrinsicLADEqn}.
\end{proof}
\paragraph{Energy conservation.}
Let $u(t),\, t\in [0,T]$ be an integral curve of the Lagrange--d'Alembert--Dirac equations \eqref{IntrinsicLADEqn}. Then, the energy $E_{L}$ is conserved such that
\begin{equation*}
\begin{aligned}
\frac{d}{dt}E_{L}(u(t))&=\mathbf{d} E_{L}(u(t))\cdot \dot{u}(t) \nonumber\\
&=\left(\mathbf{i}_{\dot{u}(t)}\Omega_{L}(u)-\mathbf{d}E_{L}(u)\right) \cdot \dot{u}(t)=0,
\end{aligned}
\end{equation*}
where $\mathbf{i}_{\dot{u}(t)}\Omega_{L}(u)-\mathbf{d}E_{L}(u) \in \Delta^{\circ}_{TQ}(u)$ and $\dot{u}(t) \in \Delta_{TQ}(u(t))$.
%

\subsection{The Euler--Poincar\'e--Dirac variational principle}\label{def:EulPoinDiracVarPrin}
Here, we consider the reduction of the Lagrange--d'Alembert--Dirac principle on the tangent bundle $TG$ of a Lie group $G$.
Assume $L: TG \to \mathbb{R}$ is left invariant and also assume a given constraint distribution $\Delta_{G}$ on $G$ is left invariant.
\begin{definition}
The Lagrange--d'Alembert--Dirac action functional $\mathcal{S}(g,v)$ for curves $(g(t),v(t))$ on $TG$
\begin{equation*}
\mathcal{S}(g,v)=\int_{0}^{T} \bigg\{ L(g(t),v(t)) + \mathbf{D}_{2}L(g(t),v(t))  \cdot \big (\dot{g}(t)-v(t)\big) \bigg\}dt
\end{equation*}
is reduced to the following action functional defined on curves
$(\eta(t),\xi(t))$ in the extended space $\mathfrak{g}\times\mathfrak{g}$:
\begin{equation*}
\mathcal{S}^{/G}(\eta,\xi)=
\int_{0}^{T} \bigg\{
\ell(\eta(t))
+\mathbf{D}\ell(\eta(t)) \cdot \big( \xi(t)-\eta(t) \big)
\bigg\}dt,
\end{equation*}
where $\eta=g^{-1}v$ and $\xi=g^{-1}\dot g$ is subject to the kinematic constraint $\xi\in\mathfrak{g}^\Delta$ and where $\ell(\eta)=L(e,\eta)$ is the reduced Lagrangian.
\medskip

In the above, the variations $\delta{g} \in \Delta_{G}(g) \subset TG$ and $\delta{v} \in T_{v}TG$ of the curves $g(t)$ and $v(t)$ induce the variations on the reduced space as $\zeta=g^{-1}\delta g\in \mathfrak{g}^\Delta \subset \mathfrak{g}$ and 
$\delta{\eta}=g^{-1}\delta{v} \in \mathfrak{g}$. Furthermore, $\delta{\dot{g}}$ is reduced to the variation of $\xi=g^{-1}\dot{g}$, which is given by the formula
\[
 \delta{\xi}=\dot \zeta+[\xi, \zeta]=\dot \zeta+\mathrm{ad}_{\xi}\zeta,
\]
where the fixed endpoint conditions $\delta{g}(0)=\delta{g}(T)=0$ are reduced to $\zeta (0)=\zeta (T)=0$.
\end{definition}

Now, we have the following proposition regarding the reduction of the Lagrange--d'Alembert---Dirac principle on the velocity phase space over a Lie group.

\begin{framed}
\begin{proposition}[Euler--Poincar\'e--Dirac variational principle]
Let $(\eta(t),\xi(t))$, $t \in [0, T]$, be a curve on
$\mathfrak{g} \times \mathfrak{g}$. If a curve $(\eta(t),\xi(t))$ is a critical point of the reduced action functional, i.e.,  
\[
\delta \mathcal{S}^{/G}(\eta,\xi)=
\delta\int_{0}^{T} \bigg\{ \ell(\eta(t)) + \mathbf{D}\ell(\eta(t)) \cdot \big(\xi(t)-\eta(t)\big) \bigg\}dt=0
\]
with respect to the variations of the form
\[
 \delta{\xi}=\dot \zeta+[\xi, \zeta]=\dot \zeta+\mathrm{ad}_{\xi}\zeta \in \mathfrak{g}^{\Delta},
\]
where $\zeta \in \mathfrak{g}^\Delta$ vanishes at the endpoints, i.e., $\zeta(0)=\zeta(T)=0$ and also subject to the kinematic constraints $\xi \in \mathfrak{g}^{\Delta}$, then the curve $(\eta, \xi)$ on $\mathfrak{g} \times \mathfrak{g}$ satisfies the Euler--Poincar\'e--Dirac equations \eqref{RedLagDADirac}:
\begin{equation*}
\begin{cases}
\displaystyle \frac{d}{dt} \mathbf{D}\ell(\eta)
- \mathrm{ad}^{\ast}_{\xi} \mathbf{D}\ell(\eta)
\in (\mathfrak{g}^\Delta)^\circ,\\[5mm]
\displaystyle \mathbf{D}^{2}\ell(\eta) \cdot (\xi - \eta) = 0,\\[4mm]
\displaystyle \xi  \in \mathfrak{g}^{\Delta}.
\end{cases}
\end{equation*}

In the hyperregular case, the above system recovers the \textit{Euler--Poincar\'e--Suslov equations}:
\begin{equation*}
\displaystyle
\frac{d}{dt}\mathbf{D}\ell(\eta) - \mathrm{ad}_{\xi}^{\ast} \mathbf{D}\ell(\eta) \in (\mathfrak{g}^{\Delta})^{\circ}, \quad \xi=\eta, \quad \xi  \in \mathfrak{g}^{\Delta},
\end{equation*}
which recovers the Euler--Poincar\'e equations in unconstrained case.
\end{proposition}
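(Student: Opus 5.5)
The plan is a direct first-variation computation of $\mathcal{S}^{/G}$. The computation follows the pattern of the unreduced Lagrange--d'Alembert--Dirac principle, with the constrained variations $\delta\xi=\dot\zeta+\mathrm{ad}_\xi\zeta$ playing the role of $\delta\dot q$. In the reduced functional, $\eta$ is varied freely by $\delta\eta\in\mathfrak g$, and $\xi$ is varied by $\delta\xi=\dot\zeta+\mathrm{ad}_\xi\zeta$ with $\zeta\in\mathfrak g^\Delta$ and $\zeta(0)=\zeta(T)=0$. Differentiating under the integral gives
\[
\delta\mathcal S^{/G}=\int_0^T\Big\{\mathbf D\ell(\eta)\cdot\delta\eta+\big(\mathbf D^2\ell(\eta)\cdot\delta\eta\big)\cdot(\xi-\eta)+\mathbf D\ell(\eta)\cdot(\delta\xi-\delta\eta)\Big\}dt .
\]
The two $\mathbf D\ell(\eta)\cdot\delta\eta$ terms cancel. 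By the symmetry of the Hessian, the remaining $\delta\eta$-term can be written as $\big(\mathbf D^2\ell(\eta)\cdot(\xi-\eta)\big)\cdot\delta\eta$.

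Next, substitute $\delta\xi=\dot\zeta+\mathrm{ad}_\xi\zeta$ into the last term. Use the definition of the coadjoint operator, $\langle \mathbf D\ell,\mathrm{ad}_\xi\zeta\rangle=\langle\mathrm{ad}^*_\xi\mathbf D\ell,\zeta\rangle$, with the left-invariant sign convention fixed earlier in the paper. Integrate $\langle\mathbf D\ell(\eta),\dot\zeta\rangle$ by parts; the boundary term $\langle\mathbf D\ell(\eta),\zeta\rangle|_0^T$ vanishes because $\zeta$ vanishes at the endpoints. This yields
\[
\delta\mathcal S^{/G}=\int_0^T\Big\{\big\langle \mathbf D^2\ell(\eta)\cdot(\xi-\eta),\delta\eta\big\rangle-\Big\langle \tfrac{d}{dt}\mathbf D\ell(\eta)-\mathrm{ad}^*_\xi\mathbf D\ell(\eta),\zeta\Big\rangle\Big\}dt .
\]

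Since $\delta\eta$ is arbitrary in $\mathfrak g$, the first term gives $\mathbf D^2\ell(\eta)\cdot(\xi-\eta)=0$. Since $\zeta$ is an arbitrary curve in $\mathfrak g^\Delta$ vanishing at the endpoints, the fundamental lemma applied within the subspace $\mathfrak g^\Delta$ gives $\frac{d}{dt}\mathbf D\ell(\eta)-\mathrm{ad}^*_\xi\mathbf D\ell(\eta)\in(\mathfrak g^\Delta)^\circ$. The kinematic constraint $\xi\in\mathfrak g^\Delta$ is imposed directly. Together these are exactly the Euler--Poincar\'e--Dirac equations \eqref{RedLagDADirac}. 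In the hyperregular case, $\mathbf D^2\ell(\eta)$ is invertible, so $\xi=\eta$, which gives the Euler--Poincar\'e--Suslov equations \eqref{EulPoiEqn}. If further $\mathfrak g^\Delta=\mathfrak g$, the annihilator is zero and we obtain the Euler--Poincar\'e equations, consistent with Proposition~\ref{Prop:RedLagDiracDynSys}.

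The main point requiring care is the treatment of the variations. The variation $\delta\xi$ is not free but constrained to the form $\dot\zeta+\mathrm{ad}_\xi\zeta$. One must check that $\zeta$ and $\delta\eta$ remain independent, so that the two conditions can be separated. One must also track the sign convention of $\mathrm{ad}^*$ (left versus right trivialization) so that the resulting equation matches \eqref{RedLagDADirac}. The stated requirement $\delta\xi\in\mathfrak g^\Delta$ is not needed for the derivation; only $\zeta\in\mathfrak g^\Delta$ is used.
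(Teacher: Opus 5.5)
Your proposal is correct and follows the paper's proof essentially step for step: expand the first variation, use the symmetry of $\mathbf{D}^2\ell(\eta)$, substitute $\delta\xi=\dot\zeta+\mathrm{ad}_\xi\zeta$, integrate by parts with $\zeta(0)=\zeta(T)=0$, and separate the independent $\delta\eta$ and $\zeta\in\mathfrak{g}^\Delta$ contributions. One small correction to your closing remark: the condition $\delta\xi\in\mathfrak{g}^\Delta$ is not merely unused, it must not be read as a restriction on $\zeta$, and like the paper you are in effect taking every $\zeta\in\mathfrak{g}^\Delta$ that vanishes at the endpoints as admissible (when $\mathfrak{g}^\Delta$ is not a subalgebra, $[\xi,\zeta]$ generally leaves $\mathfrak{g}^\Delta$, so imposing the condition would shrink the class of test curves and block the fundamental-lemma step).
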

\end{framed}
\begin{proof}
Taking the variation of the reduced action functional, we have:
\[
\begin{split}
\delta \mathcal{S}^{/G}(\eta, \xi)
&=\delta\int_{0}^{T} \bigg\{ \ell(\eta) + \mathbf{D}\ell(\eta) \cdot \big (\xi-\eta \big) \bigg\}dt\\[2mm]
&=\int_{0}^{T} \bigg\{
\bigg(\mathbf{D}^2 \ell(\eta)\cdot (\xi-\eta)\bigg) \cdot \delta \eta
+ \mathbf{D}\ell(\eta) \cdot \delta \xi  \bigg\}dt.
\end{split}
\]

Substituting $\delta \xi = \dot{\zeta} + \mathrm{ad}_\xi \zeta$ and using integration by parts together with $\zeta(0)=\zeta(T)=0$, the critical condition is given by
\[
\int_{0}^{T} \bigg\{
\bigg(\mathbf{D}^{2}\ell(\eta) \cdot (\xi -\eta)\bigg) \cdot \delta \eta
+ \bigg( -\frac{d}{dt} \mathbf{D}\ell(\eta) + \mathrm{ad}^{\ast}_{\xi} \mathbf{D}\ell(\eta)\bigg) \cdot \zeta
\bigg\}dt=0,
\]
for all $\delta \eta$ and for all $\zeta \in \mathfrak{g}^\Delta$.

By taking variations with respect to $\delta \eta$ and for all $\zeta \in \mathfrak{g}^\Delta$, we obtain the Euler--Poincar\'e--Dirac equations \eqref{RedLagDADirac}.
\medskip

In the hyperregular case, since the Hessian $\mathbf{D}^{2}\ell(\eta)$ is nonsingular, the second equation implies the second-order condition $\xi=\eta$.
%
%
Then, we recover the conventional Euler--Poincar\'e--Suslov equations.
\end{proof}
The above reduced Lagrange--d'Alembert--Dirac principle provides the variational structure associated with the Euler--Poincar\'e--Dirac equations, given in Proposition~\ref{Prop:RedLagDiracDynSys}, which includes the Euler--Poincar\'e--Suslov equations in the hyperregular case and the Euler--Poincar\'e equations in the unconstrained case. In this sense, we shall call this reduced variational principle the \textbf{Euler--Poincar\'e--Dirac variational principle}.

\begin{definition}[Intrinsic variational derivatives]
Here we define the intrinsic expression for the variation of the reduced Lagrange--d'Alembert--Dirac action functional.

For a curve $u=(g,v)\in TG$, the tangent vectors $\dot{u}=(g,v,\dot{g},\dot{v})$ and $\delta{u}=(g,v,\delta{g},\delta{v})$ are reduced to $(\eta, \xi, \dot{\eta})=(g^{-1}v, g^{-1}\dot{g}, g^{-1}\dot{v})$ and $ (\eta, \zeta, \delta{\eta})=(g^{-1}v, g^{-1}\delta{g}, g^{-1}\delta{v})\in A=\mathfrak{g} \times (\mathfrak{g} \times \mathfrak{g})$. 

\quad Let $\mathcal{E}_{\ell}(\eta)=\mathbb{F}\ell(\eta) \cdot \eta - \ell(\eta)$ be the reduced Lagrangian on $\mathfrak{g}$. Let $\theta_{\ell}^{/G}:=\theta_{\ell}/G \in \Gamma(\Lambda^{1}(A))$ be the reduced Lagrangian one-form given in Definition \ref{Def:RedLagOneForm}, and 
let $\omega_{\ell}^{/G}:= \omega_{\ell}/G \in \Gamma\left(\Lambda^{2}(A)\right)$  the reduced Lagrangian two-form given in Definition \ref{Def:RedLagTwoForm}, where we note that there exists the relation:
\[
\omega_{\ell}^{/G}=-\mathbf{d}^{/G}\theta_\ell^{/G}.
\]

For a reduced curve $(\eta(t),\xi(t)), \,t \in[0,T]$ on $\mathfrak{g} \times \mathfrak{g}$, we consider the reduced action functional:

\begin{equation}\label{Def:RedVarPrinTan}
\mathcal{S}^{/G}(\eta, \xi):=\int_{0}^{T} \bigg(\theta^{/G}_{\ell}(\eta) \cdot (\xi, \dot{\eta})-\mathcal{E}_{\ell}(\eta) \bigg)dt.
\end{equation}

The variation of the reduced action functional is defined by
\begin{equation*}
\begin{split}
\delta \mathcal{S}^{/G}(\eta, \xi) &:= \mathbf{d}^{/G} \mathcal{S}^{/G}(\eta, \xi) \cdot (\zeta,\delta{\eta}) \\[2mm]
&=\int_{0}^{T} \bigg(-\mathbf{i}_{(\xi, \dot{\eta})}\mathbf{d}^{/G}\theta^{/G}_{\ell}(\eta)-\mathbf{d}^{/G}\mathcal{E}_{\ell}(\eta)\bigg)\cdot (\zeta,\delta{\eta}) dt
+\theta^{/G}_{\ell}(\eta)\cdot (\zeta,\delta{\eta})\biggr|_{0}^{T}\\[2mm]
&=\int_{0}^{T} \bigg(\mathbf{i}_{(\xi, \dot{\eta})}\omega^{/G}_{\ell}(\eta)-\mathbf{d}^{/G}\mathcal{E}_{\ell}(\eta) \bigg)\cdot  (\zeta,\delta{\eta}) dt +\theta^{/G}_{\ell}(\eta)\cdot (\zeta,\delta{\eta})\biggr|_{0}^{T},
\end{split}
\end{equation*}
where we use the relation $\omega_{\ell}^{/G}=-\mathbf{d}^{/G}\theta_\ell^{/G}$.
\medskip

\quad Now, imposing the endpoint fixed conditions $\zeta(0)=\zeta(T)=0$, we have   
\begin{equation*}
\begin{split}
\delta \mathcal{S}^{/G}(\eta, \xi) &=\int_{0}^{T} \bigg(\mathbf{i}_{(\xi, \dot{\eta})}\omega^{/G}_{\ell}(\eta)-\mathbf{d}^{/G}\mathcal{E}_{\ell}(\eta)\bigg) \cdot  (\zeta,\delta{\eta}) dt.
\end{split}
\end{equation*}

\end{definition}

\begin{framed}
\begin{proposition}[The intrinsic reduced Lagrange--d'Alembert--Dirac variational principle]
Let $\theta_\ell^{/G}$ be the reduced canonical one-form, and $\mathcal{E}_\ell$ be the reduced energy function as before.
Let $\Delta_{\mathfrak g}=\mathfrak{g} \times (\mathfrak{g}^{\Delta} \times \mathfrak{g}) \subset A=\mathfrak{g} \times (\mathfrak{g} \times \mathfrak{g})$ be a reduced constraint subbundle given in Definition \ref{Def:LagDiracRed}.
 
 Consider the reduced action functional $\mathcal{S}^{/G}(\eta, \xi)$ on $\mathfrak{g} \times \mathfrak{g}$, as in \eqref{Def:RedVarPrinTan}.
If a reduced curve $(\eta(t), \xi(t)),\;t \in [0,T]$ on $\mathfrak{g} \times \mathfrak{g}$ is a critical point of the reduced Lagrange--d'Alembert--Dirac functional $\mathcal{S}^{/G}(\eta, \xi)$, i.e., it satisfies
\begin{equation*}
\delta\mathcal{S}^{/G}(\eta, \xi)=\mathbf{d}^{/G} \mathcal{S}^{/G}(\eta, \xi) \cdot (\zeta,\delta{\eta})=0,
\end{equation*}
for variations $(\zeta,\delta{\eta}) \in \Delta_{\mathfrak{g}}$ with $\zeta(0)=\zeta(T)=0$, and subject to the nonholonomic constraint
\[
(\xi,\dot\eta)
\in \Delta_{\mathfrak g}(\eta),
\]
then the curve $(\eta(t), \xi(t))$ satisfies the intrinsic reduced Lagrange--d'Alembert--Dirac equations:
\begin{equation}\label{IntRedLADeq}
\mathbf{i}_{(\xi, \dot{\eta})}\omega^{/G}_{\ell}(\eta)-\mathbf{d}^{/G}\mathcal{E}_{\ell}(\eta) \in (\Delta_{\mathfrak{g}}(\eta))^{\circ},\quad (\xi,\dot{\eta}) \in \Delta_{\mathfrak{g}}(\eta).
\end{equation}
\end{proposition}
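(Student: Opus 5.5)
The argument follows the unreduced intrinsic Lagrange--d'Alembert--Dirac principle, with $\mathbf{d}$ replaced by $\mathbf{d}^{/G}$ and $\Delta_{TQ}$ replaced by $\Delta_{\mathfrak g}$. The first step is the variational formula established in the preceding Definition (intrinsic variational derivatives). That formula comes from differentiating $\theta_\ell^{/G}(\eta)\cdot(\xi,\dot\eta)$ in the direction $(\zeta,\delta\eta)$, integrating by parts, and using $\omega_\ell^{/G}=-\mathbf{d}^{/G}\theta_\ell^{/G}$. It gives
\[
\delta\mathcal{S}^{/G}(\eta,\xi)=\int_0^T\Big(\mathbf{i}_{(\xi,\dot\eta)}\omega_\ell^{/G}(\eta)-\mathbf{d}^{/G}\mathcal{E}_\ell(\eta)\Big)\cdot(\zeta,\delta\eta)\,dt+\theta_\ell^{/G}(\eta)\cdot(\zeta,\delta\eta)\Big|_0^T .
\]
The boundary term is $\langle\mathbf{D}\ell(\eta),\zeta\rangle|_0^T$ by Definition \ref{Def:RedLagOneForm}. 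It therefore vanishes under $\zeta(0)=\zeta(T)=0$, just as $\Theta_L\cdot\delta u$ reduced to $\mathbb{F}L\cdot T\tau_Q(\delta u)$ in the unreduced case.

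Next, criticality is imposed for all admissible $(\zeta,\delta\eta)\in\Delta_{\mathfrak g}(\eta)=\mathfrak g^\Delta\times\mathfrak g$ with $\zeta$ vanishing at the endpoints. The integrand covector is independent of the variation, and the admissible variations are arbitrary pointwise in the interior. By the fundamental lemma of the calculus of variations, applied pointwise in $t$ with interior-supported variations, the covector $\mathbf{i}_{(\xi,\dot\eta)}\omega_\ell^{/G}(\eta)-\mathbf{d}^{/G}\mathcal{E}_\ell(\eta)$ annihilates $\Delta_{\mathfrak g}(\eta)$. That is exactly the first relation in \eqref{IntRedLADeq}. The kinematic constraint $(\xi,\dot\eta)\in\Delta_{\mathfrak g}(\eta)$, i.e.\ $\xi\in\mathfrak g^\Delta$, is imposed by hypothesis and supplies the second relation.

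As a consistency check, I would expand the covector using \eqref{RedLagForm} and \eqref{Loc_dReducedEnergy}. Its $\delta\eta$-component is $\mathbf{D}^2\ell(\eta)\cdot\xi-\mathbf{D}^2\ell(\eta)\cdot\eta$, and its $\zeta$-component is $-\mathbf{D}^2\ell(\eta)\cdot\dot\eta+\mathrm{ad}^*_\xi\mathbf{D}\ell(\eta)$. The annihilator condition then reproduces the Euler--Poincar\'e--Dirac equations \eqref{RedLagDADirac}, in agreement with the non-intrinsic proposition.

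The main subtle step is justifying the variational formula itself at the reduced level. There, $\theta_\ell^{/G}$ is a section of $\Lambda^1(A)$ rather than a genuine one-form on $\mathfrak g$, and the reduced variation $\delta\xi=\dot\zeta+\mathrm{ad}_\xi\zeta$ has a bracket term. I would handle this by pulling back to the $G$-invariant setting on $G\times\mathfrak g$. There the unreduced computation, including the vanishing of $[\dot u,\delta u]$, is valid, and the commuting diagram of Definition \ref{Def:RedExtDer} transports it to the quotient. The bracket term in $\delta\xi$ is then exactly what $\mathbf{d}^{/G}$ (the Atiyah algebroid differential) encodes through the $\langle\mathbf{D}\ell,[\xi,\zeta]\rangle$ term of $\omega_\ell^{/G}$.
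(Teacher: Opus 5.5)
Your proposal is correct and follows the paper's proof. You invoke the variational formula from the preceding definition of intrinsic variational derivatives, and you kill the boundary term $\theta_\ell^{/G}(\eta)\cdot(\zeta,\delta\eta)\big|_0^T=\langle\mathbf{D}\ell(\eta),\zeta\rangle\big|_0^T$ using $\zeta(0)=\zeta(T)=0$. You then read off the annihilator condition for variations in $\Delta_{\mathfrak g}(\eta)=\mathfrak g^\Delta\times\mathfrak g$, with the kinematic constraint imposed by hypothesis. The paper simply takes that variational formula as the definition of $\delta\mathcal{S}^{/G}$, so your justification via the $G$-invariant computation on $G\times\mathfrak g$ goes beyond what the paper provides, and it is sound. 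Your consistency check against the Euler--Poincar\'e--Dirac equations is also correct.
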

\end{framed}
\begin{proof}

A critical curve $(\eta(t), \xi(t))$ satisfies 
\begin{equation}\label{VarPrinTan_revised}
\begin{split}
\delta \mathcal{S}^{/G}(\eta, \xi) &=\mathbf{d}^{/G} \mathcal{S}^{/G}(\eta, \xi) \cdot (\zeta,\delta{\eta})\\[2mm]
&= \int_{0}^{T} \left( \mathbf{i}_{(\xi, \dot{\eta})}\omega^{/G}_{\ell}(\eta) - \mathbf{d}^{/G}\mathcal{E}_{\ell}(\eta) \right) \cdot (\zeta, \delta{\eta}) \, dt=0,
\end{split}
\end{equation}
for all variations $(\zeta,\delta{\eta}) \in \Delta_{\mathfrak{g}}=\mathfrak{g}^{\Delta} \times \mathfrak{g}$, together with the nonholonomic constraint $(\xi(t),\dot\eta(t)) \in \Delta_{\mathfrak g}$, where the boundary term $\theta^{/G}_{\ell}(\eta)\cdot (\zeta,\delta{\eta})\bigr|_{0}^{T}$ vanishes since the endpoints of the base curve $g(t)$ are fixed, implying $\zeta(0)=\zeta(T)=0$.

Then, the curve $(\eta(t), \xi(t))$ satisfies the intrinsic reduced Lagrange--d'Alembert--Dirac equations:
\[
\mathbf{i}_{(\xi, \dot{\eta})}\omega^{/G}_{\ell}(\eta)-\mathbf{d}^{/G}\mathcal{E}_{\ell}(\eta) \in (\Delta_{\mathfrak{g}}(\eta))^{\circ},\quad (\xi,\dot{\eta}) \in \Delta_{\mathfrak{g}}(\eta),
\]
where $(\Delta_{\mathfrak g}(\eta))^{\circ}$ is the annihilator of $\Delta_{\mathfrak g}(\eta)=\mathfrak{g}^{\Delta} \times \mathfrak{g}$.
\end{proof}

The reduced Lagrange--d'Alembert--Dirac equations obtained in \eqref{IntRedLADeq} admit the degenerate Lagrangian systems, which yields an implicit constrained dynamics including nonholonomic constraints and the constraints due to degeneracy of the Lagrangian on $TQ$, as shown in the following corollary.

\begin{corollary}
The intrinsic reduced Lagrange--d'Alembert--Dirac equation in \eqref{IntRedLADeq} naturally yields the Euler--Poincar\'e--Dirac equations:
\begin{equation*}
\begin{cases}
\displaystyle  \frac{d}{dt}\mathbf{D}\ell(\eta)  - \mathrm{ad}^{\ast}_{\xi}\mathbf{D}\ell(\eta) \in (\mathfrak{g}^\Delta)^{\circ},\\[2mm]
\displaystyle \mathbf{D}^{2}\ell(\eta) \cdot (\eta-\xi)=0,\\[2mm]
\displaystyle \xi \in \mathfrak{g}^{\Delta}.
\end{cases}
\end{equation*}

In the hyperregular case, we recover the Euler--Poincar\'e--Suslov equations:
\[
\frac{d}{dt}\mathbf{D}\ell(\eta)
 - \operatorname{ad}_{\xi}^{\;\ast} \mathbf{D}\ell(\eta)\in (\mathfrak{g}^{\Delta})^{\circ},  \quad \xi = \eta, \quad \xi \in \mathfrak{g}^{\Delta}.
\]
\end{corollary}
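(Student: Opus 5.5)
The plan is to unpack the intrinsic inclusion \eqref{IntRedLADeq} componentwise, using the explicit formulas for the reduced objects already at hand. By \eqref{RedLagForm}, for $(\zeta,\delta\eta)\in\Delta_{\mathfrak g}(\eta)=\mathfrak g^{\Delta}\times\mathfrak g$ the first term is
\[
\big(\mathbf{i}_{(\xi,\dot\eta)}\omega^{/G}_{\ell}(\eta)\big)\cdot(\zeta,\delta\eta)
=\langle \mathbf D^2\ell(\eta)\cdot\delta\eta,\xi\rangle-\langle \mathbf D^2\ell(\eta)\cdot\dot\eta,\zeta\rangle+\langle \mathrm{ad}^{\ast}_{\xi}\mathbf D\ell(\eta),\zeta\rangle .
\]
For the second term I will use \eqref{Loc_dReducedEnergy}: $\mathbf d^{/G}\mathcal E_\ell(\eta)\cdot(\zeta,\delta\eta)=\langle \mathbf D^2\ell(\eta)\cdot\eta,\delta\eta\rangle$, the $\zeta$-component vanishing by $G$-invariance of $\overline{E}_L$.

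Next I will subtract the two expressions and use the symmetry of the Hessian to rewrite $\langle \mathbf D^2\ell(\eta)\cdot\delta\eta,\xi\rangle-\langle \mathbf D^2\ell(\eta)\cdot\eta,\delta\eta\rangle=\langle \mathbf D^2\ell(\eta)\cdot(\xi-\eta),\delta\eta\rangle$, together with the chain rule $\frac{d}{dt}\mathbf D\ell(\eta)=\mathbf D^2\ell(\eta)\cdot\dot\eta$. The inclusion in $(\Delta_{\mathfrak g}(\eta))^\circ$ then reads
\[
\Big\langle -\tfrac{d}{dt}\mathbf D\ell(\eta)+\mathrm{ad}^{\ast}_{\xi}\mathbf D\ell(\eta),\zeta\Big\rangle+\langle \mathbf D^2\ell(\eta)\cdot(\xi-\eta),\delta\eta\rangle=0
\]
for all $\zeta\in\mathfrak g^\Delta$ and all $\delta\eta\in\mathfrak g$. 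Choosing $\zeta=0$ with $\delta\eta$ arbitrary gives $\mathbf D^2\ell(\eta)\cdot(\eta-\xi)=0$. Choosing $\delta\eta=0$ with $\zeta\in\mathfrak g^\Delta$ arbitrary gives $\frac{d}{dt}\mathbf D\ell(\eta)-\mathrm{ad}^{\ast}_{\xi}\mathbf D\ell(\eta)\in(\mathfrak g^\Delta)^\circ$. The condition $(\xi,\dot\eta)\in\Delta_{\mathfrak g}(\eta)$ is exactly $\xi\in\mathfrak g^\Delta$. This is the same splitting of $(\Delta_{\mathfrak g}(\eta))^\circ=(\mathfrak g^\Delta)^\circ\times\{0\}$ used in the proof of \eqref{LocReducedLagDiracAlt}, so I could alternatively just note that \eqref{IntRedLADeq} is the statement $((\xi,\dot\eta),\mathbf d^{/G}\mathcal E_\ell(\eta))\in D^{/G}_\ell(\eta)$ and invoke Proposition \ref{Prop:RedLagDiracDynSys}.

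In the hyperregular case, $\mathbf D^2\ell(\eta)$ is invertible, so the second equation forces $\xi=\eta$, and substituting gives the Euler--Poincar\'e--Suslov equations. The only delicate point is bookkeeping of signs and of which slot carries the Hessian in \eqref{RedLagForm}, namely that the energy term pairs against $\delta\eta$ and not against $\zeta$. I will check this against the unreduced Proposition \ref{Prop: IntLDA_Eqn}(ii) to confirm that the sign of $\mathrm{ad}^\ast$ matches the left-invariant convention.
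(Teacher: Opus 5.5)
Your proposal is correct and follows essentially the same route as the paper. You substitute the explicit formulas for $\omega^{/G}_{\ell}$ and $\mathbf{d}^{/G}\mathcal{E}_{\ell}$, regroup into a Hessian part paired with $\delta\eta$ and an Euler--Poincar\'e part paired with $\zeta$, and use the independence of $\zeta\in\mathfrak{g}^{\Delta}$ and $\delta\eta\in\mathfrak{g}$; in the hyperregular case, invertibility of $\mathbf{D}^{2}\ell(\eta)$ gives $\xi=\eta$. The only difference is that the paper starts from the integrated criticality condition, whereas you work pointwise with the inclusion, which is immaterial.
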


\begin{proof}
From the criticality condition \eqref{VarPrinTan_revised}, we have
\begin{equation*}
\begin{split}
\delta \mathcal{S}^{/G}(\eta, \xi) &=\int_{0}^{T} \left( \mathbf{i}_{(\xi, \dot{\eta})}\omega^{/G}_{\ell}(\eta) - \mathbf{d}^{/G}\mathcal{E}_{\ell}(\eta) \right) \cdot (\zeta, \delta{\eta}) \, dt \\[2mm]
&= \int_{0}^{T} \bigg\{ \omega^{/G}_{\ell}(\eta)\big((\xi, \dot{\eta}), (\zeta, \delta{\eta}) \big) - \mathbf{d}^{/G}\mathcal{E}_{\ell}(\eta)  \cdot \big(\zeta, \delta{\eta}\big)\bigg\}dt=0, 
\end{split}
\end{equation*}
for all variations $(\zeta,\delta{\eta}) \in \Delta_{\mathfrak{g}}=\mathfrak{g}^{\Delta} \times \mathfrak{g}$, together with the nonholonomic constraint $(\xi(t),\dot\eta(t)) \in \Delta_{\mathfrak g}$.
\medskip

The reduced Lagrangian two-form $\omega_\ell^{/G}:= \omega_\ell/G \in \Gamma\!\left(\Lambda^2(A)\right)$ is given, for $(\eta,\xi,\dot{\eta})$, $(\eta,\zeta,\delta{\eta}) \in A=\mathfrak{g} \times (\mathfrak{g} \times \mathfrak{g})$, by
\begin{equation*}
\omega_{\ell}^{/G}(\eta)((\xi,\dot{\eta}),(\zeta,\delta{\eta}))
= \left\langle \mathbf{D}^{2} \ell(\eta) \cdot \delta{\eta}, \xi \right\rangle
- \left\langle \mathbf{D}^{2} \ell(\eta) \cdot \dot{\eta}, \zeta \right\rangle
+ \left\langle \mathbf{D} \ell(\eta), [\xi,\zeta] \right\rangle,
\end{equation*}
and the reduced differential of the reduced energy is a map $\mathbf{d}^{/G}\mathcal{E}_{\ell}: \mathfrak{g} \to A^\ast = \mathfrak{g}\times (\mathfrak{g}^{\ast} \times \mathfrak{g}^{\ast})$ given by
\[
\mathbf{d}^{/G}\mathcal{E}_{\ell}(\eta) \cdot (\zeta,\delta{\eta})=\mathbf{D}\mathcal{E}_{\ell}(\eta) \cdot \delta{\eta}=\big(\mathbf{D}^{2}\ell(\eta) \cdot \eta\big) \cdot \delta{\eta}.
\]
Substituting these local expressions into the criticality condition, we have
\begin{equation*}
\left\langle \mathbf{D}^{2} \ell(\eta) \cdot \delta{\eta}, \xi \right\rangle
- \left\langle \mathbf{D}^{2} \ell(\eta) \cdot \dot{\eta}, \zeta \right\rangle
+ \left\langle \mathbf{D} \ell(\eta), [\xi,\zeta] \right\rangle
=\left\langle \mathbf{D}^{2}\ell(\eta) \cdot \eta,  \delta{\eta} \right\rangle,
\end{equation*}
and therefore
\begin{equation*}
\underbrace{\left\langle \mathbf{D}^{2} \ell(\eta) \cdot (\xi - \eta), \delta{\eta} \right\rangle}_{\text{Hessian part}} + \underbrace{\left\langle -\frac{d}{dt}\mathbf{D} \ell(\eta) + \operatorname{ad}^{\ast}_{\xi}\mathbf{D} \ell(\eta), \zeta \right\rangle}_{\text{Euler-Poincar\'e part}} = 0.
\end{equation*}
Since this holds for all variations $\zeta \in \mathfrak{g}^{\Delta}$ and for all $\delta{\eta} \in \mathfrak{g}$, the coefficients must vanish (or lie in the appropriate annihilator), yielding the Euler--Poincar\'e--Dirac equations:
\[
\frac{d}{dt}\mathbf{D} \ell(\eta)-\operatorname{ad}^{\ast}_{\xi}\mathbf{D}\ell(\eta) \in \big(\mathfrak{g}^{\Delta}\big)^{\circ}, \qquad \mathbf{D}^{2}\ell(\eta) \cdot \big(\xi-\eta \big)=0, \qquad \xi \in \mathfrak{g}^{\Delta}.
\]
For the hyperregular case, since the Hessian $\mathbf{D}^{2}\ell(\eta)$ is nondegenerate, the second equation recovers $\xi=\eta$. Thus, we obtain the Euler--Poincar\'e--Suslov equations. 
\end{proof}

\paragraph{Summary of the Euler--Poincar\'e--Dirac reduction theory.} 
Now, we summarize the Euler--Poincar\'e--Dirac reduction theory, which enables us to unify various formulations for the Lagrange--Dirac dynamical systems over a Lie group in the following theorem.

\begin{framed}
\begin{theorem}[Euler--Poincar\'e--Dirac reduction theory] \label{theorem_HVPSPrin}
Let $G$ be a Lie group and $\Delta_{G}$ be a left invariant distribution on $G$. Denote by $D_{L}$ a Lagrange--Dirac structure on $TG$. Let $L: TG \to \mathbb{R}$ be a left invariant Lagrangian, possibly degenerate, and let $E_{L}: TG \to \mathbb{R}$ be the associated energy. Let $\ell : \mathfrak{g} \to \mathbb{R}$ be the reduced Lagrangian of $L$ and $\mathcal{E}_{\ell}: \mathfrak{g} \to \mathbb{R}$ be the reduced energy.
\medskip

\begin{itemize}
\item[(1)]
In the unreduced level, the following statements are equivalent:

\begin{itemize}

\item[(i)]\textbf{Lagrange--Dirac variational principle:} A curve $(g(t),v(t)),\;t \in [0,T]$ on $TG$ is a critical point of the local Lagrange--d'Alembert --Dirac action functional:
\begin{equation*}
\mathcal{S}(g,v)=\int_{0}^{T} \bigg\{ L(g,v) + \mathbf{D}_{2}L(g,v)  \cdot \big(\dot{g}-v\big) \bigg\}dt,
\end{equation*}
namely, if it satisfies
\begin{equation*}
\delta\mathcal{S}(g,v)=0,
\end{equation*}
for all chosen variations  $\delta{g} \in \Delta_{G}(g)$ with the fixed endpoint conditions $\delta{g}(0)=\delta{g}(T)=0$, for all $\delta{v}$, and also subject to the kinematic constraint $\dot{g}\in \Delta_{G}(g)$. 

\item[(ii)] \textbf{Local equations of motion:}
A curve $(g(t), v(t)),\;t \in [0,T]$  on $TG$ satisfies the Lagrange--d'Alembert--Dirac equations: 
\begin{equation*}
\begin{cases}
\displaystyle\vspace{0.2cm} 
 \frac{d}{dt} \mathbf{D}_2 L(g,v)- \mathbf{D}_{1}L(g,v) +\mathbf{D}_{1}\mathbf{D}_{2}L(g,v)\cdot \left(v -
 \dot{g} \right) \in \Delta_{G}^\circ(g),\\[2mm]
\mathbf{D}_{2}\mathbf{D}_{2} L(g,v)\cdot (v-\dot{g})=0, \\[2mm]
\dot{g} \in \Delta_{G}(g).
\end{cases}
\end{equation*}
When $L$ is hyperregular, it satisfies the Lagrange--d'Alembert equations on $TG$:
\begin{equation*}
\displaystyle \frac{d}{dt}\mathbf{D}_2 L(g,v) -\mathbf{D}_{1}L(g,v) \in \Delta_{G}^{\circ}(g),\quad \dot{g} =v,\quad \dot{g} \in \Delta_{G}(g).
\end{equation*}
\item[(iii)] \textbf{Lagrange--Dirac dynamics:} A curve $(g(t), v(t)),\;t \in [0,T]$  on $TG$ is a solution curve of the Lagrange--Dirac dynamical system $(E_{L}, D_{L})$ that satisfies
\[
\big((\dot{g},\dot{v}), \mathbf{d}E_{L}(g, v)\big) \in D_{L}(g, v).
\]

\item[(iv)] \textbf{Intrinsic variational principle:}
A curve $u(t),\;t \in [0,T],$ on $TG$ is a critical point of the action functional 
\begin{equation*}
\begin{split}
\mathcal{S}(u)=\int_{0}^{T} \bigg\{\Theta_{L}(u)(\dot{u})-E_{L}(u) \bigg\}dt
\end{split}
\end{equation*}
namely, 
\[
\delta\mathcal{S}(u)=0,
\]
with respect to $\delta{u} \in \Delta_{TG}(u)$ with $T_{u}\tau_{G}(\delta{u})(0)=T_{u}\tau_{G} (\delta{u})(T)=0$ and also subject to the kinematic constraint $\dot{u}\in \Delta_{TG}(u)$.

\item[(v)] \textbf{Intrinsic equations of motion:}
A curve $u(t),\;t \in [0,T],$ on $TG$ satisfies 
the following equations:
\begin{equation*}
\mathbf{i}_{\dot{u}(t)}\Omega_{L}(u)-\mathbf{d}E_{L}(u) \in \Delta^{\circ}_{TG}(u), \quad \dot{u}\in \Delta_{TG}(u).
\end{equation*} 
\end{itemize}

\item[(2)] The statements in the unreduced level are all reduced to the following equivalent statements:

\begin{itemize}
\item[(i)]\textbf{Euler--Poincar\'e--Dirac variational principle:} A curve $(\eta(t), \xi(t)),\;t \in [0,T]$ on $\mathfrak{g} \times \mathfrak{g}$ is a critical curve of the Euler--Poincar\'e--Dirac action functional:
\[
\mathcal{S}^{/G}(\eta,\xi)=\int_{0}^{T} \bigg\{ \ell(\eta(t))+ 
\mathbf{D} \ell(\eta(t)) \cdot \big( \xi(t) - \eta(t) \big) \bigg\} dt,
\]
namely, if it satisfies
\begin{equation*}
\delta \mathcal{S}^{/G}(\eta,\xi)=0,
\end{equation*}
subject to the kinematic constraint  $\xi=g^{-1}\dot{g} \in \mathfrak{g}^\Delta$ and also to the variational constraint 
\[
\delta{\xi}=\dot\zeta+[\xi, \zeta], 
\]
where $\zeta=g^{-1}\delta{g} \in \mathfrak{g}^\Delta$ vanishing at the endpoints,
where $\eta=g^{-1}v$.

\item[(ii)] \textbf{Local reduced equations of motion:} A curve $(\eta(t), \xi(t)),\;t \in [0,T]$ on $\mathfrak{g} \times \mathfrak{g}$  is a solution curve of the Euler--Poincar\'e--Dirac equations:
\begin{equation*}
\begin{cases}
\displaystyle  \frac{d}{dt}\mathbf{D}\ell(\eta)  - \mathrm{ad}^{\ast}_{\xi}\mathbf{D}\ell(\eta) \in (\mathfrak{g}^\Delta)^{\circ},\\[5mm]
\displaystyle \mathbf{D}^{2}\ell(\eta) \cdot (\xi - \eta)=0,\\[4mm]
\displaystyle \xi \in \mathfrak{g}^{\Delta}.
\end{cases}
\end{equation*}

In the hyperregular case, it is a solution curve of the Euler--Poincar\'e--Suslov equations:
\begin{equation*}
\frac{d}{dt}\mathbf{D}\ell(\eta)
 - \operatorname{ad}_{\xi}^{\;\ast} \mathbf{D}\ell(\eta)\in (\mathfrak{g}^{\Delta})^{\circ},  \quad \xi = \eta, \quad \xi \in \mathfrak{g}^{\Delta}.
\end{equation*} 
Moreover, in the unconstrained case, it is a solution curve of the Euler--Poincar\'e equations.

\item[(iii)] \textbf{Reduced Lagrange--Dirac dynamics:} A curve $(\eta(t), \xi(t)),\;t \in [0,T]$ on $\mathfrak{g} \times \mathfrak{g}$  is a solution curve of the reduced Lagrange--Dirac dynamical system
$(\mathcal{E}_{\ell}, D_{\ell}^{/G})$ that satisfies, for each $\eta \in \mathfrak{g}$, the condition
\begin{equation*}
\big((\eta, \xi, \dot{\eta}),\,\mathbf{d}^{/G}\mathcal{E}_{\ell}(\eta)\big) \in D^{/G}_{\ell}(\eta).
\end{equation*}

\item[(iv)] \textbf{Intrinsic reduced variational principle:}
A curve $(\eta(t), \xi(t))\in \mathfrak{g} \times \mathfrak{g}$ is a critical point of the reduced action functional:
\begin{equation*}
\mathcal{S}^{/G}(\eta, \xi)= \int_0^T \left\{ \theta_\ell^{/G}(\eta) \cdot (\xi,\dot{\eta}) - \mathcal{E}_\ell(\eta) \right\} dt,
\end{equation*}
namely, it satisfies
\begin{equation*}
\delta \mathcal{S}^{/G}(\eta, \xi)=0,
\end{equation*}
for variations $(\zeta,\delta{\eta}) \in \Delta_{\mathfrak{g}}=\mathfrak{g}^{\Delta} \times \mathfrak{g}$, with the fixed endpoint conditions $\zeta (0)=\zeta (T)=0$, where the induced variation of $\xi$ is given by
\[
\delta\xi=\dot\zeta+[\xi,\zeta].
\]

\item[(v)] \textbf{Intrinsic reduced equations of motion:} 
A curve $(\eta(t), \xi(t))$ satisfies the intrinsic reduced Lagrange--d'Alembert--Dirac equations:
\begin{equation*}
\mathbf{i}_{(\xi,\dot{\eta})}\omega_{\ell}^{/G}(\eta)-\mathbf{d}^{/G}\mathcal{E}_{\ell}(\eta) \in (\Delta_{\mathfrak{g}}(\eta))^{\circ},\quad (\xi,\dot{\eta}) \in \Delta_{\mathfrak{g}}(\eta).
\end{equation*}
\end{itemize}
\end{itemize}
\end{theorem}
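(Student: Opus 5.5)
The theorem has two parts. Part (1) collects results already proved on a general $Q$, now specialized to $Q=G$. Part (2) says that the reduced statements are mutually equivalent and that each is the reduction of its unreduced counterpart. I will prove the two levels separately and then link them via left trivialization.

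\textbf{Unreduced level.} Set $Q=G$ throughout. The equivalence (i)$\Leftrightarrow$(ii) is the computation in the proof of the Lagrange--d'Alembert--Dirac variational principle. Variations $\delta v$ are arbitrary, which gives the Hessian condition. Variations $\delta g\in\Delta_G(g)$ vanishing at the endpoints give the first inclusion, by the fundamental lemma of the calculus of variations applied fiberwise with annihilators. The converse follows by running the same integration by parts backwards.

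The equivalence (ii)$\Leftrightarrow$(iii) is Proposition \ref{Prop: IntLDA_Eqn}, read in both directions. Proposition \ref{Prop:CondLagDiracSys} gives an exact coordinate description of $D_L$, so each step there is an ``if and only if''.

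The equivalence (iii)$\Leftrightarrow$(v) is the definition of $D_L$ together with part (i) of Proposition \ref{Prop: IntLDA_Eqn}. For the converse, take $\dot u\in\Delta_{TG}$; then $\mathbf{d}E_L$ and $\mathbf{i}_{\dot u}\Omega_L$ agree on $\Delta_{TG}$, and that is exactly the defining condition of $D_L$.

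The equivalence (iv)$\Leftrightarrow$(v) is the intrinsic Lagrange--d'Alembert--Dirac principle. The boundary term $\Theta_L\cdot\delta u=\langle\mathbb{F}L(u),T\tau_G(\delta u)\rangle$ vanishes under the stated endpoint conditions. Finally, (i)$\Leftrightarrow$(iv) holds because $\Theta_L(u)\cdot\dot u-E_L(u)=L+\mathbf{D}_2L\cdot(\dot g-v)$ pointwise, so the two actions coincide.

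\textbf{Reduced level.} The equivalence (i)$\Leftrightarrow$(ii) is the Euler--Poincar\'e--Dirac variational principle. For the converse, note that the fiberwise arbitrariness of $\zeta\in\mathfrak g^\Delta$ and of $\delta\eta$ makes the integrand condition necessary and sufficient.

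The equivalence (ii)$\Leftrightarrow$(iii) is Proposition \ref{Prop:RedLagDiracDynSys}, applied with the explicit fiber description \eqref{LocReducedLagDiracAlt} of $D^{/G}_\ell$ and with $\mathbf{d}^{/G}\mathcal E_\ell=(\eta,0,\mathbf{D}^2\ell\cdot\eta)$; each step is reversible. The equivalence (iii)$\Leftrightarrow$(v) is the second form of $D^{/G}_\ell$ in Definition \ref{Def:LagDiracRed}. The equivalence (iv)$\Leftrightarrow$(v) is the intrinsic reduced principle, and the Corollary following it connects (v) with (ii). Lastly, (i)$\Leftrightarrow$(iv) holds because $\theta^{/G}_\ell(\eta)\cdot(\xi,\dot\eta)-\mathcal E_\ell(\eta)=\ell(\eta)+\mathbf{D}\ell(\eta)\cdot(\xi-\eta)$.

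\textbf{Linking the two levels.} Write $\eta=g^{-1}v$ and $\xi=g^{-1}\dot g$. By left invariance, $L(g,v)=\ell(\eta)$ and $\mathbf{D}_2L(g,v)\cdot(\dot g-v)=\mathbf{D}\ell(\eta)\cdot(\xi-\eta)$, so $\mathcal S=\mathcal S^{/G}$. Under this correspondence the constrained variations $(\delta g,\delta v)$ match $(\zeta,\delta\eta)$ with $\zeta\in\mathfrak g^\Delta$ and $\delta\xi=\dot\zeta+\mathrm{ad}_\xi\zeta$, which is the standard Euler--Poincar\'e lemma. This lemma is surjective onto all such $(\zeta,\delta\eta)$ once $g(t)$ is fixed, so $(1)(i)\Leftrightarrow(2)(i)$.

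On the Dirac side, $D^{/G}_\ell=\overline D_\ell/G$ with $\overline D_\ell=(\lambda_{TG})_*D_L$, and $X^{/G}$ and $\mathbf{d}^{/G}\mathcal E_\ell$ are the quotients of $X$ and $\mathbf{d}E_L$. Hence $(X,\mathbf{d}E_L)\in D_L$ holds at $(g,v)$ if and only if its trivialized, $G$-translated image lies in $\overline D_\ell(e,\eta)$, that is, if and only if $(1)(iii)\Leftrightarrow(2)(iii)$. Here I use the $G$-invariance of $D_L$ and the fact that the quotient projection is a fiberwise isomorphism, as in Proposition \ref{CommDiagDiracReduction}.

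\textbf{Expected obstacle.} The main difficulty is the variational step: showing that reduced variations exhaust the constrained unreduced variations, and in particular that $\delta\xi$ lies in $\mathfrak g^\Delta$ only when required. I will handle this by prescribing $\zeta(t)\in\mathfrak g^\Delta$ freely and defining $\delta g=g\zeta$, which lies in $\Delta_G(g)$ by left invariance. Degeneracy of $\ell$ plays no role in any of these steps.
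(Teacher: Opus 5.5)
Your proposal is correct and follows essentially the same route as the paper, which gives no separate proof of this theorem and presents it as a summary of the earlier results you invoke: the Lagrange--d'Alembert--Dirac and Euler--Poincar\'e--Dirac principles, Propositions~\ref{Prop: IntLDA_Eqn} and~\ref{Prop:RedLagDiracDynSys}, the intrinsic principles together with the Corollary, the reduced action $\mathcal S^{/G}$ with $\delta\xi=\dot\zeta+[\xi,\zeta]$, and $D^{/G}_\ell=\overline D_\ell/G$. Your reversibility checks and the pointwise identities $\Theta_L(u)\cdot\dot u-E_L(u)=L+\mathbf D_2L\cdot(\dot g-v)$ and $\theta^{/G}_\ell(\eta)\cdot(\xi,\dot\eta)-\mathcal E_\ell(\eta)=\ell(\eta)+\mathbf D\ell(\eta)\cdot(\xi-\eta)$ make explicit what the paper leaves implicit; two small points remain: $(2)$(iv) as printed omits the kinematic constraint $\xi\in\mathfrak g^\Delta$ (it appears in the paper's earlier intrinsic reduced principle and in $(2)$(v)), so your $(2)$(i)$\Leftrightarrow(2)$(iv) holds only with that constraint added, and the hyperregular and unconstrained specializations, which you do not mention, follow at once from invertibility of the fiber Hessian and from $(\mathfrak g^\Delta)^\circ=0$.
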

\end{framed}

This theorem clarifies that the reduced Lagrange--d'Alembert--Dirac equations are not merely a specific reduction of nonholonomic systems, but are fundamentally rooted in the reduced Dirac structure on the reduced velocity phase space $\mathfrak{g} \times \mathfrak{g}$, allowing for a unified treatment of both regular and degenerate Lagrangians, together with nonholonomic constraints.


\section{Examples}
In this section, we illustrate our theory through various examples, including charged particle motion in electromagnetic fields, the Heisenberg system and the Euler top with nonholonomic constraints, electric circuits, and ideal fluids. These examples demonstrate that a wide class of mechanical systems---including nonholonomic systems, degenerate systems, symmetry-reduced systems, and infinite-dimensional systems---can be systematically formulated within the unified geometric and variational framework of Lagrange--Dirac dynamical systems. In particular, the degeneracy of the velocity phase-space Lagrangians in plasma physics is shown to arise intrinsically from the underlying geometric structure of the Lagrange--Dirac dynamical system on $TQ$.

\subsection{Charged particle motion in electromagnetic fields}
In conjunction with plasma physics, let us consider a charged particle in an electromagnetic field. Denote by $\mathbf{x}=(x^{1},x^{2},x^{3}) \in Q=\mathbb{R}^3$ a position vector of a particle with mass $m$ and charge $e$. Let $\mathbf{E}(\mathbf{x})$ and $\mathbf{B}(\mathbf{x})$ be an electric field and a magnetic field, in which we have the relations
\[
\mathbf{E}=-\nabla \Phi, \quad \mathbf{B}=\nabla \times \mathbf{A},
\]
where $\Phi(\mathbf{x})$ is the electric potential and $\mathbf{A}(\mathbf{x})=(A_{1},A_{2},A_{3})$ is the vector potential associated with $\mathbf{B}$.
Then, we define a Lagrangian $L: TQ \to \mathbb{R}$ by, for $(\mathbf{x}, \mathbf{v}) \in TQ$, 
\begin{equation}\label{LagChPEM}
L(\mathbf{x}, \mathbf{v})=\frac{1}{2}m |\mathbf{v}|^2+\frac{e}{c}\mathbf{v}\cdot \mathbf{A}(\mathbf{x})-e\Phi(\mathbf{x}),
\end{equation}
where $c$ denotes the speed of light.
The Legendre transformation is given by
\[
\mathbb{F}L: TQ \to T^{\ast}Q;\quad (\mathbf{x}, \mathbf{v}) \mapsto \left(\mathbf{x}, \frac{\partial L}{\partial \mathbf{v}}\right)=\left(\mathbf{x}, m \mathbf{v} + \frac{e}{c}\mathbf{A}(\mathbf{x})\right)
\]
and therefore we can define the (canonical) Hamiltonian $H$ on $T^{\ast}Q$ by
\[
H(\mathbf{x}, \mathbf{p})=\frac{1}{2m}\left| \mathbf{p} - \frac{e}{c} \mathbf{A}(\mathbf{x})\right|^{2}+e\Phi(\mathbf{x}),
\]
where $\mathbf{p} = \frac{\partial L}{\partial \mathbf{v}}=m \mathbf{v} + \frac{e}{c}\mathbf{A}(\mathbf{x})$ is the conjugate momenta; see, \cite{MaRa1999,CaBr2009}.

\paragraph{Littlejohn's derivation of a phase-space Lagrangian.}
Littlejohn \cite{Lit1983} proposed a concept of \textit{phase-space Lagrangians} on $TT^{\ast}Q$ as follows:
\[
\begin{split}
\mathfrak{L}(\mathbf{x}, \mathbf{p}, \mathbf{\dot x})&:=\mathbf{p}\cdot \mathbf{\dot x}-H(\mathbf{x}, \mathbf{p})\\
&=\mathbf{p}\cdot \mathbf{\dot x}-\frac{1}{2m}\left| \mathbf{p} - \frac{e}{c} \mathbf{A}(\mathbf{x})\right|^{2}-e\Phi(\mathbf{x}).
\end{split}
\]
Further by setting $ \mathbf{v}=\frac{1}{m}\left(\mathbf{p}-\frac{e}{c}\mathbf{A}(\mathbf{x})\right)$, a \textit{velocity phase-space Lagrangian} can be defined on  $TTQ$ as
\begin{equation}\label{VPSLag}
\mathcal{L}(\mathbf{x}, \mathbf{v}, \mathbf{\dot x})=\left(m \mathbf{v} + \frac{e}{c}\mathbf{A}(\mathbf{x}) \right)\cdot \mathbf{\dot x}- \frac{m}{2}\left| \mathbf{v} \right|^{2}-e\Phi(\mathbf{x}).
\end{equation}
In the above, note that $ \mathbf{v}$ is treated as an independent variable of $\mathbf{\dot x}$, while the phase space Euler--Lagrange equations yield $ \frac{\partial \mathcal{L}}{\partial \mathbf{v}}=0$, and assuming $m \ne 0$, it follows that we get the second-order condition:
\[
\mathbf{\dot x}=\mathbf{v}.
\]
This formulation corresponds to a Hamilton--Pontryagin type variational principle, where $\mathbf{v}$ is treated as an independent variable and the constraint $\mathbf{v}=\mathbf{\dot x}$ is enforced variationally. In fact, the phase space Lagrangian in \eqref{VPSLag} can be restated by using the original Lagrangian and the constraint for the second-order condition as,
\begin{equation}\label{VPSLag_AnotherLook}
\begin{split}
\mathcal{L}(\mathbf{x}, \mathbf{v}, \mathbf{\dot x})&=L(\mathbf{x}, \mathbf{v})+ \frac{\partial L}{\partial \mathbf{v}}\cdot 
\big(\mathbf{\dot x} - \mathbf{v} \big).
\end{split}
\end{equation}

\remark[Phase-space Lagrangians]\rm
In the field of plasma physics, the concept of phase-space Lagrangian has been used as a variational formulation that can be described directly in terms of canonical coordinates and momenta,
from which Hamilton's equations follow naturally. In the presence of electromagnetic fields, the canonical momentum includes the vector potential, and the corresponding action principle yields the Lorentz force law. This framework provides a clear geometric interpretation through the canonical one-form and the underlying symplectic structure, making it particularly suitable for describing charged particle dynamics; see, for instance, \cite{HuMo1992, CaBr2009}.
In \cite{HuMo1992}, the Hamiltonian theory of guiding-center motion was developed to clarify the geometric structure of reduced particle dynamics in strong magnetic fields. In parallel, \cite{CHHM1998} expressed the Maxwell--Vlasov equations in Euler--Poincar\'e form, demonstrating how phase space variational principles can be systematically reduced using symmetry and Lie group methods. These works collectively show that the phase-space Lagrangian is not merely a reformulation of classical mechanics but a versatile framework for both single-particle and collective plasma dynamics. It underlies modern approaches to structure-preserving numerical methods, gyrokinetic reductions, and geometric formulations of complex plasma systems, providing a unified perspective on variational and Hamiltonian structures in plasma physics.

\paragraph{The Lagrange--Dirac interpretation of the phase-space Lagrangian.}
The derivation of the phase-space Lagrangian often appears ad hoc, since it is not defined on $TQ$ but rather on $TT^{\ast}Q$ or $TTQ$. Here we clarify its intrinsic nature within both Lagrange--Dirac geometric and variational structures on $TQ$.
\medskip

Now, starting with the Lagrangian $L: TQ \to \mathbb{R}$ from \eqref{LagChPEM},  the Legendre transform provides the following relation:
\begin{equation*}
 \frac{\partial L}{\partial \mathbf{v}}=m \mathbf{v} + \frac{e}{c}\mathbf{A}(\mathbf{x}).
\end{equation*}
The energy $E_L$ on the tangent bundle is defined as:
\begin{equation*}
\begin{split}
E_L(\mathbf{x},\mathbf{v})&= \frac{\partial L}{\partial \mathbf{v}}\cdot \mathbf{v}-L(\mathbf{x},\mathbf{v})\\
&=\left(m \mathbf{v} + \frac{e}{c}\mathbf{A}(\mathbf{x})\right)\cdot \mathbf{v}-\left(\frac{1}{2}m |\mathbf{v}|^2+\frac{e}{c}\mathbf{v}\cdot \mathbf{A}(\mathbf{x})-e\Phi(\mathbf{x}) \right)\\
&=\frac{1}{2}m |\mathbf{v}|^2+e\Phi(\mathbf{x}).
\end{split}
\end{equation*}
Since the Lagrange--Dirac variational principle on $TQ$ is given by
\begin{equation*}
\delta \int_0^T \bigg\{  \Theta_L(\mathbf{x},\mathbf{v})\cdot X(\mathbf{x},\mathbf{v})-E_L(\mathbf{x},\mathbf{v})\bigg\}dt=0,
\end{equation*}
for all variations $\delta\mathbf{v}$ and $\delta\mathbf{x}$ with the fixed endpoint conditions $\delta\mathbf{x}(0)=\delta\mathbf{x}(T)=0$, the Lagrange--Dirac dynamics for the charged particle in a electromagnetic field is given by
\[
\mathbf{i}_{X(\mathbf{x},\mathbf{v})}\Omega_{L}=\mathbf{d}E_L(\mathbf{x},\mathbf{v}),
\]
where $X$ is the section of $TTQ \to TQ$, locally expressed by $X(\mathbf{x},\mathbf{v}) = \dot{x}^i \frac{\partial}{\partial x^i} + \dot{v}^i \frac{\partial}{\partial v^i}$. 
\medskip

By direct computations using local coordinates, $x^{i}, v^{i}; i=1,\dots,n$, the critical condition of the Lagrange--Dirac principle is:
\begin{equation*}
\delta \int_0^T \bigg\{\left(m \mathbf{v} + \frac{e}{c}\mathbf{A}(\mathbf{x}) \right)\cdot \mathbf{\dot x}- \frac{m}{2}\left| \mathbf{v} \right|^{2}-e\Phi(\mathbf{x})\bigg\}dt=0
\end{equation*}
for all variations $\delta\mathbf{x}$ and $\delta\mathbf{v}$, where $\Theta_L=\frac{\partial L}{\partial \mathbf{v}}\cdot d\mathbf{x}=\big(\mathbf{A}(\mathbf{x})+m\mathbf{v}\big) \cdot d\mathbf{x}$. Then, it yields the first-order system of the Euler--Lagrange equations as follows:
\begin{equation*}
\left\{
\begin{aligned}
\dot{x}^{i}&=v^{i},\\
m\dot{v}^{i}&=\frac{e}{c}\left( \frac{\partial A_{j}}{\partial x^{i}}-\frac{\partial A_{i}}{\partial x^{j}} \right)\dot{x}^{j}-e \frac{\partial \Phi}{\partial x^{i}},\quad i=1,2,3,
\end{aligned}
\right.
\end{equation*}
where $\frac{\partial A_{j}}{\partial x^{i}}-\frac{\partial A_{i}}{\partial x^{j}}$ corresponds to the components of the magnetic field tensor. 
\medskip

Here, it is important to note that in the action functional of the Lagrange--Dirac principle, the term $\Theta_L(\mathbf{x},\mathbf{v})\cdot X(\mathbf{x},\mathbf{v})-E_L(\mathbf{x},\mathbf{v})$ exactly corresponds to the intrinsic expression of the phase-space Lagrangian given in \eqref{VPSLag} or \eqref{VPSLag_AnotherLook}; namely, we can deduce
\begin{equation*}
\mathcal{L}(\mathbf{x},\mathbf{v}, \dot{\mathbf{x}})\equiv \Theta_L(\mathbf{x},\mathbf{v})\cdot X(\mathbf{x},\mathbf{v})-E_L(\mathbf{x},\mathbf{v}),
\end{equation*}
where $\Theta_L$ depends only on $d\mathbf{x}$, and therefore the pairing $\Theta_L \cdot X$ involves only the $\dot{\mathbf{x}}$-component of $X$. 
This clarifies why the phase-space Lagrangian depends on $\dot{\mathbf{x}}$ but not on $\dot{\mathbf{v}}$, namely, it is \textit{degenerate} in the sense that it is linear in the velocity variable $\dot{\mathbf{x}}$ and independent of $\dot{\mathbf{v}}$. Furthermore, we also note that the phase-space Lagrangian is completely determined by the pair $(E_L, \Theta_L)$, and hence is an intrinsic object on $TQ$. 
Consequently, the commonly used velocity phase space formulation can be understood 
as a manifestation of the underlying Lagrange--Dirac system $(E_L, \Omega_{L}=-\mathbf{d}\Theta_L)$. This shows that the phase-space Lagrangian is not an auxiliary construction on $TTQ$, but is intrinsically determined by the Lagrange--Dirac data on $TQ$. In particular, the introduction of the auxiliary variable $\mathbf{v}$ does not enlarge the geometric structure, but merely makes explicit the underlying Dirac structure.

While phase-space Lagrangians have been extensively used in plasma physics, notably in the work of Littlejohn, their geometric origin has remained somewhat implicit. The above result clarifies that they are \textit{not ad hoc constructions, but intrinsic objects arising from the Lagrange--Dirac dynamical system}. This  provides a unified geometric framework for plasma physics via the Lagrange--Dirac variational formulation.

\subsection{Nonholonomic mechanical systems}
\paragraph{The classical  Heisenberg system.}
We consider a classical Heisenberg system where a particle moves in the configuration space $Q=\mathbb{R}^3$ under a potential field with a nonholonomic constraint; see \cite{Bl2003, PeYo2025}. Let $\mathbf{x}=(x,y,z) \in Q$ be the local coordinates for a point in $Q$. The Lagrangian is given by
\begin{equation*}
L(\mathbf{x},\dot{\mathbf{x}})=\frac{1}{2}\left(\dot{x}^2+\dot{y}^2+\dot{z}^2\right)-U(\mathbf{x}),
\end{equation*}
where $U(\mathbf{x})=\frac{1}{2}(x^{2}+y^{2})$ is the potential energy of the system. The motion $\mathbf{x}(t) \in Q, \; t \in [0, T]$ of the system is subject to the nonholonomic constraint:
\[
\dot{z}-y\dot{x}+x\dot{y}=0,
\]
defined by the constraint distribution
\[
\Delta_{Q}(q)=\left\{\dot{q} \in T_{q}Q \mid \left< \omega(q), \dot{q} \right>=0 \right\},
\]
 where $\omega(q)=dz-ydx+x dy$. The annihilator $\Delta_{Q}^{\circ} \subset T^{\ast}Q$ is represented locally using a Lagrange multiplier $\mu$ as $\alpha = \mu(dz - ydx + xdy)$.
\medskip

From the Lagrange--Dirac condition, the Lagrange--d'Alembert equations are obtained as:
\begin{equation*}
\begin{split}
&\dot{x}=v_{x}, \qquad \qquad\;  \dot{y}=v_{y}, \qquad\quad\;\, \dot{z}=v_{z}, \\
&\dot{v}_{x}=-\mu y - 2x, \;\; \dot{v}_{y}=\mu x - 2y, \;\; \dot{v}_{z}=\mu, \\
&\dot{z}=y\dot{x}-x\dot{y}.
\end{split}
\end{equation*}

This example illustrates how nonholonomic constraints define a distribution $\Delta_Q$ and its annihilator $\Delta_Q^\circ$, which are naturally incorporated into the Lagrange--Dirac formulation on $TQ$.

\paragraph{The Euler top with a nonholonomic constraint.}
Consider the Suslov problem (see, \cite{Bl2003,ZeBl2000,YoMa2007a}): a rigid body rotating about a fixed point such that the projection of the angular velocity in a body-fixed direction is zero. The configuration space is $G=SO(3)$. Suppose the Lagrangian $L:TG \to \mathbb{R}$ is left-invariant, with the reduced Lagrangian $\ell: \mathfrak{so}(3) \to \mathbb{R}$ given by the kinetic energy as:
\[
\ell(\mathbf{\Sigma})=\frac{1}{2}\mathbb{I} \mathbf{\Sigma} \cdot \mathbf{\Sigma},
\]
where $\mathbf{\Sigma}$ is the body angular velocity and $\mathbb{I}$ is the inertia tensor. The constraint distribution is 
\begin{equation*}
\mathfrak{so}(3)^{\Delta}=\big\{ \mathbf{\Omega}  \in \mathfrak{so}(3) \mid \langle \mathbf{A}, \mathbf{\Omega} \rangle=0 \big\},
\end{equation*}
where $\mathbf{\Omega}$ denotes the body angular velocity, $\mathbf{A}$ is a fixed element of the dual Lie algebra $\mathfrak{so}(3)^{\ast}$ and $\langle \cdot,\cdot\rangle$ denotes the natural paring between the Lie algebra and its dual. Since the subspace $\mathfrak{so}(3)^{\Delta}$ is not necessarily a subalgebra, the constraint is  nonholonomic. 

This system can be formulated by the reduced Lagrange--Dirac system $(\mathcal{E}_{\ell},D^{/G}_{L})$ given in  \eqref{condition_reducedLagDiracSys}. 
 Namely, the system equations for the Suslov problem can be expressed by the Euler-Poincar\'e-Suslov equations, which are given in this example by
\begin{eqnarray*}\label{IEPSEns-NonhSystems}
\Omega^{i}  = \Sigma^{i}, \quad
\frac{d}{dt}\frac{\partial {\ell}}{\partial \Sigma^{i}} - C^{k}_{ji} \Omega^{j}\Pi_{k} =\mu A_{i}, \quad i,j,k=1,2,3.
\end{eqnarray*}
In the above, $C^{k}_{ji}$ are the structure constants of $\mathfrak{so}(3)$ and $\mu$ is the Lagrange multiplier. Further, $\mathbf{\Sigma}=\Sigma^{i}\mathbf{e}_{i},\, \mathbf{\Omega}=\Omega^{i}\mathbf{e}_{i} \in \mathfrak{so}(3)^{\Delta}$, and $\mathbf{A}=A_{i}\mathbf{e}^{i} \in \mathfrak{so}(3)^{\ast}$, where $\mathbf{e}_{i},\,i=1,2,3$ form a basis for $\mathfrak{so}(3)$ and $\mathbf{e}^{i}$ form a basis for $\mathfrak{so}(3)^{\ast}$. 

Choose $\mathbf{e}_{3}=\mathbf{A}/\left|\mathbf{A}\right|$ as the third vector of the body frame and let us pick up two independent vectors $\mathbf{e}_{1}, \mathbf{e}_{2}$ that are orthogonal to $\mathbf{e}_{3}$ in the kinetic energy metric.  Therefore, we have the constraint
$
\Omega_{3}=0.
$
Note that $\mathbf{e}_{1}, \mathbf{e}_{2}$ and  $\mathbf{e}_{3}$ are not necessarily orthogonal relative to the standard metric in $\mathfrak{so}(3) \cong \mathbb{R}^{3}$ unless $\mathbf{e}_{3}$ spans an eigenspace of the inertia tensor $\mathbb{I}=I_{ij} \mathbf{e}_{i} \otimes \mathbf{e}_{j}$, which means that $C^{j}_{ji}$ are not equal to zero. Since the components of the inertia tensor $I_{13}$ and $I_{23}$ are zero, we get
\[
\frac{\partial {\ell}}{\partial \Sigma^{1}}=I_{11}\,\Sigma^{1}+I_{12}\,\Sigma^{2}, \quad \frac{\partial {\ell}}{\partial \Sigma^{2}}=I_{21}\,\Sigma^{1}+I_{22}\,\Sigma^{2}, \quad \frac{\partial {\ell}}{\partial \Sigma^{3}}=I_{33}\,\Sigma^{3}. 
\]
The Euler-Poincar\'e-Suslov equations are given in matrix form by
\[
\frac{d}{dt}
\begin{pmatrix}
\frac{\partial {\ell}}{\partial \Sigma^{1}}\\[1mm]
\frac{\partial {\ell}}{\partial \Sigma^{2}}\\[1mm]
\frac{\partial {\ell}}{\partial \Sigma^{3}}
\end{pmatrix}
=
\begin{pmatrix}
C^{1}_{21} \Omega^{2}\frac{\partial {l}}{\partial \Sigma^{1}}+C^{2}_{21} \Omega^{2}\frac{\partial {l}}{\partial \Sigma^{2}} \\[1mm]
C^{1}_{12} \Omega^{1}\frac{\partial {l}}{\partial \Sigma^{1}}+C^{2}_{12} \Omega^{1}\frac{\partial {l}}{\partial \Sigma^{2}} \\[1mm]
C^{1}_{13} \Omega^{1}\frac{\partial {l}}{\partial \Sigma^{1}}+C^{2}_{13} \Omega^{1}\frac{\partial {l}}{\partial \Sigma^{2}}+C^{1}_{23} \Omega^{2}\frac{\partial {l}}{\partial \Sigma^{1}}+C^{2}_{23} \Omega^{2}\frac{\partial {l}}{\partial \Sigma^{2}} 
\end{pmatrix}
+
\begin{pmatrix}
0 \\
0 \\
 \mu \, |\mathbf{A}| 
\end{pmatrix},
\]
together with 
$
\Omega^{1}=\Sigma^{1}, \Omega^{2}=\Sigma^{2}, \Omega^{3}=\Sigma^{3}=0.
$
By eliminating the Lagrange multiplier, we get the Euler-Poincar\'e-Suslov equations:
\begin{align*}
\frac{d}{dt}
\left(
\begin{array}{c}
\frac{\partial {\ell}}{\partial \Omega^{1}}\\[1mm]
\frac{\partial {\ell}}{\partial \Omega^{2}}\\[1mm]
\end{array}
\right)
&=
\left(
\begin{array}{c}
C^{1}_{21} \Omega^{2}\frac{\partial {\ell}}{\partial \Omega^{1}}+C^{2}_{21} \Omega^{2}\frac{\partial {\ell}}{\partial \Omega^{2}} \\[1mm]
C^{1}_{12} \Omega^{1}\frac{\partial {\ell}}{\partial \Omega^{1}}+C^{2}_{12} \Omega^{1}\frac{\partial {\ell}}{\partial \Omega^{2}} \\
\end{array}
\right),
\end{align*}
which are equivalent with the equations in \cite{ZeBl2000}.
 \medskip
 
If we choose $\mathbf{e}_{3}$ as an eigenvector of the inertia tensor such that $\mathbb{I}\mathbf{e}_{3}=I_{33}\mathbf{e}_{3}$ and if we also choose $\mathbf{e}_{1}$ and $\mathbf{e}_{2}$ as the two remaining eigenvectors, the basis $\mathbf{e}_{1}, \mathbf{e}_{2}, \mathbf{e}_{3}$ in $\mathfrak{so}(3)$ is orthogonal with respect to both the standard and the kinetic energy metrics, where $C^{3}_{12}\!=\!C^{1}_{23}\!=\!C^{2}_{31}\!=\!-C^{3}_{21}\!=\!-C^{1}_{32}\!=\!-C^{2}_{13}\!=\!1$ and $C^{k}_{ji}=0$ otherwise. In this case, it follows
\[
\frac{d}{dt}\frac{\partial {\ell}}{\partial \Omega^{1}}=0, \quad \frac{d}{dt}\frac{\partial {\ell}}{\partial \Omega^{2}}=0.
\] 
Finally, all the solutions of the reduced system are relative equilibria.

\subsection{Degenerate Lagrangian systems}
Here, we consider three examples of degenerate Lagrangian systems. We illustrate how such systems can be systematically formulated in the framework of Lagrange--Dirac dynamical systems on $TQ$, including systems with additional linear constraints arising from electric circuits (e.g., Kirchhoff laws) as well as classical nonholonomic mechanical constraints.

\paragraph{Degenerate Lagrangians.}
Consider a system with a configuration space $Q=\mathbb{R}^{3}$ with the following Lagrangian $L(q,v): TQ \to \mathbb{R}$, see \cite{Ha2011}:
\[
L(q, v)=v^{1}v^{3}-q^{2}v^{3}+q^{1}q^{3}.
\]
The Hessian matrix of $L$ is
\[
\mathbf{D}_{2}\mathbf{D}_{2}L(q,v)
=
\begin{pmatrix}
0 & 0 & 1 \\
0 & 0 & 0 \\
1 & 0 & 0\\
\end{pmatrix},
\]
and hence is singular, the system is degenerate. 

From Proposition \ref{LagrangeDiracVarPrin}, we compute the critical condition of the action functional 
\eqref{ActionInt_HamVelocityPhaseSpacePrin} as
\begin{equation*}
\begin{split}
&\delta\mathcal{S}(q,v)=\delta\int_{0}^{T} \bigg\{ L(q(t),v(t)) + \mathbf{D}_{2}L(q(t),v(t)) \cdot \big(\dot{q}(t)-v(t)\big) \bigg\}dt\\[2mm]
&=\delta\int_{0}^{T} \bigg\{ \big(v^{1}v^{3}-q^{2}v^{3}+q^{1}q^{3} \big)+ v^{3}\big(\dot{q}^{1}-v^{1}\big)+ \big(v^{1}-q^{2}\big)\big(\dot{q}^{3}-v^{3}\big)\bigg\}dt\\[2mm]
&=\delta\int_{0}^{T} \bigg\{ v^{3}\dot{q}^{1}+v^{1}\dot{q}^{3}-q^{2}\dot{q}^{3}+q^{1}q^{3}-v^{1}v^{3}\bigg\}dt \\[2mm]
&=\delta\int_{0}^{T} \bigg\{ \delta{q}^{1}\big(\!-\dot{v}^{3}+q^{3}\big)-\delta{q}^{2}\dot{q}^{3}+\delta{q}^{3}\big(\!-\dot{v}^{1}+\dot{q}^{2}+q^{1}\big)+\delta{v}^{1}\big(\dot{q}^{3}-v^{3}\big)+\delta{v}^{3}\big(\dot{q}^{1}-v^{1}\big)\bigg\}dt=0,
\end{split}
\end{equation*}
for all $\delta{q}^{1}, \delta{q}^{2}, \delta{q}^{3}, \delta{v}^{1}$, and for all $\delta{v}^{3}$ with the fixed endpoint conditions. This yields the following Lagrange--Dirac equations:
\begin{equation*}
\begin{split}
-\dot{v}^{3}+q^{3}=0,\quad \dot{q}^{3}=0, \quad -\dot{v}^{1}+\dot{q}^{2}+q^{1}=0, \quad \dot{q}^{3}-v^{3}=0, \quad \dot{q}^{1}-v^{1}=0.
\end{split}
\end{equation*}
By eliminating $v$-variables, these equations are reduced to 
\[
\ddot{q}^{1}=\dot{q}^{2}+q^{1},\quad \ddot{q}^{3}=q^{3}, \quad \dot{q}^{3}=0,
\]
which are identical to the equations of motion obtained in \cite{Ha2011}. Here we note that there exists a \textit{hidden constraint} $\dot{q}^{3}=0$ in this degenerate Lagrangian system, which is systematically obtained in the framework of  Lagrange--Dirac dynamical systems.

\paragraph{Lagrangians linear in velocity and link to Hamilton's principle.} We consider a Lagrangian linear in velocity, see \cite{RoMa2002, YoMa2007b}. Let $Q$ be an $n$ dimensional manifold and let $L: TQ \to \mathbb{R}$ be a Lagrangian, linear in velocity, given in local coordinates $q^{1},\dots,q^{n},v^{1},\dots, v^{n}$, by:
\[
L(q,v)=\alpha_{i}(q) v^{i}-U(q),
\]
which is completely degenerate since 
\[
\frac{\partial^{2} L}{\partial v^{i} \partial v^{j}} =0.
\]
One can establish the Lagrange--Dirac variational principle:
\[
\begin{split}
\displaystyle
\int_{0}^{T} \bigg[ 
\frac{\partial^{2} L}{\partial v^{i}v^{j}} 
\left(\dot{q}^{j}-v^{j} \right) \delta{v}^{i} 
+
\bigg\{ 
- \frac{d}{dt}\frac{\partial L}{\partial v^{i}} +  \frac{\partial L}{\partial q^{i}} + \frac{\partial^{2} L}{\partial q^{i} \partial v^{j}} \left(\dot{q}^{j}-v^{j}\right) 
\bigg\} \delta{q}^{i} 
\bigg]dt=0,
\end{split}
\]
for all variations $\delta{v}$ and for all $\delta{q}$ with the fixed endpoint conditions $\delta{q}(0)=\delta{q}(T)=0$.

Then, the Lagrange--Dirac equations are given, in local coordinates, by:
\[
\displaystyle \frac{d}{dt}\frac{\partial L}{\partial v^{i}} - \frac{\partial L}{\partial q^{i}}  -  \frac{\partial^{2} L}{\partial q^{i} \partial v^{j}} \left( \dot{q}^{j} - v^{j}\right)=0,\qquad
\displaystyle \frac{\partial^{2} L}{\partial v^{i} \partial v^{j}} \left(\dot{q}^{j}-v^{j}\right)=0,
\]
where 
\[
\begin{split}
&\displaystyle\frac{\partial L}{\partial v^{i}}=\alpha_{i}(q),\quad \frac{d}{dt}\frac{\partial L}{\partial v^{i}} = \frac{d\alpha_{i}}{dt}
= \frac{\partial \alpha_{i}}{\partial q^{j}}\dot{q}^{j},\quad 
\frac{\partial L}{\partial q^{i}} = \frac{\partial \alpha_{j}}{\partial q^{i}} v^{j} - \frac{\partial U}{\partial q^{i}},\\[3mm]
&\displaystyle  \frac{\partial^{2} L}{\partial q^{i} \partial v^{j}}\left( \dot{q}^{j} - v^{j}\right)=
\frac{\partial \alpha_{j}}{\partial q^{i}} \left( \dot{q}^{j} - v^{j}\right).
\end{split}
\]
Thus, we get the same equations obtained in \cite{RoMa2002, YoMa2007b}:
\[
\bigg( \frac{\partial \alpha_{i}}{\partial q^{j}} - \frac{\partial \alpha_{j}}{\partial q^{i}}  \bigg)\dot{q}^{j} = - \frac{\partial U}{\partial q^{i}}.
\]
\begin{remark}[Relationship with Hamilton's principle on $Q$]
It is well known that the standard Hamilton's principle on the configuration space $Q$ can formally yield the Euler--Lagrange equations even for degenerate Lagrangians. 
This is because the standard principle restricts the variation $\delta \dot{q}$ to be the time derivative of $\delta q$ from the outset, thereby a priori imposing the second-order condition $v = \dot{q}$. 

In contrast, our variational principle on the extended space $TQ$ (the velocity phase space) treats $v$ (velocity) and $\dot{q}$ (kinematic derivative) as independent variables. 
The resulting condition $\mathbf{D}_{2}\mathbf{D}_{2}L(q,v) \cdot (\dot{q} - v) = 0$ reveals the following:
\begin{itemize}
    \item In the hyperregular case, the condition $\mathbf{D}_{2}\mathbf{D}_{2}L(q,v) \cdot (\dot{q} - v) = 0$ uniquely determines $\dot{q} = v$, recoverning the standard kinematic relationship of the Hamilton's principle.
    \item In the degenerate case, the condition allows for $\dot{q} - v \in \ker \mathbf{D}_{2}\mathbf{D}_{2}L(q,v)$. In the above example of a first-order Lagrangian, where $\mathbf{D}_{2}L(q,v)=0$, any $v$ and $\dot{q}$ satisfy the condition.
\end{itemize}
Crucially, the standard kinematic requirement $\dot{q} = v$ is always contained within the set of critical curves of our principle as one possible choice. 
While the standard Hamilton's principle views the dynamics through a single ``section'' where $\dot{q} = v$ is enforced, the Lagrange--Dirac formulation provides a broader geometric framework that preserves the underlying Dirac structure even when the Lagrangian is degenerate.
\end{remark}

\paragraph{Electric circuits.}
Consider an electric circuit illustrated in Figure \ref{circuit}, consisting of an inductor $I_{1}$ and three capacitors $C_i,\;i=2,3,4$, where the index corresponds to the number of the branch. This example was explored by Yoshimura and Marsden \cite{YoMa2006a, YoMa2007b}, and also by Cendra et. al. \cite{CeEtFe2014}. 

\begin{figure}[htb]
\centering
\includegraphics[scale=0.8]{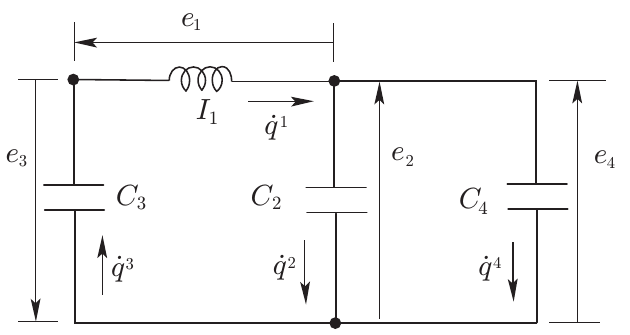}
\caption{Electric circuit}
\label{circuit}
\end{figure}

The configuration space $Q=\mathbb{R}^4$ is the space of branch charges, i.e., for each $q \in Q$ is locally denote by
\[
q=(q^{1},q^{2},q^{3},q^{4}) \equiv(q_{I_{1}},q_{C_{2}},q_{C_{3}},q_{C_{4}}) \in Q.
\]
Then the branch currents are denoted by elements of the tangent bundle $TQ$: for each $q \in Q$,
\[
\dot q=(\dot q^{1},\dot q^{2},\dot q^{3},\dot q^{4}) \equiv(\dot q_{I_{1}},\dot q_{C_{2}},\dot q_{C_{3}},\dot q_{C_{4}}) \in T_{q}Q.
\]
The Lagrangian of this electric circuit $L: TQ \to \mathbb{R}$ is given by  the magnetic energy due to the inductor minus electric potential energy 
due to the capacitors as
\[
L(q,v):=\frac{1}{2}I_{1}\, (v^{1})^2-\bigg(\frac{1}{2}\frac{(q^{2})^2}{C_2}
+\frac{1}{2}\frac{(q^{3})^2}{C_3}+\frac{1}{2}\frac{(q^{4})^2}{C_4}\bigg),
\]
where $v=(v^{1},v^{2},v^{3},v^{4}) \equiv (v_{I_{1}},v_{C_{2}},v_{C_{3}},v_{C_{4}}) \in T_{q}Q$. This Lagrangian is degenerate since the Hessian of $L$ is singular:
\[
\mathbf{D}_{2}\mathbf{D}_{2}L(q,v)
=
\begin{pmatrix}
I_{1} & 0 & 0 & 0 \\
0 & 0 & 0 & 0 \\
0 & 0 & 0 & 0 \\
0 & 0 & 0 & 0 \\ 
\end{pmatrix}.
\]

There exists constraints among the branch currents, called \textit{Kirchhoff Current Law (KCL)} as
\begin{equation}\label{KCL_const}
\begin{split}
-\dot{q}^{1}+\dot{q}^{3}=0,\quad \dot{q}^{2}-\dot{q}^{3} + \dot{q}^{4}=0.
\end{split}
\end{equation}
These constraints can be represented by
\[
\Delta_{Q}(q)=\{ \dot{q} \in T_qQ \mid \langle \omega^{r}(q), \dot{q} \rangle =0, \; r=1, 2 \}.
\]
where $\omega^{r}=\omega_{i}^{r} dq^{i}, i=1,\dots,4; r=1,2$ denote constraint one-forms, and the coefficients $\omega_{i}^{r}$ are given in matrix by
\[
(\omega_{i}^{r})=
\left(
\begin{array}{cccc}
-1 & 0 & 1 & 0 \\
0 & 1 & -1 & 1 \\
\end{array}
\right).
\]
In parallel with KCL, there exists the dual relations among the branch voltages $e_{I_{1}}, e_{C_{2}},$ $e_{C_{3}}, e_{C_{4}}$, which is called \textit{Kirchhoff Voltage Law (KVL)}. The KVL is geometrically understood in the context of the annihilator $\Delta_{Q}^{\circ} \subset T^{\ast}Q$, which gives us the constraint space of branch voltages, corresponding to the space of constraint forces in mechanics. In fact, an element 
$e=(e_{1}, e_{2}, e_{3}, e_{4})\equiv (e_{I_{1}}, e_{C_{2}}, e_{C_{3}}, e_{C_{4}}) \in \Delta_{Q}^{\circ}$ is given by introducing unknown Lagrange multipliers $\mu_{r}, r=1,2$ as
\[
e_{i}=\mu_{r}\omega_{i}^{r}.
\]
 
Thus, the Lagrange--d'Alembert--Dirac equations \eqref{LocalImpDynLagDirSys}
\begin{equation*}
\begin{cases}
\displaystyle \frac{d}{dt}\frac{\partial L}{\partial v^{i}}   
-  \frac{\partial L}{\partial q^{i}}
- \frac{\partial^2 L}{\partial q^{i}\partial v^j } \left(\dot{q}^{j}- v^{j}\right) =\mu_{r}\omega^{r}_{i},\\[5mm]
\displaystyle \frac{\partial^2 L}{\partial v^{i} \partial v^{j}}(\dot{q}^{j}- v^{j})=0,\\[4mm]
\displaystyle \omega_{i}^{r}\dot{q}^{i}=0
\end{cases}
\end{equation*}
are computed by
\begin{equation*}
\begin{cases}
\displaystyle I_{1}\dot{v}^{1}=-\mu_{1},\quad \frac{q^{2}}{C_2}=\mu_{2}, \quad \frac{q^{3}}{C_{3}}=\mu_{1}-\mu_{2}, \quad \frac{q^{4}}{C_{4}}=\mu_{2},\\[3mm]
\displaystyle I_{1} (\dot{q}^{1}- v^{1})=0,\\[3mm]
\displaystyle -\dot{q}^{1}+\dot{q}^{3}=0,\qquad \dot{q}^{2} - \dot{q}^{3} + \dot{q}^{4}=0.
\end{cases}
\end{equation*}
Finally, we get the required implicit dynamics as follows:
\begin{equation*}
\begin{cases}
\displaystyle I_{1}\ddot{q}^{1}=-\mu_{1},\quad \frac{q^{2}}{C_{2}}=\mu_{2}, \quad \frac{q^{3}}{C_{3}}=\mu_{1}-\mu_{2}, \quad \frac{q^{4}}{C_{4}}=\mu_{2},\\[3mm]
\displaystyle \dot{q}^{1}=\dot{q}^{3},\qquad \dot{q}^{3}=\dot{q}^{2}+\dot{q}^{4}.
\end{cases}
\end{equation*}

This example shows that the Lagrange--Dirac structure naturally encodes both Lagrangian degeneracy and nonholonomic constraints.

\subsection{Motion of ideal fluids}
Now we consider the motion of an ideal (incompressible and inviscid) fluid in a compact Riemannian manifold $M$ in $\mathbb{R}^{3}$ with smooth boundary $\partial M$. Let $g$ be a Riemannian metric and $\mu_{g}$ the associated volume form. We assume that the fluid is homogeneous and hence the mass density $\rho_{0}$ is constant.

The motion of a fluid in $M$ is described by a curve $\varphi_{t}$ in $\operatorname{Diff}(M)$, representing a family of diffeomorphisms from the reference configuration to the current configuration in $M$. That is, $x(X,t):=\varphi_{t}(X) \in M$, where $x$ is the current Eulerian spatial location of the particle with label $X$. 

We  define the Lagrangian (material) velocity of the fluid by taking the time derivative of the trajectory while keeping the particle label $X$ fixed:
as 
\[
U(X,t):=\frac{\partial \varphi_{t}(X)}{\partial t}=\dot{\varphi}_t(X).
\]
The Eulerian (spatial) velocity $v$ at the current location $x$ is defined by:
\[
u(x,t) := U(\varphi_{t}^{-1}(x), t).
\]
Therefore, the  relationship between the Lagrangian velocity $(\varphi,\dot{\varphi})$ and the Eulerian velocity $u$ is:
\[
u =\dot{\varphi}_t  \circ \varphi_t ^{-1}, \quad \text{i.e.}, \quad  u = U_{t} \circ \varphi_{t}^{-1}.
\]
Note that the Eulerian velocity $u$ belongs to the Lie algebra $ \mathfrak{g}=\mathfrak{X}  (M)$ of $\operatorname{Diff}(M)$. 
\medskip

Let $G=\mathrm{Diff}(M)$ be the group of all smooth diffeomorphisms of $M$. Its Lie algebra is the space $\mathfrak{g}=\mathfrak{X}(M)$ of all smooth vector fields on $M$ tangent to the boundary, endowed with the Lie bracket 
\[
[u,v]=\nabla_v u-\nabla_u v,
\]
for $u, v \in \mathfrak{g}$, where $\nabla$ denotes the Levi--Civita connection associated with $g$. In this example, we employ the right-invariant convention. 

Consider the weakly nondegenerate $L^2$-paring on $\mathfrak{g}$: 
\[
\left<u,v\right>_{\mathfrak{g}}=\int_{M}g(u,v)\mu_{g}.
\]

We emphasize that this is a weak $L^{2}$-pairing on the Fr\'echet space $ \mathfrak{X}  ( M)$; thus $\mathfrak{g}^{\ast}$ here is understood formally as the space of smooth one-forms paired with vector fields through the $L^{2}$-pairing
%
\[
 \left<\alpha,u\right>=\int_{M}\alpha(u)\mu_{g},
\]
where $\alpha \in \Omega^{1}(M)$.

From the incompressibility condition, the divergence free condition is imposed on velocity $u$, i.e., it is restricted to a subspace $\mathfrak{g}^{\Delta} \subset \mathfrak{g}$ as
\[
\mathfrak{g}^{\Delta}:=\{ u \in \mathfrak{g} \mid \operatorname{div}u=0,\;\; u \cdot n=0 \; \textrm{at $\partial M$} \},
\] 
where $n$ is the outward unit normal vector at each point on $\partial M$. The annihilator of $\mathfrak{g}^{\Delta}$ is defined by
\[
(\mathfrak{g}^{\Delta})^{\circ}:=\{ \alpha \in \mathfrak{g}^{\ast} \mid \left<\alpha, u\right>=0,\quad \forall u \in \mathfrak{g}^{\Delta}\}.
\]
For $u \in \mathfrak{g}^{\Delta}$, we have
\[
\pounds_{u}\mu_{g}=(\operatorname{div}u)\mu_{g}=0.
\]
For some smooth function $f \in C^{\infty}(M)$, the divergence theorem yields:
\[
\begin{split}
\int_{M}f(\operatorname{div}u)\mu_{g}&=\int_{M}\operatorname{div}(fu)\mu_{g}-\int_{M}u(f)\mu_{g}\\
&=\int_{\partial M}f(u\cdot n) \mu_{g}^{\partial}-\int_{M}u(f)\mu_{g}\\
&=-\int_{M}u(f)\mu_{g}=-\left<\mathbf{d}f, u\right>,
\end{split}
\] 
where $\mu_{g}^{\partial}$ is the volume on the boundary $\partial M$ induced from $\mu_{g}$,  and $n$ is the outward unit normal. 
Since $u \cdot n = 0$ on $\partial M$ and $\operatorname{div} u = 0$, this reduces to:
\[
0 = \int_{M} u(f) \mu_{g} = \langle \mathbf{d}f, u \rangle.
\]
Thus, the exact one-forms $\mathbf{d}f$ belong to the annihilator:
\[
\mathbf{d}f \in (\mathfrak{g}^{\Delta})^{\circ}.
\]

\begin{remark}
When extending the present Lagrange--Dirac framework to infinite-dimensional systems such as ideal fluids, one must naturally consider the appropriate functional-analytic framework. Strictly speaking, the space of smooth diffeomorphisms does not form a Banach Lie group. To ensure the rigorous existence of the nonholonomic or geodesic flows, it is standard to work with Sobolev class $H^s$ ($s > \dim M / 2 + 1$) vector fields and diffeomorphisms, where the kinetic energy defines a smooth Riemannian metric and the Cauchy problem is well-posed (see, e.g., \cite{EbMa1970}). In our formulation of the incompressible Euler--Poincar\'e--Saslov equations, we view the incompressibility condition as a nonholonomic-type constraint imposed on the larger framework of compressible fluids. In this paper, therefore, we restrict our attention to the formal geometric aspect to highlight the variational and Dirac structures behind the dynamics of the ideal fluid.
\end{remark}

\begin{remark}
In the incompressible case, the fluid motion is described by a curve in the subgroup $\operatorname{Diff}_{\operatorname{vol}}(M)=\{\varphi \in \operatorname{Diff}(M) \mid J \varphi = 1\}$ of volume preserving diffeomorphisms of $M$, where $J \varphi$ denotes the Jacobian of the diffeomorphism $\varphi$. Its Lie algebra is given by the space $\mathfrak{X}_{\operatorname{vol}}(M)$ of divergence free smooth vector fields parallel to the boundary, i.e., $\mathfrak{g}^{\Delta}=\mathfrak{X}_{\operatorname{vol}}(M)$. Correspondingly, the dual of $\mathfrak{X}_{\operatorname{vol}}(M)$ is given by the space of one-forms modulo exact one-forms, i.e., $\Omega^{1}(M)/\mathbf{d}C^{\infty}(M)$. Here, however, we treat the system as a constrained system in the ambient space $\operatorname{Diff}(M)$.
\end{remark}

\paragraph{Euler--Poincar\'e--Suslov equations for incompressible inviscid fluids.}
Let us introduce an inertia operator $\mathbb{I}: \mathfrak{g} \to \mathfrak{g}^{\ast}$ such that
\[
\left<\mathbb{I}v, u \right>=\left<v,u \right>_{\mathfrak{g}}=\int_M \rho_0v^{\flat}(u)\mu_{g}=\int_M \rho_0g(v,u)\mu_{g},
\]
where we assume that the fluid is incompressible and homogeneous, namely, the mass density $\rho_{0}$ is constant.
The (reduced) Lagrangian of the fluid is $\ell: \mathfrak{X}(M) \rightarrow \mathbb{R}$, given by
\[
\ell(v)=\frac{1}{2}\left<\mathbb{I}v, v \right>=\frac{1}{2}\left<v,v\right>_{\mathfrak{g}}=  \frac{1}{2} \int_M \rho_{0}g(v,v) \mu_{g},
\]
where we consider the weakly nondegenerate $L^2$-pairing and hence the Lagrangian $\ell(v)$ is nondegenerate. 

The Fr\'echet derivative $\mathbf{D} \ell(v) \in \mathfrak{g}^*$ is defined by
\[
\mathbf{D} \ell(v) \cdot \delta v= \int_M \rho_{0}g(v, \delta v) \mu_{g}.
\]
Using the $L^2$-pairing, it can be represented as
\[
\mathbf{D} \ell(v) \cdot \delta v
=
\left< \frac{\delta \ell}{\delta v}, \delta v\right>,
\quad
\frac{\delta \ell}{\delta v}=\mathbb{I}v=\rho_{0}v^{\flat}.
\]

Now, we apply the reduced Lagrange--d'Alembert--Dirac principle as follows:
Find a critical curve $(u(t), v(t)),\;t \in [0,T]$ on $\mathfrak{g} \times \mathfrak{g}$ of the reduced action functional:
\[
\int_{0}^{T} \bigg\{ \ell(v(t))+ 
\mathbf{D} \ell(v) \cdot \big( u(t) - v(t) \big) \bigg\} dt,
\]
namely, it satisfies
\begin{equation*}
\delta \int_{0}^{T} \bigg\{ \ell(v(t))+ 
\mathbf{D} \ell(v) \cdot \big( u(t) - v(t) \big) \bigg\} dt = 0, \quad \delta{u}=\frac{\partial \zeta}{\partial t}+[u, \zeta], 
\end{equation*}
subject to $u=\dot{\varphi}\circ \varphi^{-1} \in \mathfrak{g}^\Delta$ and $\zeta=\delta{\varphi}\circ \varphi^{-1} \in \mathfrak{g}^\Delta$ vanishes at the endpoints,
where $v=V\circ \varphi^{-1}$.
\medskip

The critical curve $(v(t), u(t))$ on $\mathfrak{g} \times \mathfrak{g}$ is a solution curve of the {\it Euler--Poincar\'e--Suslov equations}:
\begin{equation*}
\frac{d}{dt}\mathbf{D} \ell(v) - \operatorname{ad}_{u}^{\ast} \mathbf{D} \ell(v) \in (\mathfrak{g}^{\Delta})^{\circ},  \quad u = v, \quad u \in \mathfrak{g}^{\Delta}.
\end{equation*} 
In the above, since the Lagrangian $\ell(\eta)$ is nondegenerate, we get $u=v$. Because of the right action of $\mathrm{Diff}(M)$ on itself, the coadjoint action is given by $\left<\mathrm{ad}^{\ast}_{u}\alpha, \eta \right> = \left<\alpha, -[u,\eta]\right>$ for $\alpha \in \Omega^{1}(M)$ and $\eta \in \mathfrak{g}$. Since $u \in \mathfrak{g}^\Delta$ (i.e., $\operatorname{div}u = 0$ and $u \cdot n = 0$ on $\partial M$), the relation $\mathrm{ad}^{\ast}_{u}\alpha = \pounds_{u}\alpha$ holds for any one-form $\alpha$. Specifically, with $\mathbf{D} \ell(v) = \rho_{0}v^{\flat}$, we have:
\[
\mathrm{ad}^{\ast}_{u}\mathbf{D} \ell(v) = \rho_{0}\pounds_{u}v^{\flat}.
\]
By substituting $u=v$ into the above Euler--Poincar\'e--Suslov equations, we obtain:
\[
\rho_{0} \left( \frac{\partial u^{\flat}}{\partial t} + \pounds_{u}u^{\flat} \right) \in (\mathfrak{g}^{\Delta})^{\circ}, \quad \operatorname{div}u = 0.
\]
By Cartan's formula, the Lie derivative of the velocity 1-form is given by
\[
\pounds_{u}u^{\flat} = \mathbf{i}_{u}\mathbf{d}u^{\flat} + \mathbf{d}(\mathbf{i}_{u}u^{\flat}) = \mathbf{i}_{u}\mathbf{d}u^{\flat} + \mathbf{d}|u|^{2}.
\]
To align this with the classical notion of covariant acceleration,
we utilize the Levi--Civita connection $\nabla$ associated with the
Riemannian metric $g$.
Using the classical identity valid for metric-compatible torsion-free
connections,
\[
\mathbf{i}_u \mathbf{d} u^\flat
=
(\nabla_u u)^\flat
-\frac12 \mathbf{d}|u|^2,
\]
Cartan's formula gives
\[
\pounds_u u^\flat
=
(\nabla_u u)^\flat
+\frac12 \mathbf{d}|u|^2.
\]

Then, the Euler--Poincar\'e--Suslov equation becomes:
\[
\rho_{0} \left( \frac{\partial u^{\flat}}{\partial t} + (\nabla_u u)^\flat + \frac{1}{2}\mathbf{d}|u|^{2} \right) \in (\mathfrak{g}^{\Delta})^{\circ}.
\]
As established, $(\mathfrak{g}^{\Delta})^{\circ}$ consists of exact one-forms $\mathbf{d}f$ for some $f \in C^{\infty}(M)$. 
By defining the total pressure $p$ such that it appears as a constraint force, $-\mathbf{d}p := -(\mathbf{d}f + \frac{1}{2}\rho_{0}\mathbf{d}|u|^{2}) \in (\mathfrak{g}^{\Delta})^{\circ}$, we recover the classical Euler equations:
\[
\rho_{0} \left( \frac{\partial u}{\partial t} + \nabla_u u \right) = -\nabla p, \quad \operatorname{div}u = 0,
\]
where $(\nabla p)^{\flat}=\mathbf{d}p$.
This example shows that ideal fluid dynamics can be naturally formulated in terms of a Lagrange--Dirac structure on the infinite-dimensional Lie group $\mathrm{Diff}(M)$, where the Euler equations are naturally understood as Euler--Poincar\'e--Suslov type equations in the context of infinite-dimensional Lagrange--Dirac dynamical systems with constraints.

\begin{remark}[Geometric interpretation of pressure]
In the standard geometric formulation of ideal fluids (e.g., \cite{Ar1966, MaRa1999}), the fluid motion is described as a geodesic flow on the group of volume-preserving diffeomorphisms $\mathrm{Diff}_{\mathrm{vol}}(M)$. In this framework, the pressure term is recovered through the projection onto the divergence-free subspace.
In contrast, the present Lagrange--Dirac framework treats the incompressibility condition $\operatorname{div}u=0$ as a constraint on the ambient space $\mathfrak{X}(M)$. Consequently, the pressure gradient $-\mathbf{d}p$ arises naturally as an element of the annihilator $(\mathfrak{g}^{\Delta})^\circ$, namely, as the constraint force associated with incompressibility, rather than being introduced through an external projection procedure.
This provides a direct geometric interpretation of pressure within the dynamics itself and demonstrates that the Lagrange--Dirac structure naturally incorporates both the incompressibility constraint and the associated pressure force into a unified variational and geometric framework.
\end{remark}

\paragraph{Summary of examples.}
The above examples illustrate that the Lagrange--Dirac formulation provides a unified geometric and variational framework for a broad class of mechanical systems. 

In the example of charged particle dynamics in electromagnetic fields, the velocity phase-space Lagrangian introduced by Littlejohn admits a natural interpretation within the Lagrange--Dirac framework. Although formulated as a degenerate Lagrangian on $TTQ$, it is shown to be intrinsically determined by the Lagrange--Dirac data $(E_L,\Theta_L)$ on $TQ$. This clarifies the geometric origin of phase-space variational formulations in plasma physics.

For nonholonomic systems, such as the Heisenberg system and the Suslov problem, constraint distributions and their annihilators naturally define Dirac structures, leading to the Lagrange--Dirac and Euler--Poincar\'e--Suslov equations. 
For degenerate Lagrangian systems, including electric circuits, the Lagrange--Dirac formulation naturally incorporates the degeneracy of the Lagrangian. In particular, we clarify the relationship between the Lagrange--Dirac variational principle on $TQ$ and the standard Hamilton principle on $Q$ in the case of Lagrangians linear in velocity.

Furthermore, we present ideal fluid dynamics as an infinite-dimensional example, in which the divergence-free condition induces a constraint distribution and the pressure appears naturally as a constraint force belonging to the annihilator, thereby recovering the Euler equations through the framework of the Lagrange--Dirac dynamical system.

These examples demonstrate that degeneracy, nonholonomic constraints, and symmetry reduction can all be treated in a unified manner within the Lagrange--Dirac framework, both in finite- and infinite-dimensional settings.

\section{Conclusions}
In this paper, we have introduced a notion of Lagrange--Dirac structures on the tangent bundle $TQ$ and developed a unified geometric and variational framework for a broad class of mechanical systems in terms of Lagrange--Dirac dynamical systems. 
\medskip

The main results are summarized as follows:

\begin{itemize}

\item We introduced a Dirac structure on $TQ$, referred to as a Lagrange--Dirac structure, determined by a presymplectic form $\Omega_{L}$ together with a constraint distribution on $Q$. This structure provides a geometric formulation of mechanical systems with nonholonomic constraints and degenerate Lagrangians.

\item For the regular case, we explore the constrained Lagrange--Dirac structure by splitting the motion on $TQ$ into the admissible subbundle and complementary directions under the normality condition of the constraint distribution, and extract the constrained dynamics through the constrained Lagrange--Dirac structure on the admissible subbundle. 

\item Along the flow of the vector field on the admissible subbundle, we showed that the Lagrange--Dirac structure is preserved up to a gauge transformation generated by an exact two-form, revealing a gauge covariance property of the associated dynamics.

\item Associated with Lagrange--Dirac dynamical systems, we established new variational principles on $TQ$: the Lagrange--Dirac principle for unconstrained systems and the Lagrange--d'Alembert--Dirac principle for nonholonomic systems, both applicable to degenerate Lagrangians. The resulting first-order Lagrange--d'Alembert--Dirac equations recover the classical Euler--Lagrange equations in the unconstrained regular case and the Lagrange--d'Alembert equations in the constrained regular case.

\item Through various examples, we demonstrate that phase-space formulations, nonholonomic dynamics, degenerate Lagrangian systems, symmetry reduction, and infinite-dimensional extensions can all be treated in a unified manner within the Lagrange--Dirac framework.

\end{itemize}

In future work, we plan to further develop and apply the proposed Lagrange--Dirac formulation in several directions.

\begin{itemize}

\item We will study reduction under a free and proper action of a Lie group $G$ on $Q$, where the associated principal bundle $Q \to Q/G$ allows the construction of a reduced Lagrange--Dirac structure
\[
D_L^{/G} \subset (TTQ \oplus T^{\ast}TQ)/G
\]
over the reduced bundle $TQ/G$; see also \cite{YoMa2009}.

\item We will extend these reduction theories to the broader setting of Dirac anchored vector bundles developed in \cite{CeMaRaYo2009}.

\item In connection with Dirac's theory of constraints, we will investigate the relation with the Gotay--Nester approach for determining the final constraint submanifold in the context of Lagrange--Dirac dynamical systems; see \cite{GoNe1979, GoNe1980, CeEtFe2014}.

\item We will also study discrete analogues of Lagrange--Dirac structures and the associated Lagrange--Dirac dynamical systems, building on the discrete frameworks developed in \cite{PeYo2025, GBYo2018b}.

\end{itemize}

\paragraph{Acknowledgements.} 
I am deeply grateful to Tudor Ratiu and Fran\c{c}ois Gay-Balmaz for many insightful discussions and valuable suggestions. In particular, this research originated from fruitful discussions and collaborations with Jerry Marsden and Hern\'an Cendra. I dedicate this paper to their memory and to the honor of Hern\'an Cendra.
\medskip

This work was partially supported by JST CREST (JPMJCR24Q5), JSPS Grant-in-Aid for Scientific Research (22K03443), and Waseda University Grants for Special Research Projects (2026C-093).

\end{document}